\newtheorem{theorem}{Theorem}[section]
\newtheorem{axiom}[theorem]{Axiom}
\newtheorem{conjecture}[theorem]{Conjecture}
\newtheorem{corollary}[theorem]{Corollary}
\newtheorem{definition}[theorem]{Definition}
\newtheorem{example}[theorem]{Example}
\newtheorem{exercise}[theorem]{Exercise}
\newtheorem{lemma}[theorem]{Lemma}
\newtheorem{notation}[theorem]{Notation}
\newtheorem{proposition}[theorem]{Proposition}
\newtheorem{remark}[theorem]{Remark}
\newenvironment{proof}[1][Proof]{\noindent\textbf{#1.} }{\ \rule{0.5em}{0.5em}}
\chardef\@x10\chardef\@xv60
\def\tcitime{
\def\@time{%
  \@minute\time\@hour\@minute\divide\@hour\@xv
  \ifnum\@hour<\@x 0\fi\the\@hour:%
  \multiply\@hour\@xv\advance\@minute-\@hour
  \ifnum\@minute<\@x 0\fi\the\@minute
  }}%
\def\QCTOpt[#1]#2{%
  \def\QCTOptB{#1}
  \def\QCTOptA{#2}
}
\def\QCTNOpt#1{%
  \def\QCTOptA{#1}
  \let\QCTOptB\empty
}
\def\Qct{%
  \@ifnextchar[{%
    \QCTOpt}{\QCTNOpt}
}
\def\QCBOpt[#1]#2{%
  \def\QCBOptB{#1}
  \def\QCBOptA{#2}
}
\def\QCBNOpt#1{%
  \def\QCBOptA{#1}
  \let\QCBOptB\empty
}
\def\Qcb{%
  \@ifnextchar[{%
    \QCBOpt}{\QCBNOpt}
}
\def\PrepCapArgs{%
  \ifx\QCBOptA\empty
    \ifx\QCTOptA\empty
      {}%
    \else
      \ifx\QCTOptB\empty
        {\QCTOptA}%
      \else
        [\QCTOptB]{\QCTOptA}%
      \fi
    \fi
  \else
    \ifx\QCBOptA\empty
      {}%
    \else
      \ifx\QCBOptB\empty
        {\QCBOptA}%
      \else
        [\QCBOptB]{\QCBOptA}%
      \fi
    \fi
  \fi
}
\def\GRAPHICSPS#1{%
 \ifcase\GRAPHICSTYPE
   \special{ps: #1}%
 \or
   \special{language "PS", include "#1"}%
 \fi
}%
\def\graffile#1#2#3#4{%
    \leavevmode
    \raise -#4 \BOXTHEFRAME{%
        \hbox to #2{\raise #3\hbox to #2{\null #1\hfil}}}%
}%
\def\draftbox#1#2#3#4{%
 \leavevmode\raise -#4 \hbox{%
  \frame{\rlap{\protect\tiny #1}\hbox to #2%
   {\vrule height#3 width\z@ depth\z@\hfil}%
  }%
 }%
}%
\newif\ifwasdraft
\def\GRAPHIC#1#2#3#4#5{%
 \ifnum\draft=\@ne\draftbox{#2}{#3}{#4}{#5}%
  \else\graffile{#1}{#3}{#4}{#5}%
  \fi
 }%
\def\addtoLaTeXparams#1{%
    \edef\LaTeXparams{\LaTeXparams #1}}%
\newif\ifBoxFrame \BoxFramefalse
\newif\ifOverFrame \OverFramefalse
\newif\ifUnderFrame \UnderFramefalse
\def\BOXTHEFRAME#1{%
   \hbox{%
      \ifBoxFrame
         \frame{#1}%
      \else
         {#1}%
      \fi
   }%
}
\def\doFRAMEparams#1{\BoxFramefalse\OverFramefalse\UnderFramefalse\readFRAMEparams#1\end}%
\def\readFRAMEparams#1{%
 \ifx#1\end%
  \let\next=\relax
  \else
  \ifx#1i\dispkind=\z@\fi
  \ifx#1d\dispkind=\@ne\fi
  \ifx#1f\dispkind=\tw@\fi
  \ifx#1t\addtoLaTeXparams{t}\fi
  \ifx#1b\addtoLaTeXparams{b}\fi
  \ifx#1p\addtoLaTeXparams{p}\fi
  \ifx#1h\addtoLaTeXparams{h}\fi
  \ifx#1X\BoxFrametrue\fi
  \ifx#1O\OverFrametrue\fi
  \ifx#1U\UnderFrametrue\fi
  \ifx#1w
    \ifnum\draft=1\wasdrafttrue\else\wasdraftfalse\fi
    \draft=\@ne
  \fi
  \let\next=\readFRAMEparams
  \fi
 \next
 }%
\def\IFRAME#1#2#3#4#5#6{%
      \bgroup
      \let\QCTOptA\empty
      \let\QCTOptB\empty
      \let\QCBOptA\empty
      \let\QCBOptB\empty
      #6%
      \parindent=0pt%
      \leftskip=0pt
      \rightskip=0pt
      \setbox0 = \hbox{\QCBOptA}%
      \@tempdima = #1\relax
      \ifOverFrame
          \typeout{This is not implemented yet}%
          \show\HELP
      \else
         \ifdim\wd0>\@tempdima
            \advance\@tempdima by \@tempdima
            \ifdim\wd0 >\@tempdima
               \textwidth=\@tempdima
               \setbox1 =\vbox{%
                  \noindent\hbox to \@tempdima{\hfill\GRAPHIC{#5}{#4}{#1}{#2}{#3}\hfill}\\%
                  \noindent\hbox to \@tempdima{\parbox[b]{\@tempdima}{\QCBOptA}}%
               }%
               \wd1=\@tempdima
            \else
               \textwidth=\wd0
               \setbox1 =\vbox{%
                 \noindent\hbox to \wd0{\hfill\GRAPHIC{#5}{#4}{#1}{#2}{#3}\hfill}\\%
                 \noindent\hbox{\QCBOptA}%
               }%
               \wd1=\wd0
            \fi
         \else
            \ifdim\wd0>0pt
              \hsize=\@tempdima
              \setbox1 =\vbox{%
                \unskip\GRAPHIC{#5}{#4}{#1}{#2}{0pt}%
                \break
                \unskip\hbox to \@tempdima{\hfill \QCBOptA\hfill}%
              }%
              \wd1=\@tempdima
           \else
              \hsize=\@tempdima
              \setbox1 =\vbox{%
                \unskip\GRAPHIC{#5}{#4}{#1}{#2}{0pt}%
              }%
              \wd1=\@tempdima
           \fi
         \fi
         \@tempdimb=\ht1
         \advance\@tempdimb by \dp1
         \advance\@tempdimb by -#2%
         \advance\@tempdimb by #3%
         \leavevmode
         \raise -\@tempdimb \hbox{\box1}%
      \fi
      \egroup%
}%
\def\DFRAME#1#2#3#4#5{%
 \begin{center}
     \let\QCTOptA\empty
     \let\QCTOptB\empty
     \let\QCBOptA\empty
     \let\QCBOptB\empty
     \ifOverFrame 
        #5\QCTOptA\par
     \fi
     \GRAPHIC{#4}{#3}{#1}{#2}{\z@}
     \ifUnderFrame 
        \nobreak\par #5\QCBOptA
     \fi
 \end{center}%
 }%
\def\FFRAME#1#2#3#4#5#6#7{%
 \begin{figure}[#1]%
  \let\QCTOptA\empty
  \let\QCTOptB\empty
  \let\QCBOptA\empty
  \let\QCBOptB\empty
  \ifOverFrame
    #4
    \ifx\QCTOptA\empty
    \else
      \ifx\QCTOptB\empty
        \caption{\QCTOptA}%
      \else
        \caption[\QCTOptB]{\QCTOptA}%
      \fi
    \fi
    \ifUnderFrame\else
      \label{#5}%
    \fi
  \else
    \UnderFrametrue%
  \fi
  \begin{center}\GRAPHIC{#7}{#6}{#2}{#3}{\z@}\end{center}%
  \ifUnderFrame
    #4
    \ifx\QCBOptA\empty
      \caption{}%
    \else
      \ifx\QCBOptB\empty
        \caption{\QCBOptA}%
      \else
        \caption[\QCBOptB]{\QCBOptA}%
      \fi
    \fi
    \label{#5}%
  \fi
  \end{figure}%
 }%
\def\makeactives{
  \catcode`\"=\active
  \catcode`\;=\active
  \catcode`\:=\active
  \catcode`\'=\active
  \catcode`\~=\active
}
   \gdef\activesoff{%
      \def"{\string"}
      \def;{\string;}
      \def:{\string:}
      \def'{\string'}
      \def~{\string~}
    }
\def\FRAME#1#2#3#4#5#6#7#8{%
 \bgroup
 \@ifundefined{bbl@deactivate}{}{\activesoff}
 \ifnum\draft=\@ne
   \wasdrafttrue
 \else
   \wasdraftfalse%
 \fi
 \def\LaTeXparams{}%
 \dispkind=\z@
 \def\LaTeXparams{}%
 \doFRAMEparams{#1}%
 \ifnum\dispkind=\z@\IFRAME{#2}{#3}{#4}{#7}{#8}{#5}\else
  \ifnum\dispkind=\@ne\DFRAME{#2}{#3}{#7}{#8}{#5}\else
   \ifnum\dispkind=\tw@
    \edef\@tempa{\noexpand\FFRAME{\LaTeXparams}}%
    \@tempa{#2}{#3}{#5}{#6}{#7}{#8}%
    \fi
   \fi
  \fi
  \ifwasdraft\draft=1\else\draft=0\fi{}%
  \egroup
 }%
\def\TEXUX#1{"texux"}
\long\def\QQQ#1#2{%
     \long\expandafter\def\csname#1\endcsname{#2}}%
\long\def\QQA#1#2{}%
\def\QTR#1#2{{\csname#1\endcsname #2}}
\def\EXPAND#1[#2]#3{}%
\def\NOEXPAND#1[#2]#3{}%
\def\LaTeXparent#1{}%
\def\ChildStyles#1{}%
\def\ChildDefaults#1{}%
\def\QTagDef#1#2#3{}%
\def\QQfnmark#1{\footnotemark}
\def\makeatletter\input gnuindex.sty\makeatother\makeindex{\makeatletter\input gnuindex.sty\makeatother\makeindex}%
\def\initial#1{\bigbreak{\raggedright\large\bf #1}\kern 2\p@\penalty3000}}%
 \def\abstract{%
  \if@twocolumn
   \section*{Abstract (Not appropriate in this style!)}%
   \else \small 
   \begin{center}{\bf Abstract\vspace{-.5em}\vspace{\z@}}\end{center}%
   \quotation 
   \fi
  }%
   \def\registered{\relax\ifmmode{}\r@gistered
                    \else$\m@th\r@gistered$\fi}%
 \def\r@gistered{^{\ooalign
  {\hfil\raise.07ex\hbox{$\scriptstyle\rm\text{R}$}\hfil\crcr
  \mathhexbox20D}}}}{}%
\newdimen\theight
\def\Column{%
 \vadjust{\setbox\z@=\hbox{\scriptsize\quad\quad tcol}%
  \theight=\ht\z@\advance\theight by \dp\z@\advance\theight by \lineskip
  \kern -\theight \vbox to \theight{%
   \rightline{\rlap{\box\z@}}%
   \vss
   }%
  }%
 }%
\def\qed{%
 \ifhmode\unskip\nobreak\fi\ifmmode\ifinner\else\hskip5\p@\fi\fi
 \hbox{\hskip5\p@\vrule width4\p@ height6\p@ depth1.5\p@\hskip\p@}%
 }%
\def\miss{\hbox{\vrule height2\p@ width 2\p@ depth\z@}}%
\def\tcol#1{{\baselineskip=6\p@ \vcenter{#1}} \Column}  %
\def\newfmtname{LaTeX2e}
\def\chkcompat{%
   \if@compatibility
   \else
     \usepackage{latexsym}
   \fi
}
  \DeclareOldFontCommand{\rm}{\normalfont\rmfamily}{\mathrm}
  \DeclareOldFontCommand{\sf}{\normalfont\sffamily}{\mathsf}
  \DeclareOldFontCommand{\tt}{\normalfont\ttfamily}{\mathtt}
  \DeclareOldFontCommand{\bf}{\normalfont\bfseries}{\mathbf}
  \DeclareOldFontCommand{\it}{\normalfont\itshape}{\mathit}
  \DeclareOldFontCommand{\sl}{\normalfont\slshape}{\@nomath\sl}
  \DeclareOldFontCommand{\sc}{\normalfont\scshape}{\@nomath\sc}
\def\alpha{{\Greekmath 010B}}%
\def\beta{{\Greekmath 010C}}%
\def\gamma{{\Greekmath 010D}}%
\def\delta{{\Greekmath 010E}}%
\def\epsilon{{\Greekmath 010F}}%
\def\zeta{{\Greekmath 0110}}%
\def\eta{{\Greekmath 0111}}%
\def\theta{{\Greekmath 0112}}%
\def\iota{{\Greekmath 0113}}%
\def\kappa{{\Greekmath 0114}}%
\def\lambda{{\Greekmath 0115}}%
\def\mu{{\Greekmath 0116}}%
\def\nu{{\Greekmath 0117}}%
\def\xi{{\Greekmath 0118}}%
\def\pi{{\Greekmath 0119}}%
\def\rho{{\Greekmath 011A}}%
\def\sigma{{\Greekmath 011B}}%
\def\tau{{\Greekmath 011C}}%
\def\upsilon{{\Greekmath 011D}}%
\def\phi{{\Greekmath 011E}}%
\def\chi{{\Greekmath 011F}}%
\def\psi{{\Greekmath 0120}}%
\def\omega{{\Greekmath 0121}}%
\def\varepsilon{{\Greekmath 0122}}%
\def\vartheta{{\Greekmath 0123}}%
\def\varpi{{\Greekmath 0124}}%
\def\varrho{{\Greekmath 0125}}%
\def\varsigma{{\Greekmath 0126}}%
\def\varphi{{\Greekmath 0127}}%
\def\nabla{{\Greekmath 0272}}
\def\FindBoldGroup{%
   {\setbox0=\hbox{$\mathbf{x\global\edef\theboldgroup{\the\mathgroup}}$}}%
}
\def\Greekmath#1#2#3#4{%
    \if@compatibility
        \ifnum\mathgroup=\symbold
           \mathchoice{\mbox{\boldmath$\displaystyle\mathchar"#1#2#3#4$}}%
                      {\mbox{\boldmath$\textstyle\mathchar"#1#2#3#4$}}%
                      {\mbox{\boldmath$\scriptstyle\mathchar"#1#2#3#4$}}%
                      {\mbox{\boldmath$\scriptscriptstyle\mathchar"#1#2#3#4$}}%
        \else
           \mathchar"#1#2#3#4%
        \fi 
    \else 
        \FindBoldGroup
        \ifnum\mathgroup=\theboldgroup 
           \mathchoice{\mbox{\boldmath$\displaystyle\mathchar"#1#2#3#4$}}%
                      {\mbox{\boldmath$\textstyle\mathchar"#1#2#3#4$}}%
                      {\mbox{\boldmath$\scriptstyle\mathchar"#1#2#3#4$}}%
                      {\mbox{\boldmath$\scriptscriptstyle\mathchar"#1#2#3#4$}}%
        \else
           \mathchar"#1#2#3#4%
        \fi     	    
	  \fi}
\newif\ifGreekBold  \GreekBoldfalse
\let\SAVEPBF=\pbf
\def\pbf{\GreekBoldtrue\SAVEPBF}%
  \newcounter{equationnumber}  
  \def\mathletters{%
     \addtocounter{equation}{1}
     \edef\@currentlabel{\theequation}%
     \setcounter{equationnumber}{\c@equation}
     \setcounter{equation}{0}%
     \edef\theequation{\@currentlabel\noexpand\alph{equation}}%
  }
    \def\BibTeX{{\rm B\kern-.05em{\sc i\kern-.025em b}\kern-.08em
                 T\kern-.1667em\lower.7ex\hbox{E}\kern-.125emX}}}{}%
\def\AmS{{\protect\usefont{OMS}{cmsy}{m}{n}%
                A\kern-.1667em\lower.5ex\hbox{M}\kern-.125emS}}}{}%
\let\DOTSI\relax
\def\RIfM@{\relax\ifmmode}%
\def\FN@{\futurelet\next}%
\def\iint{\DOTSI\intno@\tw@\FN@\ints@}%
\def\iiint{\DOTSI\intno@\thr@@\FN@\ints@}%
\def\iiiint{\DOTSI\intno@4 \FN@\ints@}%
\def\idotsint{\DOTSI\intno@\z@\FN@\ints@}%
\def\ints@{\findlimits@\ints@@}%
\newif\iflimtoken@
\newif\iflimits@
\def\findlimits@{\limtoken@true\ifx\next\limits\limits@true
 \else\ifx\next\nolimits\limits@false\else
 \limtoken@false\ifx\ilimits@\nolimits\limits@false\else
 \ifinner\limits@false\else\limits@true\fi\fi\fi\fi}%
\def\multint@{\int\ifnum\intno@=\z@\intdots@                          
 \else\intkern@\fi                                                    
 \ifnum\intno@>\tw@\int\intkern@\fi                                   
 \ifnum\intno@>\thr@@\int\intkern@\fi                                 
 \int}
\def\multintlimits@{\intop\ifnum\intno@=\z@\intdots@\else\intkern@\fi
 \ifnum\intno@>\tw@\intop\intkern@\fi
 \ifnum\intno@>\thr@@\intop\intkern@\fi\intop}%
\def\intic@{%
    \mathchoice{\hskip.5em}{\hskip.4em}{\hskip.4em}{\hskip.4em}}%
\def\negintic@{\mathchoice
 {\hskip-.5em}{\hskip-.4em}{\hskip-.4em}{\hskip-.4em}}%
\def\ints@@{\iflimtoken@                                              
 \def\ints@@@{\iflimits@\negintic@
   \mathop{\intic@\multintlimits@}\limits                             
  \else\multint@\nolimits\fi                                          
  \eat@}
 \else                                                                
 \def\ints@@@{\iflimits@\negintic@
  \mathop{\intic@\multintlimits@}\limits\else
  \multint@\nolimits\fi}\fi\ints@@@}%
\def\intkern@{\mathchoice{\!\!\!}{\!\!}{\!\!}{\!\!}}%
\def\plaincdots@{\mathinner{\cdotp\cdotp\cdotp}}%
\def\intdots@{\mathchoice{\plaincdots@}%
 {{\cdotp}\mkern1.5mu{\cdotp}\mkern1.5mu{\cdotp}}%
 {{\cdotp}\mkern1mu{\cdotp}\mkern1mu{\cdotp}}%
 {{\cdotp}\mkern1mu{\cdotp}\mkern1mu{\cdotp}}}%
\def\RIfM@{\relax\protect\ifmmode}
\def\text{\RIfM@\expandafter\text@\else\expandafter\mbox\fi}
\let\nfss@text\text
\def\text@#1{\mathchoice
   {\textdef@\displaystyle\f@size{#1}}%
   {\textdef@\textstyle\tf@size{\firstchoice@false #1}}%
   {\textdef@\textstyle\sf@size{\firstchoice@false #1}}%
   {\textdef@\textstyle \ssf@size{\firstchoice@false #1}}%
   \glb@settings}
\def\textdef@#1#2#3{\hbox{{%
                    \everymath{#1}%
                    \let\f@size#2\selectfont
                    #3}}}
\newif\iffirstchoice@
\def\Let@{\relax\iffalse{\fi\let\\=\cr\iffalse}\fi}%
\def\vspace@{\def\vspace##1{\crcr\noalign{\vskip##1\relax}}}%
\def\multilimits@{\bgroup\vspace@\Let@
 \baselineskip\fontdimen10 \scriptfont\tw@
 \advance\baselineskip\fontdimen12 \scriptfont\tw@
 \lineskip\thr@@\fontdimen8 \scriptfont\thr@@
 \lineskiplimit\lineskip
 \vbox\bgroup\ialign\bgroup\hfil$\m@th\scriptstyle{##}$\hfil\crcr}%
\def\Sb{_\multilimits@}%
\def\endSb{\crcr\egroup\egroup\egroup}%
\def\Sp{^\multilimits@}%
\newdimen\ex@
\def\rightarrowfill@#1{$#1\m@th\mathord-\mkern-6mu\cleaders
 \hbox{$#1\mkern-2mu\mathord-\mkern-2mu$}\hfill
 \mkern-6mu\mathord\rightarrow$}%
\def\leftarrowfill@#1{$#1\m@th\mathord\leftarrow\mkern-6mu\cleaders
 \hbox{$#1\mkern-2mu\mathord-\mkern-2mu$}\hfill\mkern-6mu\mathord-$}%
\def\leftrightarrowfill@#1{$#1\m@th\mathord\leftarrow
\mkern-6mu\cleaders
 \hbox{$#1\mkern-2mu\mathord-\mkern-2mu$}\hfill
 \mkern-6mu\mathord\rightarrow$}%
\def\overrightarrow{\mathpalette\overrightarrow@}%
\def\overrightarrow@#1#2{\vbox{\ialign{##\crcr\rightarrowfill@#1\crcr
 \noalign{\kern-\ex@\nointerlineskip}$\m@th\hfil#1#2\hfil$\crcr}}}%
\def\overleftarrow{\mathpalette\overleftarrow@}%
\def\overleftarrow@#1#2{\vbox{\ialign{##\crcr\leftarrowfill@#1\crcr
 \noalign{\kern-\ex@\nointerlineskip}$\m@th\hfil#1#2\hfil$\crcr}}}%
\def\overleftrightarrow{\mathpalette\overleftrightarrow@}%
\def\overleftrightarrow@#1#2{\vbox{\ialign{##\crcr
   \leftrightarrowfill@#1\crcr
 \noalign{\kern-\ex@\nointerlineskip}$\m@th\hfil#1#2\hfil$\crcr}}}%
\def\underrightarrow{\mathpalette\underrightarrow@}%
\def\underrightarrow@#1#2{\vtop{\ialign{##\crcr$\m@th\hfil#1#2\hfil
  $\crcr\noalign{\nointerlineskip}\rightarrowfill@#1\crcr}}}%
\def\underleftarrow{\mathpalette\underleftarrow@}%
\def\underleftarrow@#1#2{\vtop{\ialign{##\crcr$\m@th\hfil#1#2\hfil
  $\crcr\noalign{\nointerlineskip}\leftarrowfill@#1\crcr}}}%
\def\underleftrightarrow{\mathpalette\underleftrightarrow@}%
\def\underleftrightarrow@#1#2{\vtop{\ialign{##\crcr$\m@th
  \hfil#1#2\hfil$\crcr
 \noalign{\nointerlineskip}\leftrightarrowfill@#1\crcr}}}%
\def\qopnamewl@#1{\mathop{\operator@font#1}\nlimits@}
\let\nlimits@\displaylimits
\def\setboxz@h{\setbox\z@\hbox}
\def\varlim@#1#2{\mathop{\vtop{\ialign{##\crcr
 \hfil$#1\m@th\operator@font lim$\hfil\crcr
 \noalign{\nointerlineskip}#2#1\crcr
 \noalign{\nointerlineskip\kern-\ex@}\crcr}}}}
 \def\rightarrowfill@#1{\m@th\setboxz@h{$#1-$}\ht\z@\z@
  $#1\copy\z@\mkern-6mu\cleaders
  \hbox{$#1\mkern-2mu\box\z@\mkern-2mu$}\hfill
  \mkern-6mu\mathord\rightarrow$}
\def\leftarrowfill@#1{\m@th\setboxz@h{$#1-$}\ht\z@\z@
  $#1\mathord\leftarrow\mkern-6mu\cleaders
  \hbox{$#1\mkern-2mu\copy\z@\mkern-2mu$}\hfill
  \mkern-6mu\box\z@$}
\def\projlim{\qopnamewl@{proj\,lim}}
\def\injlim{\qopnamewl@{inj\,lim}}
\def\varinjlim{\mathpalette\varlim@\rightarrowfill@}
\def\varprojlim{\mathpalette\varlim@\leftarrowfill@}
\def\varliminf{\mathpalette\varliminf@{}}
\def\varliminf@#1{\mathop{\underline{\vrule\@depth.2\ex@\@width\z@
   \hbox{$#1\m@th\operator@font lim$}}}}
\def\varlimsup{\mathpalette\varlimsup@{}}
\def\varlimsup@#1{\mathop{\overline
  {\hbox{$#1\m@th\operator@font lim$}}}}
\def\align{\@verbatim \frenchspacing\@vobeyspaces \@alignverbatim
You are using the "align" environment in a style in which it is not defined.}
\let\csname endalign*\endcsname =\endtrivlist
\def\alignat{\@verbatim \frenchspacing\@vobeyspaces \@alignatverbatim
You are using the "alignat" environment in a style in which it is not defined.}
\let\csname endalignat*\endcsname =\endtrivlist
\def\xalignat{\@verbatim \frenchspacing\@vobeyspaces \@xalignatverbatim
You are using the "xalignat" environment in a style in which it is not defined.}
\let\csname endxalignat*\endcsname =\endtrivlist
\def\gather{\@verbatim \frenchspacing\@vobeyspaces \@gatherverbatim
You are using the "gather" environment in a style in which it is not defined.}
\let\csname endgather*\endcsname =\endtrivlist
\def\multiline{\@verbatim \frenchspacing\@vobeyspaces \@multilineverbatim
You are using the "multiline" environment in a style in which it is not defined.}
\let\csname endmultiline*\endcsname =\endtrivlist
\def\arrax{\@verbatim \frenchspacing\@vobeyspaces \@arraxverbatim
You are using a type of "array" construct that is only allowed in AmS-LaTeX.}
\def\tabulax{\@verbatim \frenchspacing\@vobeyspaces \@tabulaxverbatim
You are using a type of "tabular" construct that is only allowed in AmS-LaTeX.}
\let\csname endarrax*\endcsname =\endtrivlist
\let\csname endtabulax*\endcsname =\endtrivlist
\def\@@eqncr{\let\@tempa\relax
    \ifcase\@eqcnt \def\@tempa{& & &}\or \def\@tempa{& &}%
      \else \def\@tempa{&}\fi
     \@tempa
     \if@eqnsw
        \iftag@
           \@taggnum
        \else
           \@eqnnum\stepcounter{equation}%
        \fi
     \fi
     \global\tag@false
     \global\@eqnswtrue
     \global\@eqcnt\z@\cr}
 \def\endequation{%
     \ifmmode\ifinner 
      \iftag@
        \addtocounter{equation}{-1} 
        $\hfil
           \displaywidth\linewidth\@taggnum\egroup \endtrivlist
        \global\tag@false
        \global\@ignoretrue   
      \else
        $\hfil
           \displaywidth\linewidth\@eqnnum\egroup \endtrivlist
        \global\tag@false
        \global\@ignoretrue 
      \fi
     \else   
      \iftag@
        \addtocounter{equation}{-1} 
        \eqno \hbox{\@taggnum}
        \global\tag@false%
        $$\global\@ignoretrue
      \else
        \eqno \hbox{\@eqnnum}
        $$\global\@ignoretrue
      \fi
     \fi\fi
 } 
 \newif\iftag@ \tag@false
 \def\tag{\@ifnextchar*{\@tagstar}{\@tag}}
 \def\@tag#1{%
     \global\tag@true
     \global\def\@taggnum{(#1)}}
 \def\@tagstar*#1{%
     \global\tag@true
     \global\def\@taggnum{#1}%
}
\begin{document}

\title{Classical Dynamics Generated by Long-Range Interactions for Lattice
Fermions and Quantum Spins}
\author{J.-B. Bru \and W. de Siqueira Pedra}
\date{\today }
\maketitle

\begin{abstract}
\bigskip We study the macroscopic dynamical properties of fermion and
quantum-spin systems with long-range, or mean-field, interactions. The
results obtained are far beyond previous ones and require the development of
a mathematical framework to accommodate the macroscopic long-range dynamics,
which corresponds to an intricate combination of classical and short-range
quantum dynamics. In this paper we focus on the classical part of the
long-range, or mean-field, macroscopic dynamics, but we already introduce
the full framework. The quantum part of the macroscopic dynamics is studied
in a subsequent paper. We show that the classical part of the macroscopic
dynamics results from self-consistency equations within the (quantum) state
space. As is usual, the classical dynamics is driven by Liouville's
equation.\bigskip

\noindent \textit{Dedicated to V.A. Zagrebnov for his important
contributions to the mathematics of quantum many-body theory.}\bigskip
\bigskip

\noindent \textbf{Keywords:} Interacting fermions, self-consitency
equations, quantum-spin, classical dynamics, extended quantum mechanics.
\bigskip

\noindent \textbf{AMS Subject Classification:} 82C10, 37K60, 82C05
\end{abstract}

\tableofcontents%

\section{Introduction}

More than seventy years ago, Bogoliubov proposes an ansatz, widely known as
the Bogoliubov approximation, which corresponds to replace, in many-boson%
\footnote{%
Quantum particles carry an intrinsic form of angular momentum, the so-called
spin. It is related to a spin quantum number $\mathfrak{s}\in \mathbb{N}/2$
(depending on the nature of a given particle) which fixes a finite spin set $%
\mathrm{S}\doteq \left\{ -\mathfrak{s},-\mathfrak{s}+1,\ldots \mathfrak{s}-1,%
\mathfrak{s}\right\} $. If $\mathfrak{s}$ is a half-integer, i.e. $\mathfrak{%
s}\notin \mathbb{N}$ , then the corresponding particles are fermions while
the case $\mathfrak{s}\in \mathbb{N}$ correspond to bosons. By the
celebrated spin-statistics theorem, fermionic wave functions are
antisymmetric with respect to permutations of particles, whereas the bosonic
ones are symmetric.} Hamiltonians, the annihilation and creation operators
of zero-momentum particles with complex numbers to be determined \emph{%
self-consistently}.\ See \cite[Section 1.1]{BruZagrebnov8} for more details.
His motivation comes from the observation that these (unbounded) operators
almost commute with any other operator in the thermodynamic limit, leading
to some \emph{classical} field in the macroscopic Bose system. In 1968,
Ginibre \cite{Ginibre} shows the exactness of the approximation in which
concerns the thermodynamic pressure of superstable Bose gases. However, even
nowadays, the mathematical validity of this approximation with respect to
the primordial dynamics of (stable) many-boson Hamiltonians with usual
two-body interactions is an open problem.

In the context of many-fermion systems, ten years after Bogoliubov's ansatz,
a similar approximation is used in the BCS theory of (conventional)
superconductivity, as explained by Bogliubov in 1958 \cite{Bogoliubov58} and
Haag in 1962 \cite{haag62}. In 1966, this approximation is shown \cite%
{Bogjunior} to be exact at the level of the thermodynamic pressure for
fermion systems that are similar to the BCS model. See also the so-called
approximating Hamiltonian method used on the level of the pressure of
fermionic systems in \cite%
{Bogjunior,approx-hamil-method0,approx-hamil-method,approx-hamil-method2}.

The validity of the approximation with respect to the primordial dynamics
was an open question that Thirring and Wehrl \cite{T1,T2} solve in 1967 for
an exactly solvable permutation-invariant fermion model. An attempt to
generalize Thirring and Wehrl's results to a general class of fermionic
models, including the BCS theory, has been done in 1978 \cite{Hemmen78}, but
at the cost of technical assumptions that are difficult to verify in
practice.

In 1973, Hepp and Lieb \cite{Hepp-Lieb73} made explicit, for the first time,
the existence of Poisson brackets in some (commutative) algebra of
functions, related to the classical effective dynamics. This is done for a
permutation-invariant quantum-spin system with mean-field interactions. This
research direction has been strongly developed by many authors until 1992,
see \cite%
{Bona75,Sewell83,Rieckers84,Morchio87,Bona87,Duffner-Rieckers88,Bona88,Bona89,Bona90,Unnerstall90,Unnerstall90b,Unnerstall90-open,Bona91,Duffield1991,BagarelloMorchio92,Duffield-Werner1,Duffield-Werner2,Duffield-Werner3,Duffield-Werner4,Duffield-Werner5}%
. All these papers study dynamical properties of \emph{permutation-invariant}
quantum-spin systems with mean-field interactions. Even if we are rather
interested in mean-field dynamics, note meanwhile that equilibrium
properties of such quantum systems are also extensively studied in the same
period, see for instance \cite%
{QuaegebeurVerbeure1980,RaggioWerner1,RaggioWerner2} and references therein.

Thereafter, the mathematical research activity on this subject considerably
decreases until the early 2000s when emerges, within the mathematical
physics community, a new interest in such quantum systems, partially because
of new experiments like those on ultracold atoms (via laser and evaporative
coolings). See, for instance, the paper \cite{non-semiclassique1bisbis} on
mean-field dynamics, published in the year 2000. In 2005, Ginibre's result
on Bose gases at thermodynamical equilibrium is generalized \cite{LSY05}.
There is also an important research activity on the mathematical foundation
of the Gross-Pitaevskii\footnote{%
The so-called GP limit is not really a mean-field limit, but it looks
similar.} (GP) or Hartree theories, starting after 1998. For more details on
the GP theory and mean-field dynamics for indistinguishable particles
(bosons), see \cite{LSSY05,Shlein2016,GolseMonhot2016,JP18,Ammari2018} and
references therein. In which concern lattice-fermion\footnote{%
The fermionic nature of particles is encoded in Canonical Anti-commutation
Relations (CAR) which lead, in the algebraic formulation, to a so-called CAR
algebra associated with a one-particle Hilbert space $\mathfrak{h}$. The
term \textquotedblleft lattice-fermion\textquotedblright\ only refers in
this case to the special choice $\mathfrak{h}=\ell ^{2}(\mathbb{Z}^{d}\times 
\mathrm{S})$, where $\mathrm{S}$ is some finite set of spins and $d\in 
\mathbb{N}$.} or quantum-spin systems\footnote{%
Quantum spin systems refer to infinite tensor products of a fixed matrix
algebra, enconding a spin variable at any fixed lattice site. Their $C^{\ast
}$ algebras are constructed in a similar way as those for lattice fermions
(in fact, these infinite tensor products are even isomorphic, as $C^{\ast }$
algebras, to CAR algebras). Results of this paper hold true also in this
context, via obvious modifications.} with long-range\footnote{%
By long-range interactions, we basically mean interactions that are \emph{not%
} decaying at large interparticle distances. Note however that the attribute
\textquotedblleft long-range\textquotedblright\ is used in the literature
also for interactions that slowly decay with respect to interparticle
distances. Our notion of \textquotedblleft long-range\textquotedblright\ is
therefore stronger, but still includes important physical models like some
of those related to conventional superconductivity. It refers to a type of
mean-field interactions widely used in theoretical and mathematical physics.}%
, or mean-field\footnote{%
The attribute \textquotedblleft mean-field\textquotedblright\ means here
that the (interparticle) interaction is averaged over the number of sites of
given boxes.}, interactions at equilibrium, see, e.g., \cite%
{Petz2008,BruPedra1,monsieurremark,BruPedra2}. Concerning the dynamics of
fermion systems in the continuum with mean-field interactions, see \cite%
{non-semiclassique1,non-semiclassique1bis,Shlein2004,non-semiclassique2,Shlein2014bis,Schlein2014bisbis,Shlein2015,non-semiclassique3,Breteaux2016,Shlein2017}%
, as well as \cite[Sections 6 and 7]{Shlein2016}. Such mean-field problems
are even related to other academic disciplines, like mathematical economics,
via the so-called mean-field game theory \cite{MFgame} developed since 2006
by Lasry and Lions. Mean-field theory in its extended sense is, in fact, a
major research field of mathematics, even nowadays, and is still studied in
physics, see, e.g., \cite{extra-ref0000} and references therein. The current
paper belongs to this research field, since we hereby study the dynamical
properties of fermion and quantum-spin systems with long-range, or
mean-field, interactions. \smallskip

All mathematical studies \cite%
{Shlein2016,non-semiclassique1,non-semiclassique1bis,Shlein2004,non-semiclassique2,Shlein2014bis,Schlein2014bisbis,Shlein2015,non-semiclassique3,Breteaux2016,Shlein2017}
of the 2000s on mean-field fermionic dynamics are on the continuum and use
approximating \emph{quasi-free}\footnote{%
In some papers, only (approximating) Slater determinants are considered.}
states (or a mixture of them) as initial states to derive the Hartree-Fock
equation, which is originally based on the assumption that the many-fermion
wavefunction is a Slater determinant. Quoting \cite[p. 79]{Shlein2016}:
\textquotedblleft \textit{Slater determinants are relevant at zero
temperature because they provide (or at least they are expected to provide)
a good approximation to the fermionic ground state of Hamilton operators
like (6.1) in the mean-field limit. At positive temperature, equilibrium
states are mixed; in the mean-field regime, they are expected to be
approximately quasi-free mixed states, like the Gibbs state of a
non-interacting gas.}\textquotedblright\ 

These arguments are probably true in the mean-field regimes considered in
these studies because the non-mean-field part of the model is always
associated with a one-particle Hamiltonian, but one cannot expect this
property to hold true for general fermion models with long-range, or
mean-field, interactions. Equilibrium states of the usual BCS\ theory is a
mixture of quasi-free states, even at zero temperature, in presence of a
superconducting phase, because of gauge invariance. Adding a Hubbard
interaction to such BCS-type models directly destroys the quasi-free
property of equilibrium states, even in the sense of a mixture. Indeed, a
general many-fermion wavefunction cannot be expressed as a single
determinant, even at zero temperature, and, consequently, we cannot expect
the Hartree-Fock equation to be generally correct. For instance, this method
usually overestimates the full (ground state) energy. To solve that problem
in computational chemistry and condensed matter physics one can use either
extensions of the Hartree-Fock method, or the Density Functional Theory%
\footnote{%
A not necessarly better, but computationally more efficient approximation.},
which is based on some energy functional for the one-fermion density only,
via the Hohenberg-Kohn theorems. See for instance \cite{DFT}.\medskip

In contrast with results \cite%
{Shlein2016,non-semiclassique1,non-semiclassique1bis,Shlein2004,non-semiclassique2,Shlein2014bis,Schlein2014bisbis,Shlein2015,non-semiclassique3,Breteaux2016,Shlein2017}
of the 2000s, we consider fermions on the lattice, not in the continuum, and
we meanwhile study quantum-spin systems. Our analysis of the corresponding
mean-field dynamics does \emph{not} require to have approximating quasi-free
states (or a mixture of them) as initial states. \medskip

The recent studies \cite%
{non-semiclassique1,non-semiclassique1bis,Shlein2004,non-semiclassique2,Shlein2014bis,Schlein2014bisbis,Shlein2015,non-semiclassique3,Shlein2016,Breteaux2016,Shlein2017}
of the 2000s never extract effective classical dynamics within the quantum
one. The reason, for most of them \cite%
{Shlein2004,Shlein2014bis,Schlein2014bisbis,Shlein2015,non-semiclassique3,Shlein2016,Shlein2017}%
, is that their mean-field scales reveal a \emph{semi-classical structure},
similar to what is first done in 1981 to derive the Vlasov hierarchy in \cite%
{vlasov81,vlasov81bis} from quantum dynamics. In order to see the intricate
combination of a classical dynamics and a\ quantum one, one has to go back
to previous results \cite%
{Hepp-Lieb73,Bona75,Sewell83,Rieckers84,Morchio87,Bona87,Duffner-Rieckers88,Bona88,Bona89,Bona90,Unnerstall90,Unnerstall90b,Unnerstall90-open,Bona91,Duffield1991,BagarelloMorchio92,Duffield-Werner1,Duffield-Werner2,Duffield-Werner3,Duffield-Werner4,Duffield-Werner5}%
, initiated by Hepp and Lieb. In our opinion, the most elaborated and
interesting results in that context are obtained by B\'{o}na \cite%
{Bona87,Bona88,Bona89,Bona90} in 1987-1990, who studies in detail the
dynamics of a large class of permutation-invariant quantum-spin systems with
mean-field interactions. B\'{o}na formalizes a new view point on quantum
mechanics in 1991 \cite{Bona91} and later in a mature textbook published in
2000 and revised in 2012 \cite{Bono2000}, describing what he names
\textquotedblleft extended quantum mechanics\textquotedblright : It is an
intricate combination of classical and quantum dynamics. B\'{o}na's approach
does not seem to be incorporated by the physics community, until now.
\medskip

In contrast with the results \cite%
{Hepp-Lieb73,Bona75,Sewell83,Rieckers84,Morchio87,Bona87,Duffner-Rieckers88,Bona88,Bona89,Bona90,Unnerstall90,Unnerstall90b,Unnerstall90-open,Bona91,Duffield1991,BagarelloMorchio92,Duffield-Werner1,Duffield-Werner2,Duffield-Werner3,Duffield-Werner4,Duffield-Werner5}
of the 70s-90s (1973-1992), our study does \emph{not} require the
permutation invariance of quantum-spin systems. In fact, we are able to
study dynamical properties of a very general class of lattice-fermion, or
quantum-spin, systems with long-range, or mean-field, \emph{%
translation-invariant}\footnote{%
This can be easily extended to periodically invariant systems, by redefining
the finite spin set.} interactions. In this context, the initial state is
only assumed to be periodic with respect to space translations. In fact, by
Proposition \ref{density of periodic states}, the set of all initial states
allowed in our study is weak$^{\ast }$-dense within the set of even states,
the physically relevant ones. Even for permutation-invariant systems, our
results go beyond previous ones, since the class of long-range, or
mean-field, interactions we are able to handle is much larger.

Our study reveals an entanglement of classical and quantum dynamics, as
explained in \cite{Bru-pedra-MF-I}, which revisits B\'{o}na's approach \cite%
{Bono2000} on \textquotedblleft extended quantum mechanics\textquotedblright
. In \cite{Bru-pedra-MF-I}, we do not really follow B\'{o}na's path but
highlight the importance of self-consistency equations, instead. This study
leads to a new theory which strongly differs from the so-called
quantum-classical hybrid theories of theoretical and mathematical physics,
including the recent mathematical results \cite{1,2,3}. See below Remark \ref%
{Quantum-classical hybrid theories}. In this paper and in the subsequent one 
\cite{Bru-pedra-MF-III}, we show that this mathematical framework is
imperative to describe the dynamical properties of lattice-fermion or
quantum-spin systems with long-range, or mean-field, interactions, because
of the necessity of coupled classical and quantum evolution
equations.\medskip

Here, we focus on the classical part of the long-range, or mean-field,
dynamics of the quantum systems under consideration, even if we have to
consider the full algebraic framework, as described in \cite{Bru-pedra-MF-I}%
. The quantum part of the dynamics, shortly discussed in Section \ref%
{Quantum Part}, is studied in a companion paper \cite{Bru-pedra-MF-III}
because it additionally involves the theory of direct integrals of
measurable families of Hilbert spaces, operators, von Neumann algebras and $%
C^{\ast }$-algebra representations, as presented, for instance, in the
monograph \cite{Niesen-direct integrals}.

The classical part of the dynamics of lattice-fermion, or quantum-spin,
systems with long-range, or mean-field, interactions is shown to result from
the solution of \ a self-consistency equation within the (quantum) state
space. In general, it is a \emph{non-linear} dynamics generated by a Poisson
bracket. In other words, it is driven by some Liouville's equation, similar
to what has been recently observed \cite{Ammari2018} by Ammari, Liard, and
Rouffort for Bose gases in the mean-field limit. As soon as the classical
part of this dynamics is concerned, there is no need to assume any
particular property, neither on initial states, nor on local Hamiltonians.
Translation invariance, or periodicity, is only required for the analysis of
the quantum part of the dynamics. So, these particular spatial features are
only pivotal in \cite{Bru-pedra-MF-III}.

It would have been interesting to add applications of our general theory,
but we still have to present (in \cite{Bru-pedra-MF-III}) the quantum part
of such dynamics before doing that. In fact, several applications on
quasi-free models as well as permutation-invariant systems will be presented
in separated papers. See, e.g., \cite{Bru-pedra-proceeding,Bru-pedra-MF-IV}.
We will also discuss in a forthcoming paper such mean-field dynamics within
the representation of an arbitrary (generalized) equilibrium state\footnote{%
This class of states is studied in detail in \cite{BruPedra2}.} of the
quantum system under consideration.

\begin{remark}[Quantum-classical hybrid theories]
\label{Quantum-classical hybrid theories}\mbox{
}\newline
Many important models of quantum mechanics represent systems of quantum
particles in interaction with classical fields. For example, a quantum
particle interacting with an external electromagnetic field is commonly
studied via the magnetic Laplacian. In other words, these models implicitly
combine quantum and classical mechanics. This simplification is physically
justified by the huge numbers of photons giving origin to (effective
classical) macroscopic fields, in the spirit of the correspondence
principle. The very recent paper \cite{3} gives a mathematical justification
of such a procedure for three important quantum models: the Nelson,
Pauli-Fierz and Polaron models. Mathematically, it refers to a
semi-classical analysis of the bosonic degrees of freedom, leading to a new
(equivalent) quantum model interacting with a classical field. Thanks to 
\cite{1,2}, the authors are able to handle the semi-classical limit for very
general (possibly entangled, i.e., not factorized) states. These
mathematical results are highly non-trivial. However, such a procedure does
not entangle classical and quantum dynamics in the way it occurs in presence
of long-range interactions. E.g., the relation between the phase space of
the classical dynamics and the state space of the $C^{\ast }$-algebra where
the quantum dynamics runs is not manifest. In fact, the situation presented
in this paper is essentially different than any quantum-classical hybrid
theory.
\end{remark}

Our main results are the following:

\begin{itemize}
\item Theorem \ref{theorem sdfkjsdklfjsdklfj} shows that the
self-consistency equations are well-posed for any initial state. Such
equations are central to get a complete description of the macroscopic
dynamics of long-range models.

\item Proposition \ref{lemma poisson copy(2)} states the existence of
evolution groups while Corollary \ref{Closed Poissonian symmetric
derivations copy(1)} shows that their generators can be represented via
Poisson brackets and a very natural classical energy function on the state
space of the CAR algebra.

\item Theorem \ref{classical dynamics I} demonstrates the existence of
non-autonomous classical dynamics on the state space. It is a highly
non-trivial mathematical statement in the theory of non-auto%
\-%
nomous evolution equations \cite{Katobis,Caps,Schnaubelt1,Pazy,Bru-Bach}. It
results from Lieb-Robinson bounds on multi-commutators of high orders
derived in 2017 \cite{brupedraLR}.

\item Corollary \ref{classical dynamics I copy(1)} reduces the classical
dynamics to the autonomous situation, which leads to the usual (autonomous)
dynamics of classical mechanics written in terms of Poisson brackets, i.e., 
\emph{Liouville's equation}.
\end{itemize}

\noindent The paper is organized as follows: Section \ref{Section FERMI}
explains the well-established dynamical properties of lattice-fermion
systems with \emph{short-range}\footnote{%
Interactions are said to be short-range when they decay sufficiently fast at
large distances. This refers to a sufficient polynomial decay such that the
celebrated Lieb-Robinson bounds hold true for the corresponding dynamics.}
interactions. In this context, the celebrated Lieb-Robinson bounds are
pivotal and are also an important tool to obtain the full dynamics of
systems with long-range, or mean-field, interactions. Section \ref%
{Long-Range systems} defines what we name \textquotedblleft long-range
interactions\textquotedblright\ and we make explicit the problem of the
thermodynamic limit\ of their associated dynamics. From now on, like in \cite%
{BruPedra2} we use the term \textquotedblleft long-range\textquotedblright ,
only,\ instead of \textquotedblleft mean-field\textquotedblright . In fact,
we prefer the first rather than the second attribute, because the latter can
refer to different scalings, in particular in the recent works \cite%
{non-semiclassique1,non-semiclassique1bis,Shlein2004,non-semiclassique2,Shlein2014bis,Schlein2014bisbis,Shlein2015,non-semiclassique3,Shlein2016,Breteaux2016,Shlein2017}%
. Note, additionally, that \textquotedblleft Bogoliubov's
school\textquotedblright\ also uses the term \textquotedblleft
long-range\textquotedblright\ for models containing interactions that are
mean-field in the sense of the present paper. In Section \ref{sect
Lieb--Robinson copy(3)}, we present the self-consistency equations as well
as the classical part of long-range dynamics. This requires the mathematical
framework of \cite{Bru-pedra-MF-I}, which has thus to be presented in the
first subsections of Section \ref{sect Lieb--Robinson copy(3)}. Note that we
shortly explain what the quantum part of the long-range dynamics is in
Section \ref{Quantum Part}, even if this will be done in detail in \cite%
{Bru-pedra-MF-III}. Finally, Section \ref{Well-posedness sect} gives the
proofs of Theorems \ref{theorem sdfkjsdklfjsdklfj} and \ref{classical
dynamics I}. Sections \ref{Long-range models} and \ref{Section applications}
are two appendices relating our current notion of long-range interactions to
the one of \cite{BruPedra2} and discussing possible applications in physics.

In this paper, we only focus on lattice-fermion systems which are, from a
technical point of view, slightly more difficult than quantum-spin systems,
because of a non-commutativity issue at different lattice sites. However,
all the results presented here and in the subsequent papers hold true for
quantum-spin systems, via obvious modifications.

\begin{notation}
\label{remark constant}\mbox{
}\newline
\emph{(i)} A norm on a generic vector space $\mathcal{X}$ is denoted by $%
\Vert \cdot \Vert _{\mathcal{X}}$ and the identity mapping of $\mathcal{X}$
by $\mathbf{1}_{\mathcal{X}}$. The space of all bounded linear operators on $%
(\mathcal{X},\Vert \cdot \Vert _{\mathcal{X}}\mathcal{)}$ is denoted by $%
\mathcal{B}(\mathcal{X})$. The unit element of any algebra $\mathcal{X}$ is
denoted by $\mathfrak{1}$, provided it exists. \newline
\emph{(ii)} For any topological space $\mathcal{X}$ and normed space $(%
\mathcal{Y},\Vert \cdot \Vert _{\mathcal{Y}}\mathcal{)}$, $C\left( \mathcal{X%
};\mathcal{Y}\right) $ denotes the space of continuous mappings from $%
\mathcal{X}$ to $\mathcal{Y}$. If $\mathcal{X}$ is a locally compact
topological space, then $C_{b}\left( \mathcal{X};\mathcal{Y}\right) $
denotes the Banach space of bounded continuous mappings from $\mathcal{X}$
to $\mathcal{Y}$ along with the topology of uniform convergence.\newline
\emph{(iii)} The notion of an automorphism depends on the structure of the
corresponding domain. In algebra, a ($\ast $-) automorphism acting on a $%
\ast $-algebra is a bijective $\ast $-homomorphism from this algebra to
itself. In topology, an automorphism acting on a topological space is a
self-homeomorphism, that is, a homeomorphism of the space to itself.
\end{notation}

\section{Algebraic Formulation of Lattice Fermion Systems\label{Section
FERMI0}}

\subsection{Background Lattice}

Fix once and for all $d\in \mathbb{N}$, the dimension of the lattice. Choose
also a subset $\mathfrak{L}\subseteq \mathbb{Z}^{d}$ (which will be omitted
in our notation, unless it is important to specify it). The main example we
have in mind is $\mathfrak{L}=\mathbb{Z}^{d}$, as is done in \cite{BruPedra2}%
, but it is convenient to keep this set as being any arbitrary subset of the 
$d$-dimensional lattice. Let $\mathcal{P}_{f}\equiv \mathcal{P}_{f}\left( 
\mathfrak{L}\right) \subseteq 2^{\mathfrak{L}}$ be the set of all non-empty
finite subsets of $\mathfrak{L}$. In order to define the thermodynamic
limit, for simplicity, we use the cubic boxes 
\begin{equation}
\Lambda _{L}\doteq \{(x_{1},\ldots ,x_{d})\in \mathbb{Z}^{d}:|x_{1}|,\ldots
,|x_{d}|\leq L\}\cap \mathfrak{L}\ ,\qquad L\in \mathbb{N}\ ,
\label{eq:def lambda n}
\end{equation}%
as a so-called van Hove sequence.

\subsection{The CAR Algebra}

For any $\Lambda \in \mathcal{P}_{f}$, $\mathcal{U}_{\Lambda }$ is the
finite dimensional unital $C^{\ast }$-algebra generated by elements $\{a_{x,%
\mathrm{s}}\}_{x\in \Lambda ,\mathrm{s}\in \mathrm{S}}$ satisfying the
canonical anti-commutation relations (CAR), $\mathrm{S}$ being some finite
set of spins: 
\begin{equation}
a_{x,\mathrm{s}}a_{y,\mathrm{t}}+a_{y,\mathrm{t}}a_{x,\mathrm{s}}=0\ ,\qquad
a_{x,\mathrm{s}}a_{y,\mathrm{t}}^{\ast }+a_{y,\mathrm{t}}^{\ast }a_{x,%
\mathrm{s}}=\delta _{\mathrm{s},\mathrm{t}}\delta _{x,y}\mathfrak{1}
\label{CARbis}
\end{equation}%
for all $x,y\in \mathfrak{L}$ and $\mathrm{s},\mathrm{t}\in \mathrm{S}$.
Observe that $(\mathcal{U}_{\Lambda _{L}})_{L\in \mathbb{N}}$ is an
increasing sequence of $C^{\ast }$-algebras. Hence, the set%
\begin{equation}
\mathcal{U}_{0}\doteq \bigcup_{L\in \mathbb{N}}\mathcal{U}_{\Lambda _{L}}
\label{simple}
\end{equation}%
of so-called local elements is a normed $\ast $-algebra with $\left\Vert
A\right\Vert _{\mathcal{U}_{0}}=\left\Vert A\right\Vert _{\mathcal{U}%
_{\Lambda _{L}}}$for all $A\in \mathcal{U}_{\Lambda _{L}}$ and $L\in \mathbb{%
N}$. The CAR $C^{\ast }$-algebra $\mathcal{U\equiv U}_{\mathfrak{L}}$ of the
full system, whose norm is denoted by $\Vert \cdot \Vert _{\mathcal{U}}$, is
by definition the completion of the normed $\ast $-algebra $\mathcal{U}_{0}$%
. It is separable, by finite dimensionality of $\mathcal{U}_{\Lambda }$ for $%
\Lambda \in \mathcal{P}_{f}$. Equivalently, the $C^{\ast }$-algebra $%
\mathcal{U}$ is the universal $C^{\ast }$-algebra \cite[Section II.8.3]%
{Blackadar} associated with the relations (\ref{CARbis}) for all $x,y\in 
\mathfrak{L}$ and $\mathrm{s},\mathrm{t}\in \mathrm{S}$. The (real) Banach
subspace of all self-adjoint elements of $\mathcal{U}$ is denoted by $%
\mathcal{U}^{\mathbb{R}}\varsubsetneq \mathcal{U}$.

\subsection{Important $\ast $-Automorphisms of the\ CAR Algebra\label{even}}

\noindent \underline{Parity:} Observe that the condition%
\begin{equation}
\sigma (a_{x,\mathrm{s}})=-a_{x,\mathrm{s}},\qquad x\in \Lambda ,\ \mathrm{s}%
\in \mathrm{S}\ ,  \label{automorphism gauge invariance}
\end{equation}%
defines a unique $\ast $-automorphism $\sigma $ of the $C^{\ast }$-algebra $%
\mathcal{U}$. Elements $A_{1},A_{2}\in \mathcal{U}$ satisfying $\sigma
(A_{1})=A_{1}$ and $\sigma (A_{2})=-A_{2}$ are respectively called \emph{even%
} and \emph{odd}. Note that the set%
\begin{equation}
\mathcal{U}^{+}\equiv \mathcal{U}_{\mathfrak{L}}^{+}\doteq \{A\in \mathcal{U}%
:A=\sigma (A)\}\subseteq \mathcal{U}  \label{definition of even operators}
\end{equation}%
of all even elements is a $\ast $-algebra. By continuity of $\sigma $, $%
\mathcal{U}^{+}$ is additionally norm-closed and, hence, a $C^{\ast }$%
-subalgebra of $\mathcal{U}$. Even elements are crucial here since, for any
finite subsets $\Lambda ^{(1)},\Lambda ^{(2)}\in \mathcal{P}_{f}$ with $%
\Lambda ^{(1)}\cap \Lambda ^{(2)}=\emptyset $, 
\begin{equation}
\left[ A_{1},A_{2}\right] \doteq A_{1}A_{2}-A_{2}A_{1}=0\ ,\qquad A_{1}\in 
\mathcal{U}_{\Lambda ^{(1)}}\cap \mathcal{U}^{+},\ A_{2}\in \mathcal{U}%
_{\Lambda ^{(2)}}\ .  \label{commutatorbis}
\end{equation}%
However, this relation is wrong in general when $A_{1}$ is not an even
element. For instance, the CAR (\ref{CARbis}) trivially yields $[a_{x,%
\mathrm{s}},a_{y,\mathrm{t}}]=2a_{x,\mathrm{s}}a_{y,\mathrm{t}}\neq 0$ for
any $x,y\in \mathfrak{L}$ and $\mathrm{s},\mathrm{t}\in \mathrm{S}$ such
that $(x,\mathrm{s})\neq (y,\mathrm{t})$.

The condition (\ref{commutatorbis}) is the expression of the local causality
in quantum field theory. Using well-known constructions\footnote{%
More precisely, the so-called sector theory of quantum field theory.}, the $%
C^{\ast }$-algebra $\mathcal{U}$, generated by anticommuting elements, can
be recovered from $\mathcal{U}^{+}$. As a consequence, the $C^{\ast }$%
-algebra $\mathcal{U}^{+}$ should thus be seen as more fundamental than $%
\mathcal{U}$ in Physics. In fact, $\mathcal{U}$ corresponds in this context
to the so-called local field algebra. See, e.g., \cite[Sections 4.8 and 6]%
{Araki-livre}.

Note that any element $A\in \mathcal{U}$ can be written as a sum of even and
odd elements, respectively denoted by $A^{+}$ and $A^{-}$:%
\begin{equation}
A=A^{+}+A^{-}\qquad \text{with}\qquad A^{\pm }\doteq \frac{1}{2}\left( A\pm
\sigma \left( A\right) \right) \ .  \label{decomposition even odd}
\end{equation}%
This directly follows from the fact that $\sigma $ is an involution (i.e., $%
\sigma \circ \sigma =\mathbf{1}_{\mathcal{U}}$). \bigskip

\noindent \underline{Translations:} $\mathfrak{L}=\mathbb{Z}^{d}$ is an
important case here, because invariant states under the action of the group $%
(\mathbb{Z}^{d},+)$ of lattice translations\ on $\mathcal{U\equiv U}_{%
\mathbb{Z}^{d}}$ are pivotal in the full description of macroscopic
long-range dynamics. The translations in $\mathcal{U}$ refer to the group
homomorphism $x\mapsto \alpha _{x}$ from $\mathbb{Z}^{d}$ to the group of $%
\ast $-automorphisms of $\mathcal{U}$, which is uniquely defined by the
condition%
\begin{equation}
\alpha _{x}(a_{y,\mathrm{s}})=a_{y+x,\mathrm{s}}\ ,\quad y\in \mathbb{Z}%
^{d},\;\mathrm{s}\in \mathrm{S}\ .  \label{transl}
\end{equation}%
This group homomorphism is used below to define the notion of (space)
periodicity of states of lattice-fermion systems (Section \ref{Even States}%
), as well as the notion of translation-invariance of interactions (Section %
\ref{Section Banach space interaction}).\bigskip

\noindent \underline{Permutations:} Let $\Pi $ be the set of all bijective
mappings from $\mathfrak{L}$ to itself which leave all but finitely many
elements invariant. It is a group with respect to the composition of
mappings. The condition 
\begin{equation}
\mathfrak{p}_{\pi }:a_{x,\mathrm{s}}\mapsto a_{\pi (x),\mathrm{s}},\quad
x\in \mathfrak{L},\;\mathrm{s}\in \mathrm{S}\ ,
\label{definition perm automorphism}
\end{equation}%
defines a group homomorphism $\pi \mapsto \mathfrak{p}_{\pi }$ from $\Pi $
to the group of $\ast $-automorphisms of $\mathcal{U}$.

\subsection{State Space}

For any $\Lambda \in \mathcal{P}_{f}$, we denote by%
\begin{equation}
E_{\Lambda }\doteq \{\rho _{\Lambda }\in \mathcal{U}_{\Lambda }^{\ast }:\rho
_{\Lambda }\geq 0,\ \rho _{\Lambda }(\mathfrak{1})=1\}=\{\rho _{\Lambda }\in 
\mathcal{U}_{\Lambda }^{\ast }:\Vert \rho _{\Lambda }\Vert _{\mathcal{U}%
_{\Lambda }^{\ast }}=\rho _{\Lambda }(\mathfrak{1})=1\}  \label{local states}
\end{equation}%
the set of all states on $\mathcal{U}_{\Lambda }$. By finite dimensionality
of $\mathcal{U}_{\Lambda }$ for $\Lambda \in \mathcal{P}_{f}$, the set $%
E_{\Lambda }$ is a norm-compact convex subset of the dual space $\mathcal{U}%
_{\Lambda }^{\ast }$. It is not a simplex, being affinely equivalent to the
set of states on a matrix algebra. It does not have of course a dense set of
extreme points. In comparison, the structure of the set of states for
infinite systems is more subtle.

The state space associated with $\mathcal{U}$ is defined by%
\begin{equation}
E\doteq \{\rho \in \mathcal{U}^{\ast }:\rho \geq 0,\ \rho (\mathfrak{1}%
)=1\}=\{\rho \in \mathcal{U}^{\ast }:\Vert \rho \Vert _{\mathcal{U}^{\ast
}}=\rho (\mathfrak{1})=1\}\ .  \label{states CAR}
\end{equation}%
In particular, if $\mathfrak{L}\in \mathcal{P}_{f}$ then $E=E_{\mathfrak{L}}$%
. In any case, for $\mathcal{U}$ is a separable Banach space, $E$ is a
metrizable and weak$^{\ast }$-compact convex subset of the dual space $%
\mathcal{U}^{\ast }$, by \cite[Theorems 3.15 and 3.16]{Rudin}. It is also
the state space of the classical dynamics \cite[Definition 2.1]%
{Bru-pedra-MF-I} introduced later on.

By the Krein-Milman theorem \cite[Theorem 3.23]{Rudin}, $E$ is the weak$%
^{\ast }$-closure of the convex hull of the (non-empty) set $\mathcal{E}(E)$
of its extreme points: 
\begin{equation}
E=\overline{\mathrm{co}\mathcal{E}\left( E\right) }\ .
\label{closure of the convex hull}
\end{equation}%
When $\mathfrak{L}$ is an infinite set, $\mathcal{U}$ is antiliminal and
simple, because it is a UHF (uniformly hyperfinite) algebra. See, e.g., \cite%
[Section 8]{Bru-pedra-MF-I}. Therefore, by \cite[Lemma 8.5]{Bru-pedra-MF-I},
the state space $E$ has a weak$^{\ast }$-dense subset of extreme points:%
\begin{equation}
E=\overline{\mathcal{E}(E)}\ .  \label{density extreme states}
\end{equation}%
This fact is well-known and already discussed in \cite[p. 226]{Bratteliquasi}%
. As a matter of fact, the property of a convex weak$^{\ast }$-compact set
having a weak$^{\ast }$-dense set of extreme points is not accidental, but 
\emph{generic}\footnote{%
More precisely, by \cite[Theorem 2.4]{Bru-pedra-MF-I}, the set of all convex
weak$^{\ast }$-compact subsets of the dual $\mathcal{X}^{\ast }$ of an
infinite-dimensional separable Banach space $\mathcal{X}$ is a weak$^{\ast }$%
-Hausdorff-dense $G_{\delta }$ subset of the space of convex weak$^{\ast }$%
-compact subsets of $\mathcal{X}^{\ast }$.} in infinite dimension, by \cite[%
Theorem 2.4]{Bru-pedra-MF-I}.

For any $\Lambda \in \mathcal{P}_{f}$, the restriction of a state $\rho \in
E $ to $\mathcal{U}_{\Lambda }$ yields a state in $E_{\Lambda }$, always
denoted by $\rho _{\Lambda }\in E_{\Lambda }$. Conversely, for any $\Lambda
\in \mathcal{P}_{f}$, a state $\rho _{\Lambda }\in E_{\Lambda }$ can be seen
as a state $\tilde{\rho}$ on $\mathcal{U}$ by setting\footnote{%
(\ref{Gibbs state H_nbis}) is well defined because of \cite[Theorem 11.2]%
{Araki-Moriya}, the tracial state $\mathrm{tr}$ being even.} 
\begin{equation}
\tilde{\rho}(AB)=\rho _{\Lambda }(A)\mathrm{tr}(B)\ ,\text{\qquad }A\in 
\mathcal{U}_{\Lambda },\;B\in \mathcal{U}_{\mathcal{Z}},\ \mathcal{Z}\in 
\mathcal{P}_{f},\ \mathcal{Z}\cap \Lambda =\emptyset \ ,
\label{Gibbs state H_nbis}
\end{equation}%
where $\mathrm{tr}$ is the unique tracial state on $\mathcal{U}$, also named
normalized trace, see \cite[Section 4.2]{Araki-Moriya}. In particular,
similar to (\ref{simple}), the set 
\begin{equation*}
E_{0}\doteq \bigcup_{L\in \mathbb{N}}E_{\Lambda _{L}}
\end{equation*}%
can be seen as a weak$^{\ast }$-dense subset of $E$, by density of $\mathcal{%
U}_{0}\subseteq \mathcal{U}$.

\subsection{Even States\label{Even States}}

As explained below Equation (\ref{definition of even operators}), recall
that the $C^{\ast }$-algebra $\mathcal{U}^{+}$ should thus be seen as more
fundamental than $\mathcal{U}$ in Physics, because of the local causality in
quantum field theory. As a consequence, states on the $C^{\ast }$-algebra $%
\mathcal{U}^{+}$ should be seen as being the physically relevant ones. As it
is explicitly shown in the proof of Proposition \ref{density of periodic
states copy(3)}, the set of states on $\mathcal{U}^{+}$ is canonically
identified with the set of \emph{even} states on $\mathcal{U}$, defined by%
\begin{equation}
E^{+}\doteq \left\{ \rho \in E:\rho \circ \sigma =\rho \right\} \ ,
\label{gauge invariant states}
\end{equation}%
$\sigma $ being the unique $\ast $-automorphism of $\mathcal{U}$ satisfying (%
\ref{automorphism gauge invariance}). $E^{+}$ is again a metrizable and weak$%
^{\ast }$-compact convex set and has a (non-empty) set $\mathcal{E}(E^{+})$
of extreme points such that 
\begin{equation*}
E^{+}=\overline{\mathrm{co}\mathcal{E}\left( E^{+}\right) }\ ,
\end{equation*}%
by the Krein-Milman theorem \cite[Theorem 3.23]{Rudin}. Like for the space $%
E $ of \emph{all} states (cf. (\ref{density extreme states})), if $\mathfrak{%
L} $ is an infinite set then $\mathcal{E}(E^{+})$ is a \emph{dense} subset
of the set of all even states:

\begin{proposition}[Weak$^{\ast }$-density of extreme even states]
\label{density of periodic states copy(3)}\mbox{
}\newline
Let $\mathfrak{L}\subseteq \mathbb{Z}^{d}$ be an infinite set. The set $%
\mathcal{E}(E^{+})$ of extreme points of $E^{+}$ is a weak$^{\ast }$-dense
subset of $E^{+}$, i.e., $E^{+}=\overline{\mathcal{E}(E^{+})}$.
\end{proposition}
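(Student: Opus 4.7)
The plan is to reduce the assertion to the analogous density statement (\ref{density extreme states}) for the full state space $E$, but applied to the $C^{\ast }$-algebra $\mathcal{U}^{+}$ in place of $\mathcal{U}$. The bridge is an affine weak$^{\ast }$-homeomorphism between $E^{+}$ and the full state space $E(\mathcal{U}^{+})$ of the even subalgebra; once established, one invokes \cite[Lemma 8.5]{Bru-pedra-MF-I} directly for $\mathcal{U}^{+}$.

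First, I would construct the homeomorphism. To each $\rho \in E(\mathcal{U}^{+})$, associate $\tilde{\rho}$ on $\mathcal{U}$ defined by $\tilde{\rho}(A)\doteq \rho (A^{+})$, using the even/odd decomposition (\ref{decomposition even odd}). Normalization follows from $\mathfrak{1}^{+}=\mathfrak{1}$; evenness from the involution property $\sigma \circ \sigma =\mathbf{1}_{\mathcal{U}}$ (so that $(\sigma (A))^{+}=A^{+}$); and positivity from the identity
\begin{equation*}
(A^{\ast }A)^{+}=\tfrac{1}{2}\bigl(A^{\ast }A+\sigma (A)^{\ast }\sigma (A)\bigr) ,
\end{equation*}
which exhibits $(A^{\ast }A)^{+}$ as a sum of two positive elements of $\mathcal{U}^{+}$. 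Conversely, any $\rho \in E^{+}$ kills every odd element (since $\rho (A^{-})=\rho (\sigma (A^{-}))=-\rho (A^{-})$), hence is uniquely determined by its restriction to $\mathcal{U}^{+}$. The restriction map $\rho \mapsto \rho |_{\mathcal{U}^{+}}$ and the extension map $\rho \mapsto \tilde{\rho}$ are thus mutually inverse affine bijections between $E^{+}$ and $E(\mathcal{U}^{+})$. Both are manifestly weak$^{\ast }$-continuous, and both sets are weak$^{\ast }$-compact, so the two maps are affine weak$^{\ast }$-homeomorphisms. In particular, extreme points correspond to extreme points under this identification.

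Second, I would verify that $\mathcal{U}^{+}$ meets the hypotheses of \cite[Lemma 8.5]{Bru-pedra-MF-I}. It is unital and separable as a norm-closed $\ast $-subalgebra of the separable $C^{\ast }$-algebra $\mathcal{U}$. When $\mathfrak{L}$ is infinite, $\mathcal{U}^{+}$ is itself UHF, being the inductive limit of its finite-dimensional even local subalgebras $\mathcal{U}_{\Lambda _{L}}^{+}$, and in particular it is simple and antiliminal. With these properties in hand, \cite[Lemma 8.5]{Bru-pedra-MF-I} gives $E(\mathcal{U}^{+})=\overline{\mathcal{E}(E(\mathcal{U}^{+}))}$ in the weak$^{\ast }$-topology. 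Transporting this equality back through the homeomorphism of the first step yields $E^{+}=\overline{\mathcal{E}(E^{+})}$.

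The main obstacle is not the homeomorphism itself, which once the unique-extension property is noticed is a straightforward check, but the structural claim that $\mathcal{U}^{+}$ is simple and antiliminal for infinite $\mathfrak{L}$, which is exactly where the hypothesis on $\mathfrak{L}$ enters (on a finite lattice, $\mathcal{U}^{+}$ is finite-dimensional and its pure states are isolated). This structural input may be assumed known in the algebraic framework of \cite{Bru-pedra-MF-I}, in line with the parallel discussion of $\mathcal{U}$ preceding (\ref{density extreme states}).
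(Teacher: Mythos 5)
Your proposal follows the same overall strategy as the paper: identify $E^{+}$ with the state space of $\mathcal{U}^{+}$ via the extension $\tilde{\rho}(A)\doteq \rho(A^{+})=\rho\circ\tfrac{1}{2}(\sigma+\mathbf{1}_{\mathcal{U}})(A)$, observe this is an affine weak$^{\ast}$-homeomorphism, and then invoke the density of extreme points for an infinite-dimensional simple antiliminal separable $C^{\ast}$-algebra. Two details differ technically but not essentially. For positivity of the extension, the paper shortcuts via \cite[Proposition 2.3.11]{BrattelliRobinsonI} (a normalized functional of norm one is automatically a state), whereas you verify positivity directly from $(A^{\ast}A)^{+}=\tfrac12\bigl(A^{\ast}A+\sigma(A)^{\ast}\sigma(A)\bigr)$; both are fine, and yours is arguably more transparent.

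Where your argument has a real imprecision is in the claim that $\mathcal{U}^{+}$ ``is itself UHF, being the inductive limit of its finite-dimensional even local subalgebras $\mathcal{U}_{\Lambda_{L}}^{+}$.'' This does not follow directly: for finite $\Lambda$, $\mathcal{U}_{\Lambda}^{+}$ is not a full matrix algebra but a direct sum of two matrix factors (the parity-even block structure), so the inductive system you describe exhibits $\mathcal{U}^{+}$ as AF, not a priori as UHF. The step you need is exactly St{\o}rmer's theorem \cite{stormer}, which the paper cites at this point: for infinite $\mathfrak{L}$, $\mathcal{U}^{+}$ is $\ast$-isomorphic to $\mathcal{U}$ itself, hence UHF (so simple and antiliminal). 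With that citation in place your reduction to \cite[Lemma 8.5]{Bru-pedra-MF-I} closes; the paper instead transports the already-recorded statement (\ref{density extreme states}) for $E$ through the isomorphism, but the two routes are equivalent once St{\o}rmer's theorem is invoked. You correctly flag that this is where the infiniteness of $\mathfrak{L}$ enters; just be aware that ``inductive limit of finite-dimensional even subalgebras'' alone does not justify ``UHF.''
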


\begin{proof}
Let $\tilde{E}^{+}$ be the set of all states on $\mathcal{U}^{+}$.
Obviously, $E^{+}\subseteq \tilde{E}^{+}$ by seeing any state on $E$\ as a
state on $\mathcal{U}^{+}$, by restriction. Conversely, for any state $%
\tilde{\rho}\in \tilde{E}^{+}$, we define the linear functional 
\begin{equation*}
\rho \doteq \tilde{\rho}\circ \left( \frac{\sigma +\mathbf{1}_{\mathcal{U}}}{%
2}\right) \in \mathcal{U}^{\ast }\ .
\end{equation*}%
Note that $\Vert \sigma \Vert _{\mathcal{B}(\mathcal{U}^{\ast })}=1$ because 
$\sigma $\ is a $\ast $-automorphism of a $C^{\ast }$-algebra. Since $\rho (%
\mathfrak{1})=1$ and $\Vert \rho \Vert _{\mathcal{U}^{\ast }}\leq 1$, we
deduce from \cite[Proposition 2.3.11]{BrattelliRobinsonI} that $\rho \in E$.
By construction, the restriction of $\rho $ to $\mathcal{U}^{+}$ is $\tilde{%
\rho}$ while $\rho (A)=0$ for all odd elements $A\in \mathcal{U}$. In other
words, $\rho \in E^{+}$ and one can identify $\tilde{E}^{+}$ with $E^{+}$.
It easy to check that the mapping $\tilde{\rho}\mapsto \rho $ is an affine
weak$^{\ast }$ homeomorphism from $\tilde{E}^{+}$ to $E^{+}$. By \cite%
{stormer}, if $\mathfrak{L}$ is an infinite set then $\mathcal{U}^{+}$ is $%
\ast $-isomorphic to $\mathcal{U}$. Thus, by (\ref{density extreme states}),
the assertion follows.
\end{proof}

\begin{remark}[States of quantum-spin systems]
\label{density of periodic states copy(1)}\mbox{
}\newline
As already mentioned, all results presented here can be extended to
quantum-spin systems on a lattice, the corresponding $C^{\ast }$-algebra
being an infinite tensor product of finite-dimensional matrix algebras
attached to each lattice site $x\in \mathfrak{L}=\mathbb{Z}^{d}$. In this
case, there is no even property to consider, which corresponds to take $%
E^{+}=E$ in all the discussions.
\end{remark}

Important examples of even states are the so-called periodic\emph{\ }states:
Let $\mathfrak{L}=\mathbb{Z}^{d}$. Consider the sub-groups $(\mathbb{Z}_{%
\vec{\ell}}^{d},+)\subseteq (\mathbb{Z}^{d},+)$, $\vec{\ell}\in \mathbb{N}%
^{d}$, where%
\begin{equation*}
\mathbb{Z}_{\vec{\ell}}^{d}\doteq \ell _{1}\mathbb{Z}\times \cdots \times
\ell _{d}\mathbb{Z}\ \ .
\end{equation*}%
Any state $\rho \in E$ satisfying $\rho \circ \alpha _{x}=\rho $ for all $%
x\in \mathbb{Z}_{\vec{\ell}}^{d}$ is called $\mathbb{Z}_{\vec{\ell}}^{d}$%
\emph{-invariant} on $\mathcal{U}$ or $\vec{\ell}$\emph{-periodic}, $\alpha
_{x}$ being the unique $\ast $-automorphism of $\mathcal{U}$ satisfying (\ref%
{transl}). Translation-invariant states refer to $(1,\cdots ,1)$-periodic
states. The set of all periodic states is denoted by%
\begin{equation}
E_{\mathrm{p}}\doteq \bigcup_{\vec{\ell}\in \mathbb{N}^{d}}E_{\vec{\ell}}\ ,
\label{set of periodic states}
\end{equation}%
where, for any $\vec{\ell}\in \mathbb{N}^{d}$, 
\begin{equation}
E_{\vec{\ell}}\doteq \left\{ \rho \in E:\rho \circ \alpha _{x}=\rho \quad 
\text{for}\ \text{all}\ x\in \mathbb{Z}_{\vec{\ell}}^{d}\right\} \ .
\label{periodic invariant states}
\end{equation}%
By \cite[Lemma 1.8]{BruPedra2}, periodic states are even and form a weak$%
^{\ast }$-dense subset of even states:

\begin{proposition}[Weak$^{\ast }$-density of periodic states]
\label{density of periodic states}\mbox{
}\newline
Let $\mathfrak{L}=\mathbb{Z}^{d}$. The set $E_{\mathrm{p}}$ of periodic
states is a weak$^{\ast }$-dense set of $E^{+}$, i.e., $E^{+}=\overline{E_{%
\mathrm{p}}}$.
\end{proposition}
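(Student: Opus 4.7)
The plan is to combine the cited inclusion $E_{\mathrm{p}}\subseteq E^{+}$ from \cite[Lemma 1.8]{BruPedra2} with an explicit construction that, given any $\rho\in E^{+}$ and any basic weak$^{\ast }$-neighborhood of $\rho $, produces a periodic state inside it. So one only needs to establish the density part. Fix $\rho \in E^{+}$ together with local elements $A_{1},\ldots ,A_{n}\in \mathcal{U}_{0}$ and $\varepsilon >0$; choose $L\in \mathbb{N}$ so large that $A_{1},\ldots ,A_{n}\in \mathcal{U}_{\Lambda _{L}}$, and then pick $\vec{\ell}\in \mathbb{N}^{d}$ with $\ell _{1},\ldots ,\ell _{d}>2L+1$, so that the box $\Lambda _{L}$ fits inside a fundamental domain
\begin{equation*}
\Lambda _{\vec{\ell}}^{(0)}\doteq \{0,\ldots ,\ell _{1}-1\}\times \cdots \times \{0,\ldots ,\ell _{d}-1\}
\end{equation*}
of the subgroup $\mathbb{Z}_{\vec{\ell}}^{d}\subseteq \mathbb{Z}^{d}$ (possibly after a harmless translation).

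The main step is to construct an $\vec{\ell}$-periodic state $\rho _{\vec{\ell}}\in E_{\vec{\ell}}$ whose restriction to $\mathcal{U}_{\Lambda _{\vec{\ell}}^{(0)}}$ agrees with the restriction of $\rho $. The natural guess is a tensor-product tiling: over the translates $\{\Lambda _{\vec{\ell}}^{(0)}+x\}_{x\in \mathbb{Z}_{\vec{\ell}}^{d}}$, use the translated restrictions $\rho \circ \alpha _{-x}|_{\mathcal{U}_{\Lambda _{\vec{\ell}}^{(0)}+x}}$ as marginals. This is precisely where the anticommutation relations obstruct a naive tensor product on $\mathcal{U}$ itself; the right framework is $\mathcal{U}^{+}$. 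Working on $\mathcal{U}^{+}$, disjoint local even subalgebras commute by (\ref{commutatorbis}), so the family $(\mathcal{U}_{\Lambda _{\vec{\ell}}^{(0)}+x}^{+})_{x\in \mathbb{Z}_{\vec{\ell}}^{d}}$ behaves as an infinite tensor product of isomorphic finite-dimensional $C^{\ast }$-algebras, and the product state $\bigotimes_{x\in \mathbb{Z}_{\vec{\ell}}^{d}}(\rho \circ \alpha _{-x})|_{\mathcal{U}_{\Lambda _{\vec{\ell}}^{(0)}+x}^{+}}$ is well defined as a state on $\mathcal{U}^{+}$ and is by construction $\mathbb{Z}_{\vec{\ell}}^{d}$-invariant. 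Then I would lift this state to an even state $\rho _{\vec{\ell}}\in E^{+}$ by the same device used in the proof of Proposition \ref{density of periodic states copy(3)}: precompose with $(\mathbf{1}_{\mathcal{U}}+\sigma )/2$ and invoke \cite[Proposition 2.3.11]{BrattelliRobinsonI}. Translation invariance under $\mathbb{Z}_{\vec{\ell}}^{d}$ transfers from $\mathcal{U}^{+}$ to $\mathcal{U}$ because $\alpha _{x}$ and $\sigma $ commute.

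Finally I would verify the approximation. Decompose each $A_{i}=A_{i}^{+}+A_{i}^{-}$ as in (\ref{decomposition even odd}). Since both $\rho $ and $\rho _{\vec{\ell}}$ are even, they annihilate odd elements, so $\rho (A_{i}^{-})=\rho _{\vec{\ell}}(A_{i}^{-})=0$. On even elements in $\mathcal{U}_{\Lambda _{\vec{\ell}}^{(0)}}^{+}$ the construction gives $\rho _{\vec{\ell}}(A_{i}^{+})=\rho (A_{i}^{+})$, and therefore $\rho _{\vec{\ell}}(A_{i})=\rho (A_{i})$ for all $i$, so $\rho _{\vec{\ell}}$ lies in the prescribed weak$^{\ast }$-neighborhood. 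Since $\vec{\ell}$ was arbitrary and $\rho _{\vec{\ell}}\in E_{\vec{\ell}}\subseteq E_{\mathrm{p}}$, this proves $E^{+}\subseteq \overline{E_{\mathrm{p}}}$, and the reverse inclusion is just $E_{\mathrm{p}}\subseteq E^{+}$ combined with weak$^{\ast }$-closedness of $E^{+}$.

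The main obstacle, as indicated above, is the tensor-product construction on $\mathcal{U}^{+}$: one must verify that the infinite product of finite-dimensional states indeed defines a state on the inductive limit and that the resulting state is correctly $\mathbb{Z}_{\vec{\ell}}^{d}$-invariant. This is the content of \cite[Lemma 1.8]{BruPedra2} and relies crucially on the commutation property (\ref{commutatorbis}) for disjoint even local algebras; the rest of the argument is a routine even/odd decomposition plus the lifting device from Proposition \ref{density of periodic states copy(3)}.
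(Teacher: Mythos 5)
Your overall strategy --- construct $\vec{\ell}$-periodic states whose restriction to a large box agrees with $\rho$, then let the box grow --- matches the paper's proof, and the even/odd bookkeeping at the end is sound. The gap lies in the construction of $\rho_{\vec{\ell}}$, which is exactly the hard step. The $C^{\ast}$-subalgebra generated by the local even algebras $(\mathcal{U}^{+}_{\Lambda^{(0)}_{\vec{\ell}}+x})_{x\in\mathbb{Z}^{d}_{\vec{\ell}}}$ is a \emph{proper} subalgebra of $\mathcal{U}^{+}$: for disjoint $\Lambda_1,\Lambda_2\in\mathcal{P}_f$, writing $\mathcal{U}^-_{\Lambda}$ for the odd part of $\mathcal{U}_{\Lambda}$, the even algebra $\mathcal{U}^+_{\Lambda_1\cup\Lambda_2}$ is the linear span of $\mathcal{U}^+_{\Lambda_1}\mathcal{U}^+_{\Lambda_2}$ and $\mathcal{U}^-_{\Lambda_1}\mathcal{U}^-_{\Lambda_2}$, so even elements such as $a_{x,\mathrm{s}}a_{y,\mathrm{t}}$ with $x,y$ in different tiles are never reached by the even tensor product. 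Your product state is therefore only defined on a strict subalgebra of $\mathcal{U}^{+}$, and the lifting device $\omega\mapsto\omega\circ\tfrac{1}{2}(\mathbf{1}_{\mathcal{U}}+\sigma)$ cannot close the hole --- it extends a state from $\mathcal{U}^{+}$ to $\mathcal{U}$, not from a subalgebra of $\mathcal{U}^{+}$ to $\mathcal{U}^{+}$. You would still have to extend to all of $\mathcal{U}^{+}$ while preserving both $\mathbb{Z}^{d}_{\vec{\ell}}$-invariance and the marginal on the fundamental domain; Hahn--Banach alone destroys the invariance, and any averaging argument over $\mathbb{Z}^{d}_{\vec{\ell}}$ would need to be supplied.

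The paper sidesteps all of this by invoking \cite[Theorem 11.2]{Araki-Moriya}, which builds the \emph{even product state} directly on the full CAR algebra $\mathcal{U}$: given an even state $\omega$ on $\mathcal{U}_{\Lambda}$ for $\Lambda$ a fundamental domain of $\mathbb{Z}^{d}_{\vec{\ell}}$, there is a state $\Omega$ on $\mathcal{U}$ satisfying $\Omega(A_1\cdots A_k)=\prod_{j}(\omega\circ\alpha_{-x_j})(A_j)$ for $A_j\in\mathcal{U}_{\Lambda+x_j}$ and distinct $x_j\in\mathbb{Z}^{d}_{\vec{\ell}}$, and $\Omega$ is $\vec{\ell}$-periodic by construction. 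Evenness of $\omega$ (i.e.\ of $\rho$) is exactly what makes this functional positive once the odd--odd cross terms are taken into account, and those cross terms are precisely what your even-tensor-product formulation discards. Finally, citing \cite[Lemma 1.8]{BruPedra2} for the product-state construction is somewhat circular, since that lemma is essentially the density statement being proved here; the reference you actually want for the key existence step is \cite[Theorem 11.2]{Araki-Moriya}.
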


\begin{proof}
For any $\rho \in E^{+}$ and $n\in \mathbb{N}$, we define the state $\tilde{%
\rho}_{n}$ to be some $(2n+1,\ldots ,2n+1)$-periodic state for which $(%
\tilde{\rho}_{n})_{\Lambda _{n}}=\rho _{\Lambda _{n}}$ in the cubic box $%
\Lambda _{n}$ (\ref{eq:def lambda n}). Such a state always exists because of 
\cite[Theorem 11.2]{Araki-Moriya}, since $\rho $ is, by definition, even.
Clearly, $\{\tilde{\rho}_{n}\}_{n\in \mathbb{N}}$ converges, as $%
n\rightarrow \infty $, towards $\rho \in E^{+}$ with respect to the weak$%
^{\ast }$-topology, by density of $\mathcal{U}_{0}\subseteq \mathcal{U}$.
\end{proof}

The sets $E_{\vec{\ell}}$, $\vec{\ell}\in \mathbb{N}^{d}$, of $\vec{\ell}$%
-periodic states all share the \emph{same} peculiar geometrical structure:
Like the set $E$ of all states for $\mathfrak{L}=\mathbb{Z}^{d}$, they are
metrizable and weak$^{\ast }$-compact convex sets with a weak$^{\ast }$%
-dense set $\mathcal{E}(E_{\vec{\ell}})$ of extreme points: 
\begin{equation}
E_{\vec{\ell}}=\overline{\mathrm{co}\mathcal{E}(E_{\vec{\ell}})}=\overline{%
\mathcal{E}(E_{\vec{\ell}})}\ ,\qquad \vec{\ell}\in \mathbb{N}^{d}\ .
\label{cov heull l perio}
\end{equation}%
Compare with Equations (\ref{closure of the convex hull})-(\ref{density
extreme states}). In fact, up to an affine homeomorphism, for any $\vec{\ell}%
\in \mathbb{N}^{d}$, $E_{\vec{\ell}}$ is the so-called \emph{Poulsen simplex}
\cite[Theorem 1.12]{BruPedra2}. This property is well-known and also holds
true for quantum-spin systems \cite[p. 405-406, 464]{BrattelliRobinsonI}.
The fact that all $E_{\vec{\ell}}$, $\vec{\ell}\in \mathbb{N}^{d}$, have the
same topological structure is not so surprising since, for any fixed $\vec{%
\ell}\in \mathbb{N}^{d}$, we can redefine the spin set $\mathrm{S}=\mathrm{S}%
_{\vec{\ell}}$ and, as a consequence, the CAR algebra $\mathcal{U}=\mathcal{U%
}_{\vec{\ell}}$ to see any $\vec{\ell}$-periodic state $\rho \in E_{\vec{\ell%
}}$ as a translation-invariant state on the new CAR algebra $\mathcal{U}_{%
\vec{\ell}}$. By \cite[Proposition 6.14]{Bru-pedra-MF-I}, $E^{+}$ can also
be seen as the \emph{weak}$^{\ast }$\emph{-Hausdorff limit} of the
increasing sequence $\{E_{\vec{\ell}}\}_{\vec{\ell}\in \mathbb{N}^{d}}$ of
weak$^{\ast }$-compact \emph{simplices} with weak$^{\ast }$-dense set of
extreme points.

By \cite[Theorem 1.16]{BruPedra2} note that the set of all ($\vec{\ell}$-) 
\emph{ergodic} states, as defined by \cite[Definition 1.15]{BruPedra2} for
any $\vec{\ell}\in \mathbb{N}^{d}$, is equal to 
\begin{equation}
\mathcal{E}_{\mathrm{p}}\doteq \bigcup_{\vec{\ell}\in \mathbb{N}^{d}}%
\mathcal{E}\left( E_{\vec{\ell}}\right) \subseteq E_{\mathrm{p}}\subseteq
E^{+}\ .  \label{ergodic states}
\end{equation}%
By Proposition \ref{density of periodic states}, $\mathcal{E}_{\mathrm{p}}$
is also a weak$^{\ast }$-dense set of $E^{+}$.

The set $E_{\mathrm{p}}$ is important because, in all cyclic representations
of $\mathcal{U}$ associated with a state $\rho \in E_{\mathrm{p}}$, the
infinite-volume dynamics of interacting lattice fermions with long-range
interactions exists. The ergodicity of states of $\mathcal{E}_{\mathrm{p}}$
plays a crucial r\^{o}le in this context.

Instead of the set $E_{\mathrm{p}}$ of periodic states, previous studies
extracting classical dynamics for lattice-fermion and quantum-spin systems
with long-range interactions (cf. \cite%
{T1,T2,Hepp-Lieb73,Bona75,Sewell83,Rieckers84,Morchio87,Bona87,Duffner-Rieckers88,Bona88,Bona89,Bona90,Unnerstall90,Unnerstall90b,Unnerstall90-open,Bona91,Duffield1991,BagarelloMorchio92,Duffield-Werner1,Duffield-Werner2,Duffield-Werner3,Duffield-Werner4,Duffield-Werner5}%
) use classical flows within the subset 
\begin{equation}
E_{\Pi }\doteq \left\{ \rho \in E:\rho \circ \mathfrak{p}_{\pi }=\rho \quad 
\text{for}\ \text{all}\ \pi \in \Pi \right\}  \label{permutation inv states}
\end{equation}%
of \emph{permutation-invariant} states, $\mathfrak{p}_{\pi }$ being the
unique $\ast $-automorphism of $\mathcal{U}$ satisfying (\ref{definition
perm automorphism}). This set has a much simpler structure than the set $E_{%
\mathrm{p}}$ of periodic states:%
\begin{equation*}
E_{\Pi }\subseteq \bigcap_{\vec{\ell}\in \mathbb{N}^{d}}E_{\vec{\ell}%
}\subseteq E_{\mathrm{p}}\ .
\end{equation*}%
Compare this assertion with (\ref{set of periodic states}). $E_{\Pi }$ is a
closed metrizable face of $E_{\vec{\ell}}$ for all $\vec{\ell}\in \mathbb{N}%
^{d}$ and a Bauer simplex, i.e., a (Choquet) simplex whose set of extreme
points is closed. Indeed, by St{\o }rmer's theorem \cite[Theorem 5.2]%
{BruPedra2}, extreme points of $E_{\Pi }$ are so-called \emph{product states}%
. See \cite[Section 5.1]{BruPedra2} for more details on
permutation-invariant states.

\section{Lattice Fermions with Short-Range Interactions\label{Section FERMI}}

\subsection{Banach Spaces of Short-Range Interactions\label{Section Banach
space interaction}}

A (complex) \emph{interaction} is a mapping $\Phi :\mathcal{P}%
_{f}\rightarrow \mathcal{U}^{+}$ such that $\Phi _{\Lambda }\in \mathcal{U}%
_{\Lambda }$ for any $\Lambda \in \mathcal{P}_{f}$. The set of all
interactions can be naturally endowed with the structure of a complex vector
space as follows: 
\begin{equation}
(\Phi +\tilde{\Phi})_{\Lambda }\doteq \Phi _{\Lambda }+\tilde{\Phi}_{\Lambda
}\text{\qquad and\qquad }(\lambda \Phi )_{\Lambda }\doteq \lambda \Phi
_{\Lambda }  \label{vector space interaction}
\end{equation}%
for all interactions $\Phi ,\tilde{\Phi}$\ and $\lambda \in \mathbb{C}$. The
mapping%
\begin{equation}
\Phi \mapsto \Phi ^{\ast }\doteq (\Phi _{\Lambda }^{\ast })_{\Lambda \in 
\mathcal{P}_{f}}  \label{involution}
\end{equation}%
is a natural involution on the vector space of interactions. Self-adjoint
interactions are interactions $\Phi $ satisfying $\Phi =\Phi ^{\ast }$. The
set of all self-adjoint interactions forms a real subspace of the space of
all interactions. Any interaction $\Phi $ can be decomposed into its real
and imaginary parts, which are self-adjoint interactions respectively
defined by 
\begin{equation}
\mathrm{Re}\left\{ \Phi \right\} \doteq \frac{1}{2}\left( \Phi ^{\ast }+\Phi
\right) \qquad \text{and}\qquad \mathrm{Im}\left\{ \Phi \right\} \doteq 
\frac{1}{2i}\left( \Phi -\Phi ^{\ast }\right) \ .
\label{real-im part interaction}
\end{equation}

We now define a Banach space $\mathcal{W}$ of short-range interactions by
introducing a norm for interactions that take into account their spacial
decay. To this end, we use a positive-valued symmetric function $\mathbf{F}:%
\mathfrak{L}^{2}\rightarrow (0,1]$ with maximum value $\mathbf{F}\left(
x,x\right) =1$ for all $x\in \mathfrak{L}$. Like for instance in \cite%
{brupedraLR}, we impose the following conditions on $\mathbf{F}$:

\begin{itemize}
\item \emph{Summability on }$\mathfrak{L}$\emph{.} 
\begin{equation}
\left\Vert \mathbf{F}\right\Vert _{1,\mathfrak{L}}\doteq \underset{y\in 
\mathfrak{L}}{\sup }\sum_{x\in \mathfrak{L}}\mathbf{F}\left( x,y\right) \in %
\left[ 1,\infty \right) \ .  \label{(3.1) NS}
\end{equation}

\item \emph{Bounded convolution constant.} 
\begin{equation}
\mathbf{D}\doteq \underset{x,y\in \mathfrak{L}}{\sup }\sum_{z\in \mathfrak{L}%
}\frac{\mathbf{F}\left( x,z\right) \mathbf{F}\left( z,y\right) }{\mathbf{F}%
\left( x,y\right) }<\infty \ .  \label{(3.2) NS}
\end{equation}
\end{itemize}

\noindent Examples of functions $\mathbf{F}:\mathfrak{L}^{2}\rightarrow
(0,1] $ satisfying (\ref{(3.1) NS})-(\ref{(3.2) NS}) for any lattice $%
\mathfrak{L}\subseteq \mathbb{Z}^{d}$ ($d\in \mathbb{N}$) are given by 
\begin{equation}
\mathbf{F}\left( x,y\right) =\left( 1+\left\vert x-y\right\vert \right)
^{-(d+\epsilon )}\qquad \text{or}\qquad \mathbf{F}\left( x,y\right) =\mathrm{%
e}^{-\varsigma \left\vert x-y\right\vert }(1+\left\vert x-y\right\vert
)^{-(d+\epsilon )}  \label{examples}
\end{equation}%
for every $\varsigma ,\epsilon \in \mathbb{R}^{+}$. In all the paper, (\ref%
{(3.1) NS})-(\ref{(3.2) NS}) are assumed to be satisfied.

Then, a norm for interactions $\Phi $ is defined by%
\begin{equation}
\left\Vert \Phi \right\Vert _{\mathcal{W}}\doteq \underset{x,y\in \mathfrak{L%
}}{\sup }\sum\limits_{\Lambda \in \mathcal{P}_{f},\;\Lambda \supseteq
\{x,y\}}\frac{\Vert \Phi _{\Lambda }\Vert _{\mathcal{U}}}{\mathbf{F}\left(
x,y\right) }  \label{iteration0}
\end{equation}%
and 
\begin{equation*}
\mathcal{W\equiv W}^{(\mathfrak{L})}\equiv (\mathcal{W},\left\Vert \cdot
\right\Vert _{\mathcal{W}})
\end{equation*}%
denotes the separable Banach space of interactions $\Phi $ satisfying $%
\left\Vert \Phi \right\Vert _{\mathcal{W}}<\infty $. Elements $\Phi \in 
\mathcal{W}$ are named \emph{short-range} interactions on $\mathfrak{L}%
\subseteq \mathbb{Z}^{d}$. The (real) Banach subspace of all self-adjoint
interactions is denoted by $\mathcal{W}^{\mathbb{R}}\varsubsetneq \mathcal{W}
$, similar to $\mathcal{U}^{\mathbb{R}}\varsubsetneq \mathcal{U}$.

By definition, an interaction $\Phi $ on $\mathfrak{L}=\mathbb{Z}^{d}$ is
translation-invariant\ if, for all $x\in \mathbb{Z}^{d}$ and $\Lambda \in 
\mathcal{P}_{f}$, $\Phi _{\Lambda +x}=\alpha _{x}(\Phi _{\Lambda })$, where%
\begin{equation}
\Lambda +x\doteq \left\{ y+x\in \mathbb{Z}^{d}:y\in \Lambda \right\} \ .
\label{translation box}
\end{equation}%
Recall that $\{\alpha _{x}\}_{x\in \mathbb{Z}^{d}}$ is the family of
(translation) $\ast $-automorphisms of $\mathcal{U}$ defined by (\ref{transl}%
). We denote by $\mathcal{W}_{1}\varsubsetneq \mathcal{W}$ the (separable)
Banach subspace of translation-invariant, short-range interactions on $%
\mathfrak{L}=\mathbb{Z}^{d}$.

For any $\Phi \in \mathcal{W}$ and $\vec{\ell}\in \mathbb{N}^{d}$, we define
the even observable%
\begin{equation}
\mathfrak{e}_{\Phi ,\vec{\ell}}\doteq \frac{1}{\ell _{1}\cdots \ell _{d}}%
\sum\limits_{x=(x_{1},\ldots ,x_{d}),\;x_{i}\in \{0,\ldots ,\ell
_{i}-1\}}\sum\limits_{\mathcal{Z}\in \mathcal{P}_{f},\;\mathcal{Z}\ni x}%
\frac{\Phi _{\mathcal{Z}}}{\left\vert \mathcal{Z}\right\vert }\ .
\label{eq:enpersite}
\end{equation}%
From (\ref{(3.1) NS}) and (\ref{iteration0}), note that%
\begin{equation}
\Vert \mathfrak{e}_{\Phi ,\vec{\ell}}\Vert _{\mathcal{U}}\leq \left\Vert 
\mathbf{F}\right\Vert _{1,\mathfrak{L}}\left\Vert \Phi \right\Vert _{%
\mathcal{W}}\ ,\qquad \Phi \in \mathcal{W},\ \vec{\ell}\in \mathbb{N}^{d}.
\label{e phi}
\end{equation}%
For any self-adjoint, translation-invariant interaction $\Phi $, they refer
to the energy density observables of \cite[Eq. (1.16)]{BruPedra2}. See also
Proposition \ref{density of periodic states copy(2)} below. Additionally,
for any $\Lambda \in \mathcal{P}_{f}$, we define the closed subspaces%
\footnote{%
This follows from the continuity and linearity of the mappings $\Phi \mapsto
\Phi _{\mathcal{Z}}$ for all $\mathcal{Z}\in \mathcal{P}_{f}$.}%
\begin{equation}
\mathcal{W}_{\Lambda }\doteq \left\{ \Phi \in \mathcal{W}_{1}:\Phi _{%
\mathcal{Z}}=0\text{ whenever }\mathcal{Z}\nsubseteq \Lambda \text{, }%
\mathcal{Z}\ni 0\right\}  \label{eq:enpersitebis}
\end{equation}%
of interactions that are both finite-range and translation-invariant. Note
that, for any $\Lambda \in \mathcal{P}_{f}$ and $\vec{\ell}\in \mathbb{N}%
^{d} $, 
\begin{equation}
\mathcal{W}_{\Lambda }\subseteq \left\{ \Phi \in \mathcal{W}_{1}:\mathfrak{e}%
_{\Phi ,\vec{\ell}}\in \mathcal{U}_{\Lambda ^{(\vec{\ell})}}\right\}
\subseteq \mathcal{W}_{1}\ ,  \label{eq:enpersitebisbis}
\end{equation}%
where 
\begin{equation}
\Lambda ^{(\vec{\ell})}\doteq \cup \left\{ \Lambda +x:x=(x_{1},\ldots
,x_{d}),\;x_{i}\in \{0,\ldots ,\ell _{i}-1\}\right\} \in \mathcal{P}_{f}\ .
\label{eq:enpersitebisbisbis}
\end{equation}%
Recall that, for $\Lambda \in \mathcal{P}_{f}$, $\mathcal{U}_{\Lambda
}\subseteq \mathcal{U}$ is the finite-dimensional unital $C^{\ast }$-algebra
generated by elements $\{a_{x,\mathrm{s}}\}_{x\in \Lambda ,\mathrm{s}\in 
\mathrm{S}}$ satisfying the CAR (\ref{CARbis}). Similar to (\ref{simple}),%
\begin{equation}
\mathcal{W}_{0}\doteq \bigcup_{L\in \mathbb{N}}\mathcal{W}_{\Lambda
_{L}}\subseteq \mathcal{W}_{1}  \label{W0}
\end{equation}%
is a dense subspace of $\mathcal{W}_{1}$. Elements of $\mathcal{W}_{0}$ are 
\emph{finite-range}, translation-invariant interactions.

\subsection{Local Energy Elements}

We define a sequence of local elements associated with any complex
interaction $\Phi \in \mathcal{W}$ as follows:

\begin{definition}[Local energy]
\label{definition fininte vol dynam0}\mbox{ }\newline
The local energy elements associated with a complex interaction $\Phi \in 
\mathcal{W}$ are given by%
\begin{equation*}
U_{L}^{\Phi }\doteq \sum\limits_{\Lambda \subseteq \Lambda _{L}}\Phi
_{\Lambda }\in \mathcal{U}_{\Lambda _{L}}\cap \mathcal{U}^{+}\ ,\qquad L\in 
\mathbb{N}\ .
\end{equation*}
\end{definition}

\noindent If $\Phi \in \mathcal{W}^{\mathbb{R}}$, then $(U_{L}^{\Phi
})_{L\in \mathbb{N}}\in \mathcal{U}^{\mathbb{R}}$ and so, Definition \ref%
{definition fininte vol dynam0} yields a sequence of local Hamiltonians,
which are used to generate finite-volume dynamics.

By straightforward estimates using (\ref{(3.1) NS}) and (\ref{iteration0}),
as in (\ref{e phi}), note that, for any complex interactions $\Phi ,\Psi \in 
\mathcal{W}$, 
\begin{equation}
\left\Vert U_{L}^{\Phi }-U_{L}^{\Psi }\right\Vert _{\mathcal{U}}=\left\Vert
U_{L}^{\Phi -\Psi }\right\Vert _{\mathcal{U}}\leq \left\vert \Lambda
_{L}\right\vert \left\Vert \mathbf{F}\right\Vert _{1,\mathfrak{L}}\left\Vert
\Phi -\Psi \right\Vert _{\mathcal{W}}\ ,\qquad L\in \mathbb{N}\ .
\label{norm Uphi}
\end{equation}%
In addition, similar to \cite[Lemma 1.32]{BruPedra2}, local energy elements
yield energy densities for translation-invariant short-range interactions:

\begin{proposition}[Energy density of periodic states]
\label{density of periodic states copy(2)}\mbox{
}\newline
For any $\vec{\ell}\in \mathbb{N}^{d}$, $\vec{\ell}$-periodic state $\rho
\in E_{\vec{\ell}}$ (\ref{periodic invariant states}) and
translation-invariant complex interaction$\ \Phi \in \mathcal{W}_{1}$, 
\begin{equation*}
\lim\limits_{L\rightarrow \infty }\frac{\rho \left( U_{L}^{\Phi }\right) }{%
\left\vert \Lambda _{L}\right\vert }=\rho (\mathfrak{e}_{\Phi ,\vec{\ell}})
\end{equation*}%
with $\mathfrak{e}_{\Phi ,\vec{\ell}}$ being the even observable defined by (%
\ref{eq:enpersite}).
\end{proposition}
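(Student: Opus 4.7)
The plan is to rewrite the local energy element $U_{L}^{\Phi }$ as a sum indexed by lattice sites, by distributing each $\Phi _{\Lambda }$ equally among the $|\Lambda |$ sites of $\Lambda $:
\[
U_{L}^{\Phi }\;=\;\sum_{x\in \Lambda _{L}}\sum_{\mathcal{Z}\subseteq \Lambda _{L},\ \mathcal{Z}\ni x}\frac{\Phi _{\mathcal{Z}}}{|\mathcal{Z}|}.
\]
I then introduce the site-energy elements
\[
e_{x}\doteq \sum_{\mathcal{Z}\in \mathcal{P}_{f},\ \mathcal{Z}\ni x}\frac{\Phi _{\mathcal{Z}}}{|\mathcal{Z}|}\in \mathcal{U}^{+},
\]
whose absolute convergence in $\mathcal{U}$ follows from the very same estimate that gives (\ref{e phi}) from (\ref{iteration0}) and (\ref{(3.1) NS}). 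Translation invariance of $\Phi $ yields $e_{x}=\alpha _{x}(e_{0})$, and $\mathfrak{e}_{\Phi ,\vec{\ell}}$ is by definition the average of $e_{y_{0}}$ over the fundamental domain $D_{\vec{\ell}}\doteq \{0,\ldots ,\ell _{1}-1\}\times \cdots \times \{0,\ldots ,\ell _{d}-1\}$.

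The second step is a boundary estimate. Writing
\[
U_{L}^{\Phi }-\sum_{x\in \Lambda _{L}}e_{x}\;=\;-\sum_{x\in \Lambda _{L}}\sum_{\mathcal{Z}\ni x,\ \mathcal{Z}\not\subseteq \Lambda _{L}}\frac{\Phi _{\mathcal{Z}}}{|\mathcal{Z}|},
\]
I use the inequality $1/|\mathcal{Z}|\leq \sum_{y\in \mathcal{Z}\setminus \Lambda _{L}}1/|\mathcal{Z}|$ (valid whenever $\mathcal{Z}\not\subseteq \Lambda _{L}$), exchange sums, and apply (\ref{iteration0}) to get
\[
\Big\|U_{L}^{\Phi }-\sum_{x\in \Lambda _{L}}e_{x}\Big\|_{\mathcal{U}}\;\leq \;\Vert \Phi \Vert _{\mathcal{W}}\sum_{x\in \Lambda _{L}}\sum_{y\notin \Lambda _{L}}\mathbf{F}(x,y).
\]
A standard splitting of $\Lambda _{L}$ into an interior $\{x\in \Lambda _{L}:\mathrm{dist}(x,\Lambda _{L}^{c})>R\}$, on which $\sum_{y\notin \Lambda _{L}}\mathbf{F}(x,y)<\varepsilon $ by the summability condition (\ref{(3.1) NS}), and a boundary layer of volume $O(L^{d-1}R)=o(|\Lambda _{L}|)$, on which the inner sum is controlled by $\Vert \mathbf{F}\Vert _{1,\mathfrak{L}}$, shows that the right-hand side is $o(|\Lambda _{L}|)$ as $L\rightarrow \infty $.

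For the main term, I use $\vec{\ell}$-periodicity of $\rho $: every $x\in \Lambda _{L}$ decomposes uniquely as $x=y_{0}+z$ with $y_{0}\in D_{\vec{\ell}}$ and $z\in \mathbb{Z}_{\vec{\ell}}^{d}$, so that $\rho (e_{x})=\rho (\alpha _{z}(e_{y_{0}}))=\rho (e_{y_{0}})$. Consequently,
\[
\frac{1}{|\Lambda _{L}|}\sum_{x\in \Lambda _{L}}\rho (e_{x})\;=\;\sum_{y_{0}\in D_{\vec{\ell}}}\frac{N_{y_{0},L}}{|\Lambda _{L}|}\,\rho (e_{y_{0}}),\qquad N_{y_{0},L}\doteq |\Lambda _{L}\cap (y_{0}+\mathbb{Z}_{\vec{\ell}}^{d})|.
\]
Since $N_{y_{0},L}/|\Lambda _{L}|\rightarrow 1/(\ell _{1}\cdots \ell _{d})$ as $L\rightarrow \infty $ for each $y_{0}\in D_{\vec{\ell}}$, combining with the boundary estimate gives $|\Lambda _{L}|^{-1}\rho (U_{L}^{\Phi })\rightarrow \rho (\mathfrak{e}_{\Phi ,\vec{\ell}})$, as claimed. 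The only mildly technical ingredient is the boundary estimate above; everything else is purely bookkeeping via translation invariance and the counting of periodic orbits inside cubic boxes.
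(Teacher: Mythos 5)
Your argument is self-contained, whereas the paper's own proof is just a pointer to \cite[Lemma 4.17]{BruPedra2}; the overall strategy --- rewriting $U_{L}^{\Phi }$ as a sum of site energies $e_{x}$, peeling off a boundary term, and then using $\vec{\ell}$-periodicity and a density count for the bulk --- is certainly the intended one, and the treatment of the main term (the decomposition $x=y_{0}+z$, the identity $\rho (e_{x})=\rho (e_{y_{0}})$, and the limit $N_{y_{0},L}/|\Lambda _{L}|\rightarrow 1/(\ell _{1}\cdots \ell _{d})$) is correct.

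There is, however, a small gap in your boundary estimate. After exchanging sums you arrive at the bound $\Vert \Phi \Vert _{\mathcal{W}}\sum_{x\in \Lambda _{L}}\sum_{y\notin \Lambda _{L}}\mathbf{F}(x,y)$, and to show this is $o(|\Lambda _{L}|)$ you split $\Lambda _{L}$ into a boundary layer and an interior, asserting that on the interior $\{x:\mathrm{dist}(x,\Lambda _{L}^{c})>R\}$ one has $\sum_{y\notin \Lambda _{L}}\mathbf{F}(x,y)<\varepsilon $ ``by the summability condition (\ref{(3.1) NS}).'' That step implicitly needs the tail $\sum_{|y-x|>R}\mathbf{F}(x,y)$ to be small \emph{uniformly in }$x$, i.e., uniform decay of $\mathbf{F}$. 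Conditions (\ref{(3.1) NS})--(\ref{(3.2) NS}) only give $\sup_{x}\sum_{y}\mathbf{F}(x,y)<\infty $; they do not by themselves imply $\lim_{R\rightarrow \infty }\sup_{x}\sum_{|y-x|>R}\mathbf{F}(x,y)=0$. The property does hold for the translation-invariant examples (\ref{examples}), but it is an extra ingredient you should not invoke silently.

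The cleanest repair uses the translation invariance of $\Phi $ (which you invoke only for the main term) also in the boundary estimate, thereby avoiding $\mathbf{F}$ altogether: substituting $\mathcal{Z}=\mathcal{Z}^{\prime }+x$ with $\mathcal{Z}^{\prime }\ni 0$ in your expression for $U_{L}^{\Phi }-\sum_{x\in \Lambda _{L}}e_{x}$ and using $\Vert \Phi _{\mathcal{Z}^{\prime }+x}\Vert _{\mathcal{U}}=\Vert \Phi _{\mathcal{Z}^{\prime }}\Vert _{\mathcal{U}}$ gives
\begin{equation*}
\Big\| U_{L}^{\Phi }-\sum_{x\in \Lambda _{L}}e_{x}\Big\| _{\mathcal{U}}\leq \sum_{\mathcal{Z}^{\prime }\ni 0}\frac{\Vert \Phi _{\mathcal{Z}^{\prime }}\Vert _{\mathcal{U}}}{|\mathcal{Z}^{\prime }|}\,\big| \{x\in \Lambda _{L}:\mathcal{Z}^{\prime }+x\not\subseteq \Lambda _{L}\}\big| \ .
\end{equation*}
The series $\sum_{\mathcal{Z}^{\prime }\ni 0}\Vert \Phi _{\mathcal{Z}^{\prime }}\Vert _{\mathcal{U}}/|\mathcal{Z}^{\prime }|$ is $\leq \Vert \mathbf{F}\Vert _{1,\mathfrak{L}}\Vert \Phi \Vert _{\mathcal{W}}$ by the same estimate that underlies (\ref{e phi}); for each fixed $\mathcal{Z}^{\prime }$ the counting factor is bounded by $|\Lambda _{L}|$ and is $O(L^{d-1})$, hence $o(|\Lambda _{L}|)$, so dominated convergence yields the required $o(|\Lambda _{L}|)$ bound with no hypothesis on $\mathbf{F}$ beyond (\ref{(3.1) NS}). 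With that adjustment your proof is complete and correct.
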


\begin{proof}
All arguments of the proof can be found in \cite[Lemma 4.17]{BruPedra2}.
\end{proof}

By Proposition \ref{density of periodic states copy(2)}, the energy density%
\begin{equation*}
e_{\Phi }\left( \rho \right) \doteq \lim\limits_{L\rightarrow \infty }\frac{%
\rho \left( U_{L}^{\Phi }\right) }{\left\vert \Lambda _{L}\right\vert }\
,\qquad \Phi \in \mathcal{W}_{1}\ ,
\end{equation*}%
exists for all periodic states, which form a weak$^{\ast }$-dense subset $E_{%
\mathrm{p}}$ of the set $E^{+}$ of even states, by Proposition \ref{density
of periodic states}. Note that there are infinitely many (uncountable) other
states with this property. Examples of non-periodic states for which $%
e_{\Phi }(\rho )$ exists can be constructed by using KMS\ states associated
with \emph{random} interactions together with the Akcoglu-Krengel ergodic
theorem \cite{birkoff}. See, e.g., \cite{OhmVI}. For all $\rho \in E$ such
that $e_{\Phi }\left( \rho \right) $ exists, observe from Inequality (\ref%
{norm Uphi}) that 
\begin{equation}
\left\vert e_{\Phi }\left( \rho \right) -e_{\Psi }\left( \rho \right)
\right\vert \leq \left\Vert \mathbf{F}\right\Vert _{1,\mathfrak{L}%
}\left\Vert \Phi -\Psi \right\Vert _{\mathcal{W}}\ ,\qquad \Phi ,\Psi \in 
\mathcal{W}_{1}\ .  \label{inequality a la con}
\end{equation}

\subsection{Derivations on the CAR Algebra\label{sect Lieb--Robinson copy(1)}%
}

For any short-range interaction $\Phi \in \mathcal{W}$, the elements of
Definition \ref{definition fininte vol dynam0} determine a sequence of
bounded operators on $\mathcal{U}$:

\begin{definition}[Derivations on the CAR\ algebra for short-range
interactions]
\label{dynamic series}\mbox{ }\newline
The derivations $\{\delta _{L}^{\Phi }\}_{L\in \mathbb{N}}\subseteq \mathcal{%
B}(\mathcal{U})$ associated with any interaction $\Phi \in \mathcal{W}$ are
defined by%
\begin{equation*}
\delta _{L}^{\Phi }(A)\doteq i\left[ U_{L}^{\Phi },A\right] \doteq i\left(
U_{L}^{\Phi }A-AU_{L}^{\Phi }\right) \ ,\qquad A\in \mathcal{U},\ L\in 
\mathbb{N}\ .
\end{equation*}
\end{definition}

\noindent They are (bounded) derivations on $\mathcal{U}$ since, for any $%
\Phi \in \mathcal{W}$ and $L\in \mathbb{N}$, 
\begin{equation}
\delta _{L}^{\Phi }\left( AB\right) =\delta _{L}^{\Phi }\left( A\right)
B+A\delta _{L}^{\Phi }\left( B\right) \ ,\qquad A,B\in \mathcal{U}\ .
\label{derivation}
\end{equation}%
They are symmetric (or $\ast $-derivations) when $\Phi \in \mathcal{W}^{%
\mathbb{R}}$: 
\begin{equation}
\delta _{L}^{\Phi }\left( A\right) ^{\ast }=\delta _{L}^{\Phi }(A^{\ast })\
,\qquad A\in \mathcal{U},\ L\in \mathbb{N}\ .  \label{derivationbis}
\end{equation}

In the thermodynamic limit $L\rightarrow \infty $, this sequence of
derivations leads to a limit derivation defined on the (dense) subset $%
\mathcal{U}_{0}$ (\ref{simple}) of local elements:

\begin{proposition}[Strong convergence of finite-volume derivations]
\label{Lemma cigare0}\mbox{
}\newline
For any $\Phi \in \mathcal{W}$, $\Lambda \in \mathcal{P}_{f}$, $A\in 
\mathcal{U}_{\Lambda }$ and $L_{2}\geq L_{1}\geq 0$, 
\begin{equation*}
\left\Vert \delta _{L_{2}}^{\Phi }\left( A\right) -\delta _{L_{1}}^{\Phi
}\left( A\right) \right\Vert _{\mathcal{U}}\leq 2\left\vert \Lambda
\right\vert \left\Vert A\right\Vert _{\mathcal{U}}\left\Vert \Phi
\right\Vert _{\mathcal{W}}\sup_{y\in \Lambda }\sum\limits_{x\in \Lambda
_{L_{1}}^{c}}\mathbf{F}\left( x,y\right)
\end{equation*}%
and%
\begin{equation*}
\sup_{L\in \mathbb{N}}\left\Vert \delta _{L}^{\Phi }\left( A\right)
\right\Vert _{\mathcal{U}}\leq 2\left\vert \Lambda \right\vert \left\Vert
A\right\Vert _{\mathcal{U}}\left\Vert \Phi \right\Vert _{\mathcal{W}%
}\left\Vert \mathbf{F}\right\Vert _{1,\mathfrak{L}}\ .
\end{equation*}
\end{proposition}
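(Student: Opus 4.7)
The plan is to expand both derivations in terms of the interaction pieces, reduce to those interaction supports that actually matter via the even-element commutation rule \eqref{commutatorbis}, and then pay for each surviving commutator by pulling out one factor of $\|\Phi\|_{\mathcal{W}}$ from the defining supremum \eqref{iteration0}.

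Concretely, I first write, for $A \in \mathcal{U}_\Lambda$ and $L_2 \geq L_1 \geq 0$,
\begin{equation*}
\delta_{L_2}^{\Phi}(A) - \delta_{L_1}^{\Phi}(A)
\;=\; i\bigl[\,U_{L_2}^{\Phi}-U_{L_1}^{\Phi},A\bigr]
\;=\; i\sum_{\Lambda' \subseteq \Lambda_{L_2},\;\Lambda' \not\subseteq \Lambda_{L_1}}[\Phi_{\Lambda'},A].
\end{equation*}
Since $\Phi_{\Lambda'}\in \mathcal{U}_{\Lambda'}\cap \mathcal{U}^{+}$, the causal relation \eqref{commutatorbis} forces $[\Phi_{\Lambda'},A]=0$ whenever $\Lambda'\cap\Lambda=\emptyset$. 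Hence only supports $\Lambda'$ with $\Lambda'\cap\Lambda\neq\emptyset$ \emph{and} $\Lambda'\cap \Lambda_{L_1}^{c}\neq\emptyset$ contribute, and the crude commutator bound $\|[\Phi_{\Lambda'},A]\|_{\mathcal{U}}\leq 2\|\Phi_{\Lambda'}\|_{\mathcal{U}}\|A\|_{\mathcal{U}}$ gives
\begin{equation*}
\bigl\|\delta_{L_2}^{\Phi}(A)-\delta_{L_1}^{\Phi}(A)\bigr\|_{\mathcal{U}}
\;\leq\; 2\|A\|_{\mathcal{U}}\sum_{\substack{\Lambda'\cap\Lambda\neq\emptyset\\ \Lambda'\cap\Lambda_{L_1}^{c}\neq\emptyset}}\|\Phi_{\Lambda'}\|_{\mathcal{U}}.
\end{equation*}

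The key step is then to estimate this remaining sum by choosing, for each such $\Lambda'$, a witness pair: any $y\in\Lambda'\cap\Lambda$ and any $x\in\Lambda'\cap\Lambda_{L_1}^{c}$ provide $\{x,y\}\subseteq \Lambda'$ with $y\in\Lambda$ and $x\in\Lambda_{L_1}^{c}$. Over-counting by summing over all admissible witnesses and then applying \eqref{iteration0} yields
\begin{equation*}
\sum_{\substack{\Lambda'\cap\Lambda\neq\emptyset\\ \Lambda'\cap\Lambda_{L_1}^{c}\neq\emptyset}}\!\|\Phi_{\Lambda'}\|_{\mathcal{U}}
\;\leq\; \sum_{y\in\Lambda}\sum_{x\in\Lambda_{L_1}^{c}}\!\!\sum_{\Lambda'\supseteq \{x,y\}}\!\|\Phi_{\Lambda'}\|_{\mathcal{U}}
\;\leq\; \|\Phi\|_{\mathcal{W}}\sum_{y\in\Lambda}\sum_{x\in\Lambda_{L_1}^{c}}\mathbf{F}(x,y),
\end{equation*}
and bounding the $y$-sum by $|\Lambda|$ times $\sup_{y\in\Lambda}$ produces precisely the first displayed inequality. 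The over-counting is harmless: it only replaces an equality by an inequality, and the resulting double sum is exactly what the definition of $\|\cdot\|_{\mathcal{W}}$ is designed to control.

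For the second estimate, I would run the identical argument to the expression $\delta_{L}^{\Phi}(A)=i[U_{L}^{\Phi},A]=i\sum_{\Lambda'\subseteq \Lambda_{L}}[\Phi_{\Lambda'},A]$ without any exclusion on $\Lambda_{L_1}^{c}$, so that the inner $x$-sum ranges over all of $\mathfrak{L}$ and collapses, by \eqref{(3.1) NS}, to $\|\mathbf{F}\|_{1,\mathfrak{L}}$, uniformly in $L$. Alternatively, this is the formal $L_1=-\infty$ (i.e.\ $\Lambda_{L_1}=\emptyset$) specialization of the first bound. I do not anticipate real obstacles here: the only subtle point is the witness-pair over-counting step, and the main conceptual ingredient, which is the vanishing of the commutators with disjoint support, is already guaranteed because interactions take values in $\mathcal{U}^{+}$ by definition.
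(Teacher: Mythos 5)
Your proof is correct and is essentially the paper's own argument: both expand the difference of derivations over interaction supports, invoke the evenness of $\Phi_{\mathcal{Z}}$ together with \eqref{commutatorbis} to kill terms with $\mathcal{Z}\cap\Lambda=\emptyset$, over-count each surviving $\mathcal{Z}$ by a witness pair $\{x,y\}$ with $y\in\Lambda$ and $x\in\Lambda_{L_1}^{c}$, and then apply \eqref{iteration0} and \eqref{(3.1) NS}. The paper compresses the witness-pair accounting into a single displayed chain of inequalities, but the underlying mechanism is the same.
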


\begin{proof}
Fixing all parameters of the proposition, we straightforwardly get the
estimate%
\begin{eqnarray*}
\left\Vert \delta _{L_{2}}^{\Phi }\left( A\right) -\delta _{L_{1}}^{\Phi
}\left( A\right) \right\Vert _{\mathcal{U}} &\leq &\sum_{y\in \Lambda
}\sum_{x\in \Lambda _{L_{1}}^{c}}\sum\limits_{\mathcal{Z}\in \mathcal{P}%
_{f},\ \mathcal{Z}\supseteq \{x,y\}}\left\Vert \left[ \Phi _{\mathcal{Z}},A%
\right] \right\Vert _{\mathcal{U}} \\
&\leq &2\left\Vert A\right\Vert _{\mathcal{U}}\left\Vert \Phi \right\Vert _{%
\mathcal{W}}\sum_{y\in \Lambda }\sum_{x\in \Lambda _{L_{1}}^{c}}\mathbf{F}%
\left( x,y\right) \ ,
\end{eqnarray*}%
which implies the first assertion. The second assertion is even simpler to
prove. We omit the details.
\end{proof}

\begin{corollary}[Generators of infinite-volume short-range dynamics]
\label{Lemma cigare1}\mbox{
}\newline
For any $\Phi \in \mathcal{W}$ and $A\in \mathcal{U}_{0}$, the limit%
\begin{equation*}
\delta ^{\Phi }\left( A\right) =\lim_{L\rightarrow \infty }\delta _{L}^{\Phi
}\left( A\right)
\end{equation*}%
exists and defines a (densely defined) derivation $\delta ^{\Phi }$ from $%
\mathcal{U}_{0}\subseteq \mathcal{U}$ to $\mathcal{U}$ satisfying the
following bound:%
\begin{equation*}
\left\Vert \delta ^{\Phi }\left( A\right) \right\Vert _{\mathcal{U}}\leq
2\left\vert \Lambda \right\vert \left\Vert A\right\Vert _{\mathcal{U}%
}\left\Vert \Phi \right\Vert _{\mathcal{W}}\left\Vert \mathbf{F}\right\Vert
_{1,\mathfrak{L}}\ ,\qquad A\in \mathcal{U}_{\Lambda },\ \Lambda \in 
\mathcal{P}_{f}\ .
\end{equation*}%
Additionally, $\delta ^{\Phi }$ is symmetric (or a $\ast $-derivation)
whenever $\Phi \in \mathcal{W}^{\mathbb{R}}$.
\end{corollary}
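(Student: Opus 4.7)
The plan is to obtain the corollary as a direct consequence of Proposition \ref{Lemma cigare0} combined with the completeness of $\mathcal{U}$ and the joint continuity of multiplication and involution in a $C^{\ast}$-algebra. There is no deep obstacle here; the argument is essentially ``pass to the limit'' in each of the three structural properties (linearity, Leibniz rule, symmetry) enjoyed by the finite-volume derivations $\delta_{L}^{\Phi}$.

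First I would establish existence of the limit. Fix $\Lambda \in \mathcal{P}_{f}$ and $A \in \mathcal{U}_{\Lambda}$. The first estimate of Proposition \ref{Lemma cigare0} gives, for all $L_{2} \geq L_{1} \geq 0$,
\[
\left\Vert \delta_{L_{2}}^{\Phi}(A) - \delta_{L_{1}}^{\Phi}(A) \right\Vert_{\mathcal{U}} \leq 2\left\vert \Lambda \right\vert \left\Vert A\right\Vert_{\mathcal{U}} \left\Vert \Phi \right\Vert_{\mathcal{W}} \sup_{y \in \Lambda} \sum_{x \in \Lambda_{L_{1}}^{c}} \mathbf{F}(x,y).
\]
By the summability assumption (\ref{(3.1) NS}) on $\mathbf{F}$, the tail sum $\sup_{y \in \Lambda} \sum_{x \in \Lambda_{L_{1}}^{c}} \mathbf{F}(x,y)$ vanishes as $L_{1} \to \infty$. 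Hence $\{\delta_{L}^{\Phi}(A)\}_{L \in \mathbb{N}}$ is a Cauchy sequence in the Banach space $\mathcal{U}$, and its limit $\delta^{\Phi}(A)$ exists. Because the local algebras are nested, $\mathcal{U}_{\Lambda} \subseteq \mathcal{U}_{\Lambda'}$ whenever $\Lambda \subseteq \Lambda'$, so the limit does not depend on the particular choice of $\Lambda$ containing $A$, and $\delta^{\Phi}$ is well-defined as a linear map on $\mathcal{U}_{0} = \bigcup_{L} \mathcal{U}_{\Lambda_{L}}$ (linearity being inherited from the linearity of each $\delta_{L}^{\Phi}$).

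The uniform norm bound claimed in the corollary is then obtained by taking $L \to \infty$ in the second estimate of Proposition \ref{Lemma cigare0} and using norm-continuity. The derivation property is proved by fixing $A, B \in \mathcal{U}_{0}$ and passing to the limit in the finite-volume Leibniz identity (\ref{derivation}): the left-hand side converges to $\delta^{\Phi}(AB)$, while the right-hand side converges to $\delta^{\Phi}(A)B + A\delta^{\Phi}(B)$ by continuity of the product in the $C^{\ast}$-norm (the factors $A$ and $B$ are fixed, so only one factor of each product varies with $L$). When $\Phi \in \mathcal{W}^{\mathbb{R}}$, the symmetry $\delta^{\Phi}(A)^{\ast} = \delta^{\Phi}(A^{\ast})$ likewise follows from (\ref{derivationbis}) by continuity of the involution.

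The only genuine subtlety, if one wishes to call it that, is verifying that the candidate limit is independent of the enclosing local algebra; this is immediate from the nested structure of $(\mathcal{U}_{\Lambda_{L}})_{L \in \mathbb{N}}$, and nothing else in the argument is more delicate than passing norm-convergent sequences through continuous algebraic operations.
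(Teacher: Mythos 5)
Your proof is correct and follows the same route as the paper's one-line proof, which reads "Combine Proposition \ref{Lemma cigare0} with Equations (\ref{derivation})-(\ref{derivationbis}) and the completeness of $\mathcal{U}$." You have simply unpacked that instruction: Cauchy sequence from the tail estimate, completeness for existence of the limit, the second estimate for the bound, and passage to the limit in the Leibniz and symmetry identities.
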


\begin{proof}
Combine Proposition \ref{Lemma cigare0} with Equations (\ref{derivation})-(%
\ref{derivationbis}) and the completeness of $\mathcal{U}$.
\end{proof}

\begin{remark}[Closure of limit derivations]
\label{Remark Closure of limit derivations}\mbox{
}\newline
If $\Phi \in \mathcal{W}^{\mathbb{R}}$ then the symmetric derivation $\delta
^{\Phi }$ is (norm-) closable \cite[Lemma 4.6]{brupedraLR}. It is proven
from its dissipativity \cite[Definition 1.4.6, Proposition 1.4.7]%
{Bratelli-derivation}, which is, in turn, deduced from \cite[Theorem 1.4.9]%
{Bratelli-derivation} because $A\in \mathcal{U}_{0}$ and $A\geq 0$ implies $%
A^{1/2}\in \mathcal{U}_{0}$. Moreover, its closure generates a strongly
continuous group of $\ast $-auto%
\-%
morphisms of $\mathcal{U}$, by \cite[Theorem 4.8]{brupedraLR}. See also
Proposition \ref{Theorem Lieb-Robinson} below.
\end{remark}

\subsection{Dynamics Generated by Short-Range Interactions \label{sect
Lieb--Robinson}}

We now consider time-dependent interactions. Let $\Psi \in C(\mathbb{R};%
\mathcal{W})$ be a continuous function from $\mathbb{R}$ to the Banach space 
$\mathcal{W}$ of interactions on $\mathfrak{L}\subseteq \mathbb{Z}^{d}$.
Then, for any $L\in \mathbb{N}$, there is a unique (fundamental) solution $%
(\tau _{t,s}^{(L,\Psi )})_{_{s,t\in \mathbb{R}}}$ in $\mathcal{B}(\mathcal{U}%
)$ to the (finite-volume) non-auto%
\-%
nomous evolution equations%
\begin{equation}
\forall s,t\in {\mathbb{R}}:\qquad \partial _{s}\tau _{t,s}^{(L,\Psi
)}=-\delta _{L}^{\Psi \left( s\right) }\circ \tau _{t,s}^{(L,\Psi )}\
,\qquad \tau _{t,t}^{(L,\Psi )}=\mathbf{1}_{\mathcal{U}}\ ,  \label{cauchy1}
\end{equation}%
and%
\begin{equation}
\forall s,t\in {\mathbb{R}}:\qquad \partial _{t}\tau _{t,s}^{(L,\Psi )}=\tau
_{t,s}^{(L,\Psi )}\circ \delta _{L}^{\Psi \left( t\right) }\ ,\qquad \tau
_{s,s}^{(L,\Psi )}=\mathbf{1}_{\mathcal{U}}\ .  \label{cauchy2}
\end{equation}%
In these two equations, $\mathbf{1}_{\mathcal{U}}$ refers to the identity
mapping of $\mathcal{U}$. Note also that, for any $L\in \mathbb{N}$ and $%
\Psi \in C\left( \mathbb{R};\mathcal{W}\right) $, $(\tau _{t,s}^{(L,\Psi
)})_{_{s,t\in \mathbb{R}}}$ is a continuous two-para%
\-%
meter family of bounded operators that satisfies the (reverse) cocycle
property%
\begin{equation*}
\forall s,r,t\in \mathbb{R}:\qquad \tau _{t,s}^{(L,\Psi )}=\tau
_{r,s}^{(L,\Psi )}\tau _{t,r}^{(L,\Psi )}\ .
\end{equation*}

If $\Psi \in C(\mathbb{R};\mathcal{W}^{\mathbb{R}})$ then $\delta _{L}^{\Psi
\left( t\right) }$ is always a symmetric derivation (or $\ast $-derivation)
and thus, in this case, $\tau _{t,s}^{(L,\Psi )}$ is a $\ast $-auto%
\-%
morphism of $\mathcal{U}$ for all lengths $L\in \mathbb{N}$ and times $%
s,t\in \mathbb{R}$. Moreover, in the thermodynamic limit $L\rightarrow
\infty $, the family $(\tau _{t,s}^{(L,\Psi )})_{_{s,t\in \mathbb{R}}}$
strongly converges to a strongly continuous two-parameter family of $\ast $%
-auto%
\-%
morphisms of $\mathcal{U}$, associated with the family $\{\delta ^{\Psi
\left( t\right) }\}_{t\in {\mathbb{R}}}$ of limit symmetric derivations of
Corollary \ref{Lemma cigare1}:

\begin{proposition}[Infinite-volume short-range dynamics]
\label{Theorem Lieb-Robinson}\mbox{
}\newline
For any $\Psi \in C(\mathbb{R};\mathcal{W}^{\mathbb{R}})$, as $L\rightarrow
\infty $, $(\tau _{t,s}^{(L,\Psi )})_{s,t\in {\mathbb{R}}}$ converges
strongly, uniformly for $s,t$ on compacta, to a strongly continuous two-para%
\-%
meter family $(\tau _{t,s}^{\Psi })_{s,t\in {\mathbb{R}}}$ of $\ast $-auto%
\-%
morphisms of $\mathcal{U}$, which is the unique solution in $\mathcal{B}(%
\mathcal{U})$ to the non-auto%
\-%
nomous evolutions equation%
\begin{equation}
\forall s,t\in {\mathbb{R}}:\qquad \partial _{t}\tau _{t,s}^{\Psi }=\tau
_{t,s}^{\Psi }\circ \delta ^{\Psi \left( t\right) }\ ,\qquad \tau
_{s,s}^{\Psi }=\mathbf{1}_{\mathcal{U}}\ ,  \label{cauchy trivial1}
\end{equation}%
in the strong sense on the dense subspace $\mathcal{U}_{0}\subseteq \mathcal{%
U}$. In particular, it satisfies the reverse cocycle property:%
\begin{equation}
\forall s,r,t\in \mathbb{R}:\qquad \tau _{t,s}^{\Psi }=\tau _{r,s}^{\Psi
}\tau _{t,r}^{\Psi }\ .  \label{reverse cocycle}
\end{equation}
\end{proposition}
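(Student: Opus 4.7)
The plan is to prove the proposition via the standard route: use a non-autonomous perturbation formula to compare the finite-volume evolution families at different values of $L$, apply the Lieb--Robinson bounds of [brupedraLR] to show the difference vanishes uniformly on compact time intervals, and then pass to the limit to obtain the automorphism group, the cocycle property, and the generator equation.

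First, fix a compact interval $[-T,T]\subseteq\mathbb{R}$, take $A\in\mathcal{U}_{\Lambda}$ for some $\Lambda\in\mathcal{P}_{f}$, and let $L_{2}\geq L_{1}\geq 0$. From (\ref{cauchy1})--(\ref{cauchy2}) the mapping $r\mapsto\tau_{r,s}^{(L_{1},\Psi)}\circ\tau_{t,r}^{(L_{2},\Psi)}(A)$ is differentiable, with derivative $\tau_{r,s}^{(L_{1},\Psi)}\circ(\delta_{L_{1}}^{\Psi(r)}-\delta_{L_{2}}^{\Psi(r)})\circ\tau_{t,r}^{(L_{2},\Psi)}(A)$. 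Integrating from $r=s$ to $r=t$ and using the fact that $\tau_{r,s}^{(L_{1},\Psi)}$ is a $\ast $-automorphism (hence isometric) when $\Psi\in C(\mathbb{R};\mathcal{W}^{\mathbb{R}})$ yields
$$
\bigl\Vert\tau_{t,s}^{(L_{2},\Psi)}(A)-\tau_{t,s}^{(L_{1},\Psi)}(A)\bigr\Vert_{\mathcal{U}}\leq\int_{\min(s,t)}^{\max(s,t)}\bigl\Vert\bigl[U_{L_{2}}^{\Psi(r)}-U_{L_{1}}^{\Psi(r)},\tau_{t,r}^{(L_{2},\Psi)}(A)\bigr]\bigr\Vert_{\mathcal{U}}\,dr.
$$
Decomposing $U_{L_{2}}^{\Psi(r)}-U_{L_{1}}^{\Psi(r)}=\sum_{\mathcal{Z}\subseteq\Lambda_{L_{2}},\,\mathcal{Z}\not\subseteq\Lambda_{L_{1}}}\Psi_{\mathcal{Z}}(r)$ and applying the Lieb--Robinson bounds of [brupedraLR] to each commutator $[\Psi_{\mathcal{Z}}(r),\tau_{t,r}^{(L_{2},\Psi)}(A)]$, one controls the integrand by a sum over pairs $(x,y)\in\Lambda\times\Lambda_{L_{1}}^{c}$ weighted by $\mathbf{F}(x,y)$ times a Lieb--Robinson factor that depends on $|t-r|$ and $\sup_{r\in[-T,T]}\Vert\Psi(r)\Vert_{\mathcal{W}}$; the latter supremum is finite by compactness of $[-T,T]$ and continuity of $\Psi$. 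Summability of $\mathbf{F}$ (cf. (\ref{(3.1) NS})--(\ref{(3.2) NS})) then forces the bound to vanish as $L_{1}\to\infty$, uniformly in $s,t\in[-T,T]$, so $\{\tau_{t,s}^{(L,\Psi)}(A)\}_{L\in\mathbb{N}}$ is Cauchy in $\mathcal{U}$ and defines a limit $\tau_{t,s}^{\Psi}(A)$.

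Since $\tau_{t,s}^{(L,\Psi)}$ is isometric for $\Psi\in C(\mathbb{R};\mathcal{W}^{\mathbb{R}})$, this convergence is also uniform in $A$ over norm-bounded subsets of each $\mathcal{U}_{\Lambda}$, and $\tau_{t,s}^{\Psi}$ extends by density of $\mathcal{U}_{0}\subseteq\mathcal{U}$ to an isometric linear operator on all of $\mathcal{U}$. Passing to the limit in the identities characterizing each $\tau_{t,s}^{(L,\Psi)}$ as a $\ast $-homomorphism shows that $\tau_{t,s}^{\Psi}$ preserves products and the involution; applying the same convergence argument with the roles of $(t,s)$ swapped produces an inverse, so $\tau_{t,s}^{\Psi}$ is a $\ast $-automorphism. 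Joint strong continuity in $(s,t)$ and the reverse cocycle property (\ref{reverse cocycle}) follow by taking limits in their finite-volume counterparts, the uniform Cauchy estimate ensuring that the composition $\tau_{r,s}^{(L,\Psi)}\tau_{t,r}^{(L,\Psi)}$ converges to $\tau_{r,s}^{\Psi}\tau_{t,r}^{\Psi}$ strongly.

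To verify the evolution equation (\ref{cauchy trivial1}), integrate (\ref{cauchy2}) to write $\tau_{t,s}^{(L,\Psi)}(A)-A=\int_{s}^{t}\tau_{r,s}^{(L,\Psi)}\circ\delta_{L}^{\Psi(r)}(A)\,dr$ for $A\in\mathcal{U}_{0}$ and then pass to the limit $L\to\infty$ using Corollary \ref{Lemma cigare1} together with the uniform bound of Proposition \ref{Lemma cigare0} and dominated convergence; this gives the integrated form of (\ref{cauchy trivial1}), whose derivative in $t$ is $\tau_{t,s}^{\Psi}\circ\delta^{\Psi(t)}(A)$ by continuity of $t\mapsto\Psi(t)$. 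Uniqueness within $\mathcal{B}(\mathcal{U})$ follows from a standard Gr\"onwall argument: any second solution $\widetilde{\tau}_{t,s}$ of (\ref{cauchy trivial1}) satisfies, by essentially the same comparison computation as above, $\Vert(\widetilde{\tau}_{t,s}-\tau_{t,s}^{\Psi})(A)\Vert_{\mathcal{U}}\leq C\int_{s}^{t}\Vert(\widetilde{\tau}_{r,s}-\tau_{r,s}^{\Psi})(A)\Vert_{\mathcal{U}}\,dr$ for $A\in\mathcal{U}_{0}$, forcing equality.

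The main obstacle is the commutator estimate at the heart of the first paragraph: the Lieb--Robinson bounds must be applied in the non-autonomous setting with $L_{2}$-dependent (but $L_{2}$-uniform) control, and the exterior sum over $\mathcal{Z}\not\subseteq\Lambda_{L_{1}}$ must be reorganized via the $\mathbf{F}$-weighted norm $\Vert\cdot\Vert_{\mathcal{W}}$ and the summability conditions (\ref{(3.1) NS})--(\ref{(3.2) NS}) so that the bound decays in $L_{1}$. Once this Lieb--Robinson-type estimate is in hand, everything else in the proposition is a routine limit passage.
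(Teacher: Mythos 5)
The paper's own proof is a one-line citation to \cite[Corollary 5.2]{brupedraLR}, so your reconstruction must be judged against the standard Lieb--Robinson route, and in that respect your overall plan is sound: the Duhamel comparison $\partial_r[\tau_{r,s}^{(L_1,\Psi)}\circ\tau_{t,r}^{(L_2,\Psi)}(A)]=\tau_{r,s}^{(L_1,\Psi)}\circ(\delta_{L_1}^{\Psi(r)}-\delta_{L_2}^{\Psi(r)})\circ\tau_{t,r}^{(L_2,\Psi)}(A)$, the resummation of $U_{L_2}^{\Psi(r)}-U_{L_1}^{\Psi(r)}$ over $\mathcal{Z}\not\subseteq\Lambda_{L_1}$, the Lieb--Robinson control uniform in $L_2$, and the routine limit passage for the $\ast$-automorphism, cocycle, continuity and generator-equation properties are exactly what the cited corollary encapsulates. (Note, incidentally, that Proposition~\ref{Theorem Lieb-Robinson copy(3)}(ii) in the paper is precisely the quantitative rate-of-convergence estimate your first paragraph is after.)

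The one place where your argument has a genuine gap is the uniqueness step. The Gr\"onwall inequality you write, $\Vert D_{t,s}(A)\Vert_{\mathcal{U}}\leq C\int_{s}^{t}\Vert D_{r,s}(A)\Vert_{\mathcal{U}}\,dr$ with $D_{t,s}\doteq\widetilde{\tau}_{t,s}-\tau_{t,s}^{\Psi}$, does not follow from the evolution equation: integrating (\ref{cauchy trivial1}) gives $D_{t,s}(A)=\int_{s}^{t}D_{r,s}\bigl(\delta^{\Psi(r)}(A)\bigr)\mathrm{d}r$, in which the \emph{argument} of $D_{r,s}$ has changed from $A$ to $\delta^{\Psi(r)}(A)$, and there is no constant $C$ with $\Vert D_{r,s}(\delta^{\Psi(r)}(A))\Vert_{\mathcal{U}}\leq C\Vert D_{r,s}(A)\Vert_{\mathcal{U}}$. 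Iterating instead produces a Dyson-type series in iterated derivations whose summability requires the multi-commutator Lieb--Robinson bounds of \cite[Theorems 4.11, 5.4]{brupedraLR}, not an elementary Gr\"onwall loop. The cleaner repair is the one your first paragraph already suggests: for $A\in\mathcal{U}_{\Lambda}$ with $\Lambda\subseteq\Lambda_{L}$, differentiate $r\mapsto\widetilde{\tau}_{r,s}\circ\tau_{t,r}^{(L,\Psi)}(A)$ (here $\tau_{t,r}^{(L,\Psi)}(A)\in\mathcal{U}_{\Lambda_{L}}\subseteq\mathcal{U}_{0}$ is in the domain) to obtain
\begin{equation*}
\widetilde{\tau}_{t,s}(A)-\tau_{t,s}^{(L,\Psi)}(A)=\int_{s}^{t}\widetilde{\tau}_{r,s}\circ\bigl(\delta^{\Psi(r)}-\delta_{L}^{\Psi(r)}\bigr)\circ\tau_{t,r}^{(L,\Psi)}(A)\,\mathrm{d}r\ ,
\end{equation*}
and then show the integrand vanishes as $L\to\infty$ \emph{via Lieb--Robinson bounds}, not via Proposition~\ref{Lemma cigare0}; the latter alone gives a bound proportional to $\vert\Lambda_{L}\vert$ and blows up, whereas the Lieb--Robinson estimate (cf.\ the display (\ref{assertion bisbisbisbis02}) in the proof of Proposition~\ref{Theorem Lieb-Robinson copy(3)}) controls the sum over $\mathcal{Z}\cap\Lambda_{L}^{c}\neq\emptyset$ by $\sum_{x\in\Lambda}\sum_{y\in\Lambda_{L}^{c}}\mathbf{F}(x,y)\to 0$, using only $\vert\Lambda\vert$. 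With that substitution, your proof matches the standard argument behind the cited reference.
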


\begin{proof}
See \cite[Corollary 5.2]{brupedraLR}.
\end{proof}

It is convenient to introduce at this point the notation%
\begin{equation*}
\partial _{\Psi }\Lambda \doteq \left\{ x\in \Lambda :\exists \mathcal{Z}\in 
\mathcal{P}_{f}\text{ with }x\in \mathcal{Z}\text{, }\Psi _{\mathcal{Z}}\neq
0\text{ and }\mathcal{Z}\cap \Lambda \neq 0\text{, }\mathcal{Z}\cap \Lambda
^{c}\neq 0\right\}
\end{equation*}%
for any interaction $\Psi $ and any finite subset $\Lambda \in \mathcal{P}%
_{f}$ with complement $\Lambda ^{c}\doteq \mathfrak{L}\backslash \Lambda $.
For any $s,t\in \mathbb{R}$, $t\wedge s$ and $t\vee s$ stand, respectively,
for the minimum and maximum of the set $\{s,t\}$. We are now in a position
to give additional estimates on the limit dynamics, like the celebrated
Lieb-Robinson bounds:

\begin{proposition}[Estimates on short-range dynamics]
\label{Theorem Lieb-Robinson copy(3)}\mbox{
}\newline
For any $\Psi \in C(\mathbb{R};\mathcal{W}^{\mathbb{R}})$, $(\tau
_{t,s}^{\Psi })_{s,t\in {\mathbb{R}}}$ satisfies the following
bounds:\medskip \newline
\emph{(i)} Lieb-Robinson bounds (cf. (\ref{commutatorbis})). For any $s,t\in 
\mathbb{R}$, sets $\Lambda ^{(1)},\Lambda ^{(2)}\in \mathcal{P}_{f}$ with $%
\Lambda ^{(1)}\cap \Lambda ^{(2)}=\emptyset $, every even element $A_{1}\in 
\mathcal{U}^{+}\cap \mathcal{U}_{\Lambda ^{(1)}}$ and all $A_{2}\in \mathcal{%
U}_{\Lambda ^{(2)}}$,%
\begin{equation*}
\left\Vert \lbrack \tau _{t,s}^{\Psi }\left( A_{1}\right) ,A_{2}]\right\Vert
_{\mathcal{U}}\leq 2\mathbf{D}^{-1}\left\Vert A_{1}\right\Vert _{\mathcal{U}%
}\left\Vert A_{2}\right\Vert _{\mathcal{U}}\left( \mathrm{e}^{2\mathbf{D}%
\int_{t\wedge s}^{t\vee s}\left\Vert \Psi \left( \alpha \right) \right\Vert
_{\mathcal{W}}\mathrm{d}\alpha }-1\right) \sum_{x\in \partial _{\Psi
}\Lambda ^{(1)}}\sum_{y\in \Lambda ^{(2)}}\mathbf{F}\left( x,y\right) \ .
\end{equation*}%
\emph{(ii)} Rate of convergence. For any $s,t\in \mathbb{R}$, $\Lambda \in 
\mathcal{P}_{f}$, $A\in \mathcal{U}_{\Lambda }$ and $L\in \mathbb{N}$ such
that $\Lambda \subseteq \Lambda _{L}$,%
\begin{eqnarray*}
&&\left\Vert \tau _{t,s}^{\Psi }\left( A\right) -\tau _{t,s}^{(L,\Psi
)}\left( A\right) \right\Vert _{\mathcal{U}} \\
&\leq &2\left\Vert A\right\Vert _{\mathcal{U}}\int_{t\wedge s}^{t\vee
s}\left( \left\Vert \Psi \left( \alpha _{1}\right) \right\Vert _{\mathcal{W}}%
\mathrm{e}^{2\mathbf{D}\int_{\alpha _{1}\wedge s}^{\alpha _{1}\vee
s}\left\Vert \Psi \left( \alpha _{2}\right) \right\Vert _{\mathcal{W}}%
\mathrm{d}\alpha _{2}}\right) \mathrm{d}\alpha _{1}\sum\limits_{y\in 
\mathfrak{L}\backslash \Lambda _{L}}\sum_{x\in \Lambda }\mathbf{F}\left(
x,y\right) \ .
\end{eqnarray*}%
\emph{(iii)} Lipschitz continuity with respect to $\Psi $. For any $s,t\in 
\mathbb{R}$, $\Psi ,\tilde{\Psi}\in C\left( \mathbb{R};\mathcal{W}^{\mathbb{R%
}}\right) $, $\Lambda \in \mathcal{P}_{f}$ and $A\in \mathcal{U}_{\Lambda }$%
, 
\begin{equation*}
\left\Vert \tau _{t,s}^{\Psi }(A)-\tau _{t,s}^{\tilde{\Psi}}(A)\right\Vert _{%
\mathcal{U}}\leq 2\left\vert \Lambda \right\vert \left\Vert A\right\Vert _{%
\mathcal{U}}\left\Vert \mathbf{F}\right\Vert _{1,\mathfrak{L}}\int_{t\wedge
s}^{t\vee s}\mathrm{e}^{2\mathbf{D}\int_{t\wedge \alpha }^{t\vee \alpha
}\left\Vert \Psi \left( \alpha _{1}\right) \right\Vert _{\mathcal{W}}\mathrm{%
d}\alpha _{1}}\left\Vert \tilde{\Psi}\left( \alpha \right) -\Psi \left(
\alpha \right) \right\Vert _{\mathcal{W}}\mathrm{d}\alpha \ .
\end{equation*}%
\emph{(iv)} Uniform continuity with respect to times. For any $%
s_{1},s_{2},t_{1},t_{2}\in \mathbb{R}$, $\Psi \in C(\mathbb{R};\mathcal{W}^{%
\mathbb{R}})$, $\Lambda \in \mathcal{P}_{f}$ and $A\in \mathcal{U}_{\Lambda
} $, 
\begin{eqnarray*}
&&\left\Vert \tau _{t_{1},s_{1}}^{\Psi }(A)-\tau _{t_{2},s_{2}}^{\Psi
}(A)\right\Vert _{\mathcal{U}} \\
&\leq &2\left\vert \Lambda \right\vert \left\Vert A\right\Vert _{\mathcal{U}%
}\left\Vert \mathbf{F}\right\Vert _{1,\mathfrak{L}}\left( \int_{t_{1}\wedge
t_{2}}^{t_{1}\vee t_{2}}\left\Vert \Psi \left( \alpha \right) \right\Vert _{%
\mathcal{W}}\mathrm{d}\alpha +\int_{s_{1}\wedge s_{2}}^{s_{1}\vee s_{2}}%
\mathrm{e}^{2\mathbf{D}\int_{t_{2}\wedge \alpha }^{t_{2}\vee \alpha
}\left\Vert \Psi \left( \alpha _{1}\right) \right\Vert _{\mathcal{W}}\mathrm{%
d}\alpha _{1}}\left\Vert \Psi \left( \alpha \right) \right\Vert _{\mathcal{W}%
}\mathrm{d}\alpha \right) \ .
\end{eqnarray*}
\end{proposition}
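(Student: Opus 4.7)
\medskip

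\noindent\textbf{Proof proposal.} The plan is to establish (i) first, since the three remaining bounds follow from (i) by Duhamel's formula applied to the appropriate two-parameter family. For (i), I would set
\[
f(t)\doteq [\tau_{t,s}^{(L,\Psi)}(A_{1}),A_{2}]
\]
for $A_{1}\in\mathcal{U}^{+}\cap\mathcal{U}_{\Lambda^{(1)}}$ and $A_{2}\in\mathcal{U}_{\Lambda^{(2)}}$ with $\Lambda^{(1)}\cap\Lambda^{(2)}=\emptyset$, differentiate with respect to $t$ using (\ref{cauchy2}), and decompose $\delta_{L}^{\Psi(t)}(A_{1})$ into the sum over those $\mathcal{Z}\in\mathcal{P}_{f}$ that intersect $\Lambda^{(1)}$. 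The terms with $\mathcal{Z}\subseteq\Lambda^{(1)}$ yield a commutator $[\Phi_{\mathcal{Z}},A_{1}]$ that stays in $\mathcal{U}^{+}\cap\mathcal{U}_{\Lambda^{(1)}}$ (parity is preserved by $\delta_{L}^{\Psi(t)}$ since $\Psi(t)\in\mathcal{W}$ takes values in $\mathcal{U}^{+}$), so these contributions do not enter the growth estimate; only the boundary terms, supported in $\mathcal{Z}$ meeting both $\Lambda^{(1)}$ and its complement, contribute. Iterating this Dyson-series bound produces a sum over chains of sites that telescopes via the convolution constant $\mathbf{D}$ of (\ref{(3.2) NS}), giving a geometric series whose resummation yields the factor $\mathrm{e}^{2\mathbf{D}\int\|\Psi\|_{\mathcal{W}}}-1$. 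Passing to the limit $L\to\infty$ by Proposition \ref{Theorem Lieb-Robinson} transfers the bound to $\tau_{t,s}^{\Psi}$.

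\medskip

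For (ii), I would write the telescope
\[
\tau_{t,s}^{\Psi}(A)-\tau_{t,s}^{(L,\Psi)}(A)=\int_{s}^{t}\partial_{\alpha}\bigl(\tau_{\alpha,s}^{(L,\Psi)}\circ\tau_{t,\alpha}^{\Psi}(A)\bigr)\,\mathrm{d}\alpha=\int_{s}^{t}\tau_{\alpha,s}^{(L,\Psi)}\bigl((\delta^{\Psi(\alpha)}-\delta_{L}^{\Psi(\alpha)})\circ\tau_{t,\alpha}^{\Psi}(A)\bigr)\,\mathrm{d}\alpha,
\]
use the fact that $\tau_{\alpha,s}^{(L,\Psi)}$ is a $\ast$-automorphism (hence isometric), and estimate $\|(\delta^{\Psi(\alpha)}-\delta_{L}^{\Psi(\alpha)})(B)\|_{\mathcal{U}}$ by isolating the interactions $\Phi_{\mathcal{Z}}$ with $\mathcal{Z}\not\subseteq\Lambda_{L}$. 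For $B=\tau_{t,\alpha}^{\Psi}(A)$ with $A\in\mathcal{U}_{\Lambda}$, one applies (i) to $[\Phi_{\mathcal{Z}},\tau_{t,\alpha}^{\Psi}(A)]$ site by site (separating even and odd parts of $A$ via (\ref{decomposition even odd}) if needed, but the final bound is the same up to constants already absorbed into the factor $2$), which produces precisely the double sum $\sum_{y\notin\Lambda_{L}}\sum_{x\in\Lambda}\mathbf{F}(x,y)$ and the inner exponential factor.

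\medskip

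For (iii), the same Duhamel trick gives
\[
\tau_{t,s}^{\Psi}(A)-\tau_{t,s}^{\tilde{\Psi}}(A)=\int_{s}^{t}\tau_{\alpha,s}^{\tilde{\Psi}}\bigl((\delta^{\Psi(\alpha)}-\delta^{\tilde{\Psi}(\alpha)})\circ\tau_{t,\alpha}^{\Psi}(A)\bigr)\,\mathrm{d}\alpha,
\]
and since $\delta^{\Psi(\alpha)}-\delta^{\tilde{\Psi}(\alpha)}=\delta^{(\Psi-\tilde{\Psi})(\alpha)}$, bounding the inner term by an (i)-type estimate applied to the short-range interaction $\Psi(\alpha)-\tilde{\Psi}(\alpha)$ yields the stated inequality, where the factor $|\Lambda|\|\mathbf{F}\|_{1,\mathfrak{L}}$ comes from the crude bound of Corollary \ref{Lemma cigare1}, which is sufficient because $\tau_{t,\alpha}^{\Psi}(A)$ is, in general, no longer local. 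For (iv), I would add and subtract $\tau_{t_{1},s_{2}}^{\Psi}(A)$, reducing to variations in $t$ only and $s$ only. The $t$-variation is handled by integrating (\ref{cauchy2}) from $t_{1}$ to $t_{2}$, giving the first integral; the $s$-variation follows from the reverse cocycle property (\ref{reverse cocycle}) combined with the identity $\tau_{t_{2},s_{1}}^{\Psi}-\tau_{t_{2},s_{2}}^{\Psi}=\tau_{t_{2},s_{2}}^{\Psi}\circ(\tau_{s_{2},s_{1}}^{\Psi}-\mathbf{1}_{\mathcal{U}})$ together with (\ref{cauchy1}), which after bounding via (iii)-type arguments produces the second integral. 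The main obstacle is the careful combinatorial bookkeeping in step (i) to make the convolution structure of $\mathbf{D}$ emerge cleanly; this is precisely the technical core of \cite{brupedraLR}, which we invoke.
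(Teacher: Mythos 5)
There is a genuine gap in your treatment of (ii), (iii) and (iv), and it is precisely the point the paper works hard to avoid. The Duhamel formulas you write,
\[
\tau_{t,s}^{\Psi}(A)-\tau_{t,s}^{(L,\Psi)}(A)=\int_{s}^{t}\tau_{\alpha,s}^{(L,\Psi)}\bigl((\delta^{\Psi(\alpha)}-\delta_{L}^{\Psi(\alpha)})\tau_{t,\alpha}^{\Psi}(A)\bigr)\,\mathrm{d}\alpha
\quad\text{and}\quad
\tau_{t,s}^{\Psi}(A)-\tau_{t,s}^{\tilde{\Psi}}(A)=\int_{s}^{t}\tau_{\alpha,s}^{\tilde{\Psi}}\bigl((\delta^{\Psi(\alpha)}-\delta^{\tilde{\Psi}(\alpha)})\tau_{t,\alpha}^{\Psi}(A)\bigr)\,\mathrm{d}\alpha,
\]
require the unbounded derivation $\delta^{\Psi(\alpha)}$ (and $\delta^{\tilde\Psi(\alpha)}$) to act on $\tau_{t,\alpha}^{\Psi}(A)$, which is \emph{not} a local element. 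Equivalently, they require the ``$\partial_s$'' side of the evolution equation, $\partial_{\alpha}\tau_{t,\alpha}^{\Psi}(A)=-\delta^{\Psi(\alpha)}(\tau_{t,\alpha}^{\Psi}(A))$, and hence that $\tau_{t,\alpha}^{\Psi}(A)$ lies in the domain of $\delta^{\Psi(\alpha)}$. Proposition \ref{Theorem Lieb-Robinson} establishes only the ``$\partial_t$'' equation (\ref{cauchy trivial1}) on $\mathcal{U}_{0}$, and the paper explicitly flags the ``$\partial_s$'' identity as highly non-trivial: the analogue (\ref{sdgsdgsgsg}) is only proven later (Lemma \ref{Differentiability copy(1)}) via multi-commutator Lieb--Robinson bounds and under the extra decay hypothesis (\ref{fix decay}), which Proposition \ref{Theorem Lieb-Robinson copy(3)} does not assume. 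Your route therefore needs the very thing the proposition is meant to help establish.

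The paper sidesteps this by interpolating so that the \emph{finite}-volume flow sits on the right: one compares $\tau^{\tilde\Psi}_{t,s}$ against $\tau^{(L,\Psi)}_{t,s}$ via (\ref{assertion bisbisbisbis0100}), where the inner element $\tau_{t,\alpha}^{(L,\Psi)}(A)\in\mathcal{U}_{\Lambda_L}$ remains local, so that every derivation appearing is applied only to local elements. The integrand splits into a term carrying $\tilde\Psi-\Psi$ and a regularization term carrying $\delta^{\Psi}-\delta^{(L,\Psi)}$; the former is estimated with Assertion (i) applied to the truncated interaction $\Psi_L$ (this is where the exponential factor comes from), the latter vanishes as $L\to\infty$ by (\ref{assertion bisbisbisbis}), and Assertion (ii) is then used to pass from $\tau^{(L,\Psi)}$ to $\tau^{\Psi}$. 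This step is absent from your argument. Your explanation of where the factors in (iii) originate is also internally inconsistent: you simultaneously invoke ``an (i)-type estimate'' and ``the crude bound of Corollary \ref{Lemma cigare1}'', but the crude bound alone cannot produce the exponential $\mathrm{e}^{2\mathbf{D}\int\ldots}$ (and cannot even be applied, since its hypotheses require locality of the argument). Finally, in (iv) the reverse-cocycle identity you wrote is in the wrong order: $\tau_{t_{2},s_{1}}^{\Psi}-\tau_{t_{2},s_{2}}^{\Psi}=(\tau_{s_{2},s_{1}}^{\Psi}-\mathbf{1}_{\mathcal{U}})\circ\tau_{t_{2},s_{2}}^{\Psi}$, not $\tau_{t_{2},s_{2}}^{\Psi}\circ(\tau_{s_{2},s_{1}}^{\Psi}-\mathbf{1}_{\mathcal{U}})$, and even with the correct order one is again led to apply $\delta^{\Psi(\alpha)}$ to the non-local $\tau_{t_{2},s_{2}}^{\Psi}(A)$. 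The paper instead estimates via (\ref{cauchy1}) for the finite-volume family (\ref{estimation1}) and only then lets $L\to\infty$ using (ii). Your sketch of (i), modulo the reliance on \cite{brupedraLR}, is fine and matches the paper's intention; (ii) would be fine with the interpolating flow put on the other side, namely $\tau_{\alpha,s}^{\Psi}\circ\tau_{t,\alpha}^{(L,\Psi)}$ rather than $\tau_{\alpha,s}^{(L,\Psi)}\circ\tau_{t,\alpha}^{\Psi}$.
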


\begin{proof}
The proof of Assertion (i) is almost done in \cite[Theorem 5.1, Corollary
5.2 (ii)]{brupedraLR}. However, the bound there refers to the supremum with
respect to $\alpha $ of the norm $\Vert \Psi \left( \alpha \right) \Vert _{%
\mathcal{W}}$. Here, we need a slightly more accurate estimate (a point-wise
estimate). In fact, by \cite[equation after Eq. (5.4)]{brupedraLR} and
similar arguments as in \cite[Eqs. (4.16)-(4.18)]{brupedraLR}, we get
Assertion (i). Then, Assertion (ii) is proven exactly like in the proof of 
\cite[Theorem 5.1 (ii)]{brupedraLR}, by replacing \cite[Theorem 5.1 (i)]%
{brupedraLR} with Assertion (i). It remains to prove Assertions (iii) and
(iv). We start with (iii):

For any $s,t\in \mathbb{R}$, $\Psi ,\tilde{\Psi}\in C(\mathbb{R};\mathcal{W}%
^{\mathbb{R}})$, $\Lambda \in \mathcal{P}_{f}$, $A\in \mathcal{U}_{\Lambda }$
and any sufficiently large $L\in \mathbb{N}$ such that $\mathcal{U}_{\Lambda
}\subseteq \mathcal{U}_{\Lambda _{L}}$, by (\ref{cauchy1}) and Proposition %
\ref{Theorem Lieb-Robinson}, observe that 
\begin{eqnarray}
\tau _{t,s}^{\tilde{\Psi}}(A)-\tau _{t,s}^{(L,\Psi )}(A) &=&\int_{s}^{t}\tau
_{\alpha ,s}^{\tilde{\Psi}}\circ \left( \delta ^{\tilde{\Psi}\left( \alpha
\right) }-\delta ^{\Psi \left( \alpha \right) }\right) \circ \tau _{t,\alpha
}^{(L,\Psi )}\left( A\right) \mathrm{d}\alpha
\label{assertion bisbisbisbis0100} \\
&&+\int_{s}^{t}\tau _{\alpha ,s}^{\tilde{\Psi}}\circ \left( \delta ^{\Psi
\left( \alpha \right) }-\delta ^{(L,\Psi \left( \alpha \right) )}\right)
\circ \tau _{t,\alpha }^{(L,\Psi )}\left( A\right) \mathrm{d}\alpha \ . 
\notag
\end{eqnarray}%
By Definitions \ref{definition fininte vol dynam0}, \ref{dynamic series} and
Corollary \ref{Lemma cigare1}, it follows that 
\begin{eqnarray}
\left\Vert \tau _{t,s}^{\tilde{\Psi}}(A)-\tau _{t,s}^{(L,\Psi
)}(A)\right\Vert _{\mathcal{U}} &\leq &\int_{t\wedge s}^{t\vee
s}\sum\limits_{\mathcal{Z}\in \mathcal{P}_{f}}\left\Vert \left[ \left( 
\tilde{\Psi}\left( \alpha \right) -\Psi \left( \alpha \right) \right) _{%
\mathcal{Z}},\tau _{t,\alpha }^{(L,\Psi )}\left( A\right) \right]
\right\Vert _{\mathcal{U}}\mathrm{d}\alpha  \label{assertion bisbisbisbis01}
\\
&&+\int_{t\wedge s}^{t\vee s}\sum\limits_{\mathcal{Z}\in \mathcal{P}_{f},\ 
\mathcal{Z}\cap \Lambda _{L}^{c}\neq \emptyset }\left\Vert \left[ \Psi
\left( \alpha \right) _{\mathcal{Z}},\tau _{t,\alpha }^{(L,\Psi )}\left(
A\right) \right] \right\Vert _{\mathcal{U}}\mathrm{d}\alpha \ ,  \notag
\end{eqnarray}%
where $\Lambda _{L}^{c}\doteq \mathfrak{L}\backslash \Lambda _{L}$ is the
complement of the cubic box $\Lambda _{L}$ (\ref{eq:def lambda n}). Now, by
using Assertion (i) for $\Psi _{L}\in C(\mathbb{R};\mathcal{W}^{\mathbb{R}})$
defined\footnote{$\mathbf{1}\left[ p\right] =1$ when the proposition $p$ is
true and $0$, else.}, for $L\in \mathbb{N}$, by 
\begin{equation*}
\Psi _{L}\left( t\right) _{\mathcal{Z}}=\Psi \left( t\right) _{\mathcal{Z}}%
\mathbf{1}\left[ \mathcal{Z}\subseteq \Lambda _{L}\right] \ ,\qquad \mathcal{%
Z}\in \mathcal{P}_{f},\ t\in \mathbb{R}\ ,
\end{equation*}%
together with $\Vert \Psi _{L}\left( t\right) \Vert _{\mathcal{W}}\leq \Vert
\Psi \left( t\right) \Vert _{\mathcal{W}}$, (\ref{(3.1) NS})-(\ref{(3.2) NS}%
) and Equation (\ref{iteration0}), we get that 
\begin{eqnarray}
&&\sum\limits_{\mathcal{Z}\in \mathcal{P}_{f},\ \mathcal{Z}\cap \Lambda
_{L}^{c}\neq \emptyset }\left\Vert \left[ \Psi \left( \alpha \right) _{%
\mathcal{Z}},\tau _{t,\alpha }^{(L,\Psi )}\left( A\right) \right]
\right\Vert _{\mathcal{U}}  \label{assertion bisbisbisbis02} \\
&\leq &2\left\Vert A\right\Vert _{\mathcal{U}}\left( \mathrm{e}^{2\mathbf{D}%
\int_{t\wedge \alpha }^{t\vee \alpha }\left\Vert \Psi \left( \alpha
_{1}\right) \right\Vert _{\mathcal{W}}\mathrm{d}\alpha _{1}}\right)
\left\Vert \Psi \left( \alpha \right) \right\Vert _{\mathcal{W}%
}\sum\limits_{x\in \Lambda }\sum\limits_{y\in \Lambda _{L}^{c}}\mathbf{F}%
\left( x,y\right)  \notag
\end{eqnarray}%
and%
\begin{eqnarray}
&&\sum\limits_{\mathcal{Z}\in \mathcal{P}_{f}}\left\Vert \left[ \left( 
\tilde{\Psi}\left( \alpha \right) -\Psi \left( \alpha \right) \right) _{%
\mathcal{Z}},\tau _{t,\alpha }^{(L,\Psi )}\left( A\right) \right]
\right\Vert _{\mathcal{U}}  \label{assertion bisbisbisbis03} \\
&\leq &2\left\vert \Lambda \right\vert \left\Vert A\right\Vert _{\mathcal{U}%
}\left\Vert \mathbf{F}\right\Vert _{1,\mathfrak{L}}\mathrm{e}^{2\mathbf{D}%
\int_{t\wedge \alpha }^{t\vee \alpha }\left\Vert \Psi \left( \alpha
_{1}\right) \right\Vert _{\mathcal{W}}\mathrm{d}\alpha _{1}}\left\Vert 
\tilde{\Psi}\left( \alpha \right) -\Psi \left( \alpha \right) \right\Vert _{%
\mathcal{W}}\ .  \notag
\end{eqnarray}%
To prove these two inequalities, see \cite[Eqs. (4.25)-(4.25)]{brupedraLR}.
Since 
\begin{equation}
\underset{L\rightarrow \infty }{\lim }\sum_{x\in \Lambda }\sum\limits_{y\in
\Lambda _{L}^{c}}\mathbf{F}\left( x,y\right) =0\ ,
\label{assertion bisbisbisbis}
\end{equation}%
because of (\ref{(3.1) NS}), Assertion (iii) follows by combining (\ref%
{assertion bisbisbisbis01})-(\ref{assertion bisbisbisbis}) with Assertion
(ii).

Finally, to get Assertion (iv), note first that Corollary \ref{Lemma cigare1}
directly implies that 
\begin{equation}
\left\Vert \tau _{t_{1},s}^{\Psi }(A)-\tau _{t_{2},s}^{\Psi }(A)\right\Vert
_{\mathcal{U}}\leq 2\left\vert \Lambda \right\vert \left\Vert A\right\Vert _{%
\mathcal{U}}\left\Vert \mathbf{F}\right\Vert _{1,\mathfrak{L}%
}\int_{t_{1}\wedge t_{2}}^{t_{1}\vee t_{2}}\left\Vert \Psi \left( \alpha
\right) \right\Vert _{\mathcal{W}}\mathrm{d}\alpha
\label{asserion (iv) con1}
\end{equation}%
for any $s,t_{1},t_{2}\in \mathbb{R}$, $\Psi \in C(\mathbb{R};\mathcal{W}^{%
\mathbb{R}})$, $\Lambda \in \mathcal{P}_{f}$ and $A\in \mathcal{U}_{\Lambda
} $. Meanwhile, fix $s_{1},s_{2},t\in \mathbb{R}$, $\Psi \in C(\mathbb{R};%
\mathcal{W}^{\mathbb{R}})$, $\Lambda \in \mathcal{P}_{f}$ and $A\in \mathcal{%
U}_{\Lambda }$. By Assertion (ii), for any $\varepsilon \in \mathbb{R}^{+}$
there is $L\in \mathbb{R}$ such that%
\begin{equation*}
\left\Vert \tau _{t,s_{1}}^{\Psi }(A)-\tau _{t,s_{2}}^{\Psi }(A)\right\Vert
_{\mathcal{U}}\leq \left\Vert \tau _{t,s_{1}}^{(L,\Psi )}(A)-\tau
_{t,s_{2}}^{(L,\Psi )}(A)\right\Vert _{\mathcal{U}}+\varepsilon \ ,
\end{equation*}%
which, by Equation (\ref{cauchy1}), implies that%
\begin{equation}
\left\Vert \tau _{t,s_{1}}^{\Psi }(A)-\tau _{t,s_{2}}^{\Psi }(A)\right\Vert
_{\mathcal{U}}\leq \int_{s_{1}\wedge s_{2}}^{s_{1}\vee s_{2}}\sum\limits_{%
\mathcal{Z}\in \mathcal{P}_{f}}\left\Vert \left[ \Psi \left( \alpha \right)
_{\mathcal{Z}},\tau _{t,\alpha }^{(L,\Psi )}\left( A\right) \right]
\right\Vert _{\mathcal{U}}\mathrm{d}\alpha +\varepsilon \ .
\label{estimation1}
\end{equation}%
Similar to (\ref{assertion bisbisbisbis03}), it follows that%
\begin{equation}
\left\Vert \tau _{t,s_{1}}^{\Psi }(A)-\tau _{t,s_{2}}^{\Psi }(A)\right\Vert
_{\mathcal{U}}\leq 2\left\vert \Lambda \right\vert \left\Vert A\right\Vert _{%
\mathcal{U}}\left\Vert \mathbf{F}\right\Vert _{1,\mathfrak{L}%
}\int_{s_{1}\wedge s_{2}}^{s_{1}\vee s_{2}}\mathrm{e}^{2\mathbf{D}%
\int_{t\wedge \alpha }^{t\vee \alpha }\left\Vert \Psi \left( \alpha
_{1}\right) \right\Vert _{\mathcal{W}}\mathrm{d}\alpha _{1}}\left\Vert \Psi
\left( \alpha \right) \right\Vert _{\mathcal{W}}\mathrm{d}\alpha \ .
\label{asserion (iv) con2}
\end{equation}%
Assertion (iv) is a combination of (\ref{asserion (iv) con1}) and (\ref%
{asserion (iv) con2}).
\end{proof}

\section{Lattice Fermions with Long-Range Interactions\label{Long-Range
systems}}

\subsection{Banach Space of Long-Range Models\label{Long-rande gef}}

Fix now $\mathfrak{L}=\mathbb{Z}^{d}$, $d\in \mathbb{N}$. Let $\mathbb{S}$
be the unit sphere of the Banach space $\mathcal{W}_{1}$\ of
translation-invariant (complex) interactions. Observe that any finite signed
Borel measure $\mathfrak{a}$ on $\mathbb{S}$ defines an interaction%
\begin{equation}
\int_{\mathbb{S}}\Psi \ \mathfrak{a}\left( \mathrm{d}\Psi \right) \in 
\mathcal{W}_{1}  \label{definition integral interaction0}
\end{equation}%
by 
\begin{equation}
\left( \int_{\mathbb{S}}\Psi \ \mathfrak{a}\left( \mathrm{d}\Psi \right)
\right) _{\Lambda }\doteq \int_{\mathbb{S}}\Psi _{\Lambda }\mathfrak{a}%
\left( \mathrm{d}\Psi \right) \ ,\qquad \Lambda \in \mathcal{P}_{f}\ .
\label{definition integral interaction}
\end{equation}%
This last integral is well-defined because, for each $\Lambda \in \mathcal{P}%
_{f}$, the integrand is an absolutely integrable function taking values in a
finite-dimensional normed space, which is $\mathcal{U}_{\Lambda }$. Below,
we extend this observation to define long-range, or mean-field, models. Note
that (\ref{definition integral interaction0}) can also be seen as a Bochner
integral because the measure $\mathfrak{a}$ is finite and $\mathcal{W}$ is a
separable Banach space. See, e.g., \cite[Theorems 1.1 and 1.2]{pettis}.

For any $n\in \mathbb{N}$ and any finite signed Borel measure $\mathfrak{a}$
on the Cartesian product $\mathbb{S}^{n}$ (endowed with its product
topology), we define the finite signed Borel measure $\mathfrak{a}^{\ast }$
to be the pushforward of $\mathfrak{a}$ through the automorphism 
\begin{equation}
\left( \Psi ^{(1)},\ldots ,\Psi ^{(n)}\right) \mapsto ((\Psi ^{(n)})^{\ast
},\ldots ,(\Psi ^{(1)})^{\ast })\in \mathbb{S}^{n}
\label{push forward self-adjoint}
\end{equation}%
of $\mathbb{S}^{n}$ as a topological space. A finite signed Borel measure $%
\mathfrak{a}$ on $\mathbb{S}^{n}$ is, by definition, \emph{self-adjoint}
whenever $\mathfrak{a^{\ast }}=\mathfrak{a}$.

For any $n\in \mathbb{N}$, we denote the space of \emph{self-adjoint},
finite, signed Borel measures on $\mathbb{S}^{n}$ by $\mathcal{S}(\mathbb{S}%
^{n}\mathbb{)}$, which is a real Banach space with the norm of the total
variation%
\begin{equation}
\Vert \mathfrak{a}\Vert _{\mathcal{S}(\mathbb{S}^{n}\mathbb{)}}\doteq |%
\mathfrak{a}|(\mathbb{S}^{n})\ ,\qquad n\in \mathbb{N}\ .
\label{definition 0}
\end{equation}%
The set of all sequences $\mathfrak{a}\equiv (\mathfrak{a}_{n})_{n\in 
\mathbb{N}}$ of self-adjoint, finite, signed Borel measures $\mathfrak{a}%
_{n}\in \mathcal{S}(\mathbb{S}^{n}\mathbb{)}$ is a real vector space, where 
\begin{equation*}
(\mathfrak{a}+\mathfrak{\tilde{a}})_{n}\doteq \mathfrak{a}_{n}+\mathfrak{%
\tilde{a}}_{n}\text{\qquad and\qquad }(\lambda \mathfrak{a})_{n}\doteq
\lambda \mathfrak{a}_{n}\ ,\qquad n\in \mathbb{N}\ ,
\end{equation*}%
for any sequence $\mathfrak{a}\equiv (\mathfrak{a}_{n})_{n\in \mathbb{N}},%
\mathfrak{\tilde{a}}\equiv (\mathfrak{\tilde{a}}_{n})_{n\in \mathbb{N}}$\
and all $\lambda \in \mathbb{R}$. We define the (real) space $\mathcal{S}$
to be the set of all sequences $\mathfrak{a}\equiv (\mathfrak{a}_{n})_{n\in 
\mathbb{N}}$ of self-adjoint, finite signed Borel measures $\mathfrak{a}%
_{n}\in \mathcal{S}(\mathbb{S}^{n}\mathbb{)}$ with 
\begin{equation}
\left\Vert \mathfrak{a}\right\Vert _{\mathcal{S}}\doteq \sum_{n\in \mathbb{N}%
}n^{2}\left\Vert \mathbf{F}\right\Vert _{1,\mathbb{Z}^{d}}^{n-1}\left\Vert 
\mathfrak{a}_{n}\right\Vert _{\mathcal{S}(\mathbb{S}^{n}\mathbb{)}}<\infty \
,  \label{definition 0bis}
\end{equation}%
where we recall that $\mathbf{F}:\mathbb{Z}^{d}\times \mathbb{Z}%
^{d}\rightarrow (0,1]$ is the decay function with maximum value $\mathbf{F}%
\left( x,x\right) =1$ for $x\in \mathbb{Z}^{d}$ and satisfying Conditions (%
\ref{(3.1) NS})-(\ref{(3.2) NS}). See also (\ref{iteration0}). Observe that $%
(\mathcal{S},\Vert \cdot \Vert _{\mathcal{S}})$ is a real Banach space. We
are now in a position to define \emph{long-range} models:

\begin{definition}[Long-range models]
\label{def long range}\mbox{
}\newline
The (real) Banach space of long-range models is the space $\mathcal{M}\doteq 
\mathcal{W}^{\mathbb{R}}\times \mathcal{S}$ along with the norm 
\begin{equation*}
\left\Vert \mathfrak{m}\right\Vert _{\mathcal{M}}\doteq \left\Vert \Phi
\right\Vert _{\mathcal{W}}+\left\Vert \mathfrak{a}\right\Vert _{\mathcal{S}%
}\,,\qquad \mathfrak{m}\doteq \left( \Phi ,\mathfrak{a}\right) \in \mathcal{M%
}\ .
\end{equation*}
\end{definition}

\noindent Note that $\mathcal{W}^{\mathbb{R}}$ and $\mathcal{S}$ are
canonically seen as subspaces of $\mathcal{M}$, i.e., 
\begin{equation*}
\mathcal{W}^{\mathbb{R}}\subseteq \mathcal{M}\qquad \text{and}\qquad 
\mathcal{S\subseteq M}\ .
\end{equation*}%
We emphasize that long-range models are \emph{not necessarily}
translation-invariant since, obviously, 
\begin{equation}
\mathcal{M}_{1}\doteq \left( \mathcal{W}_{1}\cap \mathcal{W}^{\mathbb{R}%
}\right) \times \mathcal{S\varsubsetneq M}\ .
\label{translatino invariatn long range models}
\end{equation}

Similar to (\ref{eq:enpersite})-(\ref{eq:enpersitebis}), we define the
subsets 
\begin{equation}
\mathcal{M}_{\Lambda }\doteq \mathcal{W}^{\mathbb{R}}\times \mathcal{S}%
_{\Lambda }\subseteq \mathcal{M}\ ,\qquad \Lambda \in \mathcal{P}_{f}\ ,
\label{S00bis}
\end{equation}%
where, for any $\Lambda \in \mathcal{P}_{f}$ ,%
\begin{equation}
\mathcal{S}_{\Lambda }\doteq \left\{ (\mathfrak{a}_{n})_{n\in \mathbb{N}}\in 
\mathcal{S}:\forall n\in \mathbb{N},\ |\mathfrak{a}_{n}|(\mathbb{S}^{n})=|%
\mathfrak{a}_{n}|((\mathbb{S}\cap \mathcal{W}_{\Lambda })^{n})\right\} \ .
\label{S0bis}
\end{equation}%
Note that the short-range part of models in $\mathcal{M}_{\Lambda }$ is 
\emph{not necessarily} finite-range, but their long-range interactions are
built from finite-range interactions. Similar to (\ref{W0}) we can define a
dense subspace 
\begin{equation}
\mathcal{M}_{0}\doteq \bigcup_{L\in \mathbb{N}}\mathcal{M}_{\Lambda
_{L}}\subseteq \mathcal{M}\ .  \label{S00bisbis}
\end{equation}

\subsection{Local Hamiltonians and Derivations on the CAR Algebra\label{sect
Lieb--Robinson copy(2)}}

Similar to Definition \ref{definition fininte vol dynam0}, we define a
sequence of local Hamiltonians for any model $\mathfrak{m}\in \mathcal{M}$:
At any fixed $n\in \mathbb{N}$ and $L\in \mathbb{N}$, the mapping%
\begin{equation*}
\left( \Psi ^{(1)},\ldots ,\Psi ^{(n)}\right) \mapsto U_{L}^{\Psi
^{(1)}}\cdots U_{L}^{\Psi ^{(n)}}
\end{equation*}%
from $\mathbb{S}^{n}$ to $\mathcal{U}$\ is continuous (see (\ref{norm Uphi}%
)), and so, for any long-range model $\mathfrak{m}\in \mathcal{M}$, we can
define the following self-adjoint element of $\mathcal{U}$:

\begin{definition}[Hamiltonians]
\label{definition long range energy}\mbox{ }\newline
The local Hamiltonians of any model $\mathfrak{m}\in \mathcal{M}$ are%
\begin{equation*}
U_{L}^{\mathfrak{m}}\doteq U_{L}^{\Phi }+\sum_{n\in \mathbb{N}}\frac{1}{%
\left\vert \Lambda _{L}\right\vert ^{n-1}}\int_{\mathbb{S}^{n}}U_{L}^{\Psi
^{(1)}}\cdots U_{L}^{\Psi ^{(n)}}\mathfrak{a}_{n}\left( \mathrm{d}\Psi
^{(1)},\ldots ,\mathrm{d}\Psi ^{(n)}\right) \,,\qquad L\in \mathbb{N}\ .
\end{equation*}
\end{definition}

\noindent Note that 
\begin{equation*}
U_{L}^{\mathfrak{m}}\in \mathcal{U}_{\Lambda _{L}}\cap \mathcal{U}^{\mathbb{R%
}}\cap \mathcal{U}^{+}\,,\qquad L\in \mathbb{N}\ ,
\end{equation*}%
and straightforward estimates using Equations (\ref{norm Uphi}), (\ref%
{definition 0})-(\ref{definition 0bis}) and Definition \ref{def long range}
yield the bound 
\begin{equation}
\left\Vert U_{L}^{\mathfrak{m}}\right\Vert _{\mathcal{U}}\leq \left\vert
\Lambda _{L}\right\vert \left\Vert \mathbf{F}\right\Vert _{1,\mathfrak{L}%
}\left\Vert \mathfrak{m}\right\Vert _{\mathcal{M}}\ ,\qquad L\in \mathbb{N}\
.  \label{energy bound long range}
\end{equation}%
(This upper bound is relatively coarse, in general.)

All well-established Hamiltonians for lattice fermions in condensed matter
physics can be written as $U_{L}^{\mathfrak{m}}$ via some model $\mathfrak{m}%
\in \mathcal{M}$. In Section \ref{Section applications}, we shortly explain
an important example related to the BCS theory. Other examples can also be
found in \cite[Section 2.2]{BruPedra2} (for instance, in relation with the
forward scattering approximation) as well as in \cite{Bru-pedra-MF-IV} which
refers to the dynamics of the strong-coupling BCS-Hubbard model. The latter
is an interesting model because it predicts the existence of a
superconductor-Mott insulator phase transition, like in cuprates which must
be doped to become superconductors. See \cite{BruPedra1} for more details.

Note that the long-range character of $\mathfrak{m}\in \mathcal{M}$ with
local internal energy $U_{L}^{\mathfrak{m}}$ -- as compared to the usual
models defined from short-range interactions $\Phi \in \mathcal{W}$ only --
can be understood as follows: Take $\mathfrak{m}=\left( \Phi ,\mathfrak{a}%
\right) \in \mathcal{M}$. For each fixed $\epsilon \in (0,1)$, we define the
long-range truncation of $U_{L}^{\Phi }$, the short-range component of the
total internal energy (Definition \ref{definition fininte vol dynam0}), by%
\begin{equation*}
U_{L,\epsilon }^{\Phi }\doteq \sum\limits_{\Lambda \subseteq \Lambda _{L},\;%
\text{{\o }}(\Lambda )>\epsilon L}\Phi _{\Lambda },
\end{equation*}%
where the function {\o }$(\Lambda )$ is the diameter of $\Lambda \in 
\mathcal{P}_{f}$. Analogously, the long-range truncation of the internal
energy $(U_{L}^{\mathfrak{m}}-U_{L}^{\Phi })$ associated with the long-range
part of $\mathfrak{m}$ is by definition equal to%
\begin{equation*}
U_{L,\epsilon }^{\mathfrak{m}}\doteq \sum_{n\in \mathbb{N}}\frac{1}{%
\left\vert \Lambda _{L}\right\vert ^{n-1}}\int_{\mathbb{S}%
^{n}}\sum\limits_{\Lambda \subseteq \Lambda _{L},\;\text{{\o }}(\Lambda
^{(1)}\cup \cdots \cup \Lambda ^{(n)})>\epsilon L}\Psi _{\Lambda
^{(1)}}^{(1)}\cdots \Psi _{\Lambda ^{(n)}}^{(n)}\mathfrak{a}_{n}\left( 
\mathrm{d}\Psi ^{(1)},\ldots ,\mathrm{d}\Psi ^{(n)}\right) .
\end{equation*}%
If $\mathfrak{m}=\left( \Phi ,\mathfrak{a}\right) \in \mathcal{M}\backslash 
\mathcal{W}$, for any $\epsilon \in (0,1)$ one generally has that%
\begin{equation*}
\lim\limits_{L\rightarrow \infty }\Vert U_{L,\epsilon }^{\Phi }\Vert
=0\qquad \text{and}\qquad \lim\limits_{L\rightarrow \infty }\Vert
U_{L,\epsilon }^{\mathfrak{m}}\Vert >0.
\end{equation*}%
In particular, the long-range part $(U_{L}^{\mathfrak{m}}-U_{L}^{\Phi })$ of
the internal energy $U_{L}^{\mathfrak{m}}$ generally dominates the
interaction at large distances, in the infinite volume limit.

For any translation-invariant model $\mathfrak{m}\doteq \left( \Phi ,%
\mathfrak{a}\right) \in \mathcal{M}_{1}$ (cf. (\ref{translatino invariatn
long range models})), observe that 
\begin{equation*}
U_{L}^{\mathfrak{m}}\doteq U_{L}^{\tilde{\Phi}}+\sum_{n=2}^{\infty }\frac{1}{%
\left\vert \Lambda _{L}\right\vert ^{n-1}}\int_{\mathbb{S}^{n}}U_{L}^{\Psi
^{(1)}}\cdots U_{L}^{\Psi ^{(n)}}\mathfrak{a}_{n}\left( \mathrm{d}\Psi
^{(1)},\ldots ,\mathrm{d}\Psi ^{(n)}\right) \,,\qquad L\in \mathbb{N}\ ,
\end{equation*}%
where%
\begin{equation*}
\tilde{\Phi}\doteq \Phi +\int_{\mathbb{S}}\Psi \ \mathfrak{a}_{1}\left( 
\mathrm{d}\Psi \right) \in \mathcal{W}\ ,
\end{equation*}%
the last integral being defined by (\ref{definition integral interaction}).
If the model is short-range and translation-invariant, i.e., $\Phi ,\tilde{%
\Phi}\in \mathcal{W}_{1}$, then the interaction $\tilde{\Phi}$ can be
encoded in some self-adjoint, finite, signed Borel measure $\mathfrak{\tilde{%
a}}_{1}$, leading to the definition of a new model $\mathfrak{\tilde{m}}%
\doteq \left( 0,\mathfrak{\tilde{a}}\right) \in \mathcal{M}_{1}$ such that $%
U_{L}^{\mathfrak{m}}=U_{L}^{\mathfrak{\tilde{m}}}$. In other words, if one
is only interested in models that are short-range and translation-invariant,
then one can only consider the Banach space $\mathcal{S}$. Finally, for any
translation-invariant model $\mathfrak{m}=\left( \Phi ,(0,\mathfrak{a}%
_{2},0,\ldots )\right) \in \mathcal{M}_{1}$, remark that 
\begin{equation}
U_{L}^{\mathfrak{m}}\doteq U_{L}^{\Phi }+\frac{1}{\left\vert \Lambda
_{L}\right\vert }\int_{\mathbb{S}^{2}}U_{L}^{\Psi ^{(1)}}U_{L}^{\Psi ^{(2)}}%
\mathfrak{a}_{2}\left( \mathrm{d}\Psi ^{(1)},\mathrm{d}\Psi ^{(2)}\right) \ ,
\label{definition quadratic}
\end{equation}%
which can be seen as a local Hamiltonian of a long-range model in the sense
of \cite{BruPedra2}, as explained in\ Section \ref{Long-range models}.
Explicit examples of such models are given in \cite[Section 2.2]{BruPedra2}.

Like in Definition \ref{dynamic series}, any model $\mathfrak{m}\in \mathcal{%
M}$ yields a sequence of bounded derivations:

\begin{definition}[Derivations on the CAR\ algebra for long-range
interactions]
\label{derivation long range}\mbox{ }\newline
The (symmetric) derivations $\{\delta _{L}^{\mathfrak{m}}\}_{L\in \mathbb{N}%
}\subseteq \mathcal{B}(\mathcal{U})$ associated with any model $\mathfrak{m}%
\in \mathcal{M}$ are defined by%
\begin{equation*}
\delta _{L}^{\mathfrak{m}}(A)\doteq i\left[ U_{L}^{\mathfrak{m}},A\right]
\doteq i\left( U_{L}^{\mathfrak{m}}A-AU_{L}^{\mathfrak{m}}\right) \ ,\qquad
A\in \mathcal{U},\ L\in \mathbb{N}\ .
\end{equation*}
\end{definition}

\subsection{Dynamical Problem Associated with Long-Range Interactions\label%
{section pb}}

For any long-range model $\mathfrak{m}\in \mathcal{M}$, the finite-volume
dynamics are always well-defined: For all $L\in \mathbb{N}$ there is a
strongly continuous one-parameter group $(\tau _{t}^{(L,\mathfrak{m}%
)})_{_{t\in \mathbb{R}}}$ of $\ast $-auto%
\-%
morphisms of $\mathcal{U}$ generated by $\delta _{L}^{\mathfrak{m}}\in 
\mathcal{B}(\mathcal{U})$: 
\begin{equation}
\tau _{t}^{(L,\mathfrak{m})}\left( A\right) \doteq \mathrm{e}^{itU_{L}^{%
\mathfrak{m}}}A\mathrm{e}^{-itU_{L}^{\mathfrak{m}}}\ ,\qquad A\in \mathcal{U}%
\ .  \label{definition fininte vol dynam}
\end{equation}%
Compare with (\ref{cauchy1})-(\ref{cauchy2}) for short-range interactions in
the autonomous situation.

Nevertheless, in contrast with short-range interactions (cf. Corollary \ref%
{Lemma cigare1} and Proposition \ref{Theorem Lieb-Robinson}), in the
thermodynamic limit $L\rightarrow \infty $, the finite-volume dynamics does
not generally converge within the $C^{\ast }$-algebra $\mathcal{U}$. To see
this, consider the following elementary example: Choose a model $\mathfrak{m}%
=\left( 0,(0,\mathfrak{a}_{2},0,\ldots )\right) \in \mathcal{M}_{1}$ such
that%
\begin{equation*}
U_{L}^{\mathfrak{m}}=\frac{1}{2\left\vert \Lambda _{L}\right\vert }%
N_{L}^{2}\qquad \text{with}\qquad N_{L}\doteq \sum_{x\in \Lambda _{L},%
\mathrm{s}\in \mathrm{S}}a_{x,\mathrm{s}}^{\ast }a_{x,\mathrm{s}}\ .
\end{equation*}%
Take $A=a_{0,\mathrm{s}}\in \mathcal{U}_{0}$ for some fixed spin $\mathrm{s}%
\in \mathrm{S}$. Observe that 
\begin{equation*}
\tau _{t}^{(L,\mathfrak{m})}\left( a_{0,\mathrm{s}}\right) =\mathrm{e}%
^{it\left( 2\left\vert \Lambda _{L}\right\vert \right) ^{-1}}\mathrm{e}%
^{it\left\vert \Lambda _{L}\right\vert ^{-1}a_{0,\mathrm{s}}^{\ast }a_{0,%
\mathrm{s}}N_{L}}a_{0,\mathrm{s}}\mathrm{e}^{-it\left\vert \Lambda
_{L}\right\vert ^{-1}a_{0,\mathrm{s}}^{\ast }a_{0,\mathrm{s}}N_{L}}\ .
\end{equation*}%
Therefore, for any $t\in \mathbb{R}$, 
\begin{equation}
\mathrm{e}^{-it\left( 2\left\vert \Lambda _{L}\right\vert \right) ^{-1}}\tau
_{t}^{(L,\mathfrak{m})}\left( a_{0,\mathrm{s}}\right) =a_{0,\mathrm{s}%
}+it\left\vert \Lambda _{L}\right\vert ^{-1}\left[ a_{0,\mathrm{s}}^{\ast
}a_{0,\mathrm{s}}N_{L},a_{0,\mathrm{s}}\right] +R_{L}\left( t\right)
\label{dfkjsdfkljdf}
\end{equation}%
with $\left\Vert R_{L}\left( t\right) \right\Vert _{\mathcal{U}}\leq 2t^{2}$%
. Note that 
\begin{equation*}
\left\vert \Lambda _{L}\right\vert ^{-1}\left[ a_{0,\mathrm{s}}^{\ast }a_{0,%
\mathrm{s}}N_{L},a_{0,\mathrm{s}}\right] =-\left\vert \Lambda
_{L}\right\vert ^{-1}a_{0,\mathrm{s}}N_{L}
\end{equation*}%
and it is straightforward to check that this last element does not converge
in $\mathcal{U}$, as $L\rightarrow \infty $. By Equation (\ref{dfkjsdfkljdf}%
), at least at small times $\left\vert t\right\vert >0$, $(\tau _{t}^{(L,%
\mathfrak{m})}(A))_{L\in \mathbb{N}}\subseteq \mathcal{U}$ does \emph{not}
converge, as $L\rightarrow \infty $.

The non-convergence property is generic: For any integer $n\geq 2$, $\Psi
^{(1)},\ldots ,\Psi ^{(n)}\in \mathcal{W}$, $A\in \mathcal{U}$ and all $L\in 
\mathbb{N}$, 
\begin{eqnarray}
\frac{1}{\left\vert \Lambda _{L}\right\vert ^{n-1}}\left[ U_{L}^{\Psi
^{(1)}}\cdots U_{L}^{\Psi ^{(n)}},A\right] &=&\left[ U_{L}^{\Psi ^{(1)}},A%
\right] \frac{U_{L}^{\Psi ^{(2)}}}{\left\vert \Lambda _{L}\right\vert }%
\cdots \frac{U_{L}^{\Psi ^{(n)}}}{\left\vert \Lambda _{L}\right\vert }+\frac{%
U_{L}^{\Psi ^{(1)}}}{\left\vert \Lambda _{L}\right\vert }\cdots \frac{%
U_{L}^{\Psi ^{(n-1)}}}{\left\vert \Lambda _{L}\right\vert }\left[
U_{L}^{\Psi ^{(n)}},A\right]  \notag \\
&&+\sum_{m=2}^{n-1}\frac{U_{L}^{\Psi ^{(1)}}}{\left\vert \Lambda
_{L}\right\vert }\cdots \frac{U_{L}^{\Psi ^{(m-1)}}}{\left\vert \Lambda
_{L}\right\vert }\left[ U_{L}^{\Psi ^{(m)}},A\right] \frac{U_{L}^{\Psi
^{(m+1)}}}{\left\vert \Lambda _{L}\right\vert }\cdots \frac{U_{L}^{\Psi
^{(n)}}}{\left\vert \Lambda _{L}\right\vert }\ .
\label{equation commutators}
\end{eqnarray}%
(Compare with Definitions \ref{definition long range energy} and \ref%
{derivation long range}.) Note that the element (\ref{equation commutators})
of $\mathcal{U}$ is uniformly bounded with respect to $L\in \mathbb{N}$,
since, by Corollary \ref{Lemma cigare1}, the commutators in the right-hand
side of this last equation have a limit in $\mathcal{U}$ for any $A\in 
\mathcal{U}_{0}$, as $L\rightarrow \infty $. However, $|\Lambda
_{L}|^{-1}U_{L}^{\Psi }$ does not generally converge in the norm sense of $%
\mathcal{U}$: Since 
\begin{equation*}
\lim_{L\rightarrow \infty }\frac{1}{\left\vert \Lambda _{L}\right\vert }%
\left[ U_{L}^{\Psi },A\right] =0\ ,\qquad A\in \mathcal{U}\ ,
\end{equation*}%
if the sequence $(|\Lambda _{L}|^{-1}U_{L}^{\Psi })_{L\in \mathbb{N}}$ would
converge in $\mathcal{U}$, as $L\rightarrow \infty $, then its limit would
be an element of the center of $\mathcal{U}$. For $\mathcal{U}$ is a simple
algebra (cf. \cite[Corollary 2.6.19]{BrattelliRobinsonI}), its center is
trivial. Therefore, $(|\Lambda _{L}|^{-1}U_{L}^{\Psi })_{L\in \mathbb{N}}$
would converge to $c\mathfrak{1}$ for some $c\in \mathbb{C}$. In particular,
by Proposition \ref{density of periodic states copy(2)}, for $\Psi \in 
\mathcal{W}_{1}$,%
\begin{equation*}
\left( \mathfrak{e}_{\Psi ,\vec{\ell}}-c\mathfrak{1}\right) \in
\bigcap_{\rho \in E_{\vec{\ell}}}\mathrm{ker}\rho \ ,
\end{equation*}%
which is clearly wrong, in general. This observation is well-known. See,
e.g., \cite[p. 2225]{Bona88}.

By contrast, taking $\Psi \in \mathcal{W}_{1}$, $\vec{\ell}\in \mathbb{N}%
^{d} $ and any cyclic representation $\left( \mathcal{H}_{\rho },\pi _{\rho
},\Omega _{\rho }\right) $ of an \emph{extreme} (or ergodic) $\vec{\ell}$%
-periodic state $\rho \in E_{\vec{\ell}}$ (\ref{periodic invariant states}),
one has that the sequence $(|\Lambda _{L}|^{-1}\pi _{\rho }(U_{L}^{\Psi
}))_{L\in \mathbb{N}}$ does strongly converge to $\pi _{\rho }(c)$ for some $%
c=c_{\rho }\in \mathbb{C}$. This is basically Haag's argument \cite{haag62}
proposed in 1962 in order to give a mathematical meaning to the dynamics of
the BCS model. In the more general case of a not necessarily extreme state $%
\rho \in E_{\vec{\ell}}$, the strong operator limit of the sequence is an
element of the (possibly non-trivial) center of the von Neumann algebra $\pi
_{\rho }(\mathcal{U})^{\prime \prime }$. These facts compel us to consider a
more general setting, in particular the notion of \emph{state-dependent}
observables and interactions.

\section{$C^{\ast }$-Algebra of Continuous Functions on States}

In this section, we define the extended quantum framework introduced in \cite%
{Bru-pedra-MF-I}, starting with the classical $C^{\ast }$-algebra. Note that
we consider here the $C^{\ast }$-algebra $\mathcal{U}$, with state space $E$%
, as the primordial $C^{\ast }$-algebra $\mathcal{X}$ of \cite%
{Bru-pedra-MF-I}. Nevertheless, from the point of view of physics, only the
subalgebra $\mathcal{U}^{+}\subseteq \mathcal{U}$ of even elements is
relevant, as already explained in Section \ref{even}. In this case, the
physical state space is the set $E^{+}\subseteq E$ of all even states.

\subsection{The Classical $C^{\ast }$-Algebra of Continuous Functions on
States}

Recall that $E$ stands for the metrizable, weak$^{\ast }$-compact and convex
set of states on $\mathcal{U}$, as defined by (\ref{states CAR}). It is the
state space of the classical dynamics defined on the space $C(E;\mathbb{C})$
of complex-valued weak$^{\ast }$-continuous functions on $E$:\bigskip

\noindent \underline{Classical\ algebra:} Endowed with the point-wise
operations and complex conjugation, $C(E;\mathbb{C})$ becomes a unital
commutative $C^{\ast }$-algebra denoted by%
\begin{equation}
\mathfrak{C}\doteq \left( C\left( E;\mathbb{C}\right) ,+,\cdot _{{\mathbb{C}}%
},\times ,\overline{\left( \cdot \right) },\left\Vert \cdot \right\Vert _{%
\mathfrak{C}}\right) \ ,  \label{metaciagre set 2}
\end{equation}%
where the corresponding $C^{\ast }$-norm is 
\begin{equation}
\left\Vert f\right\Vert _{\mathfrak{C}}\doteq \max_{\rho \in E}\left\vert
f\left( \rho \right) \right\vert \ ,\qquad f\in \mathfrak{C}\ .
\label{metaciagre set 2bis}
\end{equation}%
Note that the \textquotedblleft $\max $\textquotedblright\ in the definition
of the norm\ is well-defined because of the continuity of $f$ together with
the compactness of $E$. $\mathfrak{C}$ is the classical $C^{\ast }$-algebra
of weak$^{\ast }$-continuous complex-valued functions on states. The (real)
Banach subspace of all real-valued functions is denoted by $\mathfrak{C}^{%
\mathbb{R}}\varsubsetneq \mathfrak{C}$. The $C^{\ast }$-algebra $\mathfrak{C}
$ is separable, $E$ being metrizable and compact.\bigskip

\noindent \underline{Gelfand transform:} Elements of the (separable and
unital)$\ C^{\ast }$-algebra $\mathcal{U}$ naturally define continuous and
affine functions $\hat{A}\in \mathfrak{C}$ by 
\begin{equation}
\hat{A}\left( \rho \right) \doteq \rho \left( A\right) \ ,\qquad \rho \in
E,\ A\in \mathcal{U}\ .  \label{fA}
\end{equation}%
This is the well-known Gelfand transform. Note that $A\neq B$ implies $\hat{A%
}\neq \hat{B}$, as states separates elements of $\mathcal{U}$. Since 
\begin{equation}
\left\Vert A\right\Vert _{\mathcal{U}}=\max_{\rho \in E}\left\vert \rho
\left( A\right) \right\vert \ ,\qquad A\in \mathcal{U}^{\mathbb{R}}\ ,
\label{norm properties}
\end{equation}%
the mapping $A\mapsto \hat{A}$ defines a linear isometry from the Banach
space $\mathcal{U}^{\mathbb{R}}$ of all self-adjoint elements to the space $%
\mathfrak{C}^{\mathbb{R}}$ of all real-valued functions on $E$. \bigskip

\noindent \underline{Dense classical subalgebra:} Recall that $\mathcal{U}%
_{0}$ is the normed $\ast $-algebra of local elements of $\mathcal{U}$
defined by (\ref{simple}). We denote respectively by%
\begin{equation}
\mathfrak{C}_{\mathcal{U}_{0}}\doteq \mathbb{C}[\{\hat{A}:A\in \mathcal{U}%
_{0}\}]\qquad \text{and}\qquad \mathfrak{C}_{\mathcal{U}}\doteq \mathbb{C}[\{%
\hat{A}:A\in \mathcal{U}\}]  \label{CU0}
\end{equation}%
the subalgebras of polynomials in the elements of $\{\hat{A}:A\in \mathcal{U}%
_{0}\}$ and $\{\hat{A}:A\in \mathcal{U}\}$, with complex coefficients. The
unit $\mathfrak{\hat{1}}\in \mathfrak{C}$ mapping any state to $1$ belongs
to $\mathfrak{C}_{\mathcal{U}_{0}}$, by definition. Since $\mathcal{U}_{0}$
is, by construction, dense in $\mathcal{U}$, the subalgebra $\mathfrak{C}_{%
\mathcal{U}_{0}}$ separates states. Therefore, by the Stone-Weierstrass
theorem, $\mathfrak{C}_{\mathcal{U}_{0}}\subseteq \mathfrak{C}_{\mathcal{U}}$
is dense in $\mathfrak{C}$, i.e., $\mathfrak{C}=\overline{\mathfrak{C}_{%
\mathcal{U}_{0}}}$.

\subsection{Poisson Structure Associated with the State Space\label{Poisson
Structure}}

We define $\mathcal{A}\left( E;\mathbb{C}\right) \subseteq \mathfrak{C}$ to
be the closed subspace of all \emph{affine}, weak$^{\ast }$-continuous
complex-valued functions over $E$. By \cite[Definition 3.7]{Bru-pedra-MF-I},
the convex G\^{a}teaux derivative of $f\in \mathfrak{C}$ at a fixed state $%
\rho \in E$ is an affine weak$^{\ast }$-continuous complex-valued function
over $E$, defined as follows:

\begin{definition}[Convex weak$^{\ast }$-continuous G\^{a}teaux derivative]
\label{convex Frechet derivative}\mbox{ }\newline
For any $f\in \mathfrak{C}$ and $\rho \in E$, we say that $\mathrm{d}f\left(
\rho \right) :E\rightarrow \mathbb{C}$ is the (unique) convex weak$^{\ast }$%
-continuous G\^{a}teaux derivative of $f$ at $\rho \in E$ if $\mathrm{d}%
f\left( \rho \right) \in \mathcal{A}(E;\mathbb{C})$ and%
\begin{equation*}
\lim_{\lambda \rightarrow 0^{+}}\lambda ^{-1}\left( f\left( \left( 1-\lambda
\right) \rho +\lambda \upsilon \right) -f\left( \rho \right) \right) =\left[ 
\mathrm{d}f\left( \rho \right) \right] \left( \upsilon \right) \ ,\qquad
\rho ,\upsilon \in E\ .
\end{equation*}
\end{definition}

A function $f\in \mathfrak{C}$ such that $\mathrm{d}f\left( \rho \right) $
exists for all $\rho \in E$ is called (convex-)\emph{differentiable} and we
use the notation 
\begin{equation*}
\mathrm{d}f\equiv (\mathrm{d}f\left( \rho \right) )_{\rho \in
E}:E\rightarrow \mathcal{A}(E;\mathbb{C})\ .
\end{equation*}%
If $f\in \mathcal{A}(E;\mathbb{C})$ then 
\begin{equation*}
\mathrm{d}f\left( \rho \right) =f-f\left( \rho \right) \ ,\qquad \rho \in E\
,
\end{equation*}%
which means that affine functions of $\mathfrak{C}$ are continuously
(convex-)differentiable, as expected.

We define the (non-empty) subspace of continuously differentiable
complex-valued functions over the convex and weak$^{\ast }$-compact set $E$
by 
\begin{equation}
\mathfrak{Y}\equiv \mathfrak{Y}\left( E\right) \doteq \left\{ f\in \mathfrak{%
C}:\mathrm{d}f\in C\left( E;\mathfrak{C}\right) \right\} \ .  \label{C1}
\end{equation}%
We endow this vector space with the norm%
\begin{equation}
\left\Vert f\right\Vert _{\mathfrak{Y}}\doteq \left\Vert f\right\Vert _{%
\mathfrak{C}}+\max_{\rho \in E}\left\Vert \mathrm{d}f\left( \rho \right)
\right\Vert _{\mathfrak{C}}<\infty \ ,\qquad f\in \mathfrak{Y}\ ,
\label{C1bis}
\end{equation}%
in order to obtain a Banach space, again denoted by $\mathfrak{Y}$. The
\textquotedblleft $\max $\textquotedblright\ in the definition of the norm
is well-defined because of the continuity of $f$ and $\mathrm{d}f$ together
with the compactness of $E$. The normed vector space $\mathfrak{Y}$ is
complete, see \cite[Section 3.4]{Bru-pedra-MF-I}.

The (real) Banach subspace of all continuously differentiable real-valued
functions is denoted by $\mathfrak{Y}^{\mathbb{R}}\varsubsetneq \mathfrak{Y}$%
. By \cite[Proposition 3.8]{Bru-pedra-MF-I} together with Equations (\ref%
{metaciagre set 2bis})-(\ref{norm properties}) and (\ref{C1}), for any $f\in 
\mathfrak{Y}^{\mathbb{R}}\varsubsetneq \mathfrak{C}^{\mathbb{R}}$, there is
a unique $\mathrm{D}f\in C(E;\mathcal{U}^{\mathbb{R}})$ such that%
\begin{equation}
\mathrm{d}f\left( \rho \right) =\widehat{\mathrm{D}f\left( \rho \right) }\
,\qquad \rho \in E\ .  \label{clear2}
\end{equation}%
By (\ref{metaciagre set 2bis}) and (\ref{norm properties}), note that%
\begin{equation}
\left\Vert \mathrm{D}f\left( \rho \right) \right\Vert _{\mathcal{U}%
}=\left\Vert \mathrm{d}f\left( \rho \right) \right\Vert _{\mathfrak{C}}\
,\qquad \rho \in E\ .  \label{norm affine2}
\end{equation}%
Since the convex weak$^{\ast }$-continuous G\^{a}teaux derivative is linear
and because any function $f\in \mathfrak{Y}$ can be decomposed into real and
imaginary parts, it follows that, for any $f\in \mathfrak{Y}\varsubsetneq 
\mathfrak{C}$, there is a unique $\mathrm{D}f\in C(E;\mathcal{U})$
satisfying (\ref{clear2}). For instance,%
\begin{equation*}
\mathrm{D}\hat{A}\left( \rho \right) =A-\rho \left( A\right) \mathfrak{1}\
,\qquad A\in \mathcal{U}\ .
\end{equation*}%
Therefore, using \cite[Definition 3.9]{Bru-pedra-MF-I} and the standard
notation $[A,B]\doteq AB-BA$, $A,B\in \mathcal{U}$, for the commutator, we
can define a Poisson bracket for continuously differentiable complex-valued
functions on the state space:

\begin{definition}[Poisson bracket]
\label{convex Frechet derivative copy(1)}\mbox{ }\newline
We define the mapping $\{\cdot ,\cdot \}:\mathfrak{Y}\times \mathfrak{Y}%
\rightarrow \mathfrak{C}$ by%
\begin{equation*}
\left\{ f,g\right\} \left( \rho \right) \doteq \rho \left( i\left[ \mathrm{D}%
f\left( \rho \right) ,\mathrm{D}g\left( \rho \right) \right] \right) \
,\qquad f,g\in \mathfrak{Y}\ .
\end{equation*}
\end{definition}

\noindent By \cite[Proposition 3.10 and discussions afterwards]%
{Bru-pedra-MF-I}, this mapping is a Poisson bracket on $\mathfrak{C}_{%
\mathcal{U}}$, that is, for complex-valued polynomials. In other words, it
is a skew-symmetric biderivation on $\mathfrak{C}_{\mathcal{U}}$ satisfying
the Jacobi identity.

In fact, by generalizing the well-known construction of a Poisson bracket
for the polynomial functions on the dual space of finite dimensional Lie
groups \cite[Section 7.1]{Poission}, we define in \cite[Section 3.2]%
{Bru-pedra-MF-I} a Poisson bracket for the polynomial functions on the
hermitian continuous functional (like the states) on any $C^{\ast }$%
-algebra. Then, in \cite[Section 3.3]{Bru-pedra-MF-I}, the Poisson bracket
is localized on the state and phase\footnote{%
The phase space in \cite[Definition 2.2]{Bru-pedra-MF-I} is the weak$^{\ast
} $ closure of the subset $\mathcal{E}(E)$ of extreme points of the state
space $E$. For general $C^{\ast }$-algebras, $E$ and the weak$^{\ast }$
closure of $\mathcal{E}(E)$ are not necessarly the same. But here, if $%
\mathfrak{L}$ is infinite then $E=\overline{\mathcal{E}(E)}$, i.e., phase
and state spaces coincide.} spaces associated with this algebra, by taking
quotients with respect to conveniently chosen Poisson ideals. In particular,
this leads, in an elegant way, to a Poisson bracket for polynomial functions
of the classical $C^{\ast }$-algebra $\mathfrak{C}$. Definitions \ref{convex
Frechet derivative}-\ref{convex Frechet derivative copy(1)} just yield a
very convenient explicit expression of this Poisson bracket for functions on
the state space.

\subsection{The Quantum $C^{\ast }$-Algebra of Continuous Functions on
States \label{State-Dependent Short-Range Interactions copy(1)}}

The long-range dynamics takes place in the space $C(E;\mathcal{U})$ of weak$%
^{\ast }$-continuous $\mathcal{U}$-valued functions on the metrizable
compact space $E$, which can be endowed with a $C^{\ast }$-algebra
structure:\bigskip

\noindent \underline{Quantum\ algebra:} Endowed with the point-wise $\ast $%
-algebra operations inherited from $\mathcal{U}$, $C(E;\mathcal{U})$ is a
unital \emph{non-commutative} $C^{\ast }$-algebra denoted by%
\begin{equation}
\mathfrak{U}\equiv \mathfrak{U}_{\mathfrak{L}}\doteq \left( C\left( E;%
\mathcal{U}\right) ,+,\cdot _{{\mathbb{C}}},\times ,^{\ast },\left\Vert
\cdot \right\Vert _{\mathfrak{U}}\right) \ .  \label{metaciagre set}
\end{equation}%
The unique $C^{\ast }$-norm $\left\Vert \cdot \right\Vert _{\mathfrak{U}}$
is the supremum norm for functions on $E$ taking values in the normed space $%
\mathcal{U}$ , i.e., 
\begin{equation}
\left\Vert f\right\Vert _{\mathfrak{U}}\doteq \max_{\rho \in E}\left\Vert
f\left( \rho \right) \right\Vert _{\mathcal{U}}\ ,\qquad f\in \mathfrak{U}\ .
\label{metaciagre set bis}
\end{equation}%
Recall that $\mathcal{U}^{\mathbb{R}}\varsubsetneq \mathcal{U}$ is the
(real) Banach subspace of all self-adjoint elements of $\mathcal{U}$. The
(real) Banach subspace of all $\mathcal{U}^{\mathbb{R}}$-valued functions of 
$\mathfrak{U}$ is similarly denoted by $\mathfrak{U}^{\mathbb{R}%
}\varsubsetneq \mathfrak{U}$.

We identify the primordial $C^{\ast }$-algebra $\mathcal{U}$, on which the
quantum dynamics is usually defined, with the subalgebra of constant
functions of $\mathfrak{U}$. Meanwhile, the classical dynamics appears in
the algebra $\mathfrak{C}$ of complex-valued weak$^{\ast }$-continuous
functions on $E$. This unital commutative $C^{\ast }$-algebra is identified
with the subalgebra of functions of $\mathfrak{U}$ whose values are
multiples of the unit $\mathfrak{1}\in \mathcal{U}$. In other words, we have
the canonical inclusions%
\begin{equation}
\mathcal{U}\subseteq \mathfrak{U}\qquad \text{and}\qquad \mathfrak{%
C\subseteq U}\ .  \label{subset}
\end{equation}%
See \cite[Eq. (67)]{Bru-pedra-MF-I}.\bigskip

\noindent \underline{Dense subalgebras:} Similar to (\ref{simple}), we
define the $\ast $-subalgebras%
\begin{equation}
\mathfrak{U}_{\Lambda }\doteq \left\{ f\in \mathfrak{U}:f\left( E\right)
\subseteq \mathcal{U}_{\Lambda }\right\} \ ,\qquad \Lambda \in \mathcal{P}%
_{f}\ ,  \label{local elements}
\end{equation}%
and 
\begin{equation}
\mathfrak{U}_{0}\doteq \bigcup\limits_{L\in \mathbb{N}}\mathfrak{U}_{\Lambda
_{L}}\subseteq \mathfrak{U}\ .  \label{U frac 0}
\end{equation}%
The union $\mathfrak{C}_{\mathcal{U}_{0}}\cup \mathcal{U}_{0}$ generates the 
$C^{\ast }$-algebra $\mathfrak{U}$ and 
\begin{equation}
\mathfrak{U}_{0}=\mathrm{span}\left\{ \mathfrak{C}\mathcal{U}_{0}\right\}
\supseteq \mathrm{span}\left\{ \mathfrak{C}_{\mathcal{U}_{0}}\mathcal{U}%
_{0}\right\}  \label{spain}
\end{equation}%
are dense $\ast $-subalgebras of $\mathfrak{U}$. To see this, use the
density of $\mathcal{U}_{0}\subseteq \mathcal{U}$ and $\mathfrak{C}_{%
\mathcal{U}_{0}}$, as well as the compactness of $E$ together with the
existence of partitions of unity subordinated to any open cover of the
metrizable (weak$^{\ast }$-compact) space $E$ (paracompactness and
metrizability of $E$).\bigskip

\noindent \underline{Positivity of elements of $\mathfrak{U}$:} The
positivity of elements of $\mathfrak{U}$ is equivalent to their point-wise
positivity. This is a direct consequence of the following lemma:

\begin{lemma}[Spectrum]
\label{Spectrum}\mbox{ }\newline
For any $f\in \mathfrak{U}$, its spectrum equals 
\begin{equation*}
\mathrm{spec}\left( f\right) =\bigcup\limits_{\rho \in E}\mathrm{spec}\left(
f\left( \rho \right) \right) \ .
\end{equation*}
\end{lemma}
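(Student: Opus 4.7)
The plan is to unwind the definitions and reduce the statement to the standard fact that inversion is continuous on the group of invertible elements of a unital Banach algebra. Recall that the unit of $\mathfrak{U}=C(E;\mathcal{U})$ is the constant function $\mathbf{1}_{\mathfrak{U}}:\rho\mapsto\mathfrak{1}$, so for any $\lambda\in\mathbb{C}$, the element $(\lambda\mathbf{1}_{\mathfrak{U}}-f)\in\mathfrak{U}$ is the function $\rho\mapsto \lambda\mathfrak{1}-f(\rho)$. By definition, $\lambda\notin\mathrm{spec}(f)$ iff there exists $g\in\mathfrak{U}$ such that $g(\rho)(\lambda\mathfrak{1}-f(\rho))=(\lambda\mathfrak{1}-f(\rho))g(\rho)=\mathfrak{1}$ for all $\rho\in E$.

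For the inclusion $\bigcup_{\rho\in E}\mathrm{spec}(f(\rho))\subseteq\mathrm{spec}(f)$, I would argue by contraposition: if $\lambda\notin\mathrm{spec}(f)$ with inverse $g\in\mathfrak{U}$, then evaluating the inversion identity at any fixed $\rho_{0}\in E$ shows that $g(\rho_{0})\in\mathcal{U}$ is a two-sided inverse of $\lambda\mathfrak{1}-f(\rho_{0})$, hence $\lambda\notin\mathrm{spec}(f(\rho_{0}))$. This direction is purely algebraic.

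For the reverse inclusion $\mathrm{spec}(f)\subseteq\bigcup_{\rho\in E}\mathrm{spec}(f(\rho))$, I would again argue by contraposition: assume $\lambda\mathfrak{1}-f(\rho)$ is invertible in $\mathcal{U}$ for every $\rho\in E$, and construct the candidate inverse $g:E\to\mathcal{U}$ by setting $g(\rho)\doteq(\lambda\mathfrak{1}-f(\rho))^{-1}$. The only thing to verify is that $g\in\mathfrak{U}$, that is, weak$^{\ast}$-to-norm continuity of $g$. This is the main (and really only) technical step, but it is standard: the set $\mathrm{GL}(\mathcal{U})$ of invertible elements is open in the Banach algebra $\mathcal{U}$ and the map $A\mapsto A^{-1}$ is norm-continuous on $\mathrm{GL}(\mathcal{U})$ (by the Neumann-series estimate $\|A^{-1}-B^{-1}\|_{\mathcal{U}}\leq\|A^{-1}\|_{\mathcal{U}}\|B^{-1}\|_{\mathcal{U}}\|A-B\|_{\mathcal{U}}$). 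Since $f\in C(E;\mathcal{U})$ by definition of $\mathfrak{U}$, composition yields continuity of $g$, so $g\in\mathfrak{U}$ is a two-sided inverse of $\lambda\mathbf{1}_{\mathfrak{U}}-f$ and $\lambda\notin\mathrm{spec}(f)$.

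The hard part is purely the continuity verification, which as indicated is handled by the standard Banach-algebra inversion estimate; no use of the CAR structure of $\mathcal{U}$, of the Poisson structure, or of the peculiarities of $E$ beyond its compactness is needed. In fact the same proof shows more generally that $\mathrm{spec}(f)=\bigcup_{\rho\in E}\mathrm{spec}(f(\rho))$ for any $f\in C(X;\mathcal{A})$ with $X$ compact Hausdorff and $\mathcal{A}$ a unital Banach algebra.
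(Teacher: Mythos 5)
Your proof is correct and follows essentially the same strategy as the paper's: the easy inclusion by pointwise evaluation of the inverse, and the harder inclusion by defining $g(\rho)\doteq(\lambda\mathfrak{1}-f(\rho))^{-1}$ and verifying its weak$^\ast$-to-norm continuity via the Neumann-series/inversion estimate in a Banach algebra. The paper phrases the continuity step directly through the Neumann series while you invoke the equivalent inversion-is-continuous estimate, and your closing remark that the argument works for $C(X;\mathcal{A})$ with $X$ compact Hausdorff and $\mathcal{A}$ unital Banach is a fair observation about what the proof actually uses.
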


\begin{proof}
Fix $f\in \mathfrak{U}$ and take any $z\in \mathbb{C}$ in the resolvent set
of $f$. Then, clearly, $z$ also belongs to the resolvent set of $f(\rho )\in 
\mathcal{U}$ for all $\rho \in E$. It follows that 
\begin{equation}
\bigcup\limits_{\rho \in E}\mathrm{spec}\left( f\left( \rho \right) \right)
\subseteq \mathrm{spec}\left( f\right) \ .  \label{sssss}
\end{equation}%
Take now $z\in \mathbb{C}$ in the resolvent set of $f(\rho )\in \mathcal{U}$
for all $\rho \in E$. Using the Neumann series 
\begin{equation*}
\left( A-B-z\mathbf{1}_{\mathcal{U}}\right) ^{-1}=\left( z\mathbf{1}_{%
\mathcal{U}}-A\right) ^{-1}\sum_{n=1}^{\infty }\left( B\left( z\mathbf{1}_{%
\mathcal{U}}-A\right) ^{-1}\right) ^{n}
\end{equation*}%
for all $A,B\in \mathcal{U}$ with $\Vert B\left( z\mathbf{1}_{\mathcal{U}%
}-A\right) ^{-1}\Vert _{\mathcal{U}}<1$, one sees that the mapping 
\begin{equation*}
\rho \mapsto \left( z\mathbf{1}_{\mathcal{U}}-f\left( \rho \right) \right)
^{-1}
\end{equation*}%
from $E$ to $\mathcal{U}$ is weak$^{\ast }$-continuous. In particular, this
mapping is an element of $\mathfrak{U}$ which, by construction, is the
resolvent of $f$ at $z\in \mathbb{C}$. It follows that 
\begin{equation}
\mathrm{spec}\left( f\right) \subseteq \bigcup\limits_{\rho \in E}\mathrm{%
spec}\left( f\left( \rho \right) \right) \ .  \label{sssssssssssssss}
\end{equation}%
By (\ref{sssss})-(\ref{sssssssssssssss}), the assertion follows.
\end{proof}

\begin{corollary}[Positivity]
\label{Spectrum copy(1)}\mbox{ }\newline
Any element $f\in \mathfrak{U}^{\mathbb{R}}$ is positive iff $f(\rho )\in 
\mathcal{U}^{\mathbb{R}}$ is positive for all $\rho \in E$.
\end{corollary}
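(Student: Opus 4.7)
The plan is to deduce Corollary \ref{Spectrum copy(1)} directly from Lemma \ref{Spectrum} together with the standard spectral characterization of positivity in a $C^{\ast}$-algebra, namely that a self-adjoint element $a$ of a unital $C^{\ast}$-algebra $\mathcal{A}$ is positive if and only if $\mathrm{spec}(a)\subseteq [0,\infty)$ (see, e.g., \cite[Proposition 2.2.10]{BrattelliRobinsonI}). Applying this criterion to both $C^{\ast}$-algebras $\mathfrak{U}$ and $\mathcal{U}$ reduces the statement to a set-theoretic comparison of spectra, which is exactly what Lemma \ref{Spectrum} provides.

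First, I would note that for $f\in \mathfrak{U}^{\mathbb{R}}$ one has $f(\rho)\in \mathcal{U}^{\mathbb{R}}$ for every $\rho\in E$ by the very definition of $\mathfrak{U}^{\mathbb{R}}$, so the self-adjointness condition on both sides is automatic. Thus, $f\geq 0$ in $\mathfrak{U}$ is equivalent to $\mathrm{spec}(f)\subseteq [0,\infty)$, while $f(\rho)\geq 0$ in $\mathcal{U}$ for all $\rho \in E$ is equivalent to $\mathrm{spec}(f(\rho))\subseteq [0,\infty)$ for all $\rho \in E$. Next, invoking Lemma \ref{Spectrum}, I would rewrite
\[
\mathrm{spec}(f)=\bigcup_{\rho\in E}\mathrm{spec}(f(\rho)),
\]
so that $\mathrm{spec}(f)\subseteq [0,\infty)$ holds if and only if each $\mathrm{spec}(f(\rho))\subseteq [0,\infty)$. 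Combining these equivalences yields the corollary.

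There is no serious obstacle here: the only subtle point is ensuring that both spectral characterizations of positivity are applied to genuine self-adjoint elements in the correct algebras, but this is immediate from the inclusion $f(E)\subseteq \mathcal{U}^{\mathbb{R}}$ for $f\in \mathfrak{U}^{\mathbb{R}}$. Accordingly, the proof should be just a few lines, essentially a direct reading of Lemma \ref{Spectrum} in light of the positivity-via-spectrum criterion.
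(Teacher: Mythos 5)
Your proof is correct and matches the paper's intent: the paper states the corollary as a direct consequence of Lemma \ref{Spectrum} without further comment, and your argument -- positivity of a self-adjoint element is equivalent to non-negativity of its spectrum, combined with the spectral identity of Lemma \ref{Spectrum} -- is exactly the intended deduction.
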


\noindent \underline{Functional calculus for elements of $\mathfrak{U}$:} It
turns out that the continuous functional calculus in $\mathfrak{U}$
coincides with the point-wise continuous functional calculus in $\mathcal{U}$%
:

\begin{lemma}[Continuous functional calculus]
\label{Spectrum copy(2)}\mbox{ }\newline
For any self-adjoint $f\in \mathfrak{U}^{\mathbb{R}}$ with spectrum $\mathrm{%
spec}(f)$ and any continuous function $\varphi \in C(\mathrm{spec}(f);%
\mathbb{C})$, $\varphi (f)=\varphi \circ f$, where%
\begin{equation*}
\varphi \circ f\left( \rho \right) \doteq \varphi \left( f\left( \rho
\right) \right) \ ,\qquad \rho \in E\ .
\end{equation*}%
Note that $\varphi \left( f\left( \rho \right) \right) $ is well-defined
because $\mathrm{spec}\left( f\left( \rho \right) \right) \subseteq \mathrm{%
spec}(f)$ for all $\rho \in E$, by Lemma \ref{Spectrum}.
\end{lemma}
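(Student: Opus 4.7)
The plan is to reduce the statement to polynomials and then use a density (Stone--Weierstrass) argument, combined with continuity of the functional calculus in $\mathfrak{U}$ and the spectral inclusion from Lemma \ref{Spectrum}.

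First, observe that for any polynomial $p(z) = \sum_{k=0}^{n} c_k z^k$ with complex coefficients, the identity $p(f) = p \circ f$ is immediate from the very definition of $\mathfrak{U}$, since the $\ast$-algebra operations on $\mathfrak{U}$ (sum, scalar product, product) are defined point-wise in $\mathcal{U}$. In particular, the mapping $\rho \mapsto p(f(\rho))$ is automatically weak$^{\ast}$-continuous as a composition and combination of weak$^{\ast}$-continuous mappings, so it belongs to $\mathfrak{U}$, and one checks inductively that this coincides with $p(f)$ computed in $\mathfrak{U}$.

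Next, I would pass from polynomials to arbitrary $\varphi \in C(\mathrm{spec}(f);\mathbb{C})$. Since $f \in \mathfrak{U}^{\mathbb{R}}$, its spectrum $\mathrm{spec}(f) \subseteq \mathbb{R}$ is compact, so by the Stone--Weierstrass theorem there exists a sequence $(p_n)_{n \in \mathbb{N}}$ of polynomials such that $p_n \to \varphi$ uniformly on $\mathrm{spec}(f)$. By the continuity of the standard continuous functional calculus in the $C^{\ast}$-algebra $\mathfrak{U}$, we have $p_n(f) \to \varphi(f)$ in $\mathfrak{U}$, i.e., uniformly on $E$. On the other hand, Lemma \ref{Spectrum} gives $\mathrm{spec}(f(\rho)) \subseteq \mathrm{spec}(f)$ for every $\rho \in E$, so $p_n \to \varphi$ uniformly on each $\mathrm{spec}(f(\rho))$ as well. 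Applying the continuous functional calculus inside $\mathcal{U}$ point-wise yields
\begin{equation*}
\left\| p_n(f(\rho)) - \varphi(f(\rho)) \right\|_{\mathcal{U}} \leq \sup_{\lambda \in \mathrm{spec}(f)} |p_n(\lambda) - \varphi(\lambda)| \xrightarrow[n \to \infty]{} 0
\end{equation*}
uniformly in $\rho \in E$. In particular $\rho \mapsto \varphi(f(\rho))$ is the uniform limit on $E$ of the weak$^{\ast}$-continuous functions $\rho \mapsto p_n(f(\rho))$, hence is itself weak$^{\ast}$-continuous, so $\varphi \circ f \in \mathfrak{U}$.

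Combining these two convergences, we obtain $p_n(f) = p_n \circ f \to \varphi \circ f$ in $\mathfrak{U}$ from the point-wise polynomial identity and uniform convergence, while simultaneously $p_n(f) \to \varphi(f)$ in $\mathfrak{U}$ from the abstract functional calculus. By uniqueness of limits, $\varphi(f) = \varphi \circ f$, as desired. I do not expect any serious obstacle; the only point that requires a small argument is the fact that $\varphi \circ f$ is weak$^{\ast}$-continuous in $\rho$, which is why the uniform (rather than merely point-wise) nature of the Stone--Weierstrass approximation, guaranteed here by compactness of $\mathrm{spec}(f)$, is essential.
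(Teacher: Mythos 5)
Your argument is correct and follows essentially the same route as the paper: polynomial case first, then Stone--Weierstrass approximation, then the norm bound $\|p_n(f(\rho)) - \varphi(f(\rho))\|_{\mathcal U}\le\sup_{\mathrm{spec}(f)}|p_n-\varphi|$ coming from Lemma \ref{Spectrum}. The only (minor, purely stylistic) difference is in the conclusion: you compare the two limits of the common sequence $p_n(f)=p_n\circ f$ directly, whereas the paper verifies that $\varphi\mapsto\varphi\circ f$ is a unital $\ast$-homomorphism sending $\mathrm{id}\mapsto f$ and then invokes uniqueness of the continuous functional calculus -- both steps rest on the same density argument and yield the same conclusion.
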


\begin{proof}
Note first that, for all self-adjoint $f\in \mathfrak{U}^{\mathbb{R}}$ and
any polynomial function $\varphi \in C(\mathrm{spec}(f);\mathbb{C})$, $%
\varphi \left( f\right) =\varphi \circ f\in \mathfrak{U}$. Observe next
that, for any $\varphi _{1},\varphi _{2}\in C(\mathrm{spec}(f);\mathbb{C})$
and $\rho \in E$, 
\begin{equation}
\left\Vert \varphi _{1}\circ f\left( \rho \right) -\varphi _{2}\circ f\left(
\rho \right) \right\Vert _{\mathcal{U}}\leq \sup_{\rho \in \mathrm{spec}%
\left( f\left( \rho \right) \right) }\left\vert \varphi _{1}\left( \rho
\right) -\varphi _{2}\left( \rho \right) \right\vert \leq \left\Vert \varphi
_{1}-\varphi _{2}\right\Vert _{\infty }\ ,  \label{using be}
\end{equation}%
by Lemma \ref{Spectrum}. If $\varphi \in C(\mathrm{spec}(f);\mathbb{C})$ is
a general continuous function then take any sequence of polynomial functions 
$\varphi _{n}\in C(\mathrm{spec}(f);\mathbb{C})$ converging uniformly to $%
\varphi $. Such a sequence always exists, by the Stone-Weierstrass theorem
and the compactness of $\mathrm{spec}(f)$. Therefore, we infer from (\ref%
{using be}) that $\varphi \circ f$ is the uniform limit of the sequence of
weak$^{\ast }$-continuous function $(\varphi _{n}\circ f)_{n\in \mathbb{N}%
}\subseteq \mathfrak{U}$. In particular, $\varphi \circ f$ is weak$^{\ast }$%
-continuous, i.e., $\varphi \circ f\in \mathfrak{U}$. It is easy to check
that the mapping $\varphi \mapsto \varphi \circ f$ is a $\ast $-homomorphism
from $C(\mathrm{spec}(f);\mathbb{C})$ to $\mathfrak{U}$ with $1\circ f=%
\mathfrak{1}$ being the unit of $\mathfrak{U}$ and \textrm{id}$_{\mathrm{spec%
}(f)}\circ f=f$. By the uniqueness of the continuous functional calculus, $%
\varphi \circ f=\varphi (f)$ for all $f\in \mathfrak{U}^{\mathbb{R}}$ and $%
\varphi \in C(\mathrm{spec}(f);\mathbb{C})$.
\end{proof}

\subsection{Important $\ast $-Automorphisms of the Quantum $C^{\ast }$%
-Algebra}

\noindent \underline{Parity:} The $\ast $-automorphism $\sigma $ of $%
\mathcal{U}$ uniquely defined by the condition (\ref{automorphism gauge
invariance}) naturally induces a $\ast $-automorphism $\Xi $ of $\mathfrak{U}
$ defined by 
\begin{equation}
\left[ \Xi \left( f\right) \right] \left( \rho \right) \doteq \sigma \left(
f\left( \rho \right) \right) \ ,\qquad \rho \in E,\ f\in \mathfrak{U}\ .
\label{gauge invariant}
\end{equation}

\noindent Elements $f_{1},f_{2}\in \mathcal{U}$ satisfying $\Xi
(f_{1})=f_{1} $ and $\Xi (f_{2})=-f_{2}$ are respectively called \emph{even}
and \emph{odd}. The set 
\begin{equation}
\mathfrak{U}^{+}\equiv \mathfrak{U}_{\mathfrak{L}}^{+}\doteq \{f\in 
\mathfrak{U}:f=\Xi (f)\}=\left\{ f\in \mathfrak{U}:f\left( E\right)
\subseteq \mathcal{U}^{+}\right\} \subseteq \mathfrak{U}
\label{even state dependent}
\end{equation}%
of all even weak$^{\ast }$-continuous $\mathcal{U}$-valued functions on
states is a $C^{\ast }$-subalgebra. Compare with (\ref{definition of even
operators}). \bigskip

\noindent \underline{Translations:} Let $\mathfrak{L}=\mathbb{Z}^{d}$. The $%
\ast $-automorphisms $\alpha _{x}$, $x\in \mathbb{Z}^{d}$, of $\mathcal{U}$
uniquely defined by the condition (\ref{transl}) naturally induce a group
homomorphism $x\mapsto \mathrm{A}_{x}$ from $\mathbb{Z}^{d}$ to the group of 
$\ast $-automorphisms of $\mathfrak{U}$, defined by 
\begin{equation}
\left[ \mathrm{A}_{x}\left( f\right) \right] \left( \rho \right) \doteq
\alpha _{x}\left( f\left( \rho \right) \right) \ ,\qquad \rho \in E,\ f\in 
\mathfrak{U},\ x\in \mathbb{Z}^{d}\ .  \label{translatbis}
\end{equation}%
These $\ast $-automorphisms represent the translation group in $\mathfrak{U}$%
. \bigskip

\noindent \underline{Permutations:} In the same way, we can define a group
homomorphism $\pi \mapsto \mathfrak{P}_{\pi }$ from the set $\Pi $ of all
bijective mappings from $\mathfrak{L}$ into itself, which leave all but
finitely many elements invariant, to the group of $\ast $-automorphisms of $%
\mathfrak{U}$ by using (\ref{definition perm automorphism}) and the
definition 
\begin{equation*}
\left[ \mathfrak{P}_{\pi }\left( f\right) \right] \left( \rho \right) \doteq 
\mathfrak{p}_{\pi }\left( f\left( \rho \right) \right) \ ,\qquad \rho \in
E,\ f\in \mathfrak{U},\ \pi \in \Pi \ .
\end{equation*}%
The $\ast $-automorphisms representing the permutation group in $\mathfrak{U}
$ are not used here and are only given for completeness.

\section{Limit Long-Range Dynamics -- Classical Part\label{sect
Lieb--Robinson copy(3)}}

\subsection{Banach Spaces of State-Dependent Short-Range Interactions\label%
{State-Dependent Short-Range Interactions}}

Similar to what is done in Section \ref{Section Banach space interaction}, a
state-dependent (complex) interaction is defined to be a mapping $\mathbf{%
\Phi }:\mathcal{P}_{f}\rightarrow \mathfrak{U}^{+}$ such that $\Phi
_{\Lambda }\in \mathfrak{U}_{\Lambda }$ for any $\Lambda \in \mathcal{P}_{f}$%
. See Equations (\ref{local elements}) and (\ref{even state dependent}).
Similar to (\ref{vector space interaction}) and (\ref{involution}), the set
of all state-dependent interactions is naturally endowed with the structure
of a complex vector space on which the natural involution 
\begin{equation}
\mathbf{\Phi }\mapsto \mathbf{\Phi }^{\ast }\doteq (\mathbf{\Phi }_{\Lambda
}^{\ast })_{\Lambda \in \mathcal{P}_{f}}  \label{involutioninvolution}
\end{equation}%
is defined. Self-adjoint state-dependent interactions $\mathbf{\Phi }$ are,
by definition, those satisfying $\mathbf{\Phi }=\mathbf{\Phi }^{\ast }$. A
state-dependent interaction $\mathbf{\Phi }$ can be identified with a
mapping $\rho \mapsto \mathbf{\Phi }\left( \rho \right) $ from $E$ to the
vector space of usual interactions (of Section \ref{Section Banach space
interaction}) via the definition 
\begin{equation*}
\mathbf{\Phi }\left( \rho \right) _{\Lambda }\doteq \mathbf{\Phi }_{\Lambda
}\left( \rho \right) \ ,\qquad \Lambda \in \mathcal{P}_{f}\ .
\end{equation*}

In this paper, we only consider a particular space of state-dependent
short-range interactions: Using the Banach space $\mathcal{W}$ of (usual)
short-range interactions, we define%
\begin{equation*}
\mathfrak{W}\doteq \left( C\left( E;\mathcal{W}\right) ,+,\cdot _{{\mathbb{C}%
}},^{\ast },\left\Vert \cdot \right\Vert _{\mathfrak{W}}\right)
\end{equation*}%
to be the Banach space of weak$^{\ast }$-continuous, state-dependent
short-range interactions, along with the supremum norm%
\begin{equation}
\left\Vert \mathbf{\Phi }\right\Vert _{\mathfrak{W}}\doteq \max_{\rho \in
E}\left\Vert \mathbf{\Phi }\left( \rho \right) \right\Vert _{\mathcal{W}}\
,\qquad \mathbf{\Phi }\in \mathfrak{W}\ ,  \label{iteration0bis}
\end{equation}%
where $\Vert \cdot \Vert _{\mathcal{W}}$ is defined by (\ref{iteration0}).
Note that (\ref{involutioninvolution}) is an isometric antilinear involution
on $\mathfrak{W}$. Recall that $\mathcal{W}^{\mathbb{R}}\varsubsetneq 
\mathcal{W}$ is the (real) Banach subspace of all self-adjoint interactions
and the (real) Banach subspace of all self-adjoint state-dependent
interactions is similarly denoted by 
\begin{equation*}
\mathfrak{W}^{\mathbb{R}}\equiv \left( C\left( E;\mathcal{W}^{\mathbb{R}%
}\right) ,+,\cdot _{{\mathbb{R}}},\left\Vert \cdot \right\Vert _{\mathfrak{W}%
}\right) \varsubsetneq \mathfrak{W}\ .
\end{equation*}%
To simplify notation, for any $\mathbf{\Psi }\in C(\mathbb{R};\mathfrak{W})$
and $\rho \in E$, $\mathbf{\Psi }\left( \rho \right) \in C(\mathbb{R};%
\mathcal{W})$ stands for the time-dependent interaction defined by 
\begin{equation}
\mathbf{\Psi }\left( \rho \right) \left( t\right) \doteq \mathbf{\Psi }%
\left( t;\rho \right) \ ,\qquad \rho \in E,\ t\in \mathbb{R}\ .
\label{notation state interactionbis}
\end{equation}

When $\mathfrak{L}=\mathbb{Z}^{d}$, $\mathbf{\Phi }\in \mathfrak{W}$ is, by
definition, translation-invariant\ if, for all $x\in \mathbb{Z}^{d}$ and $%
\Lambda \in \mathcal{P}_{f}$, $\mathbf{\Phi }_{\Lambda +x}=\mathrm{A}_{x}(%
\mathbf{\Phi }_{\Lambda })$, see (\ref{translation box}). Recall that $\{%
\mathrm{A}_{x}\}_{x\in \mathbb{Z}^{d}}$ is the family of (translation) $\ast 
$-automorphisms on $\mathfrak{U}$ defined by (\ref{translatbis}). Similar to
the (separable) Banach subspace $\mathcal{W}_{1}\varsubsetneq \mathcal{W}$
of translation-invariant, short-range interactions on $\mathfrak{L}=\mathbb{Z%
}^{d}$, we denote by $\mathfrak{W}_{1}\varsubsetneq \mathfrak{W}$ the Banach
subspace of translation-invariant, state-dependent, short-range interactions
on $\mathfrak{L}=\mathbb{Z}^{d}$.

\subsection{Derivations on the Quantum $C^{\ast }$-Algebra of Functions and
Dynamics}

For any state-dependent (short-range) interaction $\mathbf{\Phi }\in 
\mathfrak{W}$, we can naturally define limit derivations in the quantum $%
C^{\ast }$-algebra $\mathfrak{U}$:

\begin{definition}[Derivations for state-dependant interactions]
\label{symmetric derivation}\mbox{ }\newline
The symmetric derivations $\mathbf{\delta }^{\mathbf{\Phi }}$ associated
with any $\mathbf{\Phi }\in \mathfrak{W}$ is defined on the dense subset $%
\mathfrak{U}_{0}=\mathrm{span}\left\{ \mathfrak{C}\mathcal{U}_{0}\right\} $
(see (\ref{spain})) by%
\begin{equation*}
\left[ \mathbf{\delta }^{\mathbf{\Phi }}(fA)\right] \left( \rho \right)
\doteq f\left( \rho \right) \delta ^{\mathbf{\Phi }\left( \rho \right) }(A)\
,\qquad \rho \in E,\ f\in \mathfrak{C},\ A\in \mathcal{U}_{0}\ .
\end{equation*}
\end{definition}

\noindent The right-hand side of the last equation defines an element of $%
\mathfrak{U}$, by Corollary \ref{Lemma cigare1}. Similar to Remark \ref%
{Remark Closure of limit derivations}, observe also that, if $\Phi \in 
\mathfrak{W}^{\mathbb{R}}$, then the symmetric derivation $\mathbf{\delta }^{%
\mathbf{\Phi }}$ is (norm-) closable: By Corollary \ref{Spectrum copy(1)}
and Lemma \ref{Spectrum copy(2)}, for all $f\in \mathfrak{U}_{0}$, $f\geq 0$
implies $f^{1/2}\in \mathfrak{U}_{0}$. It follows from \cite[Theorem 1.4.9]%
{Bratelli-derivation} that $\mathbf{\delta }^{\mathbf{\Phi }}$ is
dissipative \cite[Definition 1.4.6]{Bratelli-derivation}, and, by \cite[%
Proposition 1.4.7]{Bratelli-derivation}, it thus norm-closable and its
closure is also dissipative.

Any $\mathbf{\Psi }\in C(\mathbb{R};\mathfrak{W}^{\mathbb{R}})$ determines a
two-parameter family $\mathfrak{T}^{\mathbf{\Psi }}\equiv (\mathfrak{T}%
_{t,s}^{\mathbf{\Psi }})_{_{s,t\in \mathbb{R}}}$ of $\ast $-auto%
\-%
morphisms of the quantum $C^{\ast }$-algebra $\mathfrak{U}$ defined by (\ref%
{metaciagre set}):%
\begin{equation}
\left[ \mathfrak{T}_{t,s}^{\mathbf{\Psi }}\left( f\right) \right] \left(
\rho \right) \doteq \tau _{t,s}^{\mathbf{\Psi }\left( \rho \right) }\left(
f\left( \rho \right) \right) \ ,\qquad \rho \in E,\ f\in \mathfrak{U},\
s,t\in \mathbb{R}\ .  \label{definiotion tho frac}
\end{equation}%
For all $f\in \mathfrak{U}_{0}$, the right -hand side of (\ref{definiotion
tho frac}) defines an element of $\mathfrak{U}$, by Proposition \ref{Theorem
Lieb-Robinson copy(3)} (iii) and Lebesgue's dominated convergence theorem.
By density of $\mathfrak{U}_{0}\subseteq \mathfrak{U}$ and the fact that $%
\tau _{t,s}^{\mathbf{\Psi }\left( \rho \right) }$ is a contraction, it
follows that the right-hand side of (\ref{definiotion tho frac}) defines an
element of $\mathfrak{U}$ for all $f\in \mathfrak{U\varsupsetneq U}_{0}$.
Similar to Proposition \ref{Theorem Lieb-Robinson}, this family satisfies a
non-autonomous evolution equation with infinitesimal generator $\mathbf{%
\delta }^{\mathbf{\Psi }(t)}$ for $t\in {\mathbb{R}}$:

\begin{proposition}[Infinite-volume state-dependent short-range dynamics]
\label{Lemma cigare1 copy(5)}\mbox{
}\newline
For any $\mathbf{\Psi }\in C(\mathbb{R};\mathfrak{W}^{\mathbb{R}})$, $%
\mathfrak{T}^{\mathbf{\Psi }}\equiv (\mathfrak{T}_{t,s}^{\mathbf{\Psi }%
})_{_{s,t\in \mathbb{R}}}$ is a strongly continuous two-para%
\-%
meter family of $\ast $-auto%
\-%
morphisms of $\mathfrak{U}$, which is the unique solution in $\mathcal{B}(%
\mathfrak{U})$ to the non-auto%
\-%
nomous evolution equation%
\begin{equation}
\forall s,t\in {\mathbb{R}}:\qquad \partial _{t}\mathfrak{T}_{t,s}^{\mathbf{%
\Psi }}=\mathfrak{T}_{t,s}^{\mathbf{\Psi }}\circ \mathbf{\delta }^{\mathbf{%
\Psi }\left( t\right) }\ ,\qquad \mathfrak{T}_{s,s}^{\mathbf{\Psi }}=\mathbf{%
1}_{\mathfrak{U}}\ ,  \label{klsdf}
\end{equation}%
in the strong sense on the dense subspace $\mathfrak{U}_{0}\subseteq 
\mathfrak{U}$, $\mathbf{1}_{\mathfrak{U}}$ being the identity mapping of $%
\mathfrak{U}$. In particular, it satisfies the reverse cocycle property:%
\begin{equation}
\forall s,r,t\in \mathbb{R}:\qquad \mathfrak{T}_{t,s}^{\mathbf{\Psi }}=%
\mathfrak{T}_{r,s}^{\mathbf{\Psi }}\mathfrak{T}_{t,r}^{\mathbf{\Psi }}\ .
\label{reverse cocycle0}
\end{equation}
\end{proposition}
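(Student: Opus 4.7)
The plan is to lift all properties stated in Proposition~\ref{Theorem Lieb-Robinson} and the estimates of Proposition~\ref{Theorem Lieb-Robinson copy(3)} pointwise in $\rho\in E$, and then to use the fact that these pointwise statements are in fact uniform in $\rho$ because $\Vert\mathbf{\Psi}(t;\rho)\Vert_{\mathcal{W}}\leq \Vert\mathbf{\Psi}(t)\Vert_{\mathfrak{W}}$ for every $\rho\in E$ and $t\in\mathbb{R}$. First, I would check that $\mathfrak{T}_{t,s}^{\mathbf{\Psi}}$ is a well-defined bounded operator on $\mathfrak{U}$: the paragraph preceding the proposition already argues the weak$^{\ast }$-continuity in $\rho$ of the right-hand side of (\ref{definiotion tho frac}) by using Proposition~\ref{Theorem Lieb-Robinson copy(3)}~(iii) with $\Psi=\mathbf{\Psi}(\rho)$ and $\tilde{\Psi}=\mathbf{\Psi}(\tilde{\rho})$ along with Lebesgue's dominated convergence, and one only has to observe that, since $\tau_{t,s}^{\mathbf{\Psi}(\rho)}$ is a $\ast$-automorphism of $\mathcal{U}$, the norm in $\mathfrak{U}$ is preserved. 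The same pointwise reasoning immediately yields that $\mathfrak{T}_{t,s}^{\mathbf{\Psi}}$ is a $\ast$-homomorphism, with inverse $\mathfrak{T}_{s,t}^{\mathbf{\Psi}}$ obtained from the pointwise inverse of $\tau_{t,s}^{\mathbf{\Psi}(\rho)}$, so it is a $\ast$-automorphism, and the reverse cocycle property (\ref{reverse cocycle0}) follows at once pointwise from (\ref{reverse cocycle}).

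Strong continuity of $(s,t)\mapsto\mathfrak{T}_{t,s}^{\mathbf{\Psi}}$ on $\mathfrak{U}_{0}$ is the key quantitative input: for $f\in\mathfrak{U}_{\Lambda}$ with $\Lambda\in\mathcal{P}_{f}$, Proposition~\ref{Theorem Lieb-Robinson copy(3)}~(iv) applied to $A=f(\rho)\in\mathcal{U}_{\Lambda}$ with $\Vert f(\rho)\Vert_{\mathcal{U}}\leq \Vert f\Vert_{\mathfrak{U}}$ and with $\Psi=\mathbf{\Psi}(\rho)$ gives a bound that is uniform in $\rho\in E$, because all the $\mathcal{W}$-norms that appear are dominated by $\Vert\mathbf{\Psi}(\cdot)\Vert_{\mathfrak{W}}$ on the relevant time interval. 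Taking the supremum over $\rho\in E$ yields continuity of $(s,t)\mapsto\mathfrak{T}_{t,s}^{\mathbf{\Psi}}(f)$ in $\mathfrak{U}$. The extension to all $f\in\mathfrak{U}$ is then standard: by density of $\mathfrak{U}_{0}\subseteq\mathfrak{U}$ and the fact that each $\mathfrak{T}_{t,s}^{\mathbf{\Psi}}$ is a contraction (being a $\ast$-automorphism), an $\varepsilon/3$ argument gives strong continuity everywhere.

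For the generator identity (\ref{klsdf}), I would test it on elements of the form $f=gA$ with $g\in\mathfrak{C}$ and $A\in\mathcal{U}_{0}$, which span a dense subspace of $\mathfrak{U}_{0}$. By (\ref{definiotion tho frac}), $[\mathfrak{T}_{t,s}^{\mathbf{\Psi}}(gA)](\rho)=g(\rho)\tau_{t,s}^{\mathbf{\Psi}(\rho)}(A)$, and Proposition~\ref{Theorem Lieb-Robinson} (which differentiates in $t$ with generator $\delta^{\mathbf{\Psi}(t;\rho)}$ on $A\in\mathcal{U}_{0}$) combined with Definition~\ref{symmetric derivation} gives, for each $\rho$, the pointwise identity
\begin{equation*}
\partial_{t}\bigl([\mathfrak{T}_{t,s}^{\mathbf{\Psi}}(gA)](\rho)\bigr)=g(\rho)\,\tau_{t,s}^{\mathbf{\Psi}(\rho)}(\delta^{\mathbf{\Psi}(t;\rho)}(A))=\bigl[\mathfrak{T}_{t,s}^{\mathbf{\Psi}}\circ\mathbf{\delta}^{\mathbf{\Psi}(t)}(gA)\bigr](\rho).
\end{equation*}
To promote this to a derivative in the norm of $\mathfrak{U}$ one again uses Proposition~\ref{Theorem Lieb-Robinson copy(3)}~(iv) uniformly in $\rho$ to control the difference quotient, together with the just-established strong continuity of $t\mapsto\mathfrak{T}_{t,s}^{\mathbf{\Psi}}$ applied to $\mathbf{\delta}^{\mathbf{\Psi}(t)}(gA)\in\mathfrak{U}$.

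Uniqueness is obtained by a standard Duhamel/Gronwall argument: any two solutions $\mathfrak{T},\tilde{\mathfrak{T}}$ of (\ref{klsdf}) in $\mathcal{B}(\mathfrak{U})$, restricted to $\mathfrak{U}_{0}$, satisfy an integral equation whose iteration combined with the uniform bound $\Vert\mathbf{\delta}^{\mathbf{\Psi}(t)}(f)\Vert_{\mathfrak{U}}\leq 2|\Lambda|\Vert f\Vert_{\mathfrak{U}}\Vert\mathbf{\Psi}(t)\Vert_{\mathfrak{W}}\Vert\mathbf{F}\Vert_{1,\mathfrak{L}}$ (inherited, via Corollary~\ref{Lemma cigare1}, from the pointwise bound on $\delta^{\mathbf{\Psi}(t;\rho)}$) forces their difference to vanish on $\mathfrak{U}_{0}$; density then yields uniqueness on $\mathfrak{U}$. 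The main technical delicacy I expect is the step promoting the pointwise differentiability in $\rho$ to differentiability in the $\mathfrak{U}$-norm, which is exactly where one must carefully invoke the fact that the Lieb-Robinson-type bounds of Proposition~\ref{Theorem Lieb-Robinson copy(3)} depend only on $\Vert\mathbf{\Psi}(\cdot)\Vert_{\mathfrak{W}}$ and not on the detailed $\rho$-dependence of $\mathbf{\Psi}$.
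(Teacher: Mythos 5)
Your treatment of well-definedness, of the $\ast$-automorphism and reverse cocycle properties, and of strong continuity is essentially sound and parallels the paper, with one difference worth noting: you obtain continuity in $(s,t)$ directly from Proposition~\ref{Theorem Lieb-Robinson copy(3)}~(iv) by observing that all constants depend on $\mathbf{\Psi}$ only through $\Vert\mathbf{\Psi}(\cdot)\Vert_{\mathfrak{W}}$, whereas the paper extracts subsequences via weak$^{\ast}$-compactness of $E$ and invokes a structural lemma from \cite{Bru-pedra-MF-I}; your route is cleaner for that particular step. For the generator identity your guiding idea (all relevant bounds are uniform in $\rho$) is correct, but "one again uses Proposition~\ref{Theorem Lieb-Robinson copy(3)}~(iv) uniformly in $\rho$" hides a necessary truncation: $\delta^{\mathbf{\Psi}(t;\rho)}(A)$ is not a local element, so you must first replace it by $\delta_L^{\mathbf{\Psi}(t;\rho)}(A)\in\mathcal{U}_{\Lambda_L}$ with a $\rho$-uniform tail estimate from Proposition~\ref{Lemma cigare0}, and separately control $\tau^{\mathbf{\Psi}(\rho)}_{\alpha,t}\circ(\delta^{\mathbf{\Psi}(\alpha;\rho)}-\delta^{\mathbf{\Psi}(t;\rho)})(A)$ via $\Vert\mathbf{\Psi}(\alpha)-\mathbf{\Psi}(t)\Vert_{\mathfrak{W}}$. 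The paper sidesteps this bookkeeping by its contradiction/compactness argument.

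The genuine gap is in your uniqueness step. A Duhamel/Gronwall iteration of the form you describe does not close. The bound you quote, $\Vert\mathbf{\delta}^{\mathbf{\Psi}(t)}(f)\Vert_{\mathfrak{U}}\leq 2|\Lambda|\Vert f\Vert_{\mathfrak{U}}\Vert\mathbf{\Psi}(t)\Vert_{\mathfrak{W}}\Vert\mathbf{F}\Vert_{1,\mathfrak{L}}$, holds only for $f\in\mathfrak{U}_{\Lambda}$ with a constant growing in $|\Lambda|$, and $\mathbf{\delta}^{\mathbf{\Psi}(t)}$ does \emph{not} map $\mathfrak{U}_{0}$ into $\mathfrak{U}_{0}$: the image of a local element under a limit derivation built from a genuine short-range (non-finite-range) interaction is no longer local. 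After a single Duhamel step the integrand $(\mathfrak{T}_{\alpha,s}-\tilde{\mathfrak{T}}_{\alpha,s})(\mathbf{\delta}^{\mathbf{\Psi}(\alpha)}(f))$ involves an element outside $\mathfrak{U}_0$, so a second application of $\mathbf{\delta}^{\mathbf{\Psi}(\beta)}$ is not even defined and a Gronwall estimate in $\mathcal{B}(\mathfrak{U})$-norm cannot be set up. Note that even the uniqueness in Proposition~\ref{Theorem Lieb-Robinson} is nontrivial and is imported from \cite{brupedraLR} rather than proven by Gronwall. The paper instead uses a structural reduction: since $\mathbf{\delta}^{\mathbf{\Psi}(t)}$ annihilates $\mathfrak{C}$, any solution of~(\ref{klsdf}) on $\mathfrak{U}_0$ fixes $\mathfrak{C}$ pointwise, hence by \cite[Lemma~5.2]{Bru-pedra-MF-I} is of the form $[\tilde{\mathfrak{T}}_{t,s}(A)](\rho)=\tilde{\tau}_{t,s}^{\rho}(A)$ for a family of maps on $\mathcal{U}$; for each fixed $\rho$ the family $(\tilde{\tau}_{t,s}^{\rho})$ solves the same pointwise Cauchy problem~(\ref{cauchy trivial1}) as $(\tau_{t,s}^{\mathbf{\Psi}(\rho)})$, so the uniqueness in Proposition~\ref{Theorem Lieb-Robinson} forces agreement. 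You need this reduction, or an equivalent device, to conclude.
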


\begin{proof}
Fix $\mathbf{\Psi }\in C(\mathbb{R};\mathfrak{W}^{\mathbb{R}})$. The fact
that $(\mathfrak{T}_{t,s}^{\mathbf{\Psi }})_{_{s,t\in \mathbb{R}}}$ is a
family of $\ast $-auto%
\-%
morphisms of $\mathfrak{U}$ satisfying (\ref{reverse cocycle0}) is a direct
consequence of (\ref{definiotion tho frac}), since $(\tau _{t,s}^{\mathbf{%
\Psi }\left( \rho \right) })_{_{s,t\in \mathbb{R}}}$ is a family of $\ast $%
-auto%
\-%
morphisms of $\mathcal{U}$ satisfying (\ref{reverse cocycle}) for any $\rho
\in E$.

We now prove that the family $(\mathfrak{T}_{t,s}^{\mathbf{\Psi }%
})_{_{s,t\in \mathbb{R}}}$ is strongly continuous: By (\ref{definiotion tho
frac}) and \cite[Lemma 5.1 (iii)]{Bru-pedra-MF-I}, it suffice to prove that $%
(\tau _{t,s}^{\mathbf{\Psi }\left( \rho \right) })_{\left( \rho ,s,t\right)
\in E\times \mathbb{R}^{2}}$ is a strongly continuous family. To this end,
take three sequences $(s_{n})_{n\in \mathbb{N}},(t_{n})_{n\in \mathbb{N}%
}\subseteq \mathbb{R}$ and $(\rho _{n})_{n\in \mathbb{N}}\subseteq E$
converging respectively to $s,t\in \mathbb{R}$ and $\rho \in E$. For any $%
A\in \mathcal{U}$,%
\begin{equation}
\left\Vert \tau _{t_{n},s_{n}}^{\mathbf{\Psi }\left( \rho _{n}\right)
}\left( A\right) -\tau _{t,s}^{\mathbf{\Psi }\left( \rho \right) }\left(
A\right) \right\Vert _{\mathcal{U}}\leq \left\Vert \tau _{t_{n},s_{n}}^{%
\mathbf{\Psi }\left( \rho _{n}\right) }\left( A\right) -\tau _{t_{n},s_{n}}^{%
\mathbf{\Psi }\left( \rho \right) }\left( A\right) \right\Vert _{\mathcal{U}%
}+\left\Vert \tau _{t_{n},s_{n}}^{\mathbf{\Psi }\left( \rho \right) }\left(
A\right) -\tau _{t,s}^{\mathbf{\Psi }\left( \rho \right) }\left( A\right)
\right\Vert _{\mathcal{U}}\ .  \label{cigare state1}
\end{equation}%
By Proposition \ref{Theorem Lieb-Robinson},%
\begin{equation}
\lim_{n\rightarrow \infty }\left\Vert \tau _{t_{n},s_{n}}^{\mathbf{\Psi }%
\left( \rho \right) }\left( A\right) -\tau _{t,s}^{\mathbf{\Psi }\left( \rho
\right) }\left( A\right) \right\Vert _{\mathcal{U}}=0\ ,\qquad A\in \mathcal{%
U}\ .  \label{cigare state2}
\end{equation}%
Because $\mathbf{\Psi }\in C\left( \mathbb{R};\mathfrak{W}^{\mathbb{R}%
}\right) $, observe from (\ref{iteration0bis}) that, for $T\in \mathbb{R}%
^{+} $, 
\begin{equation}
\sup_{\alpha \in \left[ -T,T\right] }\sup_{\rho \in E}\left\Vert \mathbf{%
\Psi }\left( \alpha ;\rho \right) \right\Vert _{\mathcal{W}}=\sup_{\alpha
\in \left[ -T,T\right] }\left\Vert \mathbf{\Psi }\left( \alpha \right)
\right\Vert _{\mathfrak{W}}<\infty \ .  \label{cigare state3}
\end{equation}%
Therefore, we infer from Proposition \ref{Theorem Lieb-Robinson copy(3)}
(iii) and Lebesgue's dominated convergence theorem that%
\begin{equation*}
\lim_{n\rightarrow \infty }\left\Vert \tau _{t_{n},s_{n}}^{\mathbf{\Psi }%
\left( \rho _{n}\right) }\left( A\right) -\tau _{t_{n},s_{n}}^{\mathbf{\Psi }%
\left( \rho \right) }\left( A\right) \right\Vert _{\mathcal{U}}=0\ ,\qquad
A\in \mathcal{U}_{0}\ .
\end{equation*}%
By Combining (\ref{cigare state1})-(\ref{cigare state2}) with (\ref%
{definiotion tho frac}), \cite[Lemma 5.1 (iii)]{Bru-pedra-MF-I}, the density
of $\mathcal{U}_{0}$ in $\mathcal{U}$ and the fact that $\tau _{t,s}^{%
\mathbf{\Psi }\left( \rho \right) }$ is a contraction for any $s,t\in 
\mathbb{R}$, we deduce that $(\mathfrak{T}_{t,s}^{\mathbf{\Psi }})_{_{s,t\in 
\mathbb{R}}}$ is a strongly continuous two-para%
\-%
meter family.

We next prove the non-autonomous evolution equation for $(\mathfrak{T}%
_{t,s}^{\mathbf{\Psi }})_{_{s,t\in \mathbb{R}}}$: By Equation (\ref%
{definiotion tho frac}), 
\begin{equation}
\forall s,t\in {\mathbb{R}},\ f\in \mathfrak{C}\subseteq \mathfrak{U}:\qquad
\partial _{t}\mathfrak{T}_{t,s}^{\mathbf{\Psi }}\left( f\right) =0\ ,\qquad 
\mathfrak{T}_{s,s}^{\mathbf{\Psi }}=\mathbf{1}_{\mathfrak{U}}\ .
\label{fdqsdqsd}
\end{equation}%
For any $s,t\in {\mathbb{R}}$, $A\in \mathcal{U}_{0}\subseteq \mathfrak{U}%
_{0}$ and $h\in {\mathbb{R}}\backslash \{0\}$, observe additionally that 
\begin{eqnarray}
&&\left\Vert h^{-1}\left( \mathfrak{T}_{t+h,s}^{\mathbf{\Psi }}\left(
A\right) -\mathfrak{T}_{t,s}^{\mathbf{\Psi }}\left( A\right) \right) -%
\mathfrak{T}_{t,s}^{\mathbf{\Psi }}\circ \mathbf{\delta }^{\mathbf{\Psi }%
(t)}\left( A\right) \right\Vert _{\mathfrak{U}}  \label{cigare meta1} \\
&=&\sup_{\rho \in E}\left\Vert h^{-1}\left( \tau _{t+h,t}^{\mathbf{\Psi }%
\left( \rho \right) }\left( A\right) -A\right) -\delta ^{\mathbf{\Psi }%
\left( t;\rho \right) }\left( A\right) \right\Vert _{\mathcal{U}}\ ,  \notag
\end{eqnarray}%
using Proposition \ref{Theorem Lieb-Robinson}. Now, by contradiction, assume
the existence of a zero sequence $(h_{n})_{n\in \mathbb{N}}$, a sequence $%
(\rho _{n})_{n\in \mathbb{N}}\subseteq E$ and a positive constant $D>0$ such
that 
\begin{equation*}
\inf_{n\in \mathbb{N}}\left\Vert h_{n}^{-1}\left( \tau _{t+h_{n},t}^{\mathbf{%
\Psi }\left( \rho _{n}\right) }\left( A\right) -A\right) -\delta ^{\mathbf{%
\Psi }\left( t;\rho _{n}\right) }\left( A\right) \right\Vert _{\mathcal{U}%
}\geq D>0.
\end{equation*}%
By weak$^{\ast }$-compactness of $E$, we can assume without loss of
generality that $(\rho _{n})_{n\in \mathbb{N}}$ converges in the weak$^{\ast
}$ topology to some $\rho \in E$, as $n\rightarrow \infty $. From Corollary %
\ref{Lemma cigare1}, it follows that 
\begin{equation}
\liminf_{n\rightarrow \infty }\left\Vert h_{n}^{-1}\left( \tau _{t+h_{n},t}^{%
\mathbf{\Psi }\left( \rho _{n}\right) }\left( A\right) -A\right) -\delta ^{%
\mathbf{\Psi }\left( t;\rho \right) }\left( A\right) \right\Vert _{\mathcal{U%
}}\geq D>0\ .  \label{toto chiant1}
\end{equation}%
Meanwhile, using Proposition \ref{Theorem Lieb-Robinson copy(3)} (iii), for
any $\varepsilon \in \mathbb{R}^{+}$, there is $n_{0}\in \mathbb{N}$ such
that, for all $n\geq n_{0}$,%
\begin{eqnarray}
h_{n}^{-1}\left( \tau _{t+h_{n},t}^{\mathbf{\Psi }\left( \rho _{n}\right)
}-\tau _{t+h_{n},t}^{\mathbf{\Psi }\left( \rho \right) }\right) \left(
A\right) &\leq &2\left\vert \Lambda \right\vert \left\Vert A\right\Vert _{%
\mathcal{U}}\left\Vert \mathbf{F}\right\Vert _{1,\mathfrak{L}}\mathrm{e}^{2%
\mathbf{D}\int_{-\varepsilon }^{\varepsilon }\left\Vert \Psi \left( \rho
;t+\alpha _{1}\right) \right\Vert _{\mathcal{W}}\mathrm{d}\alpha _{1}}
\label{toto chiant2} \\
&&\times \max_{\alpha \in \left[ -\varepsilon ,\varepsilon \right]
}\left\Vert \mathbf{\Psi }\left( t+\alpha ;\rho _{n}\right) -\mathbf{\Psi }%
\left( t+\alpha ;\rho \right) \right\Vert _{\mathcal{W}}\ .  \notag
\end{eqnarray}%
Note that the mapping $(t,\rho )\mapsto \mathbf{\Psi }\left( t;\rho \right) $
is (jointly) continuous on $\mathbb{R}\times E$, by definition of $\mathfrak{%
W}$, and we infer from Inequalities (\ref{toto chiant1})-(\ref{toto chiant2}%
) that 
\begin{equation*}
\liminf_{n\rightarrow \infty }\left\Vert h_{n}^{-1}\left( \tau _{t+h_{n},t}^{%
\mathbf{\Psi }\left( \rho \right) }\left( A\right) -A\right) -\delta ^{%
\mathbf{\Psi }\left( t;\rho \right) }\left( A\right) \right\Vert _{\mathcal{U%
}}\geq D>0\ ,
\end{equation*}%
which contradicts (\ref{cauchy trivial1}) for $\Psi =\mathbf{\Psi }\left(
\rho \right) $. By Equation (\ref{cigare meta1}), it follows that 
\begin{equation}
\forall s,t\in {\mathbb{R}},\ A\in \mathcal{U}_{0}\subseteq \mathfrak{U}%
_{0}:\qquad \partial _{t}\mathfrak{T}_{t,s}^{\mathbf{\Psi }}\left( A\right) =%
\mathfrak{T}_{t,s}^{\mathbf{\Psi }}\circ \mathbf{\delta }^{\mathbf{\Psi }%
(t)}\left( A\right) \ .  \label{fdqsdqsdfdqsdqsd}
\end{equation}%
By using that $\mathbf{\delta }^{\mathbf{\Psi }(t)}$, $t\in {\mathbb{R}}$,
are derivations and $\mathfrak{T}_{t,s}^{\mathbf{\Psi }}$, $s,t\in \mathbb{R}
$, are $\ast $-auto%
\-%
morphisms of $\mathfrak{U}$, we deduce (\ref{klsdf}) on $\mathfrak{U}_{0}$,
from (\ref{fdqsdqsd}) and (\ref{fdqsdqsdfdqsdqsd}). Recall that $\mathfrak{U}%
_{0}=\mathrm{span}\left\{ \mathfrak{C}\mathcal{U}_{0}\right\} $, by (\ref%
{spain}).

Finally, in order to prove the uniqueness of the solution to (\ref{klsdf}),
assume that $(\mathfrak{\tilde{T}}_{t,s})_{_{s,t\in \mathbb{R}}}\subseteq 
\mathcal{B}(\mathfrak{U})$ is a two-parameter family satisfying (\ref{klsdf}%
) on $\mathfrak{U}_{0}$. Since $\mathbf{\delta }^{\mathbf{\Psi }(t)}\left( 
\mathfrak{C}\right) =\{0\}$, $\mathfrak{C}$ is a subspace of the fixed point
algebra of $(\mathfrak{\tilde{T}}_{t,s})_{_{s,t\in \mathbb{R}}}$. In
particular, by \cite[Lemma 5.2]{Bru-pedra-MF-I}, it comes from a strongly
continuous family $(\tilde{\tau}_{t,s}^{\rho })_{\left( \rho ,s,t\right) \in
E\times \mathbb{R}^{2}}$ defined by 
\begin{equation*}
\tilde{\tau}_{t,s}^{\rho }\left( A\right) \doteq \lbrack \mathfrak{\tilde{T}}%
_{t,s}\left( A\right) ]\left( \rho \right) \ ,\qquad \rho \in E,\ A\in 
\mathcal{U}\subseteq \mathfrak{U},\ s,t\in \mathbb{R}\ .
\end{equation*}%
Through (\ref{klsdf}) (cf. (\ref{cigare meta1})), for each $\rho \in E$, $%
(\tau _{t,s}^{\mathbf{\Psi }\left( \rho \right) })_{_{s,t\in \mathbb{R}}}$
and $(\tilde{\tau}_{t,s}^{\rho })_{_{s,t\in \mathbb{R}}}$ are both solution
in $\mathcal{B}(\mathcal{U})$ to the non-autonomous evolution equation (\ref%
{cauchy trivial1}) on $\mathcal{U}_{0}$ for $\Psi =\mathbf{\Psi }\left( \rho
\right) $. Therefore, by Proposition \ref{Theorem Lieb-Robinson}, the
solution to the non-autonomous evolution equation (\ref{klsdf}) on $%
\mathfrak{U}_{0}$ is also unique.
\end{proof}

\subsection{From Quantum Dynamics to Classical Flows}

Proposition \ref{Lemma cigare1 copy(5)} means that, for any $\mathbf{\Psi }%
\in C(\mathbb{R};\mathfrak{W}^{\mathbb{R}})$, $(\mathfrak{U},\mathfrak{T}^{%
\mathbf{\Psi }})$ is a state-dependent $C^{\ast }$-dynamical system, as
defined in \cite[Definition 5.3]{Bru-pedra-MF-I}. Therefore, as explained in 
\cite[Section 5.2]{Bru-pedra-MF-I}, for any $\mathbf{\Psi }\in C(\mathbb{R};%
\mathfrak{W}^{\mathbb{R}})$, $(\mathfrak{U},\mathfrak{T}^{\mathbf{\Psi }})$
induces a Feller dynamics within the classical $C^{\ast }$-algebra $%
\mathfrak{C}$ defined by (\ref{metaciagre set 2})-(\ref{metaciagre set 2bis}%
): \medskip

\noindent \underline{State-space trajectories:} Let $C\left( E;E\right) $ be
the set of weak$^{\ast }$-continuous functions from the state space $E$ to
itself endowed with the topology of uniform convergence. In other words, any
net $(f_{j})_{j\in J}\subseteq C\left( E;E\right) $ converges to $f\in
C\left( E;E\right) $ whenever%
\begin{equation}
\lim_{j\in J}\max_{\rho \in E}\left\vert f_{j}(\rho )(A)-f(\rho
)(A)\right\vert =0\ ,\qquad \text{for all }A\in \mathcal{U}\ .
\label{uniform convergence weak*}
\end{equation}%
We denote by $\mathrm{Aut}\left( E\right) \varsubsetneq C\left( E;E\right) $
the subspace of all automorphisms of $E$, i.e., element of $C\left(
E;E\right) $ with weak$^{\ast }$-continuous inverse. Equivalently, $\mathrm{%
Aut}\left( E\right) $ is the set of all bijective mappings in $C\left(
E;E\right) $, because $E$ is a compact Hausdorff space. From the family $%
\mathfrak{T}^{\mathbf{\Psi }}\equiv (\mathfrak{T}_{t,s}^{\mathbf{\Psi }%
})_{s,t\in \mathbb{R}}$, we define a continuous family $(\phi _{t,s}^{%
\mathbf{\Psi }})_{s,t\in \mathbb{R}}\subseteq \mathrm{Aut}\left( E\right) $
by 
\begin{equation}
\phi _{t,s}^{\mathbf{\Psi }}\left( \rho \right) \doteq \rho \circ \tau
_{t,s}^{\mathbf{\Psi }\left( \rho \right) }\ ,\qquad \rho \in E,\ s,t\in 
\mathbb{R}\ ,  \label{phitsbis}
\end{equation}%
where $(\tau _{t,s}^{\mathbf{\Psi }\left( \rho \right) })_{\left( \rho
,s,t\right) \in E\times \mathbb{R}^{2}}$ is the unique strongly continuous
family of $\ast $-automorphisms of $\mathcal{U}$ satisfying (\ref%
{definiotion tho frac}), see also \cite[Lemma 5.2]{Bru-pedra-MF-I}.\medskip

\noindent \underline{Classical flows as Feller evolution systems:} The
state-space trajectories, in turn, yield a strongly continuous two-parameter
family $(V_{t,s}^{\mathbf{\Psi }})_{s,t\in \mathbb{R}}$ of $\ast $-auto%
\-%
morphisms of the classical $C^{\ast }$-algebra $\mathfrak{C}$, defined by 
\begin{equation}
V_{t,s}^{\mathbf{\Psi }}f\doteq f\circ \phi _{t,s}^{\mathbf{\Psi }}\ ,\qquad
f\in \mathfrak{C},\ s,t\in \mathbb{R}\ .  \label{shorodinger dynamicsbis}
\end{equation}%
This classical dynamics is a \emph{Feller evolution system} in the following
sense: As a $\ast $-automorphism, $V_{t,s}^{\mathbf{\Psi }}$ is
self-adjointness- and positivity-preserving while $\Vert V_{t,s}^{\mathbf{%
\Psi }}\Vert _{\mathcal{B}\left( \mathfrak{C}^{\mathbb{R}}\right) }=1$; $%
(V_{t,s}^{\mathbf{\Psi }})_{s,t\in \mathbb{R}}$ is a strongly continuous
two-parameter family satisfying 
\begin{equation}
\forall s,r,t\in \mathbb{R}:\qquad V_{t,s}^{\mathbf{\Psi }}=V_{r,s}^{\mathbf{%
\Psi }}\circ V_{t,r}^{\mathbf{\Psi }}\ ,  \label{reverse}
\end{equation}%
by (\ref{reverse cocycle0}). Therefore, as explained in \cite[Section 4.4]%
{Bru-pedra-MF-I}, the classical dynamics defined as the restriction of $%
(V_{t,s}^{\mathbf{\Psi }})_{s,t\in \mathbb{R}}$ to the real space $\mathfrak{%
C}^{\mathbb{R}}$ can be associated in this case with Feller processes%
\footnote{%
The positivity and norm-preserving property are reminiscent of Markov
semigroups.} in probability theory: By the Riesz-Markov representation
theorem and the monotone convergence theorem, there is a unique
two-parameter group $(p_{t,s}^{\mathbf{\Psi }})_{s,t\in \mathbb{R}}$\ of
Markov transition kernels $p_{t,s}^{\mathbf{\Psi }}(\cdot ,\cdot )$ on $E$
such that%
\begin{equation*}
V_{t,s}^{\mathbf{\Psi }}f\left( \rho \right) =\int_{E}f\left( \hat{\rho}%
\right) p_{t,s}^{\mathbf{\Psi }}(\rho ,\mathrm{d}\hat{\rho})\ ,\qquad f\in 
\mathfrak{C}^{\mathbb{R}}\ .
\end{equation*}%
The right-hand side of the above identity makes sense for bounded measurable
functions from $E$ to $\mathbb{R}$. In fact, one can naturally extend $%
(V_{t,s}^{\mathbf{\Psi }})_{s,t\in \mathbb{R}}$ to this more general class
of functions on $E$.

The notion of Feller evolution system, which is only an extension of Feller
semigroups to non-auto%
\-%
nomous two-parameter families, has been introduced (at least) in 2014 \cite%
{Feller}. \medskip

\noindent \underline{Parity:} For any $\mathbf{\Psi }\in C(\mathbb{R};%
\mathfrak{W}^{\mathbb{R}})$, the family $(\mathfrak{T}_{t,s}^{\mathbf{\Psi }%
})_{_{s,t\in \mathbb{R}}}$ is parity-preserving, i.e.,%
\begin{equation*}
\Xi \circ \mathfrak{T}_{t,s}^{\mathbf{\Psi }}=\mathfrak{T}_{t,s}^{\mathbf{%
\Psi }}\circ \Xi \ ,\qquad s,t\in \mathbb{R}\ ,
\end{equation*}%
where $\Xi $ is the $\ast $-automorphism of $\mathfrak{U}$ defined by (\ref%
{gauge invariant}). It follows that 
\begin{equation}
\phi _{t,s}^{\mathbf{\Psi }}\left( E^{+}\right) \subseteq E^{+}\ ,\qquad
\phi _{t,s}^{\mathbf{\Psi }}\left( E\backslash E^{+}\right) \subseteq
E\backslash E^{+}\ ,\qquad s,t\in \mathbb{R}\ ,  \label{symmetry group eq1}
\end{equation}%
which in turn implies that $V_{t,s}^{\mathbf{\Psi }}$ can be seen as a
mapping on either $C(E^{+},\mathbb{C})$ or $C(E\backslash E^{+},\mathbb{C})$:%
\begin{equation}
V_{t,s}^{\mathbf{\Psi }}\left( f|_{E^{+}}\right) \doteq \left( V_{t,s}^{%
\mathbf{\Psi }}f\right) |_{E^{+}}\ ,\qquad V_{t,s}^{\mathbf{\Psi }}\left(
f|_{E\backslash E^{+}}\right) \doteq \left( V_{t,s}^{\mathbf{\Psi }}f\right)
|_{E\backslash E^{+}}\ ,\qquad f\in \mathfrak{C},\ s,t\in \mathbb{R}\ .
\label{symmetry group eq2}
\end{equation}%
Recall that $E^{+}$ is the weak$^{\ast }$-compact convex set of even states
defined by (\ref{gauge invariant states}), which is the physical state
space. Similar to \cite[Corollary 4.3]{Bru-pedra-MF-I}, the set $\mathcal{E}%
(E^{+})$ of extreme points of $E^{+}$\ is also conserved by the flow.
\medskip

\noindent \underline{Translations:} Let $\mathfrak{L}=\mathbb{Z}^{d}$. For
any continuous mapping $\mathbf{\Psi }\in C(\mathbb{R};\mathfrak{W}_{1}\cap 
\mathfrak{W}^{\mathbb{R}})$ from $\mathbb{R}$ to the space $\mathfrak{W}%
_{1}\cap \mathfrak{W}^{\mathbb{R}}$ of translation-invariant, self-adjoint
and state-dependent interactions, the mapping $x\mapsto \mathrm{A}_{x}$ from 
$\mathbb{Z}^{d}$ to the group of $\ast $-automorphisms of $\mathfrak{U}$,
defined by (\ref{translatbis}), is a symmetry group of the state-dependent $%
C^{\ast }$-dynamical system $(\mathfrak{U},\mathfrak{T}^{\mathbf{\Psi }})$,
i.e., 
\begin{equation*}
\mathrm{A}_{x}\circ \mathfrak{T}_{t,s}^{\mathbf{\Psi }}=\mathfrak{T}_{t,s}^{%
\mathbf{\Psi }}\circ \mathrm{A}_{x}\ ,\qquad s,t\in \mathbb{R},\ x\in 
\mathbb{Z}^{d}\ .
\end{equation*}%
As a consequence, for any $\vec{\ell}\in \mathbb{N}^{d}$, 
\begin{equation}
\phi _{t,s}^{\mathbf{\Psi }}\left( E_{\vec{\ell}}\right) \subseteq E_{\vec{%
\ell}}\ ,\qquad \phi _{t,s}^{\mathbf{\Psi }}\left( E\backslash E_{\vec{\ell}%
}\right) \subseteq E\backslash E_{\vec{\ell}}\ ,\qquad s,t\in \mathbb{R}\ ,
\label{G-invariance states3}
\end{equation}%
which in turn implies that $V_{t,s}^{\mathbf{\Psi }}$ can be seen as a
mapping on either $C(E_{\vec{\ell}},\mathbb{C})$ or $C(E\backslash E_{\vec{%
\ell}},\mathbb{C})$:%
\begin{equation}
V_{t,s}^{\mathbf{\Psi }}(f|_{E_{\vec{\ell}}})\doteq (V_{t,s}^{\mathbf{\Psi }%
}f)|_{E_{\vec{\ell}}}\ ,\qquad V_{t,s}^{\mathbf{\Psi }}(f|_{E\backslash E_{%
\vec{\ell}}})\doteq (V_{t,s}^{\mathbf{\Psi }}f)|_{E\backslash E_{\vec{\ell}%
}}\ ,\qquad f\in \mathfrak{C},\ s,t\in \mathbb{R}\ .
\label{G-invariance states4}
\end{equation}%
Recall that $E_{\vec{\ell}}\subseteq E^{+}$ is the weak$^{\ast }$-compact
convex set of $\vec{\ell}$-periodic states defined by (\ref{periodic
invariant states}) for any $\vec{\ell}\in \mathbb{N}^{d}$. Similar to \cite[%
Corollary 4.3]{Bru-pedra-MF-I}, for every $\vec{\ell}\in \mathbb{N}^{d}$,
the set $\mathcal{E}(E_{\vec{\ell}})$ of extreme points of $E_{\vec{\ell}}$\
is, in this case, also conserved by the flow.

\subsection{Self-Consistency Equations}

By using Equation (\ref{equation commutators}), Proposition \ref{density of
periodic states copy(2)} and Corollary \ref{Lemma cigare1} together with the
linearity of the mapping $\Phi \mapsto \delta ^{\Phi }$, for every integer $%
n\geq 2$, any translation-invariant interactions $\Psi ^{(1)},\ldots ,\Psi
^{(n)}\in \mathcal{W}_{1}$, each local element $A\in \mathcal{U}_{0}$, $\vec{%
\ell}\in \mathbb{N}^{d}$ and every extreme $\vec{\ell}$-periodic state $\rho
\in \mathcal{E}(E_{\vec{\ell}})\subseteq E_{\vec{\ell}}$, one can prove that 
\begin{equation}
\lim_{L\rightarrow \infty }\frac{1}{\left\vert \Lambda _{L}\right\vert ^{n-1}%
}\rho \left( i\left[ U_{L}^{\Psi ^{(1)}}\cdots U_{L}^{\Psi ^{(n)}},A\right]
\right) =\rho \circ \delta ^{\mathbf{\Psi }\left( \rho \right) }\left(
A\right) \ ,  \label{equation commutatorsbis}
\end{equation}%
where, in this case, the state-dependent interaction $\mathbf{\Psi }\in 
\mathfrak{W}$ equals%
\begin{equation}
\mathbf{\Psi }\left( \rho \right) =\left\lfloor \rho ;\Psi ^{(1)},\ldots
,\Psi ^{(n)}\right\rfloor _{\vec{\ell}}\doteq \sum_{m=1}^{n}\Psi
^{(m)}\prod\limits_{j\in \left\{ 1,\ldots ,n\right\} ,j\neq m}\rho (%
\mathfrak{e}_{\Psi ^{(j)},\vec{\ell}})\in \mathcal{W}_{1}\ ,\qquad \rho \in
E\ .  \label{def aussi utile}
\end{equation}%
Compare (\ref{equation commutatorsbis}) with Definitions \ref{definition
long range energy} and \ref{derivation long range}.

The proof of (\ref{equation commutatorsbis}) uses the \emph{ergodicity} of
extreme periodic states in a crucial way. It is non-trivial and will be
performed in detail in \cite{Bru-pedra-MF-III}. One central argument in this
proof, like in Haag's approach \cite{haag62} to mean-field theories, is
that, for any translation-invariant interaction $\Phi \in \mathcal{W}_{1}$
and any cyclic representation $\left( \mathcal{H}_{\rho },\pi _{\rho
},\Omega _{\rho }\right) $ of an \emph{extreme} state $\rho \in \mathcal{E}%
(E_{\vec{\ell}})$ at fixed $\vec{\ell}\in \mathbb{N}^{d}$, the uniformly
bounded family 
\begin{equation*}
\left\{ \frac{\pi _{\rho }\left( U_{L}^{\Phi }\right) }{\left\vert \Lambda
_{L}\right\vert }\right\} _{L\in \mathbb{N}}\subseteq \mathcal{B}\left( 
\mathcal{H}_{\rho }\right)
\end{equation*}%
converges to the operator $\rho (\mathfrak{e}_{\Phi ,\vec{\ell}})\mathbf{1}_{%
\mathcal{H}_{\rho }}$, in the sense of the strong operator topology. See 
\cite[Section 5.3]{Bru-pedra-MF-III} for more details. Compare with
Proposition \ref{density of periodic states copy(2)}.

Having in mind the discussions of Section \ref{section pb} and the linearity
of the mapping $\Phi \mapsto \delta ^{\Phi }$, we give here the limit (\ref%
{equation commutatorsbis}) in order to convince the reader that appropriate
approximating (state-dependent, short-range) interactions naturally appears
in the description of the infinite-volume dynamics of lattice-fermion (or
quantum-spin) systems with long-range interactions. To define them, recall
that the integral of interactions is defined by (\ref{definition integral
interaction0})-(\ref{definition integral interaction}).

\begin{definition}[Non-autonomous approximating interactions]
\label{definition BCS-type model approximated}\mbox{ }\newline
For $\vec{\ell}\in \mathbb{N}^{d}$ and any continuous functions $\mathfrak{m}%
=(\Phi (t),\mathfrak{\mathfrak{a}}(t))_{t\in \mathbb{R}}\in C\left( \mathbb{R%
};\mathcal{M}\right) $, $\xi \in C\left( \mathbb{R};E\right) $, we define
the mapping $\Phi ^{(\mathfrak{m},\xi )}$ from $\mathbb{R}$ to $\mathcal{W}^{%
\mathbb{R}}$ by%
\begin{equation*}
\Phi ^{(\mathfrak{m},\xi )}\left( t\right) \doteq \Phi \left( t\right)
+\sum_{n\in \mathbb{N}}\int_{\mathbb{S}^{n}}\ \left\lfloor \xi \left(
t\right) ;\Psi ^{(1)},\ldots ,\Psi ^{(n)}\right\rfloor _{\vec{\ell}}\ 
\mathfrak{a}\left( t\right) _{n}\left( \mathrm{d}\Psi ^{(1)},\ldots ,\mathrm{%
d}\Psi ^{(n)}\right) \,,\qquad t\in \mathbb{R}\ ,
\end{equation*}%
with $\left\lfloor \rho ;\Psi \right\rfloor _{\vec{\ell}}\doteq \Psi $. If $%
\mathbf{\xi }\in C\left( \mathbb{R};\mathrm{Aut}\left( E\right) \right) $,
then a mapping $\mathbf{\Phi }^{(\mathfrak{m},\mathbf{\xi })}$ from $\mathbb{%
R}$ to $\mathfrak{W}^{\mathbb{R}}$ is defined, for any $\rho \in E$ and $%
t\in \mathbb{R}$, by%
\begin{equation*}
\mathbf{\Phi }^{(\mathfrak{m},\mathbf{\xi })}\left( t;\rho \right) \doteq
\Phi \left( t\right) +\sum_{n\in \mathbb{N}}\int_{\mathbb{S}^{n}}\
\left\lfloor \mathbf{\xi }\left( t;\rho \right) ;\Psi ^{(1)},\ldots ,\Psi
^{(n)}\right\rfloor _{\vec{\ell}}\ \mathfrak{a}\left( t\right) _{n}\left( 
\mathrm{d}\Psi ^{(1)},\ldots ,\mathrm{d}\Psi ^{(n)}\right) \ .
\end{equation*}
\end{definition}

\noindent Such approximating interactions can be used to define, via
Proposition \ref{Theorem Lieb-Robinson} or Proposition \ref{Lemma cigare1
copy(5)}, $\ast $-automorphisms of $\mathcal{U}$ or $\mathfrak{U}$, because
they are always bounded continuous functions:

\begin{lemma}[Continuity of approximating interactions]
\label{definition BCS-type model approximated copy(2)}\mbox{ }\newline
Let $\vec{\ell}\in \mathbb{N}^{d}$ and $\mathfrak{m}\in C\left( \mathbb{R};%
\mathcal{M}\right) $. For any $\xi \in C\left( \mathbb{R};E\right) $, $\Phi
^{(\mathfrak{m},\xi )}\in C(\mathbb{R};\mathcal{W}^{\mathbb{R}})$ and, for
any $\mathbf{\xi }\in C\left( \mathbb{R};\mathrm{Aut}\left( E\right) \right) 
$, $\mathbf{\Phi }^{(\mathfrak{m},\mathbf{\xi })}\in C(\mathbb{R};\mathfrak{W%
}^{\mathbb{R}})$ with%
\begin{equation}
\left\Vert \Phi ^{(\mathfrak{m},\xi )}\left( t\right) \right\Vert _{\mathcal{%
W}}\leq \left\Vert \mathfrak{m}\left( t\right) \right\Vert _{\mathcal{M}}%
\text{\qquad and\qquad }\left\Vert \mathbf{\Phi }^{(\mathfrak{m},\mathbf{\xi 
})}\left( t\right) \right\Vert _{\mathfrak{W}}\leq \left\Vert \mathfrak{m}%
\left( t\right) \right\Vert _{\mathcal{M}}\,,\qquad t\in \mathbb{R}\ .
\label{inequality trivial}
\end{equation}
\end{lemma}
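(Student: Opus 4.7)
The plan is to first establish the two norm bounds by direct estimation using the structure of the $\mathcal{S}$-norm and the Bochner integral, and then to deduce continuity in $t$ from two successive applications of dominated convergence: one on each product space $\mathbb{S}^{n}$ for fixed $n$, and one on the counting measure over $n\in\mathbb{N}$.

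For the norm bounds, I start from the point-wise estimate
\[
|\rho(\mathfrak{e}_{\Psi,\vec{\ell}})|\leq \Vert \mathbf{F}\Vert_{1,\mathfrak{L}},\qquad \rho\in E,\ \Psi\in\mathbb{S},
\]
obtained from (\ref{e phi}) and $\Vert \Psi\Vert_{\mathcal{W}}=1$. Inserted into (\ref{def aussi utile}), this gives $\Vert \lfloor \rho;\Psi^{(1)},\ldots,\Psi^{(n)}\rfloor_{\vec{\ell}}\Vert_{\mathcal{W}}\leq n\Vert \mathbf{F}\Vert_{1,\mathfrak{L}}^{\,n-1}$. Bochner integration against $\mathfrak{a}(t)_n$ and summation over $n\in\mathbb{N}$ then yield, after the crude majorization $n\leq n^{2}$ to match the weights of (\ref{definition 0bis}), the inequality $\sum_{n} \Vert \int \lfloor \rho;\cdot\rfloor_{\vec{\ell}}\,\mathfrak{a}(t)_n\Vert_{\mathcal{W}}\leq \Vert \mathfrak{a}(t)\Vert_{\mathcal{S}}$. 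Adding $\Vert \Phi(t)\Vert_{\mathcal{W}}$ establishes the bound for $\Phi^{(\mathfrak{m},\xi)}(t)$, and since the estimate is uniform in $\rho\in E$ it simultaneously controls the $\mathfrak{W}$-norm of $\mathbf{\Phi}^{(\mathfrak{m},\mathbf{\xi})}(t)$. Self-adjointness (hence membership in $\mathcal{W}^{\mathbb{R}}$ or $\mathfrak{W}^{\mathbb{R}}$) follows from $\mathfrak{a}_n^{\ast}=\mathfrak{a}_n$, the state identity $\rho(A^{\ast})=\overline{\rho(A)}$, and the direct computation $\mathfrak{e}_{\Psi,\vec{\ell}}^{\ast}=\mathfrak{e}_{\Psi^{\ast},\vec{\ell}}$ from (\ref{eq:enpersite}), using the change of variables (\ref{push forward self-adjoint}).

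Next, I prove continuity in $t$ of $\Phi^{(\mathfrak{m},\xi)}$. Fixing $t\in \mathbb{R}$ and letting $s\to t$, I insert an intermediate integral to split each $n$-th summand into the two pieces
\[
\int \bigl(\lfloor \xi(t);\cdot\rfloor_{\vec{\ell}}-\lfloor \xi(s);\cdot\rfloor_{\vec{\ell}}\bigr)\,\mathfrak{a}(t)_n\,+\,\int \lfloor \xi(s);\cdot\rfloor_{\vec{\ell}}\,(\mathfrak{a}(t)_n-\mathfrak{a}(s)_n).
\]
The second piece is absorbed, after summation over $n$, into $\Vert \mathfrak{a}(t)-\mathfrak{a}(s)\Vert_{\mathcal{S}}$ via the norm bound of the previous step, which tends to $0$ by continuity of $\mathfrak{m}$. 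For the first piece, the integrand is uniformly bounded in $\mathcal{W}$ by $2n\Vert\mathbf{F}\Vert_{1,\mathfrak{L}}^{\,n-1}$ and, by a telescoping identity for products of scalars together with the weak$^{\ast}$-continuity of $\xi$ applied to each $\mathfrak{e}_{\Psi^{(j)},\vec{\ell}}\in\mathcal{U}$, converges to $0$ in $\mathcal{W}$ at every fixed $(\Psi^{(1)},\ldots,\Psi^{(n)})$; dominated convergence on $\mathbb{S}^{n}$ disposes of the $n$-th term. A second dominated convergence on $\mathbb{N}$, dominated by the summable sequence $2n^{2}\Vert\mathbf{F}\Vert_{1,\mathfrak{L}}^{\,n-1}\Vert \mathfrak{a}(t)_n\Vert_{\mathcal{S}(\mathbb{S}^{n})}$ arising from $\Vert \mathfrak{a}(t)\Vert_{\mathcal{S}}<\infty$, then yields convergence of the full series. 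Together with $\Vert \Phi(t)-\Phi(s)\Vert_{\mathcal{W}}\to 0$, this proves $\Phi^{(\mathfrak{m},\xi)}\in C(\mathbb{R};\mathcal{W}^{\mathbb{R}})$.

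The state-dependent case $\mathbf{\Phi}^{(\mathfrak{m},\mathbf{\xi})}$ requires two additional observations. First, at each fixed $t$, the assignment $\rho\mapsto\mathbf{\Phi}^{(\mathfrak{m},\mathbf{\xi})}(t;\rho)$ is weak$^{\ast}$-to-norm continuous from $E$ to $\mathcal{W}$ by exactly the same DCT argument, now with $\rho$ varying and $\mathbf{\xi}(t)\in\mathrm{Aut}(E)$ providing weak$^{\ast}$-continuity of $\rho\mapsto \mathbf{\xi}(t;\rho)(\mathfrak{e}_{\Psi^{(j)},\vec{\ell}})$; this places $\mathbf{\Phi}^{(\mathfrak{m},\mathbf{\xi})}(t)$ in $\mathfrak{W}^{\mathbb{R}}$. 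Second, continuity in $t$ in the $\mathfrak{W}$-norm demands a supremum over $\rho \in E$ outside the integrals, and this is the main obstacle. It is resolved by noticing that the topology of $\mathrm{Aut}(E)$ recalled in (\ref{uniform convergence weak*}) gives, for each fixed $A\in\mathcal{U}$, $\sup_{\rho\in E}|\mathbf{\xi}(s;\rho)(A)-\mathbf{\xi}(t;\rho)(A)|\to 0$ as $s\to t$. Applying this to $A=\mathfrak{e}_{\Psi^{(j)},\vec{\ell}}$ and combining with the telescoping estimate yields, at every fixed $(\Psi^{(1)},\ldots,\Psi^{(n)})$, $\sup_{\rho}\Vert \lfloor \mathbf{\xi}(t;\rho);\cdot\rfloor_{\vec{\ell}}-\lfloor \mathbf{\xi}(s;\rho);\cdot\rfloor_{\vec{\ell}}\Vert_{\mathcal{W}}\to 0$. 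Pulling this supremum through the two dominated convergences (legitimate since the dominating quantity $2n\Vert\mathbf{F}\Vert_{1,\mathfrak{L}}^{\,n-1}$ is $\rho$-independent) then closes the argument.
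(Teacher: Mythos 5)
Your proof is correct and follows essentially the same route as the paper's: the norm bounds come from the triangle inequality, (\ref{e phi}) and the $n^{2}$-weighting in (\ref{definition 0bis}), while continuity in $t$ is obtained via the same decomposition as in (\ref{totototot}) followed by dominated convergence, and the state-dependent case uses the topology of $\mathrm{Aut}(E)$ from (\ref{uniform convergence weak*}). You are slightly more explicit than the paper in two places that are worth noting — the self-adjointness check that puts the approximating interactions in $\mathcal{W}^{\mathbb{R}}$/$\mathfrak{W}^{\mathbb{R}}$, and, more substantively, the verification that for each fixed $t$ the map $\rho\mapsto\mathbf{\Phi}^{(\mathfrak{m},\mathbf{\xi})}(t;\rho)$ is weak$^{\ast}$-to-norm continuous so that $\mathbf{\Phi}^{(\mathfrak{m},\mathbf{\xi})}(t)$ actually lies in $\mathfrak{W}$, a point the paper's proof leaves implicit.
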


\begin{proof}
Inequalities (\ref{inequality trivial}) are direct consequences of Equations
(\ref{e phi}), (\ref{definition 0})-(\ref{definition 0bis}), Definition \ref%
{def long range} and the triangle inequality, recalling that $\mathbb{S}^{n}$
is the $n$-fold Cartesian product of the unit sphere $\mathbb{S}$ of the
Banach space $\mathcal{W}_{1}$. Using the same arguments, note also that,
for any $\mathfrak{m}\in C\left( \mathbb{R};\mathcal{M}\right) $, $\xi \in
C\left( \mathbb{R};E\right) $ and $t_{1},t_{2}\in \mathbb{R}$, 
\begin{align}
& \left\Vert \Phi ^{(\mathfrak{m},\xi )}\left( t_{1}\right) -\Phi ^{(%
\mathfrak{m},\xi )}\left( t_{2}\right) \right\Vert _{\mathcal{W}}
\label{totototot} \\
& \leq \left\Vert \Phi \left( t_{1}\right) -\Phi \left( t_{2}\right)
\right\Vert _{\mathcal{W}}+\sum_{n\in \mathbb{N}}n\left\Vert \mathbf{F}%
\right\Vert _{1,\mathfrak{L}}^{n-1}\left\Vert \mathfrak{a}\left(
t_{1}\right) _{n}-\mathfrak{a}\left( t_{2}\right) _{n}\right\Vert _{\mathcal{%
S}(\mathbb{S}^{n}\mathbb{)}}  \notag \\
& \qquad +\sum_{n\in \mathbb{N}}\left\Vert \mathbf{F}\right\Vert _{1,%
\mathfrak{L}}^{n-1}\int_{\mathbb{S}^{n}}\sum_{j=1}^{n}\left\vert \left( \xi
\left( t_{1}\right) -\xi \left( t_{2}\right) \right) (\mathfrak{e}_{\Psi
^{(j)},\vec{\ell}})\right\vert \ \left\vert \mathfrak{a}\left( t_{1}\right)
_{n}\right\vert \left( \mathrm{d}\Psi ^{(1)},\ldots ,\mathrm{d}\Psi
^{(n)}\right) \ .  \notag
\end{align}%
Since $\mathfrak{m}\in C\left( \mathbb{R};\mathcal{M}\right) $ and $\xi \in
C\left( \mathbb{R};E\right) $, we invoke Lebesgue's dominated convergence
theorem to deduce from the last inequality that $\Phi ^{(\mathfrak{m},\xi
)}\in C(\mathbb{R};\mathcal{W}^{\mathbb{R}})$. The difference of $\mathbf{%
\Phi }^{(\mathfrak{m},\mathbf{\xi })}$ for two different times at a fixed
state $\rho $ satisfies the same inequality as (\ref{totototot}), $\xi
\left( t_{1}\right) ,\xi \left( t_{2}\right) $ being replaced with $\mathbf{%
\xi }(t_{1};\rho ),\mathbf{\xi }(t_{2};\rho )$, respectively. Since $\mathrm{%
Aut}\left( E\right) \varsubsetneq C\left( E;E\right) $ is the subspace of
all automorphisms of $E$ endowed with the topology of uniform convergence,
as stated in Equation (\ref{uniform convergence weak*}), one also infers
from Lebesgue's dominated convergence theorem that $\mathbf{\Phi }^{(%
\mathfrak{m},\mathbf{\xi })}\in C(\mathbb{R};\mathfrak{W}^{\mathbb{R}})$,
provided $\mathfrak{m}\in C\left( \mathbb{R};\mathcal{M}\right) $ and $%
\mathbf{\xi }\in C\left( \mathbb{R};\mathrm{Aut}\left( E\right) \right) $.
\end{proof}

As is usual, we do the following identifications for the subspaces of
constant functions:%
\begin{equation}
E\subseteq C\left( \mathbb{R};E\right) \text{\qquad and\qquad }\mathrm{Aut}%
\left( E\right) \subseteq C\left( \mathbb{R};\mathrm{Aut}\left( E\right)
\right) \ .  \label{identify}
\end{equation}%
When $\xi \in E$ and $\mathfrak{m}=\left( \Phi ,(0,\mathfrak{a}_{2},0,\ldots
)\right) \in \mathcal{M}_{1}$, the first part of Definition \ref{definition
BCS-type model approximated} corresponds to the (autonomous) approximating
interactions first introduced in \cite[Definition 2.31]{BruPedra2}, see
also\ Section \ref{Long-range models}. They are used there to characterize
(generalized equilibrium) states minimizing the free-energy density through 
\emph{self-consistency equations} (gap equations), whose solutions are
related to non-cooperative equilibria of a two-person zero-sum game
(thermodynamic game).

More generally, by Proposition \ref{Lemma cigare1 copy(5)} and Lemma \ref%
{definition BCS-type model approximated copy(2)}, for any $\mathbf{\xi }\in
C\left( \mathbb{R};\mathrm{Aut}\left( E\right) \right) $, there is a
strongly continuous two-parameter family $(\mathfrak{T}_{t,s}^{\mathbf{\Phi }%
^{(\mathfrak{m},\mathbf{\xi })}})_{s,t\in \mathbb{R}}$ of $\ast $-auto%
\-%
morphisms of $\mathfrak{U}$ satisfying the reverse cocycle property. This
family satisfies a non-autonomous evolution equation, similar to Equation (%
\ref{cauchy trivial1}). It is used to construct the infinite-volume limit of
the non-autonomous dynamics of time-dependent long-range models of $\mathcal{%
M}$ within a cyclic representation associated with an arbitrary periodic
state. We show that, generically, a long-range (or mean-field) dynamics is
equivalent to an intricate combination of a classical \emph{and} short-range
quantum dynamics. Both dynamics will be (non-trivial) consequences of the
well-posedness of \emph{self-consistency equations}, which are reminiscent
of \cite[Theorem 4.1]{Bru-pedra-MF-I}.

To present these equations, recall that, for any $\Lambda \in \mathcal{P}%
_{f} $, $\mathcal{M}_{\Lambda }$ belongs to the dense subset $\mathcal{M}%
_{0}\subseteq \mathcal{M}$ of models with an arbitrary short-range part,
while the long-range interactions are polynomials of interactions that are
finite-range and translation-invariant. See (\ref{S00bis})-(\ref{S00bisbis})
and Definition \ref{definition long range energy}.

\begin{theorem}[Self-consistency equations]
\label{theorem sdfkjsdklfjsdklfj}\mbox{ }\newline
Fix $\Lambda \in \mathcal{P}_{f}$ and $\mathfrak{m}\in C_{b}(\mathbb{R};%
\mathcal{M}_{\Lambda })$. There is a unique $\mathbf{\varpi }^{\mathfrak{m}%
}\in C\left( \mathbb{R}^{2};\mathrm{Aut}\left( E\right) \right) $ such that 
\begin{equation*}
\mathbf{\varpi }^{\mathfrak{m}}\left( s,t\right) =\phi _{t,s}^{\mathbf{\Phi }%
^{(\mathfrak{m},\mathbf{\varpi }^{\mathfrak{m}}\left( \alpha ,\cdot \right)
)}}|_{\alpha =s}\ ,\qquad s,t\in {\mathbb{R}}\ ,
\end{equation*}%
where $(\phi _{t,s}^{\mathbf{\Phi }^{(\mathfrak{m},\mathbf{\xi })}})_{s,t\in 
\mathbb{R}}$ is the continuous family of automorphisms of $E$ defined by (%
\ref{phitsbis}) with the state-dependent interaction $\mathbf{\Psi }=\mathbf{%
\Phi }^{(\mathfrak{m},\mathbf{\xi })}$ of Definition \ref{definition
BCS-type model approximated} for $\mathbf{\xi }\in C\left( \mathbb{R};%
\mathrm{Aut}\left( E\right) \right) $.
\end{theorem}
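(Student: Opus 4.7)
The proof goes by a Banach contraction argument. The key simplification from the hypothesis $\mathfrak{m} \in C_b(\mathbb{R}; \mathcal{M}_\Lambda)$ is that the measures $\mathfrak{a}_n(t)$ are supported in $(\mathbb{S} \cap \mathcal{W}_\Lambda)^n$, so the energy densities $\mathfrak{e}_{\Psi, \vec{\ell}}$ appearing in the approximating interaction $\mathbf{\Phi}^{(\mathfrak{m}, \mathbf{\xi})}$ of Definition~\ref{definition BCS-type model approximated} (see also (\ref{def aussi utile})) all lie in the finite-dimensional subalgebra $\mathcal{U}_{\Lambda^{(\vec{\ell})}}$, by (\ref{eq:enpersitebisbis}). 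Consequently, $\mathbf{\Phi}^{(\mathfrak{m}, \mathbf{\xi})}(t; \rho)$ depends on $\mathbf{\xi}$ only through the restriction of $\mathbf{\xi}(t; \rho)$ to $\mathcal{U}_{\Lambda^{(\vec{\ell})}}$. This reduces the self-consistency equation to a fixed-point problem with a finite-dimensional target.

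Fix $s \in \mathbb{R}$ and a short interval $I_s = [s - T, s + T]$ with $T > 0$ to be chosen. Consider the Banach space $\mathfrak{X}_s := C(I_s \times E; \mathcal{U}_{\Lambda^{(\vec{\ell})}}^*)$ endowed with its supremum norm, and define a self-map $\Xi_s$ on $\mathfrak{X}_s$ by
\[
[\Xi_s(\boldsymbol{\eta})](t; \rho) := \phi_{t, s}^{\mathbf{\Phi}^{(\mathfrak{m}, \boldsymbol{\eta})}}(\rho) \big|_{\mathcal{U}_{\Lambda^{(\vec{\ell})}}}\ .
\]
Well-definedness (continuity of the right-hand side in $(t, \rho)$ and the state property of its values on $\mathcal{U}_{\Lambda^{(\vec{\ell})}}$) follows from Lemma~\ref{definition BCS-type model approximated copy(2)} and Proposition~\ref{Lemma cigare1 copy(5)}. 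The contraction estimate combines Proposition~\ref{Theorem Lieb-Robinson copy(3)}~(iii) with a telescoping expansion of
\[
\lfloor \rho_1; \Psi^{(1)}, \ldots, \Psi^{(n)} \rfloor_{\vec{\ell}} - \lfloor \rho_2; \Psi^{(1)}, \ldots, \Psi^{(n)} \rfloor_{\vec{\ell}} = \sum_{m=1}^{n} \Psi^{(m)} \Bigl( \prod_{j \neq m} \rho_1(\mathfrak{e}_{\Psi^{(j)}, \vec{\ell}}) - \prod_{j \neq m} \rho_2(\mathfrak{e}_{\Psi^{(j)}, \vec{\ell}}) \Bigr)\ ,
\]
the uniform bound $\|\mathfrak{e}_{\Psi, \vec{\ell}}\|_{\mathcal{U}} \leq \|\mathbf{F}\|_{1, \mathfrak{L}}$ from (\ref{e phi}), and $\sup_t \|\mathfrak{a}(t)\|_{\mathcal{S}} \leq \|\mathfrak{m}\|_{\infty}$, to produce an estimate
\[
\| \mathbf{\Phi}^{(\mathfrak{m}, \boldsymbol{\eta}_1)}(\alpha; \rho) - \mathbf{\Phi}^{(\mathfrak{m}, \boldsymbol{\eta}_2)}(\alpha; \rho) \|_{\mathcal{W}} \leq C_1(\|\mathfrak{m}\|_{\infty}) \, \| \boldsymbol{\eta}_1(\alpha; \rho) - \boldsymbol{\eta}_2(\alpha; \rho) \|_{\mathcal{U}_{\Lambda^{(\vec{\ell})}}^*}\ .
\]
Inserted into Proposition~\ref{Theorem Lieb-Robinson copy(3)}~(iii), this yields a contraction constant of order $T \cdot C_2(\|\mathfrak{m}\|_{\infty}, \Lambda, \vec{\ell})$, so $\Xi_s$ is a strict contraction on $\mathfrak{X}_s$ for $T$ sufficiently small.

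By Banach's fixed-point theorem, $\Xi_s$ has a unique fixed point $\boldsymbol{\eta}_s \in \mathfrak{X}_s$. Setting $\mathbf{\varpi}^{\mathfrak{m}}(s, t) := \phi_{t, s}^{\mathbf{\Phi}^{(\mathfrak{m}, \boldsymbol{\eta}_s)}} \in \mathrm{Aut}(E)$ then satisfies the self-consistency equation: its restriction to $\mathcal{U}_{\Lambda^{(\vec{\ell})}}$ equals $\boldsymbol{\eta}_s$ by construction, hence $\mathbf{\Phi}^{(\mathfrak{m}, \mathbf{\varpi}^{\mathfrak{m}}(s, \cdot))} = \mathbf{\Phi}^{(\mathfrak{m}, \boldsymbol{\eta}_s)}$, and the desired identity follows. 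Since the contraction constant depends only on $T$ and $\|\mathfrak{m}\|_\infty$ (not on $s$), this local solution extends uniquely to all of $\mathbb{R}$ by iterating on overlapping intervals, and uniqueness in $C(\mathbb{R}; \mathrm{Aut}(E))$ follows because any other solution projects to a fixed point of $\Xi_s$ on each subinterval.

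Joint continuity of $\mathbf{\varpi}^{\mathfrak{m}}$ on $\mathbb{R}^2$ reduces to continuity of the fixed point $\boldsymbol{\eta}_s$ in $s$. This follows from the uniform continuity of $\tau_{t, s}^{\Psi}$ in $(s, t)$ given by Proposition~\ref{Theorem Lieb-Robinson copy(3)}~(iv), combined with standard continuous-parameter dependence for Banach contractions. The main technical obstacle is establishing the contraction in a \emph{complete} metric tractable enough for the iteration to close; the finite-dimensionality of $\mathcal{U}_{\Lambda^{(\vec{\ell})}}$, which is precisely what the hypothesis $\mathfrak{m} \in \mathcal{M}_\Lambda$ provides, is what renders the target $\mathcal{U}_{\Lambda^{(\vec{\ell})}}^*$ manageable and thereby allows the telescoping expansion above to serve as the required Lipschitz bound.
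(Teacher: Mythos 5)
Your proof takes the same core route as the paper's (Lemmata \ref{Solution self}--\ref{lemma well copy(1)}): a Banach contraction argument on a short time interval, made possible by the finite-dimensional reduction coming from $\mathfrak{m}\in C_b(\mathbb{R};\mathcal{M}_\Lambda)$ --- namely that $\mathbf{\Phi}^{(\mathfrak{m},\mathbf{\xi})}$ depends on $\mathbf{\xi}$ only through the restriction to the finite-dimensional algebra $\mathcal{U}_{\Lambda^{(\vec{\ell})}}$ --- followed by a Lipschitz estimate derived from the telescoping expansion and Proposition~\ref{Theorem Lieb-Robinson copy(3)}~(iii), then a local-to-global extension. This is precisely the skeleton of Lemma~\ref{Solution self}. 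Your one genuine variation is to run the contraction in the space $C(I_s\times E;E_{\Lambda^{(\vec{\ell})}})$ of $\rho$-dependent functions, so that weak$^\ast$-continuity in the initial state comes out of the fixed point for free; the paper instead fixes $\rho$, contracts in $C([s-T,s+T];E_\Lambda)$, and devotes a separate Gr\"{o}nwall-plus-compactness argument (Lemma~\ref{lemma contnuity}) to continuity in $\rho$. Your packaging is cleaner on this point, and the two yield the same estimate.

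However, two places are handled too briskly. First, the joint continuity $\mathbf{\varpi}^{\mathfrak{m}}\in C(\mathbb{R}^2;\mathrm{Aut}(E))$ is not a ``standard continuous-parameter dependence'' corollary, because the fixed-point spaces $\mathfrak{X}_s$ live over $s$-dependent intervals $I_s$, so the contractions $\Xi_s$ for different $s$ act on different Banach spaces and cannot be compared directly; the paper resolves this with a genuinely separate second contraction, now in $C([s-T,s+T]^2;E_\Lambda)$ jointly in both time arguments (Lemma~\ref{lemma well copy(1)}), together with a compactness/contradiction argument (Lemma~\ref{lemma well}), and something of that nature is needed. Second, you conclude $\mathbf{\varpi}^{\mathfrak{m}}(s,t)\in\mathrm{Aut}(E)$ by asserting that $\phi_{t,s}^{\mathbf{\Psi}}$ is automatically an automorphism; note that for a general $\mathbf{\Psi}\in C(\mathbb{R};\mathfrak{W}^{\mathbb{R}})$ the inverse of $\rho\mapsto\rho\circ\tau_{t,s}^{\mathbf{\Psi}(\rho)}$ is itself defined by a self-consistency relation, which is of the same kind as the one you are trying to solve. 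The paper sidesteps this by first proving the cocycle identity $\varpi_{\rho,s}(t)=\varpi_{\varpi_{\rho,s}(r),r}(t)$ (Step~3 of Lemma~\ref{Solution self}) --- a consequence of the reverse cocycle property (\ref{reverse cocycle}) of $\tau$ --- and deducing bijectivity of $\varpi_s(t)$ from that (Corollary~\ref{bijectivity}) rather than from any generic property of $\phi_{t,s}^{\mathbf{\Psi}}$. Your ``extend by iterating on overlapping intervals'' implicitly contains the cocycle, but it should be made explicit, both for the extension to $\mathbb{R}$ and for bijectivity.
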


\begin{proof}
The theorem is a consequence of Lemmata \ref{Solution self} and \ref{lemma
well copy(1)}.
\end{proof}

\begin{remark}
\mbox{ }\newline
Section \ref{Well-posedness sect} proves stronger results than Theorem \ref%
{theorem sdfkjsdklfjsdklfj}. See, in particular, Lemma \ref%
{Differentiability}.
\end{remark}

\begin{remark}
\mbox{ }\newline
Self-consistency equations have been, very recently, also highlighted in 
\cite{4} by studying particles governed by the Vlasov equation and
interacting with an oscillatory environment. In this case, self-consistency
equations turn out to be essential to describe the behavior of this system
at large times, giving an innovative use of such equations in EDP.
\end{remark}

At fixed $\Lambda \in \mathcal{P}_{f}$ and long-range model $\mathfrak{m}\in
C_{b}(\mathbb{R};\mathcal{M}_{\Lambda })$, Theorem \ref{theorem
sdfkjsdklfjsdklfj} means that, for any $s,t\in {\mathbb{R}}$, $\rho \in E$
and $A\in \mathcal{U}$, 
\begin{equation}
\rho _{s,t}\left( A\right) =\rho \circ \tau _{t,s}^{\mathbf{\Phi }^{(%
\mathfrak{m},\mathbf{\varpi }^{\mathfrak{m}}\left( \alpha ,\cdot \right)
)}\left( \rho \right) }\left( A\right) |_{\alpha =s}\qquad \text{with}\qquad
\rho _{s,t}\doteq \mathbf{\varpi }^{\mathfrak{m}}\left( s,t;\rho \right)
\doteq \mathbf{\varpi }^{\mathfrak{m}}\left( s,t\right) \left( \rho \right)
\in E\ .  \label{equations nonlinear}
\end{equation}%
See Equations (\ref{notation state interactionbis}), (\ref{definiotion tho
frac}) and (\ref{phitsbis}). Let $\left( \mathcal{C}_{n}\right) _{n\in 
\mathbb{N}}$ be an arbitrary family of closed sets 
\begin{equation*}
\mathcal{C}_{n}\subseteq (\mathbb{S}\cap \mathcal{W}_{\Lambda })^{n}\
,\qquad n\in \mathbb{N}\ ,
\end{equation*}%
such that, for $t\in \mathbb{R}$, 
\begin{equation*}
|\mathfrak{\mathfrak{a}}(t)_{n}|(\mathcal{C}_{n})=|\mathfrak{\mathfrak{a}}%
(t)_{n}|(\mathbb{S}^{n})\quad \text{with}\quad \mathfrak{m}=(\Phi (t),%
\mathfrak{\mathfrak{a}}(t))_{t\in \mathbb{R}}\in C_{b}(\mathbb{R};\mathcal{M}%
_{\Lambda })
\end{equation*}%
(cf. (\ref{eq:enpersitebis}) and (\ref{S00bis})-(\ref{S0bis})). Then, (\ref%
{equations nonlinear}) for the time-dependent expectation 
\begin{equation*}
\rho _{s,t}\left( A\right) \in {\mathbb{C}}\ ,\qquad s,t\in {\mathbb{R}}\ ,
\end{equation*}%
of elements 
\begin{equation*}
A\in \mathcal{V}_{\mathfrak{m}}\doteq \left\{ \mathfrak{e}_{\Psi ,\vec{\ell}%
}:\Psi \in \mathcal{C}_{n},\ n\in \mathbb{N}\right\}
\end{equation*}%
leads to a systems of\ non-autonomous, coupled and non-linear equations, in
general. These self-consistency equations are strongly related to the
self-consistency equations (gap equations) explained in \cite[Section 2.8]%
{BruPedra2} for the special case of (generalized) equilibrium states. By
contrast, Equations (\ref{equations nonlinear}) for elements $A\in \mathcal{U%
}\backslash \mathcal{V}_{\mathfrak{m}}$ are not coupled to each other.

Last but not least, for any $\mathfrak{m}\in C_{b}(\mathbb{R};\mathcal{M}%
_{\Lambda })$, observe from (\ref{symmetry group eq1}) that 
\begin{equation*}
\mathbf{\varpi }^{\mathfrak{m}}\left( s,t;E^{+}\right) \subseteq E^{+}\
,\qquad \mathbf{\varpi }^{\mathfrak{m}}\left( s,t;E\backslash E^{+}\right)
\subseteq E\backslash E^{+}\ ,\qquad s,t\in \mathbb{R}\ .
\end{equation*}%
Recall that $E^{+}$ is the weak$^{\ast }$-compact convex set of even states
defined by (\ref{gauge invariant states}). When long-range models are
translation-invariant, that is, if $\mathfrak{m}\in C_{b}(\mathbb{R};%
\mathcal{M}_{1}\cap \mathcal{M}_{\Lambda })$ (see (\ref{translatino
invariatn long range models})), the non-autonomous approximating
interactions of Definition \ref{definition BCS-type model approximated} are
also translation-invariant. By (\ref{G-invariance states3}), it follows, in
this case, that $\mathbf{\varpi }^{\mathfrak{m}}\left( s,t\right) $ maps
periodic states to periodic states: for any $\vec{\ell}\in \mathbb{N}^{d}$, 
\begin{equation}
\mathbf{\varpi }^{\mathfrak{m}}\left( s,t;E_{\vec{\ell}}\right) \subseteq E_{%
\vec{\ell}}\ ,\qquad \mathbf{\varpi }^{\mathfrak{m}}\left( s,t;E\backslash
E_{\vec{\ell}}\right) \subseteq E\backslash E_{\vec{\ell}}\ ,\qquad s,t\in 
\mathbb{R}\ .  \label{invariance translation}
\end{equation}%
Recall that $E_{\vec{\ell}}\subseteq E$ is the weak$^{\ast }$-compact convex
subset of $\vec{\ell}$-periodic states defined by (\ref{periodic invariant
states}). Like $E$, it has a dense set of extreme points ($\vec{\ell}$%
-ergodic states). Additionally, in this case, similar to \cite[Corollary 4.3]%
{Bru-pedra-MF-I}, for every $\vec{\ell}\in \mathbb{N}^{d}$, the set $%
\mathcal{E}(E_{\vec{\ell}})$ of extreme points of $E_{\vec{\ell}}$\ is
conserved by the flow.

\subsection{Classical Part of Long-Range Dynamics}

Similar to (\ref{shorodinger dynamicsbis}), the continuous family $\mathbf{%
\varpi }^{\mathfrak{m}}$ of Theorem \ref{theorem sdfkjsdklfjsdklfj} yields a
family $(V_{t,s}^{\mathfrak{m}})_{s,t\in \mathbb{R}}$ of $\ast $-auto%
\-%
morphisms of $\mathfrak{C}$ defined by 
\begin{equation}
V_{t,s}^{\mathfrak{m}}\left( f\right) \doteq f\circ \mathbf{\varpi }^{%
\mathfrak{m}}\left( s,t\right) \ ,\qquad f\in \mathfrak{C},\ s,t\in \mathbb{R%
}\ .  \label{classical evolution family}
\end{equation}%
It is a strongly continuous two-parameter family defining a classical
dynamics on the commutative $C^{\ast }$-algebra $\mathfrak{C}$ of continuous
complex-valued functions on states defined by (\ref{metaciagre set 2})-(\ref%
{metaciagre set 2bis}):

\begin{proposition}[Classical dynamics as Feller evolution system]
\label{lemma poisson copy(2)}\mbox{
}\newline
Fix $\Lambda \in \mathcal{P}_{f}$ and $\mathfrak{m}\in C_{b}(\mathbb{R};%
\mathcal{M}_{\Lambda })$. Then, $(V_{t,s}^{\mathfrak{m}})_{s,t\in \mathbb{R}%
} $ is a strongly continuous two-parameter family of $\ast $-automorphisms
of $\mathfrak{C}$ satisfying the reverse cocycle property:%
\begin{equation*}
\forall s,r,t\in \mathbb{R}:\qquad V_{t,s}^{\mathfrak{m}}=V_{r,s}^{\mathfrak{%
m}}\circ V_{t,r}^{\mathfrak{m}}\ .
\end{equation*}%
If $\mathfrak{m}\in \mathcal{M}_{\Lambda }\varsubsetneq C_{b}(\mathbb{R};%
\mathcal{M}_{\Lambda })$, i.e., $\mathfrak{m}$ is constant in time, then $%
V_{t,s}^{\mathfrak{m}}=V_{t-s,0}^{\mathfrak{m}}$ for any $s,t\in \mathbb{R}$%
\ and $(V_{t,0}^{\mathfrak{m}})_{t\in \mathbb{R}}$ is a $C_{0}$-group of $%
\ast $-automorphisms of $\mathfrak{C}$.
\end{proposition}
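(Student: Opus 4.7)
The plan is to read off each claim from a corresponding structural property of the self-consistent family $\mathbf{\varpi}^{\mathfrak{m}}\in C(\mathbb{R}^{2};\mathrm{Aut}(E))$ supplied by Theorem \ref{theorem sdfkjsdklfjsdklfj}, via the defining formula $V^{\mathfrak{m}}_{t,s}(f)=f\circ \mathbf{\varpi}^{\mathfrak{m}}(s,t)$. Since each $\mathbf{\varpi}^{\mathfrak{m}}(s,t)$ is a weak-$\ast $ self-homeomorphism of the compact state space $E$, the pullback $f\mapsto f\circ \mathbf{\varpi}^{\mathfrak{m}}(s,t)$ is a unital $\ast $-endomorphism of $\mathfrak{C}=C(E;\mathbb{C})$ whose inverse is the pullback by $\mathbf{\varpi}^{\mathfrak{m}}(s,t)^{-1}$, hence a $\ast $-automorphism. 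Strong continuity of $(s,t)\mapsto V^{\mathfrak{m}}_{t,s}(f)$ in the $\mathfrak{C}$-norm follows from the topology of $\mathrm{Aut}(E)$ declared in (\ref{uniform convergence weak*}): convergence $\mathbf{\varpi}^{\mathfrak{m}}(s_{n},t_{n})\to \mathbf{\varpi}^{\mathfrak{m}}(s,t)$ means uniformity on $E$ for each test observable, which via a countable separating family in $\mathcal{U}$ implies uniform convergence with respect to a weak-$\ast $ metric on the compact metrizable space $E$; any $f\in\mathfrak{C}$ is then uniformly continuous on $E$, and composition preserves uniform convergence.

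The key step is the cocycle identity
\begin{equation*}
\mathbf{\varpi}^{\mathfrak{m}}(s,t)=\mathbf{\varpi}^{\mathfrak{m}}(r,t)\circ \mathbf{\varpi}^{\mathfrak{m}}(s,r),\qquad s,r,t\in\mathbb{R},
\end{equation*}
from which the reverse cocycle $V^{\mathfrak{m}}_{t,s}=V^{\mathfrak{m}}_{r,s}\circ V^{\mathfrak{m}}_{t,r}$ is immediate, since pullback reverses composition. To prove it, fix $\rho\in E$ and $s\le r\le t$ and set $\rho_{r}\doteq \mathbf{\varpi}^{\mathfrak{m}}(s,r;\rho)$. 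Combining the defining relation $\mathbf{\varpi}^{\mathfrak{m}}(s,t;\rho)=\rho\circ \tau^{\mathbf{\Phi}^{(\mathfrak{m},\mathbf{\varpi}^{\mathfrak{m}}(s,\cdot))}(\rho)}_{t,s}$ with the reverse cocycle of $\tau$ from Proposition \ref{Theorem Lieb-Robinson} yields $\mathbf{\varpi}^{\mathfrak{m}}(s,t;\rho)=\rho_{r}\circ \tau^{\mathbf{\Phi}^{(\mathfrak{m},\mathbf{\varpi}^{\mathfrak{m}}(s,\cdot))}(\rho)}_{t,r}$. The trajectory $\sigma(t)\doteq\mathbf{\varpi}^{\mathfrak{m}}(s,t;\rho)$ restricted to $t\ge r$ therefore solves the trajectory-level self-consistency problem with initial datum $(\rho_{r},r)$: by Definition \ref{definition BCS-type model approximated}, the effective interaction $\mathbf{\Phi}^{(\mathfrak{m},\mathbf{\varpi}^{\mathfrak{m}}(s,\cdot))}(\rho)$ evaluated at time $\alpha$ depends only on $\sigma(\alpha)$, so its restriction to $[r,\infty)$ depends only on $\sigma|_{[r,\infty)}$. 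Uniqueness in Theorem \ref{theorem sdfkjsdklfjsdklfj} then forces $\sigma(t)=\mathbf{\varpi}^{\mathfrak{m}}(r,t;\rho_{r})$ for $t\ge r$, which is the cocycle; the case $t<r$ follows by applying the same argument to the reversed triple. This is the main obstacle, because of the circularity built into state-dependent interactions: the very interaction that would drive the evolution from $r$ to $t$ is defined through the unknown mean-field trajectory one is trying to compare, and only the uniqueness of the self-consistency problem disentangles the two branches.

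For the autonomous case $\mathfrak{m}\in\mathcal{M}_{\Lambda}$, the self-consistency equation is manifestly invariant under the global time translation $(s,t)\mapsto(s+h,t+h)$: when $\mathfrak{m}$ is constant one has $\mathbf{\Phi}^{(\mathfrak{m},\mathbf{\xi}(\cdot+h))}(t)=\mathbf{\Phi}^{(\mathfrak{m},\mathbf{\xi})}(t+h)$, and the shifted family $(s,t)\mapsto \mathbf{\varpi}^{\mathfrak{m}}(s+h,t+h)$ satisfies the same fixed-point equation as $\mathbf{\varpi}^{\mathfrak{m}}$. Uniqueness in Theorem \ref{theorem sdfkjsdklfjsdklfj} yields $\mathbf{\varpi}^{\mathfrak{m}}(s,t)=\mathbf{\varpi}^{\mathfrak{m}}(0,t-s)$, hence $V^{\mathfrak{m}}_{t,s}=V^{\mathfrak{m}}_{t-s,0}$. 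Combined with the reverse cocycle just proven, this delivers the group law $V^{\mathfrak{m}}_{t+r,0}=V^{\mathfrak{m}}_{t,0}\circ V^{\mathfrak{m}}_{r,0}$, and the strong continuity established above promotes $(V^{\mathfrak{m}}_{t,0})_{t\in\mathbb{R}}$ to a $C_{0}$-group of $\ast $-automorphisms of $\mathfrak{C}$, concluding the proof.
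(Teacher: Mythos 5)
Your argument is correct and fills in exactly the details the paper omits (the paper's proof simply refers to an adaptation of a result in a companion paper plus the material of the well-posedness section). The three pillars are the same: (i) pullback by the self-homeomorphisms $\mathbf{\varpi}^{\mathfrak{m}}(s,t)\in\mathrm{Aut}(E)$ gives $\ast$-automorphisms of $\mathfrak{C}$, with strong continuity coming from $\mathbf{\varpi}^{\mathfrak{m}}\in C(\mathbb{R}^{2};\mathrm{Aut}(E))$ (Lemma \ref{lemma well copy(1)}) and the uniform continuity of $f\in\mathfrak{C}$ on the compact metrizable $E$; (ii) the reverse cocycle of $V^{\mathfrak{m}}$ is the pullback of the (ordinary) cocycle for $\mathbf{\varpi}^{\mathfrak{m}}$, which is exactly Corollary \ref{Corollary bije+cocyl} — you re-derive it from Lemma \ref{Solution self} plus the reverse cocycle of $\tau$ and uniqueness, which is valid but redundant; and (iii) time-translation invariance of the self-consistency equation in the autonomous case, proved via uniqueness, exactly as intended. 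Two minor remarks: the uniqueness you invoke in the cocycle step is really the per-initial-datum statement of Lemma \ref{Solution self} rather than the global statement of Theorem \ref{theorem sdfkjsdklfjsdklfj}; and the restriction to $t\ge r$ in the cocycle argument is unnecessary since the uniqueness of Lemma \ref{Solution self} is already global in $t$, so no separate ``reversed triple'' case is required.
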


\begin{proof}
In order to prove these assertions, one simply adapts the argument used to
prove \cite[Proposition 3.4]{Bru-pedra-MF-I}, having in mind results of
Section \ref{Well-posedness sect}. We omit the details.
\end{proof}

Like (\ref{shorodinger dynamicsbis}), $(V_{t,s}^{\mathfrak{m}})_{s,t\in 
\mathbb{R}}$ can be associated with a Feller process in probability theory.
If $\mathfrak{m}\in C_{b}(\mathbb{R};\mathcal{M}_{\Lambda }\cap \mathcal{M}%
_{1})$, note that the classical flow conserves the Poulsen simplex $E_{\vec{%
\ell}}$, $\vec{\ell}\in \mathbb{N}^{d}$, and $V_{t,s}^{\mathfrak{m}}$ can be
seen as either a mapping from $C(E_{\vec{\ell}};\mathbb{C})$ to itself or
from $C(E\backslash E_{\vec{\ell}};\mathbb{C})$ to itself: 
\begin{equation}
V_{t,s}^{\mathfrak{m}}(f|_{E_{\vec{\ell}}})\doteq (V_{t,s}^{\mathfrak{m}%
}f)|_{E_{\vec{\ell}}}\ ,\qquad V_{t,s}^{\mathfrak{m}}(f|_{E\backslash E_{%
\vec{\ell}}})\doteq (V_{t,s}^{\mathfrak{m}}f)|_{E\backslash E_{\vec{\ell}}}\
,\qquad f\in \mathfrak{C},\ s,t\in \mathbb{R}\ ,
\label{translation invaration encore}
\end{equation}%
using (\ref{invariance translation}). Compare with Equation (\ref%
{G-invariance states4}). For all $\mathfrak{m}\in C_{b}(\mathbb{R};\mathcal{M%
}_{\Lambda })$, the same holds true for the weak$^{\ast }$-compact convex
set $E^{+}$ of even states.

For any constant function $\mathfrak{m}\in \mathcal{M}_{\Lambda
}\varsubsetneq C_{b}(\mathbb{R};\mathcal{M}_{\Lambda })$, $(V_{t,0}^{%
\mathfrak{m}})_{t\in \mathbb{R}}$ is a $C_{0}$-group of $\ast $%
-automorphisms of $\mathfrak{C}$ and we denote by $\daleth ^{\mathfrak{m}}$
its (well-defined) generator. By \cite[Chap. II, Sect. 3.11]{EngelNagel}, it
is a closed (linear) operator densely defined in $\mathfrak{C}$. Since $%
V_{t,0}^{\mathfrak{m}}$, $t\in \mathbb{R}$, are $\ast $-automorphisms, we
infer from the Nelson theorem \cite[Theorem 1.5.4]{Bratelli-derivation}, or
the Lumer-Phillips theorem \cite[Theorem 3.1.16]{BrattelliRobinsonI}, that $%
\pm \daleth ^{\mathfrak{m}}$ are dissipative operators, i.e., $\daleth ^{%
\mathfrak{m}}$ is conservative. The $\ast $-homomorphism property of $%
V_{t,0}^{\mathfrak{m}}$, $t\in \mathbb{R}$, is reflected by the fact that $%
\daleth ^{\mathfrak{m}}$ has to be a symmetric derivation of $\mathfrak{C}$.
In fact, similar to \cite[Theorem 4.5]{Bru-pedra-MF-I}, $\daleth ^{\mathfrak{%
m}}$ is directly related to a Poissonian symmetric derivation.

In order to understand this fact, we need to figure out the appropriate 
\emph{classical energy functions}, which correspond in \cite[Theorem 4.5]%
{Bru-pedra-MF-I} to the function $h$. Recall that $\hat{A}\in \mathfrak{C}$
is the continuous and affine function defined by (\ref{fA}) for any $A\in 
\mathcal{U}$, while $\{\cdot ,\cdot \}$ is the Poisson bracket of Definition %
\ref{convex Frechet derivative copy(1)}. Now, having in mind Definition \ref%
{definition long range energy}, it is natural to define\ the following
family of classical energy functions of $\mathfrak{C}$:

\begin{definition}[Classical energy functions of long-range dynamics]
\label{definition BCS-type model approximated copy(1)}\mbox{ }\newline
For any $\mathfrak{m}\in \mathcal{M}$, we define the functions%
\begin{equation*}
\mathrm{h}_{L}^{\mathfrak{m}}\doteq \widehat{U_{L}^{\Phi }}+\sum_{n\in 
\mathbb{N}}\frac{1}{\left\vert \Lambda _{L}\right\vert ^{n-1}}\int_{\mathbb{S%
}^{n}}\ \widehat{U_{L}^{\Psi ^{(1)}}}\cdots \widehat{U_{L}^{\Psi ^{(n)}}}\ 
\mathfrak{a}\left( t\right) _{n}\left( \mathrm{d}\Psi ^{(1)},\ldots ,\mathrm{%
d}\Psi ^{(n)}\right) \in \mathfrak{C}^{\mathbb{R}}\,,\qquad L\in \mathbb{N}\
,
\end{equation*}%
which we name the local classical energy functions associated with $%
\mathfrak{m}$.
\end{definition}

\noindent The integral in Definition \ref{definition BCS-type model
approximated copy(1)} is well-defined by the same reasons than in Definition %
\ref{definition long range energy}. It is important to stress that, although
these two definitions look similar, $\mathrm{h}_{L}^{\mathfrak{m}}\neq 
\widehat{U_{L}^{\mathfrak{m}}}$, in general. Local classical energy
functions are continuously differentiable real-valued functions over the
convex and weak$^{\ast }$-compact set $E$, i.e., $\{\mathrm{h}_{L}^{%
\mathfrak{m}}\}_{L\in \mathbb{N}}\subseteq \mathfrak{Y}^{\mathbb{R}}$, the
subspace $\mathfrak{Y}^{\mathbb{R}}\subseteq \mathfrak{C}^{\mathbb{R}}$
being defined by (\ref{C1}): For any $\mathfrak{m}\in \mathcal{M}$, by
straightforward estimates using Equations (\ref{norm Uphi}), (\ref%
{definition 0})-(\ref{definition 0bis}) and Definition \ref{def long range}
together with Lebesgue's dominated convergence theorem, one checks that, for
any $L\in \mathbb{N}$, $\mathrm{h}_{L}^{\mathfrak{m}}$ is continuously
differentiable and%
\begin{equation}
\mathrm{Dh}_{L}^{\mathfrak{m}}=U_{L}^{\Phi }-\widehat{U_{L}^{\Phi }}%
+\sum_{n\in \mathbb{N}}\sum_{m=1}^{n}\int_{\mathbb{S}^{n}}\left( U_{L}^{\Psi
^{(m)}}-\widehat{U_{L}^{\Psi ^{(m)}}}\right) \prod\limits_{j\in \left\{
1,\ldots ,n\right\} ,j\neq m}\frac{\widehat{U_{L}^{\Psi ^{(j)}}}}{\left\vert
\Lambda _{L}\right\vert }\mathfrak{a}\left( t\right) _{n}\left( \mathrm{d}%
\Psi ^{(1)},\ldots ,\mathrm{d}\Psi ^{(n)}\right) \ ,
\label{derivative classical}
\end{equation}%
where $\mathrm{Dh}_{L}^{\mathfrak{m}}\in C(E;\mathcal{U}^{\mathbb{R}})=%
\mathfrak{U}^{\mathbb{R}}$ is the function defined by (\ref{clear2}) for $f=%
\mathrm{h}_{L}^{\mathfrak{m}}$. Moreover, by (\ref{norm Uphi}), (\ref{norm
properties}), (\ref{C1bis}) and (\ref{norm affine2}), for any $\mathfrak{m}%
\in \mathcal{M}$,%
\begin{equation*}
\left\Vert \mathrm{h}_{L}^{\mathfrak{m}}\right\Vert _{\mathfrak{Y}%
}=\left\Vert \mathrm{h}_{L}^{\mathfrak{m}}\left( \rho \right) \right\Vert _{%
\mathfrak{C}}+\max_{\rho \in E}\left\Vert \mathrm{Dh}_{L}^{\mathfrak{m}%
}\left( \rho \right) \right\Vert _{\mathcal{U}}\leq 3\left\vert \Lambda
_{L}\right\vert \left\Vert \mathbf{F}\right\Vert _{1,\mathfrak{L}}\left\Vert 
\mathfrak{m}\right\Vert _{\mathcal{M}}\ ,\qquad L\in \mathbb{N}\ ,
\end{equation*}%
which is similar to Inequality (\ref{energy bound long range}) that bounds
the norm of the Hamiltonians $U_{L}^{\mathfrak{m}}$, $L\in \mathbb{N}$.

Last but not least, it is very instructive to compare (\ref{derivative
classical}) with the local Hamiltonians associated with approximating
interactions of Definition \ref{definition BCS-type model approximated}, in
the light of Proposition \ref{density of periodic states copy(2)}: All terms
of the form 
\begin{equation*}
\frac{\widehat{U_{L}^{\Psi }}\left( \rho \right) }{\left\vert \Lambda
_{L}\right\vert }=\frac{\rho \left( U_{L}^{\Psi }\right) }{\left\vert
\Lambda _{L}\right\vert }\ ,\qquad \Psi \in \mathbb{S},\ \rho \in E,
\end{equation*}%
in (\ref{derivative classical}) should converge, as $L\rightarrow \infty $.
By Proposition \ref{density of periodic states copy(2)}, we know this holds
true whenever $\rho $ is a periodic state, the limit being $\rho (\mathfrak{e%
}_{\Phi ,\vec{\ell}})$ for some $\vec{\ell}\in \mathbb{N}^{d}$. The
periodicity of states is therefore a very useful\ property in this context.
Recall that periodic states form a weak$^{\ast }$-dense set $E_{\mathrm{p}}$
(\ref{set of periodic states}) of the physically relevant set $E^{+}$\ (\ref%
{gauge invariant states}) of even states, by Proposition \ref{density of
periodic states}.

So, restricting our study to periodic states, we are now in a position to
link the generator $\daleth ^{\mathfrak{m}}$ to a Poissonian symmetric
derivation:

\begin{corollary}[Classical evolutions via Poisson brackets]
\label{Closed Poissonian symmetric derivations copy(1)}\mbox{
}\newline
For any $\Lambda \in \mathcal{P}_{f}$\ and $\mathfrak{m}\in \mathcal{M}%
_{\Lambda }$, the dense $\ast $-subalgebra $\mathfrak{C}_{\mathcal{U}_{0}}$ (%
\ref{CU0}) belongs to the domain of $\daleth ^{\mathfrak{m}}$ and%
\begin{equation*}
\daleth ^{\mathfrak{m}}\left( f\right) |_{E_{\mathrm{p}}}=\lim_{L\rightarrow
\infty }\left\{ \mathrm{h}_{L}^{\mathfrak{m}},f\right\} |_{E_{\mathrm{p}%
}},\qquad f\in \mathfrak{C}_{\mathcal{U}_{0}}\ ,
\end{equation*}%
where the limit has to be understood point-wise on the weak$^{\ast }$-dense
subspace $E_{\mathrm{p}}\subseteq E^{+}$ of all periodic states.
\end{corollary}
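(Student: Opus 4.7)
The plan is to reduce the identity to the case $f=\hat A$ with $A\in\mathcal U_0$ via the Leibniz rule, and then match two explicit computations: first, the generator $\daleth^{\mathfrak m}(\hat A)(\rho)$ derived from differentiating the Feller flow $V^{\mathfrak m}_{t,0}$ at $t=0$ through the self-consistency representation \eqref{equations nonlinear}; second, the limit $\lim_L\{\mathrm h^{\mathfrak m}_L,\hat A\}(\rho)$ computed from \eqref{derivative classical}, restricted to $\rho\in E_{\mathrm p}$. Since $\mathfrak m$ is constant in time, Proposition \ref{lemma poisson copy(2)} ensures that $\daleth^{\mathfrak m}$ is the densely defined generator of a $C_0$-group, so placing $\hat A$ in $\mathrm{Dom}(\daleth^{\mathfrak m})$ amounts to establishing norm differentiability of $t\mapsto V^{\mathfrak m}_{t,0}(\hat A)$ at $t=0$ in $\mathfrak C$.

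For the generator side, I would use
\[
V^{\mathfrak m}_{t,0}(\hat A)(\rho)-\hat A(\rho) = \rho\!\left(\tau^{\mathbf{\Phi}^{(\mathfrak m,\mathbf{\varpi}^{\mathfrak m}(0,\cdot))}(\rho)}_{t,0}(A)-A\right).
\]
Duhamel's formula applied to \eqref{cauchy trivial1}, combined with the uniform-in-$\rho$ bound $\|\mathbf{\Phi}^{(\mathfrak m,\mathbf{\varpi}^{\mathfrak m}(0,\cdot))}(t;\rho)\|_{\mathcal W}\le\|\mathfrak m\|_{\mathcal M}$ from Lemma \ref{definition BCS-type model approximated copy(2)}, rewrites the right-hand side as $t\,\rho(\delta^{\tilde\Phi(\rho)}(A))+o(t)$ with the classical effective interaction $\tilde\Phi(\rho)\doteq\mathbf{\Phi}^{(\mathfrak m,\rho)}(0)$. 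The remainder is controlled uniformly in $\rho\in E$ by combining the Lipschitz estimates of Proposition \ref{Theorem Lieb-Robinson copy(3)} (iii)-(iv) with the joint continuity of $(s,\rho)\mapsto\mathbf{\Phi}^{(\mathfrak m,\mathbf{\varpi}^{\mathfrak m}(0,\cdot))}(s;\rho)$ on the compact set $E$, which follows from Theorem \ref{theorem sdfkjsdklfjsdklfj} and Lemma \ref{definition BCS-type model approximated copy(2)}. This places $\hat A$ in $\mathrm{Dom}(\daleth^{\mathfrak m})$ and yields $\daleth^{\mathfrak m}(\hat A)(\rho)=\rho(\delta^{\tilde\Phi(\rho)}(A))$ for every $\rho\in E$.

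For the Poisson side, inserting \eqref{derivative classical} into Definition \ref{convex Frechet derivative copy(1)} and exploiting the cancellation of the scalar terms in the commutator with $A$ gives
\[
\{\mathrm h^{\mathfrak m}_L,\hat A\}(\rho) = \rho(\delta^{\Phi}_L(A)) + \sum_{n\in\mathbb N}\int_{\mathbb S^n}\sum_{m=1}^n\rho(\delta^{\Psi^{(m)}}_L(A))\prod_{j\neq m}\frac{\rho(U^{\Psi^{(j)}}_L)}{|\Lambda_L|}\,\mathfrak a_n.
\]
For $\rho\in E_{\vec\ell}\subseteq E_{\mathrm p}$, Proposition \ref{density of periodic states copy(2)} gives $|\Lambda_L|^{-1}\rho(U^{\Psi^{(j)}}_L)\to\rho(\mathfrak e_{\Psi^{(j)},\vec\ell})$, while Corollary \ref{Lemma cigare1} yields $\delta^{\Psi^{(m)}}_L(A)\to\delta^{\Psi^{(m)}}(A)$ in $\mathcal U$. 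The integrand is majorized uniformly in $L$ by $C\,n\,|\Lambda_A|\,\|A\|_{\mathcal U}\,\|\mathbf F\|_{1,\mathfrak L}^n\prod_j\|\Psi^{(j)}\|_{\mathcal W}$ via Proposition \ref{Lemma cigare0} and \eqref{e phi}, so the summability weight $n^2\|\mathbf F\|_{1,\mathfrak L}^{n-1}$ in \eqref{definition 0bis} allows Lebesgue's dominated convergence to interchange $\lim_L$ with the $n$-sum and the $\mathbb S^n$-integral. Linearity of $\Psi\mapsto\delta^{\Psi}$ and the definition \eqref{def aussi utile} of $\lfloor\,\cdot\,\rfloor_{\vec\ell}$ then collapse the limit to $\rho(\delta^{\tilde\Phi(\rho)}(A))$, matching the generator side.

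Finally, for a polynomial $f=\hat A_1\cdots\hat A_k\in\mathfrak C_{\mathcal U_0}$ the Leibniz rule $\mathrm Df(\rho)=\sum_i\mathrm D\hat A_i(\rho)\prod_{j\neq i}\hat A_j(\rho)$ propagates both identities: the Poisson bracket splits as $\{\mathrm h^{\mathfrak m}_L,f\}=\sum_i\{\mathrm h^{\mathfrak m}_L,\hat A_i\}\prod_{j\neq i}\hat A_j$ by bi-derivation, while $V^{\mathfrak m}_{t,0}(f)(\rho)=\prod_i\rho(\tau^{\mathbf{\Phi}^{(\mathfrak m,\mathbf{\varpi}^{\mathfrak m}(0,\cdot))}(\rho)}_{t,0}(A_i))$ differentiates term by term at $t=0$ with the same product structure; invoking the $f=\hat A$ case componentwise closes the argument. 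I expect the main technical obstacle to be the $\rho$-uniform norm convergence in the first step: the Duhamel remainder must be $o(t)$ in $\|\cdot\|_{\mathfrak C}$, which requires simultaneously exploiting the $\rho$-independent Lieb-Robinson estimates of Proposition \ref{Theorem Lieb-Robinson copy(3)} and the joint continuity of $\mathbf{\varpi}^{\mathfrak m}(0,\cdot;\cdot)$ supplied by Theorem \ref{theorem sdfkjsdklfjsdklfj}, so that the pointwise Taylor expansion promotes to a uniform estimate on the weak$^*$-compact set $E$.
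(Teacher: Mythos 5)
Your proposal is correct and follows essentially the same strategy as the paper's own proof: compute $\lim_{L}\{\mathrm{h}_{L}^{\mathfrak{m}},f\}(\rho)=\rho\circ\delta^{\Phi^{(\mathfrak{m},\rho)}}(\mathrm{D}f(\rho))$ for $\rho\in E_{\mathrm p}$ from (\ref{derivative classical}), Proposition \ref{density of periodic states copy(2)}, Corollary \ref{Lemma cigare1} and dominated convergence; compute $\daleth^{\mathfrak{m}}(\hat A)(\rho)$ by differentiating the flow through the self-consistency representation; identify the two; and propagate to all of $\mathfrak{C}_{\mathcal{U}_{0}}$ via the (bi)derivation property. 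The one place where your route diverges in implementation is the generator side: where the paper simply invokes Lemma \ref{Differentiability} (which is stated as pointwise-in-$\rho$ differentiability of the restricted trajectory $\varpi_{\rho,s}|_{\mathcal{U}_{\Lambda}}$) and leaves implicit that this upgrades to membership of $\hat A$ in $\mathrm{Dom}(\daleth^{\mathfrak{m}})$, you run an explicit Duhamel expansion and, using the $\rho$-uniform bounds of Proposition \ref{Theorem Lieb-Robinson copy(3)} together with compactness of $E$ and the joint continuity coming from Theorem \ref{theorem sdfkjsdklfjsdklfj} and Lemma \ref{definition BCS-type model approximated copy(2)}, show that the remainder is $o(t)$ uniformly in $\rho$, hence that the difference quotient converges in $\|\cdot\|_{\mathfrak{C}}$. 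This makes the domain claim fully self-contained, which is a useful explicit filling-in of a step the paper compresses.
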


\begin{proof}
Fix $\Lambda \in \mathcal{P}_{f}$\ and $\mathfrak{m}\in \mathcal{M}_{\Lambda
}$. Comparing Definitions \ref{dynamic series} and \ref{definition BCS-type
model approximated} with Definition \ref{convex Frechet derivative copy(1)}
and (\ref{derivative classical}) and using Proposition \ref{density of
periodic states copy(2)} and Corollary \ref{Lemma cigare1} together with
Lebesgue's dominated convergence theorem, one computes that%
\begin{equation}
\lim_{L\rightarrow \infty }\left\{ \mathrm{h}_{L}^{\mathfrak{m}},f\right\}
\left( \rho \right) =\rho \circ \delta ^{\Phi ^{(\mathfrak{m},\rho )}}\left( 
\mathrm{D}f\left( \rho \right) \right) \ ,\qquad f\in \mathfrak{C}_{\mathcal{%
U}_{0}},\ \rho \in E_{\mathrm{p}}\ .  \label{limit1}
\end{equation}%
Note that one can interchange the local quantum derivation (which is a
commutator) in the left-hand side of (\ref{limit1}) with every integral over 
$(\mathbb{S}\cap \mathcal{W}_{\Lambda })^{n}$, $n\in \mathbb{N}$, by finite
dimensionality of the space $\mathcal{U}_{\Lambda _{L}}$ for all $L\in 
\mathbb{N}$. Here, we use (\ref{identify}) to define the approximating
interaction $\Phi ^{(\mathfrak{m},\rho )}$, as well as the fact that $%
\mathrm{D}f\left( \rho \right) \in \mathcal{U}_{0}$ whenever $f\in \mathfrak{%
C}_{\mathcal{U}_{0}}$. Now, the rest of the proof is very similar to the one
of \cite[Theorem 4.5]{Bru-pedra-MF-I}: By Lemma \ref{Differentiability}, one
verifies that, for any $A\in \mathcal{U}_{0}$ and $t\in \mathbb{R}$, 
\begin{equation}
\partial _{t}V_{t,0}^{\mathfrak{m}}(\hat{A})\left( \rho \right) =V_{t,0}^{%
\mathfrak{m}}\circ \daleth ^{\mathfrak{m}}(\hat{A})\left( \rho \right) =%
\mathbf{\varpi }^{\mathfrak{m}}\left( 0,t;\rho \right) \circ \delta ^{\Phi
^{(\mathfrak{m},\mathbf{\varpi }^{\mathfrak{m}}\left( 0,t;\rho \right)
)}}\left( A\right) \ .  \label{limit1limit1}
\end{equation}%
Since $\daleth ^{\mathfrak{m}}$ and $\mathfrak{d}^{\mathrm{h}_{L}^{\mathfrak{%
m}}}$, $L\in \mathbb{N}$, are symmetric derivations, the assertion follows
by combining (\ref{limit1}) and (\ref{limit1limit1}).
\end{proof}

Equation (\ref{limit1limit1}) also holds true for any time-dependent model $%
\mathfrak{m}\in C_{b}(\mathbb{R};\mathcal{M}_{\Lambda })$ and $\Lambda \in 
\mathcal{P}_{f}$, with $\daleth ^{\mathfrak{m}}$ being replaced with $%
\daleth ^{\mathfrak{m}(t)}$ in (\ref{limit1limit1}). Therefore, in the
non-autonomous situation, for any $s,t\in \mathbb{R}$, $\rho \in E$ and
polynomial function $f\in \mathfrak{C}_{\mathcal{U}_{0}}$, 
\begin{equation}
\partial _{t}V_{t,s}^{\mathfrak{m}}\left( f\right) \left( \rho \right)
=V_{t,s}^{\mathfrak{m}}\circ \daleth ^{\mathfrak{m}(t)}\left( f\right)
\left( \rho \right) =\mathbf{\varpi }^{\mathfrak{m}}\left( s,t;\rho \right)
\circ \delta ^{\Phi ^{(\mathfrak{m},\mathbf{\varpi }^{\mathfrak{m}}\left(
s,t;\rho \right) )}}\left( f\left( \rho \right) \right) \ .  \label{sasdadsa}
\end{equation}%
Similar (point-wise) identities for $\partial _{s}V_{t,s}^{\mathfrak{m}}$
like 
\begin{equation}
\partial _{s}V_{t,s}^{\mathfrak{m}}\left( f\right) =-\daleth ^{\mathfrak{m}%
(s)}\circ V_{t,s}^{\mathfrak{m}}(f)  \label{ewrbis}
\end{equation}%
are not at all obvious. In fact, no unified theory of non-auto%
\-%
nomous evolution equations that gives a complete characterization of the
existence of fundamental solutions in terms of properties of generators,
analogously to the Hille-Yosida generation theorems for the autonomous case,
is available. See, e.g., \cite{Katobis,Caps,Schnaubelt1,Pazy,Bru-Bach} and
references therein.

An important, \emph{highly non-trivial}, result in that direction is proven
in the next theorem, which depends in a crucial way on Lieb-Robinson bounds
for multi-commutators of \cite[Theorems 4.11, 5.4]{brupedraLR}:

\begin{theorem}[Non-autonomous classical dynamics]
\label{classical dynamics I}\mbox{ }\newline
Fix $\Lambda \in \mathcal{P}_{f}$, $\varsigma ,\epsilon \in \mathbb{R}^{+}$
and take 
\begin{equation*}
\mathbf{F}\left( x,y\right) =\mathrm{e}^{-2\varsigma \left\vert
x-y\right\vert }(1+\left\vert x-y\right\vert )^{-(d+\epsilon )}\ ,\qquad
x,y\in \mathfrak{L}\ ,
\end{equation*}%
as the decay function, see (\ref{examples}). Then, $(V_{t,s}^{\mathfrak{m}%
})_{s,t\in \mathbb{R}}$ is a strongly continuous two-parameter family of $%
\ast $-automorphisms of $\mathfrak{C}$ satisfying, on the dense $\ast $%
-subalgebra $\mathfrak{C}_{\mathcal{U}_{0}}$ (\ref{CU0}),%
\begin{equation}
\forall s,t\in {\mathbb{R}}:\qquad \partial _{t}V_{t,s}^{\mathfrak{m}}\left(
f\right) |_{E_{\mathrm{p}}}=\lim_{L\rightarrow \infty }V_{t,s}^{\mathfrak{m}%
}\left( \{\mathrm{h}_{L}^{\mathfrak{m}(t)},f\}\right) |_{E_{\mathrm{p}}}\
,\qquad V_{s,s}^{\mathfrak{m}}=\mathbf{1}_{\mathfrak{C}}\ ,  \label{ewr}
\end{equation}%
for any $\mathfrak{m}\in C_{b}(\mathbb{R};\mathcal{M}_{\Lambda }\cap 
\mathcal{M}_{1})$ while, for any $\mathfrak{m}\in C_{b}(\mathbb{R};\mathcal{M%
}_{\Lambda })$, $s,t\in {\mathbb{R}}$ and $f\in \mathfrak{C}_{\mathcal{U}%
_{0}}$, $V_{t,s}^{\mathfrak{m}}(f)\in \mathfrak{Y}$ with%
\begin{equation}
\forall s,t\in {\mathbb{R}}:\qquad \partial _{s}V_{t,s}^{\mathfrak{m}}\left(
f\right) |_{E_{\mathrm{p}}}=-\lim_{L\rightarrow \infty }\{\mathrm{h}_{L}^{%
\mathfrak{m}(s)},V_{t,s}^{\mathfrak{m}}(f)\}|_{E_{\mathrm{p}}}\ ,\qquad
V_{t,t}^{\mathfrak{m}}=\mathbf{1}_{\mathfrak{C}}\ .  \label{ewrbis1}
\end{equation}%
All limits have to be understood point-wise on the weak$^{\ast }$-dense
subspace $E_{\mathrm{p}}\subseteq E^{+}$ of all periodic states.
\end{theorem}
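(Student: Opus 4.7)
My plan is to build the proof on top of the machinery already set in place earlier in the excerpt, reducing the two displayed identities to (i) a direct translation of Corollary \ref{Closed Poissonian symmetric derivations copy(1)} along the flow, and (ii) the use of the reverse cocycle, with the genuinely new analytic content being that $V_{t,s}^{\mathfrak{m}}(f)\in \mathfrak{Y}$ when $f\in \mathfrak{C}_{\mathcal{U}_{0}}$. The strong continuity, the $\ast $-automorphism property, the reverse cocycle, and the initial conditions $V_{s,s}^{\mathfrak{m}}=V_{t,t}^{\mathfrak{m}}=\mathbf{1}_{\mathfrak{C}}$ are already contained in Proposition \ref{lemma poisson copy(2)}, so no further work is needed there.

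For the forward derivative in $t$ in the translation-invariant case $\mathfrak{m}\in C_{b}(\mathbb{R};\mathcal{M}_{\Lambda }\cap \mathcal{M}_{1})$, I would start from the pointwise identity (\ref{sasdadsa}), which expresses $\partial _{t}V_{t,s}^{\mathfrak{m}}(f)$ as $V_{t,s}^{\mathfrak{m}}\circ \daleth ^{\mathfrak{m}(t)}(f)$. Evaluating at $\rho \in E_{\mathrm{p}}$, the invariance (\ref{invariance translation}) forces $\mathbf{\varpi }^{\mathfrak{m}}(s,t;\rho )\in E_{\mathrm{p}}$, and Corollary \ref{Closed Poissonian symmetric derivations copy(1)} turns $\daleth ^{\mathfrak{m}(t)}(f)$ at that evolved periodic state into $\lim_{L\rightarrow \infty }\{\mathrm{h}_{L}^{\mathfrak{m}(t)},f\}(\mathbf{\varpi }^{\mathfrak{m}}(s,t;\rho ))$. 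Re-writing the right-hand side as $\lim_{L\rightarrow \infty }V_{t,s}^{\mathfrak{m}}(\{\mathrm{h}_{L}^{\mathfrak{m}(t)},f\})(\rho )$ yields (\ref{ewr}). The role of translation invariance here is exactly to keep the trajectory inside $E_{\mathrm{p}}$, where Corollary \ref{Closed Poissonian symmetric derivations copy(1)} applies.

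The main obstacle is the statement that $V_{t,s}^{\mathfrak{m}}(f)\in \mathfrak{Y}$ for $f\in \mathfrak{C}_{\mathcal{U}_{0}}$, which is where Lieb--Robinson bounds for multi-commutators enter. Differentiating the self-consistency equation of Theorem \ref{theorem sdfkjsdklfjsdklfj} along a convex direction $\rho \mapsto (1-\lambda )\rho +\lambda \upsilon $ leads to terms $\partial _{\lambda }\mathbf{\varpi }^{\mathfrak{m}}(s,t;\rho _{\lambda })(A)|_{\lambda =0}$ in which the derivative of the approximating interaction $\mathbf{\Phi }^{(\mathfrak{m},\mathbf{\varpi }^{\mathfrak{m}})}$ with respect to the state propagates through $\tau _{t,s}^{\mathbf{\Phi }^{(\mathfrak{m},\mathbf{\varpi }^{\mathfrak{m}})}(\cdot )}$. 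Because, in view of Definition \ref{definition BCS-type model approximated}, this derivative produces products of expectations $\rho (\mathfrak{e}_{\Psi ^{(j)},\vec{\ell}})$, one obtains nested commutators of the short-range dynamics with local energy densities of arbitrary order $n\in \mathbb{N}$. Controlling such iterated commutators uniformly in $L$ and in $\rho $ is precisely the content of \cite[Theorems 4.11, 5.4]{brupedraLR}, whose hypotheses are satisfied thanks to the exponential-polynomial decay function $\mathbf{F}$ fixed in the statement. With uniform convergence of a Gâteaux difference quotient for $V_{t,s}^{\mathfrak{m}}(f)$ in hand, and the weak$^{\ast }$-continuity of $\mathrm{d}V_{t,s}^{\mathfrak{m}}(f)$ following from the continuous dependence of $\mathbf{\varpi }^{\mathfrak{m}}$ on its state argument (Theorem \ref{theorem sdfkjsdklfjsdklfj}), one concludes $V_{t,s}^{\mathfrak{m}}(f)\in \mathfrak{Y}$.

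The backward identity (\ref{ewrbis1}) then follows by a short cocycle argument. Writing $V_{t,s}^{\mathfrak{m}}=V_{s+h,s}^{\mathfrak{m}}\circ V_{t,s+h}^{\mathfrak{m}}$, which is Proposition \ref{lemma poisson copy(2)} with intermediate time $r=s+h$, and rearranging,
\begin{equation*}
\frac{V_{t,s+h}^{\mathfrak{m}}(f)-V_{t,s}^{\mathfrak{m}}(f)}{h}=-\frac{V_{s+h,s}^{\mathfrak{m}}-\mathbf{1}_{\mathfrak{C}}}{h}\bigl(V_{t,s+h}^{\mathfrak{m}}(f)\bigr)\ .
\end{equation*}
By Step 2 applied at the diagonal $t=s$, and the strong continuity of $(V_{t,s}^{\mathfrak{m}})_{s,t\in \mathbb{R}}$, the right-hand side converges pointwise on $E_{\mathrm{p}}$ to $-\daleth ^{\mathfrak{m}(s)}(V_{t,s}^{\mathfrak{m}}(f))$; then Corollary \ref{Closed Poissonian symmetric derivations copy(1)} applied to the function $V_{t,s}^{\mathfrak{m}}(f)\in \mathfrak{Y}$ (which belongs to the domain of $\daleth ^{\mathfrak{m}(s)}$ by Step 3) rewrites this as $-\lim_{L\rightarrow \infty }\{\mathrm{h}_{L}^{\mathfrak{m}(s)},V_{t,s}^{\mathfrak{m}}(f)\}$, giving (\ref{ewrbis1}). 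Translation invariance is not needed for (\ref{ewrbis1}) because the Poisson bracket is evaluated at $\rho $ itself (not at $\mathbf{\varpi }^{\mathfrak{m}}(s,t;\rho )$), so $\rho \in E_{\mathrm{p}}$ suffices.
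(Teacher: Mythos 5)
Your derivation of the forward identity~(\ref{ewr}) matches the paper: it is indeed a direct combination of~(\ref{sasdadsa}), the invariance~(\ref{invariance translation}) of $E_{\mathrm{p}}$ under the flow, and~(\ref{limit1}) (Corollary~\ref{Closed Poissonian symmetric derivations copy(1)}). For the membership $V_{t,s}^{\mathfrak{m}}(f)\in\mathfrak{Y}$ you correctly locate the heavy tool (Lieb--Robinson bounds for multi-commutators), but your sketch elides what is a substantial piece of work: the paper establishes this in Lemma~\ref{lemma contnuity copy(1)} via an explicit Dyson-type integral equation for $\mathrm{D}\varpi_{\rho,s}(t,A)$ together with a uniqueness argument, not merely a count of iterated commutators (and those are of order \emph{three}, not arbitrary order $n$).

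The real gap is in the backward identity~(\ref{ewrbis1}), where your cocycle argument is genuinely different from the paper's route and, as written, does not close. Two steps fail. First, after the algebraic rearrangement
\begin{equation*}
\frac{V_{t,s+h}^{\mathfrak{m}}(f)-V_{t,s}^{\mathfrak{m}}(f)}{h}
   =-\frac{V_{s+h,s}^{\mathfrak{m}}-\mathbf{1}_{\mathfrak{C}}}{h}\bigl(V_{t,s+h}^{\mathfrak{m}}(f)\bigr)\ ,
\end{equation*}
you are applying the difference quotient of an \emph{unbounded} generator to a \emph{moving} argument $V_{t,s+h}^{\mathfrak{m}}(f)$. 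Strong continuity of $(V_{t,s}^{\mathfrak{m}})_{s,t}$ gives only norm-convergence of the argument; to pass to the limit you need either uniform convergence of the difference quotient on a class containing all $V_{t,s+h}^{\mathfrak{m}}(f)$ for small $h$, or continuity of $h\mapsto V_{t,s+h}^{\mathfrak{m}}(f)$ in a graph-norm-like topology. Providing precisely these estimates is the content of Lemma~\ref{lemma extra1}, Proposition~\ref{lemma extra2} and Lemma~\ref{Differentiability copy(1)}, which you do not invoke. Indeed, the paper comments immediately before the theorem that identities such as~(\ref{ewrbis}), which is exactly what your limit computation would produce, ``are not at all obvious'' and cannot be obtained by general non-autonomous semigroup theory.

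Second, even granting the convergence, the final rewriting fails: Corollary~\ref{Closed Poissonian symmetric derivations copy(1)} is proven only for $f\in\mathfrak{C}_{\mathcal{U}_{0}}$, that is, for \emph{polynomials} of the $\hat{A}$'s. You invoke it for the function $V_{t,s}^{\mathfrak{m}}(f)$, which lies in $\mathfrak{Y}$ but is not a polynomial; nothing in the paper extends the identity $\daleth^{\mathfrak{m}}(g)|_{E_{\mathrm{p}}}=\lim_{L}\{\mathrm{h}_{L}^{\mathfrak{m}},g\}|_{E_{\mathrm{p}}}$ from $\mathfrak{C}_{\mathcal{U}_{0}}$ to $\mathfrak{Y}$. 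The paper avoids both pitfalls by establishing the backward identity directly on the generators $\hat{A}$, $A\in\mathcal{U}_{0}$, in Lemma~\ref{Corollary bije+cocylbaby copy(1)} --- via the Dyson-type integral equation~(\ref{toto final}) and its uniqueness, which rests on the multi-commutator Lieb--Robinson bounds --- and then passes to all of $\mathfrak{C}_{\mathcal{U}_{0}}$ by bilinearity and Leibniz's rule, using that $V_{t,s}^{\mathfrak{m}}$ is a $\ast$-automorphism, $\partial_{s}$ is a derivation, and $\{\mathrm{h}_{L}^{\mathfrak{m}(s)},\cdot\}$ is a derivation. You would need to prove the two missing ingredients above for your route to succeed, and at that point you would have essentially reconstructed the paper's technical lemmata.
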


\begin{proof}
Fix all parameters of the theorem. Equation (\ref{ewr}) results from (\ref%
{invariance translation}), (\ref{limit1}), (\ref{sasdadsa}), Lemma \ref%
{density of periodic states copy(2)} and the fact that $\mathfrak{m}\in
C_{b}(\mathbb{R};\mathcal{M}_{\Lambda }\cap \mathcal{M}_{1})$ (cf. (\ref%
{translation invaration encore})). In order to prove $V_{t,s}^{\mathfrak{m}%
}(f)\in \mathfrak{Y}$ and (\ref{ewrbis1}), it suffices to invoke Lemma \ref%
{Corollary bije+cocylbaby copy(1)}, which says that%
\begin{equation*}
\partial _{s}V_{t,s}^{\mathfrak{m}}(\hat{A})|_{E_{\mathrm{p}%
}}=-\lim_{L\rightarrow \infty }\{\mathrm{h}_{L}^{\mathfrak{m}(s)},V_{t,s}^{%
\mathfrak{m}}(\hat{A})\}|_{E_{\mathrm{p}}}
\end{equation*}%
for any $s,t\in \mathbb{R}$ and $A\in \mathcal{U}_{0}$. Since $(V_{t,s}^{%
\mathfrak{m}})_{s,t\in \mathbb{R}}$ is a family of $\ast $-automorphisms of $%
\mathfrak{C}$, by using the (bi)linearity and Leibniz's rule satisfied by
the derivatives and the bracket $\{\cdot ,\cdot \}$, we deduce that $%
V_{t,s}^{\mathfrak{m}}(f)\in \mathfrak{Y}$ and (\ref{ewrbis1}) for all
polynomial functions $f\in \mathfrak{C}_{\mathcal{U}_{0}}$ and times $s,t\in 
\mathbb{R}$.
\end{proof}

By Corollary \ref{Closed Poissonian symmetric derivations copy(1)} and
Theorem \ref{classical dynamics I}, Equation (\ref{ewrbis}) seems to hold
true, but it cannot be directly deduced from Theorem \ref{classical dynamics
I}. We refrain from doing such a study in this paper, since Theorem \ref%
{classical dynamics I} already shows that we have a non-autonomous classical
dynamics in the usual sense. Note that, in the case $\mathfrak{L}=\mathbb{Z}%
^{d}$, we have to consider the limit $L\rightarrow \infty $. This is the
classical counterpart of the thermodynamic limit $L\rightarrow \infty $ in
the derivations of Corollary \ref{Lemma cigare1}.

In the autonomous situation, as already suggested by Corollary \ref{Closed
Poissonian symmetric derivations copy(1)}, we obtain from Theorem \ref%
{classical dynamics I} the usual (autonomous) dynamics of classical
mechanics written in terms of Poisson brackets (see, e.g., \cite[Proposition
10.2.3]{classical-dynamics}), i.e., \emph{Liouville's equation}:

\begin{corollary}[Liouville's equation]
\label{classical dynamics I copy(1)}\mbox{
}\newline
Under conditions of Theorem \ref{classical dynamics I}, for any $t\in 
\mathbb{R}$ and $f\in \mathfrak{C}_{\mathcal{U}_{0}}$,%
\begin{equation*}
\partial _{t}V_{t,0}^{\mathfrak{m}}\left( f\right) =V_{t,0}^{\mathfrak{m}%
}\circ \daleth ^{\mathfrak{m}}\left( f\right) =V_{t,0}^{\mathfrak{m}}\left(
\lim_{L\rightarrow \infty }\{\mathrm{h}_{L}^{\mathfrak{m}},f\}\right)
=\lim_{L\rightarrow \infty }\left\{ \mathrm{h}_{L}^{\mathfrak{m}},V_{t,0}^{%
\mathfrak{m}}(f)\right\} =\daleth ^{\mathfrak{m}}\circ V_{t,0}^{\mathfrak{m}%
}\left( f\right) \ ,
\end{equation*}%
where all limits have to be understood point-wise on the weak$^{\ast }$%
-dense subspace $E_{\mathrm{p}}\subseteq E^{+}$ of all periodic states.
\end{corollary}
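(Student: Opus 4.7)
The plan is to assemble this from Proposition \ref{lemma poisson copy(2)}, Corollary \ref{Closed Poissonian symmetric derivations copy(1)}, and Theorem \ref{classical dynamics I}, all specialised to the autonomous case; since the Liouville equation invokes both (\ref{ewr}) and (\ref{ewrbis1}), I assume throughout that $\mathfrak{m} \in \mathcal{M}_\Lambda \cap \mathcal{M}_1$. Proposition \ref{lemma poisson copy(2)} tells us that $(V_{t,0}^{\mathfrak{m}})_{t\in\mathbb{R}}$ is a $C_0$-group of $\ast$-automorphisms of $\mathfrak{C}$ with generator $\daleth^{\mathfrak{m}}$, and Corollary \ref{Closed Poissonian symmetric derivations copy(1)} places $\mathfrak{C}_{\mathcal{U}_0}$ inside its domain. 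Standard $C_0$-group theory \cite{EngelNagel} therefore yields immediately the first and last equalities of the corollary,
\[
\partial_t V_{t,0}^{\mathfrak{m}}(f) \;=\; V_{t,0}^{\mathfrak{m}} \circ \daleth^{\mathfrak{m}}(f) \;=\; \daleth^{\mathfrak{m}} \circ V_{t,0}^{\mathfrak{m}}(f),
\]
as identities in $\mathfrak{C}$ (and hence at every $\rho \in E$, not only on $E_\mathrm{p}$).

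The second equality, $V_{t,0}^{\mathfrak{m}} \circ \daleth^{\mathfrak{m}}(f) = V_{t,0}^{\mathfrak{m}}(\lim_{L\to\infty}\{\mathrm{h}_L^{\mathfrak{m}}, f\})$ on $E_\mathrm{p}$, is exactly the identity of Corollary \ref{Closed Poissonian symmetric derivations copy(1)} precomposed with the flow $\mathbf{\varpi}^{\mathfrak{m}}(0,t)$, which preserves $E_\mathrm{p}$ by (\ref{invariance translation}) under the translation-invariance assumption on $\mathfrak{m}$. The penultimate equality, $\lim_{L\to\infty}\{\mathrm{h}_L^{\mathfrak{m}}, V_{t,0}^{\mathfrak{m}}(f)\}|_{E_\mathrm{p}} = \daleth^{\mathfrak{m}}(V_{t,0}^{\mathfrak{m}}(f))|_{E_\mathrm{p}}$, is obtained by specialising (\ref{ewrbis1}) at $s = 0$: using $V_{t,s}^{\mathfrak{m}} = V_{t-s,0}^{\mathfrak{m}}$ in the autonomous case to convert $\partial_s|_{s=0}$ into $-\partial_t$, and then matching with the identity $\partial_t V_{t,0}^{\mathfrak{m}}(f) = \daleth^{\mathfrak{m}} \circ V_{t,0}^{\mathfrak{m}}(f)$ established above. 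The middle equality $V_{t,0}^{\mathfrak{m}}(\lim_L\{\mathrm{h}_L^{\mathfrak{m}}, f\}) = \lim_L\{\mathrm{h}_L^{\mathfrak{m}}, V_{t,0}^{\mathfrak{m}}(f)\}$ on $E_\mathrm{p}$ is then obtained by transitivity.

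I do not expect any genuine obstacle: all the analytic heavy lifting -- the Lieb--Robinson bounds on multi-commutators underlying the well-posedness of the self-consistency equations and the non-autonomous identities (\ref{ewr})--(\ref{ewrbis1}) -- is already carried out in Theorem \ref{classical dynamics I} and Corollary \ref{Closed Poissonian symmetric derivations copy(1)}. The only thing to verify is bookkeeping: that each pointwise-on-$E_\mathrm{p}$ Poisson-bracket limit remains meaningful after pull-back by $\mathbf{\varpi}^{\mathfrak{m}}(0,t)$, which is immediate from $\mathfrak{m} \in \mathcal{M}_1$ and the flow invariance of $E_\mathrm{p}$ expressed by (\ref{invariance translation}).
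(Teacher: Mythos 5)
Your proof is correct and follows the same approach as the paper, which merely instructs one to combine Corollary \ref{Closed Poissonian symmetric derivations copy(1)} with Theorem \ref{classical dynamics I}; you have supplied the missing bookkeeping ($C_0$-group theory for the first and last equalities, pull-back along the flow for the second, specialisation of (\ref{ewrbis1}) to the autonomous case for the penultimate, transitivity for the middle). Your observation that $\mathfrak{m}\in\mathcal{M}_{\Lambda}\cap\mathcal{M}_{1}$ is implicitly required -- since the Poisson-bracket limits are only defined point-wise on $E_{\mathrm{p}}$ and their composition with $\mathbf{\varpi}^{\mathfrak{m}}(0,t)$ needs the flow invariance (\ref{invariance translation}) -- is a genuine and useful clarification of the statement.
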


\begin{proof}
Combine Corollary \ref{Closed Poissonian symmetric derivations copy(1)} with
Theorem \ref{classical dynamics I}.
\end{proof}

Writing the classical dynamics in terms of Liouville's equation, as in
Corollary \ref{classical dynamics I copy(1)}, is conceptually illuminating
and also very useful from a purely mathematical point of view. For instance,
the Gross-Pitaevskii and Hartree hierarchies mathematically derived from
Bose gases with \emph{mean-field} interactions are \emph{infinite} systems
of coupled PDEs and therefore, a direct proof of the uniqueness of its
solutions is technically quite demanding, usually involving Feynman graphs,
multilinear estimates, etc. In \cite{Ammari2018}, the authors show that a
solution to these hierarchies is basically a family of time-dependent
correlation functions associated with a certain positive measure on the unit
ball of a $L^{2}$-space, whose dynamical evolution is driven by Liouville's
equation, similarly to Corollary \ref{classical dynamics I copy(1)}.
Uniqueness of a solution to such Liouville's equation can be proven in a
general setting, as shown in 2018 \cite{Ammari2018,Ammari2018-0}, implying
the uniqueness of a solution to the Gross-Pitaevskii and Hartree hierarchies
without highly technical issues depending on the particularities of the
hierarchies under consideration.

\subsection{Quantum Part of Long-Range Dynamics\label{Quantum Part}}

The classical part of the dynamics of lattice-fermion systems with
long-range interactions, which is defined within the classical $C^{\ast }$%
-algebras $\mathfrak{C}$ of continuous complex-valued functions on states,
is shown to result from self-consistency equations, as explained in Theorem %
\ref{theorem sdfkjsdklfjsdklfj}. Since $\mathfrak{C}$ can be seen as a
subalgebra (\ref{subset}) of the quantum $C^{\ast }$-algebras $\mathfrak{U}$
of continuous $\mathcal{U}$-valued functions on states, defined by (\ref%
{metaciagre set})-(\ref{metaciagre set bis}), there is a natural extension
of the classical dynamics on $\mathfrak{C}$: The continuous family $\mathbf{%
\varpi }^{\mathfrak{m}}$ of Theorem \ref{theorem sdfkjsdklfjsdklfj} yields a
family $(\mathfrak{V}_{t,s}^{\mathfrak{m}})_{s,t\in \mathbb{R}}$ of $\ast $%
-automorphisms of $\mathfrak{U}$ defined by%
\begin{equation*}
\mathfrak{V}_{t,s}^{\mathfrak{m}}\left( f\right) \doteq f\circ \mathbf{%
\varpi }^{\mathfrak{m}}\left( s,t\right) \ ,\qquad f\in \mathfrak{U},\
s,t\in \mathbb{R}\ .
\end{equation*}%
In particular, by (\ref{classical evolution family}), $\mathfrak{V}_{t,s}^{%
\mathfrak{m}}|_{\mathfrak{C}}=V_{t,s}^{\mathfrak{m}}$\ for any $s,t\in 
\mathbb{R}$.

\emph{However, it is not what we have in mind here}: Emphasizing rather the
inclusion $\mathcal{U}\subseteq \mathfrak{U}$, in the long-range dynamics,
the classical algebra $\mathfrak{C}$ becomes a subalgebra of the fixed-point
algebra of the state-dependent long-range dynamics on $\mathfrak{U}$. In 
\cite{Bru-pedra-MF-III} we describe in detail the quantum part of the
long-range dynamics, which is defined in a representation of the $C^{\ast }$%
-algebra $\mathfrak{U}$. We give below a few key points of this study:
\bigskip

\noindent \underline{(i):} As soon as the classical part of the long-range
dynamics is concerned, there is no need to impose any additional property on
initial states to define it. By contrast, for the quantum part, periodicity
of initial states is needed. Note, however, that the set $E_{\mathrm{p}}$ of
all periodic states is still a weak$^{\ast }$-dense subset of the physically
relevant set $E^{+}$ of all even states, by Proposition \ref{density of
periodic states}.\bigskip

\noindent \underline{(ii):} From now on, fix $\vec{\ell}\in \mathbb{N}^{d}$
and consider the set $E_{\vec{\ell}}$ of $\vec{\ell}$-periodic states
defined by (\ref{periodic invariant states}). Note that any $\vec{\ell}$%
-periodic state $\rho \in E_{\vec{\ell}}\subseteq \mathcal{U}^{\ast }$
naturally extends to a state on $\mathfrak{U}$: There is a natural
conditional expectation $\Xi $ from $\mathfrak{U}$ to $\mathfrak{C}\subseteq 
\mathfrak{U}$ defined by 
\begin{equation*}
\Xi \left( f\right) \left( \rho \right) =\rho \left( f\left( \rho \right)
\right) \ ,\qquad \rho \in E\ .
\end{equation*}%
Then, since $E_{\vec{\ell}}$ is a Choquet simplex, any state $\rho \in E_{%
\vec{\ell}}$ can be uniquely identified with its Choquet measure $\mu _{\rho
}$ which, in turn, is a state of $\mathfrak{C}$, canonically viewed as a
measure on $E$. The state of $\mathfrak{U}$ extending $\rho \in E_{\vec{\ell}%
}$ is the state $\mu _{\rho }\circ \Xi $. Moreover, if $\left( \mathcal{H}%
_{\rho },\pi _{\rho },\Omega _{\rho }\right) $ is a cyclic representation of 
$\rho \in E_{\vec{\ell}}$, then, by \cite[Proposition 4.2]{Bru-pedra-MF-III}%
, there is a representation 
\begin{equation*}
\Pi _{\rho }:\mathfrak{U}\rightarrow \mathcal{B}\left( \mathcal{H}_{\rho
}\right)
\end{equation*}%
such that%
\begin{equation*}
\lbrack \Pi _{\rho }\left( \mathfrak{U}\right) ]^{\prime \prime }=[\Pi
_{\rho }\left( \mathcal{U}\right) ]^{\prime \prime }\qquad \text{and}\qquad
\Pi _{\rho }\left( A\right) =\pi _{\rho }\left( A\right)
\end{equation*}%
for any $A\in \mathcal{U}\subseteq \mathfrak{U}$. In particular, 
\begin{equation*}
\lbrack \Pi _{\rho }\left( \mathfrak{C}\right) ]^{\prime \prime }\subseteq
\lbrack \pi _{\rho }\left( \mathcal{U}\right) ]^{\prime }\cap \lbrack \pi
_{\rho }\left( \mathcal{U}\right) ]^{\prime \prime }\ .
\end{equation*}%
In fact, for $\rho \in E_{\vec{\ell}}$, $\left( \mathcal{H}_{\rho },\Pi
_{\rho },\Omega _{\rho }\right) $ is a cyclic representation of the state $%
\mu _{\rho }\circ \Xi \in \mathfrak{U}^{\ast }$. The existence of such an
extension of $\pi _{\rho }$ strongly depends on the orthogonality of the ($%
\vec{\ell}$-) ergodic decomposition of $\vec{\ell}$-periodic states, as it
is explained in \cite{Bru-pedra-MF-III}. The ergodicity property of extreme
states of $E_{\vec{\ell}}$ is pivotal in order to get the limit dynamics
stated in the third point. \bigskip

\noindent \underline{(iii):} Let $\mathfrak{m}\in C_{b}(\mathbb{R};\mathcal{M%
}_{\Lambda }\cap \mathcal{M}_{1})$ for some $\Lambda \in \mathcal{P}_{f}$.
Assume that the state at initial time $s\in \mathbb{R}$ is $\rho \in E_{\vec{%
\ell}}$. Then, by taking the above cyclic representation $\left( \mathcal{H}%
_{\rho },\Pi _{\rho },\Omega _{\rho }\right) $ of $\rho $, seen as the state 
$\mu _{\rho }\circ \Xi $ on $\mathfrak{U}$, we show in \cite[Theorem 4.3]%
{Bru-pedra-MF-III} that, for any $t\in \mathbb{R}$ and $A\in \mathcal{U}%
\subseteq \mathfrak{U}$, in the thermodynamic limit $L\rightarrow \infty $,%
\begin{equation}
\Pi _{\rho }\left( \mathfrak{T}_{t,s}^{\mathbf{\Phi }^{(\mathfrak{m},\mathbf{%
\varpi }^{\mathfrak{m}}\left( \alpha ,\cdot \right) )}}\left( A\right)
|_{\alpha =s}-\tau _{t,s}^{(L,\mathfrak{m})}\left( A\right) \right)
\label{limit dynamics}
\end{equation}%
converges to $0$ in the $\sigma $-weak topology within $\mathcal{B}(\mathcal{%
H}_{\rho })$. In particular, the restriction to $\mathcal{U}$ of the state 
\begin{equation}
\rho \circ \mathfrak{T}_{t,s}^{\mathbf{\Phi }^{(\mathfrak{m},\mathbf{\varpi }%
^{\mathfrak{m}}\left( \alpha ,\cdot \right) )}}|_{\alpha =s}
\label{limit dynamics2}
\end{equation}%
can be seen as the state of the system at any time $t\in \mathbb{R}$ when $%
\rho $ is the (initial) state at time $t=s$. Here, 
\begin{equation*}
\mathbf{\varpi }^{\mathfrak{m}}\in C\left( \mathbb{R}^{2};\mathrm{Aut}\left(
E\right) \right)
\end{equation*}%
\ results from Theorem \ref{theorem sdfkjsdklfjsdklfj}, $\mathbf{\Phi }^{(%
\mathfrak{m},\mathbf{\xi })}$ is the state-dependent interaction of
Definition \ref{definition BCS-type model approximated} for 
\begin{equation*}
\mathbf{\xi }\in C\left( \mathbb{R};\mathrm{Aut}\left( E\right) \right) \ ,
\end{equation*}%
$(\mathfrak{T}_{t,s}^{\mathbf{\Psi }})_{_{s,t\in \mathbb{R}}}$ is the
strongly continuous two-para%
\-%
meter family of $\ast $-auto%
\-%
morphisms of $\mathfrak{U}$ of Proposition \ref{Lemma cigare1 copy(5)} for $%
\mathbf{\Psi }\in C(\mathbb{R};\mathfrak{W}^{\mathbb{R}})$, and $(\tau
_{t}^{(L,\mathfrak{m})})_{_{t\in \mathbb{R}}}$ is the strongly continuous
one-parameter group of $\ast $-auto%
\-%
morphisms of $\mathcal{U}$ defined by (\ref{definition fininte vol dynam}).
Note that, even if the local dynamics is autonomous, the limit dynamics can
still be \emph{non-autonomous}.

A similar result holds true for non-autonomous long-range dynamics, i.e.,
for all time-dependent models 
\begin{equation*}
\mathfrak{m}\in C_{b}(\mathbb{R};\mathcal{M}_{\Lambda }\cap \mathcal{M}%
_{1})\ ,\qquad \Lambda \in \mathcal{P}_{f}\ .
\end{equation*}%
In this case, one replaces the autonomous local dynamics in (\ref{limit
dynamics}) with the non-autonomous local one, similar to (\ref{cauchy1})-(%
\ref{cauchy2}).

\section{Technical Proofs\label{Well-posedness sect}}

The aim of this section is to prove Theorems \ref{theorem sdfkjsdklfjsdklfj}
and \ref{classical dynamics I}. In fact, we prove here stronger results than
these theorems. The proof of Theorem \ref{theorem sdfkjsdklfjsdklfj} is done
in five lemmata and two corollaries. The proof of Theorem \ref{classical
dynamics I} is a direct consequence of Lemma \ref{Corollary bije+cocylbaby
copy(1)}. Note that those proofs are a much more involved version of the
ones performed in \cite[Section 7]{Bru-pedra-MF-I} to prove \cite[Theorems
4.1 and 4.6]{Bru-pedra-MF-I}.

We start with preliminary definitions: In all the present section, fix once
and for all $\vec{\ell}\in \mathbb{N}^{d}$, $\Lambda \in \mathcal{P}_{f}$
and a time-dependent model 
\begin{equation*}
\mathfrak{m=((}\Phi \left( t\right) ,\mathfrak{\mathfrak{a}}\left( t\right) 
\mathfrak{))}_{t\in \mathbb{R}}\in C_{b}(\mathbb{R};\mathcal{M}_{\Lambda })\
,\qquad \text{with}\qquad \mathcal{M}_{\Lambda }\doteq \mathcal{W}^{\mathbb{R%
}}\times \mathcal{S}_{\Lambda }\subseteq \mathcal{M}\ ,
\end{equation*}%
(see (\ref{S00bis})), $\mathcal{S}_{\Lambda }\subseteq \mathcal{S}$ being
defined by (\ref{S0bis}). By (\ref{eq:enpersitebisbis})-(\ref%
{eq:enpersitebisbisbis}), we can assume without loss of generality that $%
\mathfrak{e}_{\Phi ,\vec{\ell}}\in \mathcal{U}_{\Lambda }$. In order to
simplify mathematical expressions, we use the standard notation%
\begin{equation}
\left\Vert \mathfrak{m}\right\Vert _{\infty }\equiv \left\Vert \mathfrak{m}%
\right\Vert _{C_{b}\left( \mathbb{R};\mathcal{M}\right) }\doteq \sup_{t\in 
\mathbb{R}}\left\Vert \mathfrak{m}\left( t\right) \right\Vert _{\mathcal{M}%
}\ ,\qquad \mathfrak{m}\in C_{b}(\mathbb{R};\mathcal{M}_{\Lambda })\ .
\label{norm m}
\end{equation}

Recall that $E_{\Lambda }\subseteq \mathcal{U}_{\Lambda }^{\ast }$, which is
defined by (\ref{local states}), is the norm-compact set of states on the
finite-dimensional $C^{\ast }$-algebra $\mathcal{U}_{\Lambda }$. For every
continuous function $\zeta \in C\left( \mathbb{R};E_{\Lambda }\right) $, let%
\begin{equation}
\Psi ^{\zeta }\left( t\right) \doteq \Phi \left( t\right) +\sum_{n\in 
\mathbb{N}}\int_{(\mathbb{S}\cap \mathcal{W}_{\Lambda })^{n}}\ \left\lfloor
\zeta \left( t\right) ;\Psi ^{(1)},\ldots ,\Psi ^{(n)}\right\rfloor _{\vec{%
\ell}}\ \mathfrak{a}\left( t\right) _{n}\left( \mathrm{d}\Psi ^{(1)},\ldots ,%
\mathrm{d}\Psi ^{(n)}\right) \,,\qquad t\in \mathbb{R}\ ,  \label{def utile}
\end{equation}%
with $\mathcal{W}_{\Lambda }\subseteq \mathcal{W}_{1}$ being defined by (\ref%
{eq:enpersitebis}). In particular, by Definition \ref{definition BCS-type
model approximated} and Equation (\ref{S0bis}), for all continuous functions 
$\xi \in C\left( \mathbb{R};E\right) $,%
\begin{equation}
\Psi ^{\xi |_{\mathcal{U}_{\Lambda }}}\left( t\right) =\Phi ^{(\mathfrak{m}%
,\xi )}\left( t\right) \,,\qquad t\in \mathbb{R}\ .
\label{equality cigare00}
\end{equation}%
Note that such approximating interactions can be used to define a strongly
continuous two-para%
\-%
meter family $(\tau _{t,s}^{\Psi ^{\zeta }})_{s,t\in \mathbb{R}}$ of $\ast $%
-auto%
\-%
morphisms of $\mathcal{U}$ for any $\zeta \in C\left( \mathbb{R};E_{\Lambda
}\right) $, by Proposition \ref{Theorem Lieb-Robinson} and Lemma \ref%
{definition BCS-type model approximated copy(2)}. Such approximating
dynamics satisfy the following estimate:

\begin{lemma}[Estimates on approximating dynamics]
\label{estimate useful}\mbox{ }\newline
For any $s,t\in \mathbb{R}$, $\tilde{\Lambda}\in \mathcal{P}_{f}$, $A\in 
\mathcal{U}_{\tilde{\Lambda}}$ and $\zeta _{1},\zeta _{2}\in C\left( \mathbb{%
R};E_{\Lambda }\right) \subseteq C\left( \mathbb{R};\mathcal{U}_{\Lambda
}^{\ast }\right) $,%
\begin{equation*}
\left\Vert \left( \tau _{t,s}^{\Psi ^{\zeta _{1}}}-\tau _{t,s}^{\Psi ^{\zeta
_{2}}}\right) \left( A\right) \right\Vert _{\mathcal{U}}\leq 2|\tilde{\Lambda%
}|\left\Vert A\right\Vert _{\mathcal{U}}\left\Vert \mathbf{F}\right\Vert _{1,%
\mathfrak{L}}\left\Vert \mathfrak{m}\right\Vert _{\infty }\mathrm{e}^{2%
\mathbf{D}\left\Vert \mathfrak{m}\right\Vert _{\infty }\left\vert
t-s\right\vert }\int_{t\wedge s}^{t\vee s}\left\Vert \zeta _{1}\left( \alpha
\right) -\zeta _{2}\left( \alpha \right) \right\Vert _{\mathcal{U}_{\Lambda
}^{\ast }}\mathrm{d}\alpha \ .
\end{equation*}%
Here, $\mathcal{U}_{\Lambda }^{\ast }$ is endowed with the usual norm for
continuous linear functionals.
\end{lemma}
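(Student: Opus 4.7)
The plan is to reduce this estimate to Proposition \ref{Theorem Lieb-Robinson copy(3)}(iii), the Lipschitz continuity of the short-range dynamics with respect to the interaction, so that the core of the proof becomes a pointwise-in-time estimate on $\Vert \Psi^{\zeta_1}(\alpha)-\Psi^{\zeta_2}(\alpha)\Vert_{\mathcal{W}}$. Once that estimate is in hand, it remains only to bound the exponential factor appearing in Proposition \ref{Theorem Lieb-Robinson copy(3)}(iii) using the uniform bound $\Vert\Psi^{\zeta}(\alpha_1)\Vert_{\mathcal{W}}\leq \Vert\mathfrak{m}\Vert_{\infty}$ (which is the analogue of Lemma \ref{definition BCS-type model approximated copy(2)} for the interactions defined in (\ref{def utile})), pull the resulting $\mathrm{e}^{2\mathbf{D}\Vert\mathfrak{m}\Vert_{\infty}|t-s|}$ out of the time integral, and collect terms.

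To estimate the norm difference of the approximating interactions, I would start from (\ref{def utile}) and (\ref{def aussi utile}) to write
\begin{equation*}
\Psi^{\zeta_1}(\alpha)-\Psi^{\zeta_2}(\alpha)=\sum_{n\in\mathbb{N}}\int_{(\mathbb{S}\cap\mathcal{W}_{\Lambda})^n}\sum_{m=1}^{n}\Psi^{(m)}\,\Delta_{m}(\alpha;\Psi^{(1)},\ldots,\Psi^{(n)})\ \mathfrak{a}(\alpha)_n(\mathrm{d}\Psi^{(1)},\ldots,\mathrm{d}\Psi^{(n)}),
\end{equation*}
where $\Delta_{m}(\alpha;\cdots)=\prod_{j\neq m}\zeta_1(\alpha)(\mathfrak{e}_{\Psi^{(j)},\vec{\ell}})-\prod_{j\neq m}\zeta_2(\alpha)(\mathfrak{e}_{\Psi^{(j)},\vec{\ell}})$. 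The standard telescoping identity $\prod a_j-\prod b_j=\sum_k(\prod_{j<k}a_j)(a_k-b_k)(\prod_{j>k}b_j)$ together with the bound $|\zeta(\alpha)(\mathfrak{e}_{\Psi,\vec{\ell}})|\leq \Vert\mathfrak{e}_{\Psi,\vec{\ell}}\Vert_{\mathcal{U}}\leq \Vert\mathbf{F}\Vert_{1,\mathfrak{L}}$ for $\Psi\in\mathbb{S}$ (cf.\ (\ref{e phi}) and $\Vert\zeta(\alpha)\Vert_{\mathcal{U}_{\Lambda}^{\ast}}=1$) gives $|\Delta_m|\leq (n-1)\Vert\mathbf{F}\Vert_{1,\mathfrak{L}}^{n-1}\Vert\zeta_1(\alpha)-\zeta_2(\alpha)\Vert_{\mathcal{U}_{\Lambda}^{\ast}}$.

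Combining the sum over $m\in\{1,\ldots,n\}$ with the $\Psi^{(m)}$ factor of unit $\mathcal{W}$-norm produces a factor $n(n-1)\leq n^{2}$, which is exactly the combinatorial weight built into the norm (\ref{definition 0bis}) on $\mathcal{S}$. Hence
\begin{equation*}
\Vert \Psi^{\zeta_1}(\alpha)-\Psi^{\zeta_2}(\alpha)\Vert_{\mathcal{W}}\leq \sum_{n}n^{2}\Vert\mathbf{F}\Vert_{1,\mathfrak{L}}^{n-1}\Vert\mathfrak{a}(\alpha)_{n}\Vert_{\mathcal{S}(\mathbb{S}^{n})}\Vert\zeta_1(\alpha)-\zeta_2(\alpha)\Vert_{\mathcal{U}_{\Lambda}^{\ast}}\leq \Vert\mathfrak{m}\Vert_{\infty}\Vert\zeta_1(\alpha)-\zeta_2(\alpha)\Vert_{\mathcal{U}_{\Lambda}^{\ast}}.
\end{equation*}

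Finally, inserting this into Proposition \ref{Theorem Lieb-Robinson copy(3)}(iii) and using $\Vert\Psi^{\zeta_1}(\alpha_1)\Vert_{\mathcal{W}}\leq \Vert\mathfrak{m}\Vert_{\infty}$ to bound the exponential by $\mathrm{e}^{2\mathbf{D}\Vert\mathfrak{m}\Vert_{\infty}|t-s|}$ uniformly for $\alpha_{1}$ between $s$ and $t$, the claimed inequality follows. There is no conceptual obstacle here; the only delicate point is tracking the $n^{2}\Vert\mathbf{F}\Vert_{1,\mathfrak{L}}^{n-1}$ combinatorial factor so that it is absorbed exactly by the norm on $\mathcal{S}$, which is why the polynomial weight in the definition (\ref{definition 0bis}) was chosen in the first place.
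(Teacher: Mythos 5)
Your proposal is correct and follows essentially the same route as the paper: reduce to Proposition \ref{Theorem Lieb-Robinson copy(3)} (iii) together with the pointwise Lipschitz bound $\Vert\Psi^{\zeta_1}(\alpha)-\Psi^{\zeta_2}(\alpha)\Vert_{\mathcal{W}}\leq\Vert\mathfrak{m}(\alpha)\Vert_{\mathcal{M}}\Vert\zeta_1(\alpha)-\zeta_2(\alpha)\Vert_{\mathcal{U}_\Lambda^{\ast}}$, then bound the exponential uniformly via $\Vert\Psi^{\zeta}(\alpha_1)\Vert_{\mathcal{W}}\leq\Vert\mathfrak{m}\Vert_\infty$. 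The paper merely asserts the key Lipschitz bound as being "similar to (\ref{totototot})"; you supply the telescoping argument and the $n(n-1)\leq n^2$ combinatorial bookkeeping that the norm (\ref{definition 0bis}) on $\mathcal{S}$ is designed to absorb, which is a welcome elaboration of the step the paper leaves implicit.
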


\begin{proof}
Similar to Inequality (\ref{totototot}), by Equations (\ref{e phi}), (\ref%
{definition 0})-(\ref{definition 0bis}) and (\ref{equality cigare00}), note
that, for any $\zeta _{1},\zeta _{2}\in C\left( \mathbb{R};E_{\Lambda
}\right) $,%
\begin{equation*}
\left\Vert \Psi ^{\zeta _{1}}\left( \alpha \right) -\Psi ^{\zeta _{2}}\left(
\alpha \right) \right\Vert _{\mathcal{W}}\leq \left\Vert \mathfrak{m}\left(
\alpha \right) \right\Vert _{\mathcal{M}}\left\Vert \zeta _{1}\left( \alpha
\right) -\zeta _{2}\left( \alpha \right) \right\Vert _{\mathcal{U}_{\Lambda
}^{\ast }}\ ,\qquad \alpha \in \mathbb{R}\ .
\end{equation*}%
Combining this inequality with Proposition \ref{Theorem Lieb-Robinson
copy(3)} (iii) and Lemma \ref{definition BCS-type model approximated copy(2)}%
, we obtain the assertion. Note that $C\left( \mathbb{R};E_{\Lambda }\right)
\subseteq C\left( \mathbb{R};\mathcal{U}_{\Lambda }^{\ast }\right) $,
because the norm and weak$^{\ast }$ topologies of $\mathcal{U}_{\Lambda
}^{\ast }$ are the same, by finite dimensionality of $\mathcal{U}_{\Lambda }$
for $\Lambda \in \mathcal{P}_{f}$.
\end{proof}

We now show the existence and uniqueness of the solution to the
self-consistency equation:

\begin{lemma}[Self-consistency equations]
\label{Solution self}\mbox{ }\newline
For any $s\in \mathbb{R}$ and $\rho \in E$, there is a unique solution $%
\varpi _{\rho ,s}$ to the following equation in $\xi \in C\left( \mathbb{R}%
;E\right) $:%
\begin{equation}
\forall t\in {\mathbb{R}}:\qquad \xi \left( t\right) =\rho \circ \tau
_{t,s}^{\Psi ^{\xi |_{\mathcal{U}_{\Lambda }}}}\ .
\label{self consitence equation1}
\end{equation}%
Moreover, $\varpi _{\rho ,s}(t)=\varpi _{\varpi _{\rho ,s}(r),r}(t)$ for any 
$r,s,t\in {\mathbb{R}}$.
\end{lemma}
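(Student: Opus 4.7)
My plan is to recast equation~(\ref{self consitence equation1}) as a fixed-point problem for the restriction $\zeta\doteq\xi|_{\mathcal{U}_\Lambda}\in C(\mathbb{R};E_\Lambda)$, solve it via Picard iteration driven by Lemma~\ref{estimate useful}, and then deduce the flow identity from the reverse cocycle property~(\ref{reverse cocycle}) of the approximating short-range dynamics. The crucial structural observation is that the right-hand side of~(\ref{self consitence equation1}) depends on $\xi$ only through $\zeta$: since $\mathfrak{m}\in C_{b}(\mathbb{R};\mathcal{M}_\Lambda)$, the measures $\mathfrak{a}(t)_{n}$ are concentrated on $(\mathbb{S}\cap\mathcal{W}_\Lambda)^{n}$ by~(\ref{S0bis}), so only the energy densities $\mathfrak{e}_{\Psi^{(j)},\vec\ell}\in\mathcal{U}_\Lambda$ appear in the definition~(\ref{def utile}) of $\Psi^{\zeta}$. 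Any solution therefore has the form $\xi(t)=\rho\circ\tau_{t,s}^{\Psi^{\zeta}}$, and it suffices to produce a unique fixed point of the operator
\begin{equation*}
\Gamma\colon\zeta\mapsto\bigl(t\mapsto(\rho\circ\tau_{t,s}^{\Psi^{\zeta}})|_{\mathcal{U}_\Lambda}\bigr)
\end{equation*}
on $C(\mathbb{R};E_\Lambda)$. That $\Gamma$ is well-defined (continuity in $t$, preservation of positivity and normalization) follows from Proposition~\ref{Theorem Lieb-Robinson}, the $\ast$-automorphism property of $\tau_{t,s}^{\Psi^{\zeta}}$, and a direct adaptation of Lemma~\ref{definition BCS-type model approximated copy(2)} yielding $\Psi^{\zeta}\in C(\mathbb{R};\mathcal{W}^{\mathbb{R}})$.

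For each $T>0$, the space $C([s-T,s+T];E_\Lambda)$ is a complete metric space for the supremum norm inherited from $C([s-T,s+T];\mathcal{U}_\Lambda^{\ast})$, the norm and weak$^\ast$ topologies on $\mathcal{U}_\Lambda^\ast$ coinciding by finite dimensionality. Applying Lemma~\ref{estimate useful} to elements $A\in\mathcal{U}_\Lambda$ of norm at most one and taking the supremum yields
\begin{equation*}
\|\Gamma(\zeta_1)(t)-\Gamma(\zeta_2)(t)\|_{\mathcal{U}_\Lambda^{\ast}}\le K_{T}\int_{t\wedge s}^{t\vee s}\|\zeta_1(\alpha)-\zeta_2(\alpha)\|_{\mathcal{U}_\Lambda^{\ast}}\,\mathrm{d}\alpha,
\end{equation*}
with $K_{T}\doteq 2|\Lambda|\,\|\mathbf{F}\|_{1,\mathfrak{L}}\,\|\mathfrak{m}\|_{\infty}\,\mathrm{e}^{2\mathbf{D}\|\mathfrak{m}\|_{\infty}T}$. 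Iterating this bound $n$ times produces the prefactor $(K_{T}T)^{n}/n!$, which is strictly less than one for sufficiently large $n$. Hence some iterate $\Gamma^{n}$ is a strict contraction on $C([s-T,s+T];E_\Lambda)$; Banach's fixed-point theorem provides a unique fixed point of $\Gamma^{n}$, which is automatically the unique fixed point of $\Gamma$ on $[s-T,s+T]$. Uniqueness guarantees that the local solutions corresponding to different choices of $T$ are compatible, so they glue into a unique $\zeta\in C(\mathbb{R};E_\Lambda)$, and $\varpi_{\rho,s}(t)\doteq\rho\circ\tau_{t,s}^{\Psi^{\zeta}}$ is the unique solution to~(\ref{self consitence equation1}) in $C(\mathbb{R};E)$.

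For the flow identity, fix $r,t\in\mathbb{R}$ and set $\rho'\doteq\varpi_{\rho,s}(r)$. The reverse cocycle property~(\ref{reverse cocycle}) of $(\tau_{t,s}^{\Psi^{\zeta}})$ gives
\begin{equation*}
\varpi_{\rho,s}(t)=\rho\circ\tau_{r,s}^{\Psi^{\zeta}}\circ\tau_{t,r}^{\Psi^{\zeta}}=\rho'\circ\tau_{t,r}^{\Psi^{\zeta}}.
\end{equation*}
Since $\Psi^{\zeta}$ depends only on $\zeta=\varpi_{\rho,s}|_{\mathcal{U}_\Lambda}$ (and not on the reference starting time), $t\mapsto\varpi_{\rho,s}(t)$ satisfies~(\ref{self consitence equation1}) with $(s,\rho)$ replaced by $(r,\rho')$; uniqueness then yields $\varpi_{\rho',r}=\varpi_{\rho,s}$, which is precisely the claimed cocycle relation. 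The main technical obstacle is already absorbed into Lemma~\ref{estimate useful}, which is itself a Gronwall-type consequence of the sharp Lieb-Robinson estimate Proposition~\ref{Theorem Lieb-Robinson copy(3)}(iii) applied to the state-dependent approximating interactions. Beyond that bound, the argument reduces to a careful but routine Picard iteration, with the only bookkeeping subtlety being the reduction from $C(\mathbb{R};E)$ to $C(\mathbb{R};E_\Lambda)$ made possible by the support condition on the measures $\mathfrak{a}(t)_n$.
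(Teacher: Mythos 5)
Your proof is correct and follows the same basic fixed-point strategy as the paper: reduce to a fixed-point problem for the restriction $\zeta=\xi|_{\mathcal{U}_\Lambda}\in C(\mathbb{R};E_\Lambda)$, justified by the support condition~(\ref{S0bis}), and then apply Lemma~\ref{estimate useful}. The main divergence is in how you globalize. The paper shrinks $T$ until the map $\mathfrak{F}$ (your $\Gamma$) is itself a contraction on $C([s-T,s+T];E_\Lambda)$ (see the constraint~(\ref{bound0003})), and then carries out a step-by-step continuation argument (Steps~3--4 of their proof) to extend to all of $\mathbb{R}$; the flow identity $\varpi_{\rho,s}(t)=\varpi_{\varpi_{\rho,s}(r),r}(t)$ is proven inductively along the way. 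You instead observe the Volterra structure of the estimate and iterate it to get the factor $(K_T T)^n/n!$, so that some power $\Gamma^n$ is a strict contraction on $C([s-T,s+T];E_\Lambda)$ for \emph{every} $T$, giving the unique fixed point on an arbitrary compact interval at one stroke; the local solutions then agree by uniqueness and patch trivially. This avoids the paper's bootstrapping entirely. You also separate the flow identity cleanly: having global existence and uniqueness in hand, you derive it in one shot from the reverse cocycle property~(\ref{reverse cocycle}) together with uniqueness of the solution based at $(r,\rho')$. Both routes are sound; yours is a somewhat more economical arrangement of the same ingredients (Lemma~\ref{estimate useful}, Proposition~\ref{Theorem Lieb-Robinson}, Lemma~\ref{definition BCS-type model approximated copy(2)}, the reverse cocycle property). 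One minor point worth being explicit about: when you invoke ``a direct adaptation of Lemma~\ref{definition BCS-type model approximated copy(2)}'' to conclude $\Psi^{\zeta}\in C(\mathbb{R};\mathcal{W}^{\mathbb{R}})$ from $\zeta\in C(\mathbb{R};E_\Lambda)$, the cleanest justification is to extend $\zeta(t)\in E_\Lambda$ to a state $\tilde\zeta(t)\in E$ via~(\ref{Gibbs state H_nbis}) (an affine, hence continuous, map), note that $\Psi^{\zeta}=\Phi^{(\mathfrak{m},\tilde\zeta)}$ by~(\ref{equality cigare00}) and the support condition, and then apply the lemma as stated.
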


\begin{proof}
The proof is similar to the one of \cite[Lemma 7.3]{Bru-pedra-MF-I}: Fix the
initial time $s\in \mathbb{R}$ and state $\rho \in E$. The existence and
uniqueness of a solution $\varpi _{\rho ,s}$ to (\ref{self consitence
equation1}) is proven via the Banach fixed point theorem: \medskip

\noindent \underline{Step 1:} Fix $T\in \mathbb{R}^{+}$ and observe that $%
C\left( [s-T,s+T];E_{\Lambda }\right) $ is a closed bounded subset of the
Banach space $C\left( [s-T,s+T];\mathcal{U}_{\Lambda }^{\ast }\right) $,
with $\mathcal{U}_{\Lambda }^{\ast }$ being endowed with the usual norm for
continuous linear functionals and 
\begin{equation*}
\left\Vert \zeta \right\Vert _{C([s-T,s+T];\mathcal{U}_{\Lambda }^{\ast
})}\doteq \sup_{t\in \lbrack s-T,s+T]}\left\Vert \zeta \left( t\right)
\right\Vert _{\mathcal{U}_{\Lambda }^{\ast }}\ ,\qquad \zeta \in C\left(
[s-T,s+T];\mathcal{U}_{\Lambda }^{\ast }\right) \ .
\end{equation*}%
Define the mapping $\mathfrak{F}$ from $C\left( [s-T,s+T];E_{\Lambda
}\right) $ to itself by%
\begin{equation}
\mathfrak{F}\left( \zeta \right) \left( t\right) \doteq \left. \rho \circ
\tau _{t,s}^{\Psi ^{\zeta }}\right\vert _{\mathcal{U}_{\Lambda }}\ ,\qquad
t\in \lbrack s-T,s+T]\ .  \label{label}
\end{equation}%
The existence of such a $\ast $-automorphism $\tau _{t,s}^{\Psi ^{\zeta }}$
for any $\zeta \in C\left( [s-T,s+T];E_{\Lambda }\right) $ and $t\in \lbrack
s-T,s+T]$ follows from Proposition \ref{Theorem Lieb-Robinson} and Lemma \ref%
{definition BCS-type model approximated copy(2)}. By Proposition \ref%
{Theorem Lieb-Robinson} (or Proposition \ref{Theorem Lieb-Robinson copy(3)}
(iv)), (\ref{label}) defines a mapping from $C\left( [s-T,s+T];E_{\Lambda
}\right) $ to itself. Moreover, by (\ref{def utile}) and (\ref{label}), we
infer from Lemma \ref{estimate useful} that, for any $\zeta _{1},\zeta
_{2}\in C\left( [s-T,s+T];E_{\Lambda }\right) $,%
\begin{eqnarray}
&&\left\Vert \mathfrak{F}\left( \zeta _{1}\right) \left( t\right) -\mathfrak{%
F}\left( \zeta _{2}\right) \left( t\right) \right\Vert _{C([s-T,s+T];%
\mathcal{U}_{\Lambda }^{\ast })}  \label{bound0002} \\
&\leq &4T\left\vert \Lambda \right\vert \left\Vert \mathbf{F}\right\Vert _{1,%
\mathfrak{L}}\left\Vert \mathfrak{m}\right\Vert _{\infty }\mathrm{e}^{4%
\mathbf{D}T\left\Vert \mathfrak{m}\right\Vert _{\infty }}\left\Vert \zeta
_{1}-\zeta _{2}\right\Vert _{C([s-T,s+T];\mathcal{U}_{\Lambda }^{\ast })}\ .
\notag
\end{eqnarray}%
Therefore, by fixing the time parameter $T\in \mathbb{R}^{+}$ such that 
\begin{equation}
T\leq \frac{\mathrm{e}^{-4\mathbf{D}T\left\Vert \mathfrak{m}\right\Vert
_{\infty }}}{8\left\vert \Lambda \right\vert \left\Vert \mathbf{F}%
\right\Vert _{1,\mathfrak{L}}\left\Vert \mathfrak{m}\right\Vert _{\infty }}\
,  \label{bound0003}
\end{equation}%
the function $\mathfrak{F}$ is a contraction. Hence, we obtain a unique
solution $\gimel _{\rho ,s}\in C\left( [s-T,s+T];E_{\Lambda }\right) $ to
Equation (\ref{self consitence equation1}) with $\xi |_{\mathcal{U}_{\Lambda
}}=\gimel _{\rho ,s}$ at fixed $s\in \mathbb{R}$ and $\rho \in E$. \medskip

\noindent \underline{Step 2:} By (\ref{equality cigare00}), the restriction
of any solution $\varpi _{\rho ,s}\in C\left( [s-T,s+T];E\right) $ to (\ref%
{self consitence equation1}) to the subspace $\mathcal{U}_{\Lambda
}\subseteq \mathcal{U}$ must equal $\gimel _{\rho ,s}\in C\left(
[s-T,s+T];E_{\Lambda }\right) $ and $\Phi ^{(\mathfrak{m},\varpi _{\rho
,s}\left( t\right) )}=\Psi ^{\gimel _{\rho ,s}}\left( t\right) $. With this
observation, we see that 
\begin{equation*}
\varpi _{\rho ,s}\left( t\right) \doteq \rho \circ \tau _{t,s}^{\Psi
^{\gimel _{\rho ,s}}},\qquad t\in \lbrack s-T,s+T]\ ,
\end{equation*}%
is the unique solution in $C\left( [s-T,s+T];E\right) $ to (\ref{self
consitence equation1}) at fixed initial time $s\in \mathbb{R}$ and state $%
\rho \in E$. \medskip

\noindent \underline{Step 3:} In the same way we prove the existence and
uniqueness of a solution to (\ref{self consitence equation1}) at fixed $s\in 
\mathbb{R}$ and $\rho \in E$, one shows that, for each $r\in \left[ s-T,s+T%
\right] $, the self-consistency equation 
\begin{equation}
\forall t\in \lbrack r-\tilde{T},r+\tilde{T}]:\qquad \xi \left( t\right)
=\varpi _{\rho ,s}\left( r\right) \circ \tau _{t,r}^{\Psi ^{\xi |_{\mathcal{U%
}_{\Lambda }}}}\ ,  \label{solutionplus}
\end{equation}%
has also a unique solution $\varpi _{\varpi _{\rho ,s}(r),r}$ in $C([r-%
\tilde{T},r+\tilde{T}];E)$ for any $\tilde{T}\in (0,T]$. By the reverse
cocycle property (\ref{reverse cocycle}), at fixed $s\in \mathbb{R}$ and $%
\rho \in E$, $\varpi _{\rho ,s}$ solves the self-consistency equation (\ref%
{solutionplus}) for any $r\in (s-T,s+T)$ and $t\in \lbrack s-\tilde{T},s+%
\tilde{T}]$ with $\tilde{T}=T-|s-r|\in \mathbb{R}^{+}$, whence 
\begin{equation*}
\varpi _{\rho ,s}(t)=\varpi _{\varpi _{\rho ,s}(r),r}(t)
\end{equation*}%
for any $r\in (s-T,s+T)$ and $t\in \lbrack s-\tilde{T},s+\tilde{T}]$.\medskip

\noindent \underline{Step 4:} Assume the existence and uniqueness of a
solution $\varpi _{\rho ,s}$ in $C\left( [s-T_{0},s+T_{0}];E\right) $ to
Equation (\ref{self consitence equation1}) for some parameter $T_{0}\in 
\mathbb{R}^{+}$. Take 
\begin{equation*}
r\in (s-T_{0},s-T_{0}+T)\cup (s+T_{0}-T,s+T_{0})\ .
\end{equation*}%
By combining the existence and uniqueness of a solution $\varpi _{\varpi
_{\rho ,s}(r),r}$ to (\ref{solutionplus}) in $C([r-\tilde{T},r+\tilde{T}];E)$
for any $\tilde{T}\in (0,T]$ together with the reverse cocycle property (\ref%
{reverse cocycle}), we deduce that 
\begin{equation*}
\varpi _{\rho ,s}(t)=\varpi _{\varpi _{\rho ,s}(r),r}(t)\ ,\qquad t\in
(s-T_{0},s+T_{0})\ ,
\end{equation*}%
as well as the existence of a unique solution $\varpi _{\rho ,s}$ to (\ref%
{self consitence equation1}) in $C\left( [s-T_{0}-T,s+T_{0}+T];E\right) $.
As a consequence, one can infer from a contradiction argument the existence
and uniqueness of a solution in $\xi \in C\left( \mathbb{R};E\right) $ to
the self-consistency equation (\ref{self consitence equation1}). Moreover,
this solution must satisfy the equality $\varpi _{\rho ,s}(t)=\varpi
_{\varpi _{\rho ,s}(r),r}(t)$ for any $r,s,t\in {\mathbb{R}}$.
\end{proof}

\begin{corollary}[Bijectivity of the solution to the self-consistency
equation]
\label{bijectivity}\mbox{ }\newline
For any $s,t\in \mathbb{R}$, $\varpi _{s}\left( t\right) \equiv (\varpi
_{\rho ,s}\left( t\right) )_{\rho \in E}$ is a bijective mapping from $E$ to
itself.
\end{corollary}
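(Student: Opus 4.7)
The plan is to derive the bijectivity directly from the cocycle identity $\varpi_{\rho,s}(t)=\varpi_{\varpi_{\rho,s}(r),r}(t)$ proven in Lemma \ref{Solution self}, by constructing an explicit inverse, namely $\varpi_t(s)$ itself.

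First I would record the initial condition $\varpi_{\rho,s}(s)=\rho$ for all $\rho\in E$ and $s\in\mathbb{R}$. This follows immediately from evaluating the self-consistency equation (\ref{self consitence equation1}) at $t=s$ and using $\tau_{s,s}^{\Psi^{\xi|_{\mathcal{U}_{\Lambda}}}}=\mathbf{1}_{\mathcal{U}}$, which gives $\xi(s)=\rho$ for the (unique) solution $\xi=\varpi_{\rho,s}$.

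Next, fix arbitrary $s,t\in\mathbb{R}$ and an arbitrary state $\rho\in E$. Apply the cocycle identity of Lemma \ref{Solution self} with $r=t$, evaluated at time $s$:
\begin{equation*}
\varpi_{\rho,s}(s)=\varpi_{\varpi_{\rho,s}(t),\,t}(s)\ .
\end{equation*}
The left-hand side is $\rho$ by the initial condition just established, so $\varpi_t(s)\circ\varpi_s(t)(\rho)=\rho$, i.e., $\varpi_t(s)\circ\varpi_s(t)=\mathbf{1}_E$. Swapping the roles of $s$ and $t$ in the same argument yields $\varpi_s(t)\circ\varpi_t(s)=\mathbf{1}_E$. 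This proves simultaneously the injectivity and the surjectivity of $\varpi_s(t)\colon E\to E$ and identifies its inverse as $\varpi_t(s)$.

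Since all ingredients (the initial condition and the cocycle identity) are already packaged in Lemma \ref{Solution self}, there is no real obstacle here; the only point to be careful about is to use the full cocycle in both directions $(s\to t\to s)$ and $(t\to s\to t)$ to obtain a two-sided inverse, rather than deducing bijectivity from a fixed-point/contraction argument on each time-slice separately. Note also that we do not need continuity of the inverse in $\rho$ at this stage (that is treated in the subsequent lemma leading to $\mathbf{\varpi}^{\mathfrak{m}}\in C(\mathbb{R}^{2};\mathrm{Aut}(E))$), so the argument here is purely algebraic and short.
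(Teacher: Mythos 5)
Your proof is correct and follows the same route as the paper's: the paper simply asserts that bijectivity is a direct consequence of the cocycle identity $\varpi_{\rho,s}(t)=\varpi_{\varpi_{\rho,s}(r),r}(t)$ from Lemma \ref{Solution self}, and you have spelled out exactly how, by combining that identity (at $r=t$, then evaluated at time $s$) with the initial condition $\varpi_{\rho,s}(s)=\rho$ to exhibit $\varpi_t(s)$ as a two-sided inverse of $\varpi_s(t)$.
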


\begin{proof}
The assertion is a direct consequence of Lemma \ref{Solution self}, in
particular the equality $\varpi _{\rho ,s}(t)=\varpi _{\varpi _{\rho
,s}(r),r}(t)$ for $r,s,t\in {\mathbb{R}}$.
\end{proof}

By combining Lemma \ref{Solution self} with Proposition \ref{Theorem
Lieb-Robinson}, note that, for any $s\in \mathbb{R}$, $\rho \in E$ and $A\in 
\mathcal{U}_{0}$, 
\begin{equation*}
\partial _{t}\left\{ \varpi _{\rho ,s}\left( t\right) \left( A\right)
\right\} =\rho \circ \tau _{t,s}^{\Psi ^{\gimel }}\circ \delta ^{\Psi
^{\gimel }\left( t\right) }\left( A\right) \ ,\qquad t\in \mathbb{R}\ ,
\end{equation*}%
with $\gimel \doteq \varpi _{\rho ,s}|_{\mathcal{U}_{\Lambda }}$. This
property can be strengthened within the local $C^{\ast }$-algebra $\mathcal{U%
}_{\Lambda }$:

\begin{lemma}[Differentiability of the solution -- I]
\label{Differentiability}\mbox{ }\newline
For any $s\in \mathbb{R}$ and $\rho \in E$, $\gimel \doteq \varpi _{\rho
,s}|_{\mathcal{U}_{\Lambda }}\in C^{1}\left( \mathbb{R};\mathcal{U}_{\Lambda
}^{\ast }\right) $ with derivative given by%
\begin{equation*}
\partial _{t}\left\{ \varpi _{\rho ,s}\left( t\right) |_{\mathcal{U}%
_{\Lambda }}\right\} =\rho \circ \tau _{t,s}^{\Psi ^{\gimel }}\circ \delta
^{\Psi ^{\gimel }\left( t\right) }|_{\mathcal{U}_{\Lambda }}\ ,\qquad t\in 
\mathbb{R}\ .
\end{equation*}%
Here, $\mathcal{U}_{\Lambda }^{\ast }$ is endowed with the usual norm for
continuous linear functionals.
\end{lemma}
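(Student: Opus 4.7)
My plan is to read off the derivative directly from the fixed-point representation furnished by Lemma \ref{Solution self}, namely
\[
\varpi_{\rho,s}(t) \;=\; \rho \circ \tau_{t,s}^{\Psi^{\gimel}}, \qquad \gimel \;=\; \varpi_{\rho,s}|_{\mathcal{U}_{\Lambda}}\in C(\mathbb{R};E_{\Lambda}),
\]
and then to invoke the non-autonomous Cauchy problem satisfied by $(\tau_{t,s}^{\Psi^{\gimel}})_{s,t\in\mathbb{R}}$ from Proposition \ref{Theorem Lieb-Robinson}. A preliminary step is to verify $\Psi^{\gimel}\in C(\mathbb{R};\mathcal{W}^{\mathbb{R}})$, which follows by exactly the same dominated-convergence argument as in the proof of Lemma \ref{definition BCS-type model approximated copy(2)} (compare the estimate (\ref{totototot})), using $\mathfrak{m}\in C_b(\mathbb{R};\mathcal{M}_{\Lambda})$ and the continuity of $\gimel$. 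In particular one also obtains the uniform bound $\|\Psi^{\gimel}(t)\|_{\mathcal{W}} \le \|\mathfrak{m}\|_{\infty}$.

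Next, for any $A\in\mathcal{U}_{\Lambda}\subseteq\mathcal{U}_0$ and $h\in\mathbb{R}\setminus\{0\}$, Proposition \ref{Theorem Lieb-Robinson} applied to $\Psi^{\gimel}$ yields the integral representation
\[
\tau_{t+h,s}^{\Psi^{\gimel}}(A) - \tau_{t,s}^{\Psi^{\gimel}}(A) \;=\; \int_{t}^{t+h} \tau_{\alpha,s}^{\Psi^{\gimel}} \circ \delta^{\Psi^{\gimel}(\alpha)}(A)\,\mathrm{d}\alpha,
\]
so that
\[
h^{-1}\bigl[\gimel(t+h)(A) - \gimel(t)(A)\bigr] \;=\; h^{-1}\int_{t}^{t+h} \rho \circ \tau_{\alpha,s}^{\Psi^{\gimel}} \circ \delta^{\Psi^{\gimel}(\alpha)}(A)\,\mathrm{d}\alpha.
\]
It therefore suffices to prove that the integrand is a continuous function of $\alpha$ with values in $\mathcal{U}_{\Lambda}^{\ast}$: by the Lebesgue differentiation theorem the right-hand side then converges, as $h\to 0$, to $\rho \circ \tau_{t,s}^{\Psi^{\gimel}} \circ \delta^{\Psi^{\gimel}(t)}(A)$ pointwise in $A$, which, by finite dimensionality of $\mathcal{U}_{\Lambda}$, automatically upgrades to convergence in $\mathcal{U}_{\Lambda}^{\ast}$. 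This simultaneously gives the claimed formula for $\partial_{t}\gimel$ and, through the very same continuity property, the continuity $t\mapsto \partial_{t}\gimel(t)\in\mathcal{U}_{\Lambda}^{\ast}$, whence $\gimel\in C^{1}(\mathbb{R};\mathcal{U}_{\Lambda}^{\ast})$.

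The required continuity of $\alpha\mapsto \rho\circ\tau_{\alpha,s}^{\Psi^{\gimel}}\circ\delta^{\Psi^{\gimel}(\alpha)}|_{\mathcal{U}_{\Lambda}}$ in $\mathcal{U}_{\Lambda}^{\ast}$ reduces, by finite dimensionality, to pointwise continuity in each $A\in\mathcal{U}_{\Lambda}$, which I would prove by the triangle inequality
\[
\bigl|\rho \circ \tau_{\alpha_1,s}^{\Psi^{\gimel}} \circ \delta^{\Psi^{\gimel}(\alpha_1)}(A) - \rho \circ \tau_{\alpha_2,s}^{\Psi^{\gimel}} \circ \delta^{\Psi^{\gimel}(\alpha_2)}(A)\bigr| \;\le\; \bigl\|\delta^{\Psi^{\gimel}(\alpha_1)}(A) - \delta^{\Psi^{\gimel}(\alpha_2)}(A)\bigr\|_{\mathcal{U}} + \bigl\|\bigl[\tau_{\alpha_1,s}^{\Psi^{\gimel}} - \tau_{\alpha_2,s}^{\Psi^{\gimel}}\bigr]\bigl(\delta^{\Psi^{\gimel}(\alpha_2)}(A)\bigr)\bigr\|_{\mathcal{U}},
\]
controlling the first term by the linearity in the interaction and the bound of Corollary \ref{Lemma cigare1} together with the continuity of $\Psi^{\gimel}$ in $\mathcal{W}^{\mathbb{R}}$ established above, and the second by the strong continuity of $(\tau_{\alpha,s}^{\Psi^{\gimel}})_{s,\alpha\in\mathbb{R}}$ from Proposition \ref{Theorem Lieb-Robinson}. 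The only mildly delicate point I foresee is the verification that $\Psi^{\gimel}\in C(\mathbb{R};\mathcal{W}^{\mathbb{R}})$, which nevertheless reduces to rerunning Lemma \ref{definition BCS-type model approximated copy(2)} with the additional input of the continuity of $\gimel$ provided by Lemma \ref{Solution self}; everything else is a routine consequence of the short-range infinite-volume results already at our disposal.
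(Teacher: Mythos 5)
Your proof is correct and follows essentially the same route as the paper's: both hinge on the non-autonomous Cauchy problem of Proposition \ref{Theorem Lieb-Robinson} for $\tau^{\Psi^{\gimel}}$, the preliminary verification (via Lemma \ref{definition BCS-type model approximated copy(2)} rerun with $\zeta=\gimel\in C(\mathbb{R};E_{\Lambda})$) that $\Psi^{\gimel}\in C(\mathbb{R};\mathcal{W}^{\mathbb{R}})$, and the split of the integrand variation into a generator term, controlled by Corollary \ref{Lemma cigare1} together with the continuity of $\Psi^{\gimel}$, and a propagator term. The only cosmetic difference is in the propagator term: the paper truncates $\delta^{\Psi^{\gimel}(t)}$ to the local $\delta_{L}^{\Psi^{\gimel}(t)}$ and applies the quantitative estimate of Proposition \ref{Theorem Lieb-Robinson copy(3)} (iv) to get the supremum over the unit ball of $\mathcal{U}_{\Lambda}$ directly, whereas you argue pointwise in $A$ via strong continuity of $(\tau_{\alpha,s}^{\Psi^{\gimel}})_{\alpha}$ and then upgrade to $\mathcal{U}_{\Lambda}^{\ast}$-norm convergence by finite dimensionality of $\mathcal{U}_{\Lambda}$; both variants are valid.
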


\begin{proof}
To prove that $\gimel \doteq \varpi _{\rho ,s}|_{\mathcal{U}_{\Lambda }}\in
C^{1}\left( \mathbb{R};\mathcal{U}_{\Lambda }^{\ast }\right) $ at fixed $%
s\in \mathbb{R}$ and $\rho \in E$, we first remark that, for any $A\in 
\mathcal{U}_{\Lambda }$ and $h\in \mathbb{R}\backslash \{0\}$, 
\begin{eqnarray}
&&\left\vert h^{-1}\left( \rho \circ \tau _{t+h,s}^{\Psi ^{\gimel }}\left(
A\right) -\rho \circ \tau _{t,s}^{\Psi ^{\gimel }}\left( A\right) \right)
-\rho \circ \tau _{t,s}^{\Psi ^{\gimel }}\circ \delta ^{\Psi ^{\gimel \left(
t\right) }}\left( A\right) \right\vert  \label{machin} \\
&\leq &\max_{\alpha \in \left[ t-h,t+h\right] }\left\Vert \delta ^{\Psi
^{\gimel \left( \alpha \right) }-\Psi ^{\gimel \left( t\right) }}\left(
A\right) \right\Vert _{\mathcal{U}}+\max_{\alpha \in \left[ t-h,t+h\right]
}\left\Vert \left( \tau _{\alpha ,s}^{\Psi ^{\gimel }}-\tau _{t,s}^{\Psi
^{\gimel }}\right) \circ \delta ^{\Psi ^{\gimel \left( t\right) }}\left(
A\right) \right\Vert _{\mathcal{U}}\ .  \notag
\end{eqnarray}%
Since $\gimel \in C\left( \mathbb{R};E_{\Lambda }\right) $, by Corollary \ref%
{Lemma cigare1}, for any $A\in \mathcal{U}_{\Lambda }$ with $\left\Vert
A\right\Vert _{\mathcal{U}}=1$, 
\begin{equation*}
\max_{\alpha \in \left[ t-h,t+h\right] }\left\Vert \delta ^{\Psi ^{\gimel
\left( \alpha \right) }-\Psi ^{\gimel \left( t\right) }}\left( A\right)
\right\Vert _{\mathcal{U}}\leq 2\left\vert \Lambda \right\vert \left\Vert 
\mathbf{F}\right\Vert _{1,\mathfrak{L}}\max_{\alpha \in \left[ t-h,t+h\right]
}\left\Vert \Psi ^{\gimel \left( \alpha \right) }-\Psi ^{\gimel \left(
t\right) }\right\Vert _{\mathcal{W}}
\end{equation*}%
and hence, by using Lemma \ref{definition BCS-type model approximated
copy(2)}, we arrive at 
\begin{equation}
\lim_{h\rightarrow 0}\sup_{A\in \mathcal{U}_{\Lambda },\left\Vert
A\right\Vert _{\mathcal{U}}=1}\max_{\alpha \in \left[ t-h,t+h\right]
}\left\Vert \delta ^{\Psi ^{\gimel \left( \alpha \right) }-\Psi ^{\gimel
\left( t\right) }}\left( A\right) \right\Vert _{\mathcal{U}}=0\ .
\label{tactac}
\end{equation}%
Meanwhile, by Proposition \ref{Lemma cigare0}, for any $A\in \mathcal{U}%
_{\Lambda }$ satisfying $\left\Vert A\right\Vert _{\mathcal{U}}=1$, 
\begin{multline*}
\left\Vert \left( \tau _{\alpha ,s}^{\Psi ^{\gimel }}-\tau _{t,s}^{\Psi
^{\gimel }}\right) \circ \delta ^{\Psi ^{\gimel }\left( t\right) }\left(
A\right) \right\Vert _{\mathcal{U}}\leq \left\Vert \left( \tau _{\alpha
,s}^{\Psi ^{\gimel }}-\tau _{t,s}^{\Psi ^{\gimel }}\right) \circ \delta
_{L}^{\Psi ^{\gimel \left( t\right) }}\left( A\right) \right\Vert _{\mathcal{%
U}} \\
+2\left\vert \Lambda \right\vert \left\Vert \Psi ^{\gimel \left( t\right)
}\right\Vert _{\mathcal{W}}\sup_{y\in \Lambda }\sum\limits_{x\in \Lambda
_{L}^{c}}\mathbf{F}\left( x,y\right) \ .
\end{multline*}%
Thus, by (\ref{(3.1) NS}), for any fixed $\varepsilon \in \mathbb{R}^{+}$,
there is $L\in \mathbb{N}$ such that, for any $A\in \mathcal{U}_{\Lambda }$
with $\left\Vert A\right\Vert _{\mathcal{U}}=1$,%
\begin{equation*}
\max_{\alpha \in \left[ t-h,t+h\right] }\left\Vert \left( \tau _{\alpha
,s}^{\Psi ^{\gimel }}-\tau _{t,s}^{\Psi ^{\gimel }}\right) \circ \delta
^{\Psi ^{\gimel }\left( t\right) }\left( A\right) \right\Vert _{\mathcal{U}%
}\leq \max_{\alpha \in \left[ t-h,t+h\right] }\left\Vert \left( \tau
_{\alpha ,s}^{\Psi ^{\gimel }}-\tau _{t,s}^{\Psi ^{\gimel }}\right) \circ
\delta _{L}^{\Psi ^{\gimel \left( t\right) }}\left( A\right) \right\Vert _{%
\mathcal{U}}+\varepsilon \ ,
\end{equation*}%
while we obtain from Proposition \ref{Theorem Lieb-Robinson copy(3)} (iv)
that 
\begin{equation*}
\lim_{h\rightarrow 0}\sup_{A\in \mathcal{U}_{\Lambda },\left\Vert
A\right\Vert _{\mathcal{U}}=1}\max_{\alpha \in \left[ t-h,t+h\right]
}\left\Vert \left( \tau _{\alpha ,s}^{\Psi ^{\gimel }}-\tau _{t,s}^{\Psi
^{\gimel }}\right) \circ \delta _{L}^{\Psi ^{\gimel \left( t\right) }}\left(
A\right) \right\Vert _{\mathcal{U}}=0\ .
\end{equation*}%
It follows that 
\begin{equation*}
\lim_{h\rightarrow 0}\sup_{A\in \mathcal{U}_{\Lambda },\left\Vert
A\right\Vert _{\mathcal{U}}=1}\max_{\alpha \in \left[ t-h,t+h\right]
}\left\Vert \left( \tau _{\alpha ,s}^{\Psi ^{\gimel }}-\tau _{t,s}^{\Psi
^{\gimel }}\right) \circ \delta ^{\Psi ^{\gimel }\left( t\right) }\left(
A\right) \right\Vert _{\mathcal{U}}=0\ .
\end{equation*}%
We finally combine the last limit with (\ref{machin})-(\ref{tactac}) to
deduce that $\gimel \in C^{1}\left( \mathbb{R};\mathcal{U}_{\Lambda }^{\ast
}\right) $ with derivative given by%
\begin{equation*}
\partial _{t}\left\{ \varpi _{\rho ,s}\left( t\right) |_{\mathcal{U}%
_{\Lambda }}\right\} =\partial _{t}\gimel \left( t\right) =\rho \circ \tau
_{t,s}^{\Psi ^{\gimel }}\circ \delta ^{\Psi ^{\gimel }\left( t\right) }|_{%
\mathcal{U}_{\Lambda }}\ ,\qquad t\in \mathbb{R}\ ,
\end{equation*}%
at any fixed $s\in \mathbb{R}$ and $\rho \in E$.
\end{proof}

\begin{lemma}[Continuity with respect to the initial condition]
\label{lemma contnuity}\mbox{ }\newline
For any $s,t\in \mathbb{R}$, $\varpi _{s}\left( t\right) \equiv (\varpi
_{\rho ,s}\left( t\right) )_{\rho \in E}\in C\left( E;E\right) $.
\end{lemma}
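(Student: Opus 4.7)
The plan is to establish weak$^\ast$-continuity of $\varpi_{s}(t)\colon E\to E$ by first proving continuity on a small time interval $[s-T,s+T]$ via the continuous dependence on parameters of the Banach fixed point from Lemma~\ref{Solution self}, and then globalizing in $t$ using the cocycle-like identity $\varpi_{\rho,s}(t)=\varpi_{\varpi_{\rho,s}(r),r}(t)$ proved there.

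Fix $s\in\mathbb{R}$ and consider a sequence $\rho_{n}\to\rho$ in $(E,w^{\ast})$ (sequences suffice since $\mathcal{U}$ is separable). Let $T\in\mathbb{R}^{+}$ be as in \eqref{bound0003}, so that the map $\mathfrak{F}$ in \eqref{label} is a contraction on $C([s-T,s+T];E_{\Lambda})$ with constant $\kappa<1$ that is \emph{uniform in} $\rho$. Write $\gimel_{n}\doteq\varpi_{\rho_{n},s}|_{\mathcal{U}_{\Lambda}}$ and $\gimel_{\infty}\doteq\varpi_{\rho,s}|_{\mathcal{U}_{\Lambda}}$; these are the corresponding fixed points. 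The standard stability estimate for contractions gives
\begin{equation*}
\|\gimel_{n}-\gimel_{\infty}\|_{C([s-T,s+T];\mathcal{U}_{\Lambda}^{\ast})}\;\leq\;\frac{1}{1-\kappa}\sup_{t\in[s-T,s+T]}\;\sup_{A\in\mathcal{U}_{\Lambda},\,\|A\|_{\mathcal{U}}\leq 1}\bigl|(\rho_{n}-\rho)\bigl(\tau_{t,s}^{\Psi^{\gimel_{\infty}}}(A)\bigr)\bigr|.
\end{equation*}
To push the right-hand side to $0$, I invoke Proposition~\ref{Theorem Lieb-Robinson} (strong continuity of $\tau^{\Psi^{\gimel_{\infty}}}$) together with the finite-dimensionality of $\mathcal{U}_{\Lambda}$: the map $(t,A)\mapsto\tau_{t,s}^{\Psi^{\gimel_{\infty}}}(A)$ is jointly norm-continuous on the compact set $[s-T,s+T]\times\{A\in\mathcal{U}_{\Lambda}:\|A\|_{\mathcal{U}}\leq 1\}$, so its image $K\subseteq\mathcal{U}$ is norm-compact. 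The weak$^{\ast}$-convergent sequence $(\rho_{n})$ is norm-bounded in $\mathcal{U}^{\ast}$, and a bounded weak$^{\ast}$-convergent net converges uniformly on norm-compact sets (an equicontinuity argument); hence the supremum above tends to $0$.

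Next, for arbitrary $A\in\mathcal{U}$ and $t\in[s-T,s+T]$, decompose
\begin{equation*}
\varpi_{\rho_{n},s}(t)(A)-\varpi_{\rho,s}(t)(A)=\rho_{n}\!\left[(\tau_{t,s}^{\Psi^{\gimel_{n}}}-\tau_{t,s}^{\Psi^{\gimel_{\infty}}})(A)\right]+(\rho_{n}-\rho)\!\left(\tau_{t,s}^{\Psi^{\gimel_{\infty}}}(A)\right).
\end{equation*}
Lemma~\ref{estimate useful} (with $\tilde{\Lambda}\in\mathcal{P}_{f}$ such that $A\in\mathcal{U}_{\tilde{\Lambda}}$, then density) bounds the first summand by a constant multiple of $\|\gimel_{n}-\gimel_{\infty}\|_{C([s-T,s+T];\mathcal{U}_{\Lambda}^{\ast})}$, which goes to $0$ by the previous step; the second summand vanishes by weak$^{\ast}$-convergence of $\rho_{n}$ to $\rho$. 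This proves weak$^{\ast}$-continuity of $\varpi_{\cdot,s}(t)$ for every $t\in[s-T,s+T]$, and symmetry of the argument at any base time yields continuity of $\varpi_{\cdot,r}(t)$ for all $|t-r|\leq T$. For general $t\in\mathbb{R}$, pick a finite chain $s=r_{0},r_{1},\ldots,r_{N}=t$ with $|r_{j+1}-r_{j}|\leq T$; iteratively applying the identity $\varpi_{\rho,s}(t)=\varpi_{\varpi_{\rho,s}(r_{N-1}),r_{N-1}}(t)$ from Lemma~\ref{Solution self} expresses $\varpi_{\cdot,s}(t)$ as a finite composition of continuous maps $E\to E$, and composition of weak$^{\ast}$-continuous maps is weak$^{\ast}$-continuous.

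The main obstacle is the equicontinuity/compactness step: one must upgrade weak$^{\ast}$-convergence $\rho_{n}\to\rho$ on $\mathcal{U}$ to convergence of $(\rho_{n}-\rho)\circ\tau_{t,s}^{\Psi^{\gimel_{\infty}}}|_{\mathcal{U}_{\Lambda}}$ \emph{uniformly in} $t\in[s-T,s+T]$ in the dual norm of $\mathcal{U}_{\Lambda}^{\ast}$. This is precisely where the hypothesis $\mathfrak{m}\in\mathcal{M}_{\Lambda}$ pays off: the long-range part of the model depends on states only through their restriction to the finite-dimensional algebra $\mathcal{U}_{\Lambda}$, so the self-consistency equation lives in the finite-dimensional space $C([s-T,s+T];E_{\Lambda})$, whose unit ball is norm-compact. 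Combined with the strong continuity and contractivity of the short-range dynamics $\tau^{\Psi^{\gimel_{\infty}}}$ from Propositions~\ref{Theorem Lieb-Robinson} and \ref{Theorem Lieb-Robinson copy(3)}, this compactness is what converts weak$^{\ast}$-testing on $\mathcal{U}$ into the uniform estimate needed for stability of the fixed point.
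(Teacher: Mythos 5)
Your proof is correct and establishes the lemma, but by a genuinely different route than the paper. The paper's proof introduces the quantity $\mathbf{X}(T)=\max_{t\in[s-T,s+T]}\|(\varpi_{\rho_1,s}(t)-\varpi_{\rho_2,s}(t))|_{\mathcal{U}_\Lambda}\|_{\mathcal{U}_\Lambda^*}$, applies the finite-volume approximation of Proposition~\ref{Theorem Lieb-Robinson copy(3)}~(ii) to replace $\tau_{t,s}^{\Psi^\zeta}$ by a dynamics localized in $\Lambda_L$ (paying an arbitrarily small error $\varepsilon$), and then derives a Gr\"{o}nwall-type inequality $\mathbf{X}(T)\leq 2\varepsilon+\|(\rho_1-\rho_2)|_{\mathcal{U}_{\Lambda_L}}\|+C(T)\int_0^T\mathbf{X}$. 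The conversion from weak$^*$ to norm convergence is then achieved by the finite-dimensionality of $\mathcal{U}_{\Lambda_L}^*$, and the argument works for arbitrary $T$ in one stroke; no chaining in $t$ is needed. You instead invoke the standard quantitative stability estimate for the Banach fixed point (valid on the small contraction window of Step~1 of Lemma~\ref{Solution self}, with a contraction constant that is indeed uniform in $\rho$ because Lemma~\ref{estimate useful} is state-independent), and you convert weak$^*$ convergence to the required uniform estimate by observing that the image $K=\{\tau_{t,s}^{\Psi^{\gimel_\infty}}(A):t\in[s-T,s+T],\ \|A\|_{\mathcal{U}}\leq1,\ A\in\mathcal{U}_\Lambda\}$ is norm-compact (joint continuity plus compactness of the unit ball of the finite-dimensional $\mathcal{U}_\Lambda$), so that a bounded weak$^*$-convergent net converges uniformly on $K$. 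You then globalize in $t$ through the cocycle identity $\varpi_{\rho,s}(t)=\varpi_{\varpi_{\rho,s}(r),r}(t)$. Both arguments are sound; the paper's Gr\"{o}nwall route avoids the need to chain through a grid in $t$ at the price of introducing the $\varepsilon$-dependent truncation, while your fixed-point-stability route is local in $t$ but conceptually cleaner about where the finite-dimensionality of $\mathcal{U}_\Lambda$ is being used.

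One imprecision in your closing paragraph: you write that ``the self-consistency equation lives in the finite-dimensional space $C([s-T,s+T];E_{\Lambda})$, whose unit ball is norm-compact.'' That is false as stated---$C([s-T,s+T];\mathcal{U}_\Lambda^*)$ is an infinite-dimensional Banach space and its unit ball is not compact. What your argument actually uses, and what is correct, is the norm-compactness of the unit ball of $\mathcal{U}_\Lambda$ (a finite-dimensional algebra), which yields compactness of the set $K\subseteq\mathcal{U}$ on which you test the states. The proof itself relies only on the latter, so this slip in the commentary does not invalidate the argument, but the phrasing should be corrected.
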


\begin{proof}
Take $s\in \mathbb{R}$ and two states $\rho _{1},\rho _{2}\in E$. Then,
define the quantity 
\begin{equation}
\mathbf{X}\left( T\right) \doteq \max_{t\in \left[ s-T,s+T\right]
}\left\Vert \left( \varpi _{\rho _{1},s}\left( t\right) -\varpi _{\rho
_{2},s}\left( t\right) \right) |_{\mathcal{U}_{\Lambda }}\right\Vert _{%
\mathcal{U}_{\Lambda }^{\ast }}\ ,\qquad T\in \mathbb{R}^{+}\ .
\label{sdfsdf}
\end{equation}%
By Proposition \ref{Theorem Lieb-Robinson copy(3)} (ii) and Lemma \ref%
{definition BCS-type model approximated copy(2)} together with Equations (%
\ref{(3.1) NS}) and (\ref{equality cigare00}), for any $T,\varepsilon \in 
\mathbb{R}^{+}$, there is $L\in \mathbb{N}$ such that, for any $A\in 
\mathcal{U}_{\Lambda }$ with $\left\Vert A\right\Vert _{\mathcal{U}}=1$, 
\begin{equation}
\sup_{t\in \left[ s-T,s+T\right] }\sup_{\zeta \in C\left( \mathbb{R}%
;E_{\Lambda }\right) }\left\Vert \tau _{t,s}^{\Psi ^{\zeta }}\left( A\right)
-\tau _{t,s}^{(L,\Psi ^{\zeta })}\left( A\right) \right\Vert _{\mathcal{U}%
}\leq \varepsilon \ .  \label{bound000}
\end{equation}%
By combining Lemmata \ref{estimate useful} (for $\mathfrak{L}=\Lambda _{L}$)
and \ref{Solution self} with (\ref{bound000}), we thus obtain the bound%
\begin{equation*}
\mathbf{X}\left( T\right) \leq 2\varepsilon +\left\Vert \left( \rho
_{1}-\rho _{2}\right) |_{\mathcal{U}_{\Lambda _{L}}}\right\Vert _{\mathcal{U}%
_{\Lambda _{L}}^{\ast }}+2\left\vert \Lambda \right\vert \left\Vert \mathbf{F%
}\right\Vert _{1,\mathfrak{L}}\left\Vert \mathfrak{m}\right\Vert _{\infty }%
\mathrm{e}^{2\mathbf{D}T\left\Vert \mathfrak{m}\right\Vert _{\infty
}}\int_{0}^{T}\mathbf{X}\left( \alpha \right) \mathrm{d}\alpha \ ,
\end{equation*}%
where $\left\Vert \mathfrak{m}\right\Vert _{\infty }$ is defined by (\ref%
{norm m}). By Gr\"{o}nwall's inequality, 
\begin{equation}
\mathbf{X}\left( T\right) \leq \left( 2\varepsilon +\left\Vert \left( \rho
_{1}-\rho _{2}\right) |_{\mathcal{U}_{\Lambda _{L}}}\right\Vert _{\mathcal{U}%
_{\Lambda _{L}}^{\ast }}\right) \left( 1+2\left\vert \Lambda \right\vert
\left\Vert \mathbf{F}\right\Vert _{1,\mathfrak{L}}\left\Vert \mathfrak{m}%
\right\Vert _{\infty }\mathrm{e}^{2\mathbf{D}T\left\Vert \mathfrak{m}%
\right\Vert _{\infty }}\mathbf{Y}\left( T\right) \right)  \label{gronwal}
\end{equation}%
with 
\begin{equation}
\mathbf{Y}\left( T\right) \doteq \int_{0}^{T}\exp \left\{ 2\left\vert
\Lambda \right\vert \left\Vert \mathbf{F}\right\Vert _{1,\mathfrak{L}%
}\left\Vert \mathfrak{m}\right\Vert _{\infty }\mathrm{e}^{2\mathbf{D}%
T\left\Vert \mathfrak{m}\right\Vert _{\infty }}\alpha \right\} \mathrm{d}%
\alpha \ .  \label{gronwal2}
\end{equation}%
By finite dimensionality of $\mathcal{U}_{\Lambda _{L}}$, the norm and weak$%
^{\ast }$ topologies of $\mathcal{U}_{\Lambda _{L}}^{\ast }$ are the same
and, by the weak$^{\ast }$ continuity property of $\varpi _{s}\left(
t\right) $, we infer from (\ref{sdfsdf}) and (\ref{gronwal})-(\ref{gronwal2}%
) that 
\begin{equation}
(\varpi _{\rho ,s}\left( t\right) |_{\mathcal{U}_{\Lambda }})_{\rho \in
E}\in C\left( E;\mathcal{U}_{\Lambda }^{\ast }\right) \ ,\qquad s,t\in 
\mathbb{R}\ .  \label{norm continuity}
\end{equation}%
The continuity is even uniform for times $t$ in compact sets. Now, by\ Lemma %
\ref{Solution self}, for any $s,t\in \mathbb{R}$, $\rho _{1},\rho _{2}\in E$
and $A\in \mathcal{U}$, one obviously gets from the triangle inequality that%
\begin{eqnarray}
&&\left\vert \left( \varpi _{\rho _{1},s}\left( t\right) -\varpi _{\rho
_{2},s}\left( t\right) \right) \left( A\right) \right\vert
\label{sdfsfsdfdffffffffffffffffff} \\
&\leq &\left\vert \left( \rho _{1}-\rho _{2}\right) \circ \tau _{t,s}^{\Psi
^{\varpi _{\rho _{1},s}|_{\mathcal{U}_{\Lambda }}}}\left( A\right)
\right\vert +\left\Vert \left( \tau _{t,s}^{\Psi ^{\varpi _{\rho _{1},s}|_{%
\mathcal{U}_{\Lambda }}}}-\tau _{t,s}^{\Psi ^{\varpi _{\rho _{2},s}|_{%
\mathcal{U}_{\Lambda }}}}\right) \left( A\right) \right\Vert _{\mathcal{U}}\
.  \notag
\end{eqnarray}%
Combined with Lemma \ref{estimate useful}, Lebesgue's dominated convergence
theorem and Equation (\ref{norm continuity}), this last inequality leads to 
\begin{equation*}
(\varpi _{\rho ,s}\left( t\right) \left( A\right) )_{\rho \in E}\in C\left(
E;\mathbb{C}\right) \ ,\qquad A\in \mathcal{U}_{0}\ .
\end{equation*}%
By density of $\mathcal{U}_{0}\subseteq \mathcal{U}$ and the fact that any
state $\rho \in E$ satisfies $\left\Vert \rho \right\Vert _{\mathcal{U}%
^{\ast }}=1$, the assertion follows.
\end{proof}

\begin{corollary}[Solution to the self-consistency equation as
self-homeomorphisms]
\label{Corollary bije+cocyl}\mbox{ }\newline
At fixed $s,t\in \mathbb{R}$, $\varpi _{s}\left( t\right) \equiv (\varpi
_{\rho ,s}\left( t\right) )_{\rho \in E}\in \mathrm{Aut}\left( E\right) $,
i.e., $\varpi _{s}\left( t\right) $ is an automorphism of the state space $E$%
. Moreover, it satisfies a cocycle property:%
\begin{equation}
\forall s,r,t\in \mathbb{R}:\qquad \varpi _{s}\left( t\right) =\varpi
_{r}\left( t\right) \circ \varpi _{s}\left( r\right) \ .
\label{cocycle property}
\end{equation}
\end{corollary}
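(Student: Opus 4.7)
The plan is to assemble the pieces already prepared in Lemmata \ref{Solution self}, \ref{lemma contnuity} and Corollary \ref{bijectivity}. The statement $\varpi _{s}(t)\in \mathrm{Aut}(E)$ requires weak$^{\ast }$-continuity of $\varpi _{s}(t)$, bijectivity, and weak$^{\ast }$-continuity of its inverse. The first is exactly Lemma \ref{lemma contnuity}; the second is exactly Corollary \ref{bijectivity}. For the third, I would invoke the characterisation recorded just below the definition of $\mathrm{Aut}(E)$: since $E$ is weak$^{\ast }$-compact and Hausdorff, any continuous bijection $E\to E$ is automatically a homeomorphism (images of closed sets are compact, hence closed, so the inverse is continuous). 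This settles the first half of the statement.

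The cocycle property is read off directly from the last assertion of Lemma \ref{Solution self}, which furnishes the identity $\varpi _{\rho ,s}(t)=\varpi _{\varpi _{\rho ,s}(r),r}(t)$ for all $r,s,t\in \mathbb{R}$ and $\rho \in E$. Unpacking this at the level of maps gives $\varpi _{s}(t)(\rho )=(\varpi _{r}(t)\circ \varpi _{s}(r))(\rho )$ for every $\rho \in E$, which is exactly (\ref{cocycle property}).

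So the corollary itself is a short assembly of prior results. The only serious work --- getting continuity of $\varpi _{s}(t)$ on the full state space $E$ and not merely on the finite-dimensional restriction to $\mathcal{U}_{\Lambda }$ --- was already carried out in Lemma \ref{lemma contnuity} via the rate-of-convergence estimate of Proposition \ref{Theorem Lieb-Robinson copy(3)}(ii) together with Gr\"{o}nwall's inequality. No new analytic input is needed here, and I do not foresee any substantive obstacle; the only point I would take care to make explicit in the write-up is the passage from ``continuous bijection on a compact Hausdorff space'' to ``element of $\mathrm{Aut}(E)$,'' since this is precisely where the definitional convention for $\mathrm{Aut}(E)$ fixed earlier in this section enters.
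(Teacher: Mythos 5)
Your proposal is correct and follows essentially the same route as the paper's proof: cite Corollary \ref{bijectivity} for bijectivity and Lemma \ref{lemma contnuity} for weak$^{\ast }$-continuity, invoke compactness of $E$ (continuous bijection on a compact Hausdorff space is a homeomorphism) for continuity of the inverse, and read the cocycle property directly off the identity $\varpi _{\rho ,s}(t)=\varpi _{\varpi _{\rho ,s}(r),r}(t)$ of Lemma \ref{Solution self}. If anything you are slightly more careful than the paper, which credits Corollary \ref{bijectivity} with the continuity as well, when that piece really comes from Lemma \ref{lemma contnuity}.
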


\begin{proof}
The proof is the same as the one of \cite[Corollary 7.7]{Bru-pedra-MF-I}: By
Corollary \ref{bijectivity}, for any $s,t\in \mathbb{R}$, $\varpi _{s}\left(
t\right) $ is a weak$^{\ast }$-continuous bijective mapping from $E$ to
itself. Since $E$ is weak$^{\ast }$-compact, its inverse is also weak$^{\ast
}$-continuous. The cocycle property is a rewriting of the equality $\varpi
_{\rho ,s}(t)=\varpi _{\varpi _{\rho ,s}(r),r}(t)$ of Lemma \ref{Solution
self}.
\end{proof}

Before stating the next lemma, recall that the topology used in the subspace 
$\mathrm{Aut}\left( E\right) \varsubsetneq C\left( E;E\right) $ of all
automorphisms of $E$ is the one of uniform convergence of weak$^{\ast }$%
-continuous functions, as stated in Equation (\ref{uniform convergence weak*}%
).

\begin{lemma}[Well-posedness of the self-consistency equation]
\label{lemma well}\mbox{ }\newline
For any $s\in \mathbb{R}$,%
\begin{equation*}
\mathbf{\varpi }_{s}^{\mathfrak{m}}\equiv (\varpi _{s}\left( t\right)
)_{t\in \mathbb{R}}\equiv ((\varpi _{\rho ,s}\left( t\right) )_{\rho \in
E})_{t\in \mathbb{R}}\in C\left( \mathbb{R};\mathrm{Aut}\left( E\right)
\right) \ .
\end{equation*}%
At fixed $s\in \mathbb{R}$, the mapping $t\mapsto \varpi _{s}(t)$ is
uniformly continuous for times $t$ in compact sets, i.e., for any $%
T,\varepsilon \in \mathbb{R}^{+}$ and $A\in \mathcal{U}$ there is $\eta \in 
\mathbb{R}^{+}$ such that, for all $t_{1},t_{2}\in \lbrack -T,T]$ with $%
\left\vert t_{1}-t_{2}\right\vert \leq \eta $,%
\begin{equation*}
\max_{\rho \in E}\left\vert \varpi _{s}\left( t_{1}\right) \left( A\right)
-\varpi _{s}\left( t_{2}\right) \left( A\right) \right\vert \leq \varepsilon
\ .
\end{equation*}
\end{lemma}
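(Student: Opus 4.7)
The plan is to establish uniform continuity in time on compacta for local observables first, then pass to general $A\in\mathcal{U}$ by density, and finally read off continuity into $\mathrm{Aut}(E)$ from the definition of its topology. Fix $s\in\mathbb{R}$ and $T\in\mathbb{R}^{+}$. By Lemma \ref{Solution self}, one has $\varpi_{\rho,s}(t)(A)=\rho\circ\tau_{t,s}^{\Psi^{\gimel_{\rho}}}(A)$ for every $\rho\in E$, where $\gimel_{\rho}\doteq\varpi_{\rho,s}|_{\mathcal{U}_{\Lambda}}\in C(\mathbb{R};E_{\Lambda})$. The decisive observation is that, by Lemma \ref{definition BCS-type model approximated copy(2)} together with (\ref{equality cigare00}), $\Vert\Psi^{\gimel_{\rho}}(\alpha)\Vert_{\mathcal{W}}\leq\Vert\mathfrak{m}(\alpha)\Vert_{\mathcal{M}}\leq\Vert\mathfrak{m}\Vert_{\infty}$ uniformly in $\rho\in E$ and $\alpha\in\mathbb{R}$, so all forthcoming Lieb--Robinson-type estimates will be uniform in the initial state.

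First, for $A\in\mathcal{U}_{\Lambda^{\prime}}$ with $\Lambda^{\prime}\in\mathcal{P}_{f}$ arbitrary, I would apply Proposition \ref{Theorem Lieb-Robinson copy(3)} (iv) to $\Psi=\Psi^{\gimel_{\rho}}\in C(\mathbb{R};\mathcal{W}^{\mathbb{R}})$ with $s_{1}=s_{2}=s$ and $t_{1},t_{2}\in\lbrack-T,T]$. Combined with $\Vert\rho\Vert_{\mathcal{U}^{\ast}}=1$ and the uniform bound above, this yields a Lipschitz estimate of the form
\[
\max_{\rho\in E}\left\vert\varpi_{\rho,s}(t_{1})(A)-\varpi_{\rho,s}(t_{2})(A)\right\vert\leq C(T,\Lambda^{\prime},\Vert\mathbf{F}\Vert_{1,\mathfrak{L}},\Vert\mathfrak{m}\Vert_{\infty})\,\Vert A\Vert_{\mathcal{U}}\,\vert t_{1}-t_{2}\vert,
\]
with a constant independent of $\rho\in E$ because the exponential factor appearing in Proposition \ref{Theorem Lieb-Robinson copy(3)} (iv) is itself dominated by $\exp(4\mathbf{D}T\Vert\mathfrak{m}\Vert_{\infty})$.

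Second, for a general $A\in\mathcal{U}$ and $\varepsilon\in\mathbb{R}^{+}$, using the density of $\mathcal{U}_{0}\subseteq\mathcal{U}$ and that $\Vert\varpi_{\rho,s}(t)\Vert_{\mathcal{U}^{\ast}}=1$, I would pick some $A_{\varepsilon}\in\mathcal{U}_{0}$ with $\Vert A-A_{\varepsilon}\Vert_{\mathcal{U}}\leq\varepsilon/4$ and split
\[
\max_{\rho\in E}\left\vert\varpi_{\rho,s}(t_{1})(A)-\varpi_{\rho,s}(t_{2})(A)\right\vert\leq 2\Vert A-A_{\varepsilon}\Vert_{\mathcal{U}}+\max_{\rho\in E}\left\vert\varpi_{\rho,s}(t_{1})(A_{\varepsilon})-\varpi_{\rho,s}(t_{2})(A_{\varepsilon})\right\vert.
\]
Applying the Lipschitz estimate of the previous step to the second summand and choosing $\eta\in\mathbb{R}^{+}$ appropriately produces the quantitative uniform-continuity bound stated in the lemma.

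Finally, the claim $\mathbf{\varpi}_{s}^{\mathfrak{m}}\in C(\mathbb{R};\mathrm{Aut}(E))$ follows by combining Corollary \ref{Corollary bije+cocyl}, which already guarantees $\varpi_{s}(t)\in\mathrm{Aut}(E)$ for every $t\in\mathbb{R}$, with the just-established uniform continuity: convergence $t_{n}\to t$ in $\mathbb{R}$ forces $\max_{\rho\in E}\vert\varpi_{s}(t_{n})(\rho)(A)-\varpi_{s}(t)(\rho)(A)\vert\to 0$ for every $A\in\mathcal{U}$, which is precisely convergence in the topology (\ref{uniform convergence weak*}) defining $\mathrm{Aut}(E)$. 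There is no serious obstacle in this argument; the only subtle point is the uniformity of the estimates in $\rho\in E$, which would fail if the approximating interaction $\Psi^{\gimel_{\rho}}$ depended on $\rho$ in an uncontrolled way, but Lemma \ref{definition BCS-type model approximated copy(2)} provides exactly the state-independent bound required to feed Proposition \ref{Theorem Lieb-Robinson copy(3)} (iv).
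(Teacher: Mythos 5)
Your proposal is correct, and it takes a genuinely different — and more direct — route than the paper's own proof. The paper proceeds by contradiction combined with weak$^{\ast}$-compactness of $E$: it assumes $\varpi_{s}(t_{n})\not\to\varpi_{s}(t)$ uniformly, extracts a convergent subsequence of witness states $\rho_{n}\to\rho$, replaces $\varpi_{\rho_{n},s}(t)$ with $\varpi_{\rho,s}(t)$ via Lemma \ref{lemma contnuity}, swaps the $\rho_{n}$-dependent approximating interaction for the $\rho$-dependent one via Lemma \ref{estimate useful} together with Gr\"onwall-type bounds, and finally contradicts the strong continuity asserted in Proposition \ref{Theorem Lieb-Robinson}. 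You instead observe that for fixed $\rho$ the approximating interaction $\Psi^{\gimel_{\rho}}$ is the \emph{same} on both sides of the comparison, so that Proposition \ref{Theorem Lieb-Robinson copy(3)} (iv) with $s_{1}=s_{2}=s$ yields directly, for local $A\in\mathcal{U}_{\Lambda^{\prime}}$,
\begin{equation*}
\left\vert \varpi_{\rho,s}(t_{1})(A)-\varpi_{\rho,s}(t_{2})(A)\right\vert
\leq
\left\Vert \tau_{t_{1},s}^{\Psi^{\gimel_{\rho}}}(A)-\tau_{t_{2},s}^{\Psi^{\gimel_{\rho}}}(A)\right\Vert_{\mathcal{U}}
\leq
2\left\vert\Lambda^{\prime}\right\vert\left\Vert A\right\Vert_{\mathcal{U}}\left\Vert\mathbf{F}\right\Vert_{1,\mathfrak{L}}\left\Vert\mathfrak{m}\right\Vert_{\infty}\left\vert t_{1}-t_{2}\right\vert\ ,
\end{equation*}
with a Lipschitz constant that is uniform in $\rho$ precisely because of the state-independent estimate of Lemma \ref{definition BCS-type model approximated copy(2)}. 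The density extension and the reading-off of continuity in the $\mathrm{Aut}(E)$-topology (\ref{uniform convergence weak*}) are then routine. Your route is shorter, avoids the compactness argument entirely, and yields a quantitative modulus of continuity (indeed a Lipschitz bound on $\mathcal{U}_{0}$), whereas the paper's proof reuses machinery shared with Lemma \ref{lemma contnuity} and mirrors the structure of \cite[Lemma 7.8]{Bru-pedra-MF-I}. One minor remark: since you set $s_{1}=s_{2}=s$, the second integral in Proposition \ref{Theorem Lieb-Robinson copy(3)} (iv) vanishes identically, so the exponential factor you were careful to dominate by $\exp(4\mathbf{D}T\left\Vert\mathfrak{m}\right\Vert_{\infty})$ does not actually appear here; your Lipschitz constant is in fact $T$-independent.
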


\begin{proof}
The proof is not exactly the same as the one of \cite[Lemma 7.8]%
{Bru-pedra-MF-I}, but it is similar: Take any sequence $(t_{n})_{n\in 
\mathbb{N}}\subseteq \mathbb{R}$ converging to $t\in \mathbb{R}$. Assume
that $\varpi _{s}\left( t_{n}\right) $ does not converge to $\varpi
_{s}\left( t\right) $, uniformly. So, by density of $\mathcal{U}_{0}$, there
are $(\rho _{n})_{n\in \mathbb{N}}\subseteq E$, $k\in \mathbb{N}$, $L\in 
\mathbb{N}$, $\varepsilon _{1},\ldots ,\varepsilon _{k}\in \mathbb{R}^{+}$
and $A_{1},\ldots ,A_{k}\in \mathcal{U}_{\Lambda _{L}}$ such that 
\begin{equation}
\liminf_{n\rightarrow \infty }\left\vert \left[ \varpi _{\rho _{n},s}\left(
t_{n}\right) -\varpi _{\rho _{n},s}\left( t\right) \right] \left(
A_{j}\right) \right\vert \geq \varepsilon _{j}>0\ ,\qquad j\in \left\{
1,\ldots ,k\right\} \ .  \label{machin2}
\end{equation}%
By weak$^{\ast }$-compactness and metrizability of $E$, we can assume
without loss of generality that the sequence $(\rho _{n})_{n\in \mathbb{N}}$
weak$^{\ast }$-converges to some $\rho \in E$. By Lemma \ref{Solution self}
and Equation (\ref{machin2}), this in turn implies that 
\begin{equation}
\liminf_{n\rightarrow \infty }\left\vert \left[ \rho _{n}\circ \tau
_{t_{n},s}^{\Psi ^{\varpi _{\rho _{n},s}|_{\mathcal{U}_{\Lambda }}}}-\varpi
_{\rho ,s}\left( t\right) \right] \left( A_{j}\right) \right\vert \geq
\varepsilon _{j}>0\ ,\qquad j\in \left\{ 1,\ldots ,k\right\} \ ,
\label{machin2bis}
\end{equation}%
since $\varpi _{s}\left( t\right) \in C\left( E;E\right) $. Using Lemma \ref%
{estimate useful}, Lebesgue's dominated convergence theorem and Equation (%
\ref{norm continuity}), we obtain from (\ref{machin2bis}) that 
\begin{equation}
\liminf_{n\rightarrow \infty }\left\vert \left[ \rho _{n}\circ \tau
_{t_{n},s}^{\Psi ^{\varpi _{\rho ,s}|_{\mathcal{U}_{\Lambda }}}}-\varpi
_{\rho ,s}\left( t\right) \right] \left( A_{j}\right) \right\vert \geq
\varepsilon _{j}>0\ ,\qquad j\in \left\{ 1,\ldots ,k\right\} \ .
\label{estimate}
\end{equation}%
But this is a contradiction because $(\tau _{t,s}^{\Psi })_{s,t\in {\mathbb{R%
}}}$ is a strongly continuous two-para%
\-%
meter family (Proposition \ref{Theorem Lieb-Robinson}) and hence, 
\begin{equation*}
\lim_{n\rightarrow \infty }\rho _{n}\circ \tau _{t_{n},s}^{\Psi ^{\varpi
_{\rho ,s}|_{\mathcal{U}_{\Lambda }}}}\left( A_{j}\right) =\rho \circ \tau
_{t,s}^{\Psi ^{\varpi _{\rho ,s}|_{\mathcal{U}_{\Lambda }}}}\left(
A_{j}\right) =\left[ \varpi _{\rho ,s}\left( t\right) \right] \left(
A_{j}\right)
\end{equation*}%
for all $j\in \left\{ 1,\ldots ,k\right\} $. By (\ref{sdfsdf}) and (\ref%
{gronwal})-(\ref{gronwal2}) taken for some arbitrarily large (but finite)
cubic box $\Lambda _{L}\supseteq \Lambda $, note that $(\varpi _{\rho
,s}\left( t\right) |_{\mathcal{U}_{\Lambda }})_{\rho \in E}\in C\left( E;%
\mathcal{U}_{\Lambda _{L}}^{\ast }\right) $ is uniformly continuous for
times $t$ in compact sets and, as $n\rightarrow \infty $, 
\begin{equation*}
(\tau _{t,s}^{\Psi ^{\varpi _{\rho _{n},s}|_{\mathcal{U}_{\Lambda }}}}-\tau
_{t,s}^{\Psi ^{\varpi _{\rho ,s}|_{\mathcal{U}_{\Lambda }}}})\left( A\right)
\ ,\qquad A\in \mathcal{U}_{0}\ ,
\end{equation*}%
converges to $0\in \mathcal{U}$ uniformly with respect to times $t$ in
compact sets, by Lemma \ref{estimate useful}. It means that the mapping $%
t\mapsto \varpi _{s}(t)$ is in fact uniformly continuous for times $t$ in
compact sets, at fixed $s\in \mathbb{R}$.
\end{proof}

\begin{lemma}[Joint continuity with respect to initial and final times]
\label{lemma well copy(1)}\mbox{ }\newline
The solution to the self-consistency equation is jointly continuous with
respect to initial and final times: 
\begin{equation*}
\mathbf{\varpi }^{\mathfrak{m}}\equiv (\mathbf{\varpi }_{s}^{\mathfrak{m}%
})_{s\in \mathbb{R}}\equiv (\varpi _{s}\left( t\right) )_{s,t\in \mathbb{R}%
}\equiv ((\varpi _{\rho ,s}\left( t\right) )_{\rho \in E})_{s,t\in \mathbb{R}%
}\in C\left( \mathbb{R}^{2};\mathrm{Aut}\left( E\right) \right) \ .
\end{equation*}
\end{lemma}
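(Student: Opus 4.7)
The plan is to reduce joint continuity in $(s,t)$ to the already-established continuity in $t$ at fixed $s$ (Lemma \ref{lemma well}), combined with an estimate showing that $\varpi_{\rho, s_n}(s) \to \rho$ uniformly in $\rho \in E$ whenever $s_n \to s$. The bridge between these two facts is the cocycle property from Corollary \ref{Corollary bije+cocyl}, which yields
\begin{equation*}
\varpi_{\rho, s_{n}}\left( t_{n}\right) = \varpi_{\varpi_{\rho, s_{n}}\left( s\right), s}\left( t_{n}\right)
\end{equation*}
for any sequence $(s_n, t_n) \to (s,t)$ in $\mathbb{R}^2$. For any $A \in \mathcal{U}_{0}$ (which is norm-dense in $\mathcal{U}$, while all relevant functionals have unit norm), we then decompose
\begin{equation*}
\varpi_{\rho, s_{n}}(t_{n})(A) - \varpi_{\rho, s}(t)(A) = \Bigl[ \varpi_{\varpi_{\rho, s_{n}}(s), s}(t_{n})(A) - \varpi_{\varpi_{\rho, s_{n}}(s), s}(t)(A)\Bigr] + \Bigl[ \varpi_{\varpi_{\rho, s_{n}}(s), s}(t)(A) - \varpi_{\rho, s}(t)(A)\Bigr] .
\end{equation*}

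The first bracket is controlled uniformly in $\rho$ by Lemma \ref{lemma well}, which asserts that $t' \mapsto \varpi_{s}(t')$ is uniformly continuous for $t'$ in compact sets, the convergence being uniform in the starting state. The second bracket is handled by the Gronwall-type estimate already implicit in the proof of Lemma \ref{lemma contnuity}: using (\ref{sdfsfsdfdffffffffffffffffff}) together with Lemma \ref{estimate useful} and Lebesgue's dominated convergence theorem, it is bounded by a term that tends to $0$ as soon as $\|(\varpi_{\rho,s_{n}}(s) - \rho)|_{\mathcal{U}_{\Lambda_{L}}}\|_{\mathcal{U}_{\Lambda_{L}}^{\ast}} \to 0$ (for any fixed $L$ large enough that $A \in \mathcal{U}_{\Lambda_L}$), uniformly in $\rho \in E$.

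The key remaining step is thus to show that $\varpi_{\rho, s_n}(s) \to \rho$ on $\mathcal{U}_{\Lambda_L}$ uniformly in $\rho \in E$. By the defining self-consistency relation, $\varpi_{\rho, s_{n}}(s) = \rho \circ \tau_{s, s_{n}}^{\Psi^{\gimel}}$, with $\gimel \doteq \varpi_{\rho, s_{n}}|_{\mathcal{U}_{\Lambda}}$, so that
\begin{equation*}
\left| \bigl( \varpi_{\rho, s_{n}}(s) - \rho \bigr)(A) \right| \leq \left\| \tau_{s, s_{n}}^{\Psi^{\gimel}}(A) - \tau_{s, s}^{\Psi^{\gimel}}(A) \right\|_{\mathcal{U}} , \qquad A \in \mathcal{U}_{\Lambda_{L}} .
\end{equation*}
Applying Proposition \ref{Theorem Lieb-Robinson copy(3)} (iv) with $t_{1}=t_{2}=s$, $s_{1}=s$, $s_{2}=s_{n}$ and invoking the uniform bound $\|\Psi^{\gimel}(\alpha)\|_{\mathcal{W}} \leq \|\mathfrak{m}\|_{\infty}$ from Lemma \ref{definition BCS-type model approximated copy(2)}, one obtains a bound of the form $2|\Lambda_{L}| \|A\|_{\mathcal{U}} \|\mathbf{F}\|_{1,\mathfrak{L}} \|\mathfrak{m}\|_{\infty} |s_{n}-s| \mathrm{e}^{2\mathbf{D}\|\mathfrak{m}\|_{\infty}|s_{n}-s|}$, which is independent of both $\rho$ and the trajectory $\gimel$, and vanishes as $s_{n} \to s$.

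Combining the three estimates yields $\varpi_{s_{n}}(t_{n})(A) \to \varpi_{s}(t)(A)$ uniformly on $E$ for each $A \in \mathcal{U}_{0}$, which by density and the uniform boundedness $\|\varpi_{\rho,s}(t)\|_{\mathcal{U}^{\ast}} = 1$ extends to all $A \in \mathcal{U}$. This is precisely the topology (\ref{uniform convergence weak*}) on $\mathrm{Aut}(E)$, so $\mathbf{\varpi}^{\mathfrak{m}} \in C(\mathbb{R}^{2}; \mathrm{Aut}(E))$. The main obstacle is securing uniformity in the initial state $\rho$ of every estimate entering the decomposition; this is precisely what the Lieb-Robinson-type bounds of Proposition \ref{Theorem Lieb-Robinson copy(3)} provide, since their constants depend on the interaction $\Psi^{\gimel}$ only through its $\mathcal{W}$-norm, itself uniformly bounded by $\|\mathfrak{m}\|_{\infty}$.
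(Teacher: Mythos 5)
Your proof is correct and takes a genuinely different route from the paper. The paper's proof runs in two stages: first, a fresh Banach fixed-point argument on $C([s-T,s+T]^2;E_\Lambda)$ (with the two-parameter map $\mathfrak{F}(\zeta)(\alpha,t)\doteq\rho\circ\tau_{t,\alpha}^{\Psi^{\zeta(\alpha,\cdot)}}|_{\mathcal{U}_\Lambda}$) to establish the pointwise continuity $(\varpi_{\rho,s}(t))_{s,t\in\mathbb{R}}\in C(\mathbb{R}^2;E)$ at fixed $\rho$; second, a compactness-and-contradiction argument — closely mirroring the one in Lemma \ref{lemma well} — to upgrade this to uniform continuity over $E$. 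Your approach bypasses the two-parameter fixed-point step entirely by exploiting the cocycle identity of Corollary \ref{Corollary bije+cocyl}, which reduces joint continuity to (a) the $t$-uniform continuity already secured in Lemma \ref{lemma well}, plus (b) the statement that $\varpi_{\rho,s_n}(s)\to\rho$ on $\mathcal{U}_{\Lambda_L}$ uniformly in $\rho$, which you establish directly from the self-consistency relation and Proposition \ref{Theorem Lieb-Robinson copy(3)} (iv). Feeding (b) into the quantitative Gr\"onwall bound (\ref{gronwal})--(\ref{gronwal2}) of Lemma \ref{lemma contnuity} (rather than merely invoking its qualitative conclusion $\varpi_s(t)\in C(E;E)$) then gives the uniformity in $\rho$ without any sequential-compactness argument. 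The paper's route is more self-contained in the sense that it does not reuse the cocycle structure, whereas yours is shorter and makes the uniformity in $\rho$ transparent from the explicit constants. One small point worth making explicit: in the first term of (\ref{sdfsfsdfdffffffffffffffffff}) you need to localize $\tau_{t,s}^{\Psi^{\gimel}}(A)$ via Proposition \ref{Theorem Lieb-Robinson copy(3)} (ii) before applying your convergence on $\mathcal{U}_{\Lambda_{L'}}$, so the box $\Lambda_{L'}$ generally has to be larger than $\Lambda_L$; this costs nothing since your Step-5 estimate holds for every fixed $L'$.
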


\begin{proof}
The proof is not the same as the one of \cite[Lemma 7.9]{Bru-pedra-MF-I},
but it is similar: Fix $\rho \in E$, $s\in \mathbb{R}$ and $T\in \mathbb{R}%
^{+}$ and remark that $C\left( [s-T,s+T]^{2};E_{\Lambda }\right) $ is a
closed bounded subset of the Banach space $C\left( [s-T,s+T]^{2};\mathcal{U}%
_{\Lambda }^{\ast }\right) $, where%
\begin{equation*}
\left\Vert \zeta \right\Vert _{C\left( [s-T,s+T]^{2};\mathcal{U}_{\Lambda
}^{\ast }\right) }\doteq \sup_{\alpha ,t\in \lbrack s-T,s+T]}\left\Vert
\zeta \left( \alpha ,t\right) \right\Vert _{\mathcal{U}_{\Lambda }^{\ast }}\
,\qquad \zeta \in C\left( [s-T,s+T]^{2};\mathcal{U}_{\Lambda }^{\ast
}\right) \ .
\end{equation*}%
Similar to (\ref{label}), we define the mapping $\mathfrak{F}$ from $C\left(
[s-T,s+T]^{2};E_{\Lambda }\right) $ to itself by%
\begin{equation*}
\mathfrak{F}\left( \zeta \right) \left( \alpha ,t\right) \doteq \rho \circ
\tau _{t,\alpha }^{\Psi ^{\zeta }}|_{\mathcal{U}_{\Lambda }}\ ,\qquad \alpha
,t\in \lbrack s-T,s+T]\ .
\end{equation*}%
See Proposition \ref{Theorem Lieb-Robinson} and Lemma \ref{definition
BCS-type model approximated copy(2)}. By Inequality (\ref{bound0002}), $%
\mathfrak{F}$ is also a contraction when the time $T\in \mathbb{R}^{+}$
satisfies (\ref{bound0003}). Hence, in this case, for any $\rho \in E$ and $%
s\in \mathbb{R}$, there is a unique 
\begin{equation}
\tilde{\gimel}\in C^{2}\left( [s-T,s+T]^{2};E_{\Lambda }\right)
\label{sdfkh0}
\end{equation}%
such that 
\begin{equation*}
\forall \alpha ,t\in \lbrack s-T,s+T]:\qquad \tilde{\gimel}\left( \alpha
,t\right) =\rho \circ \tau _{t,\alpha }^{\Psi ^{\tilde{\gimel}\left( \alpha
,\cdot \right) }}|_{\mathcal{U}_{\Lambda }}\ .
\end{equation*}%
Now, by uniqueness of the solution to (\ref{self consitence equation1}), 
\begin{equation}
\varpi _{\rho ,\alpha }\left( t\right) |_{\mathcal{U}_{\Lambda }}=\tilde{%
\gimel}\left( \alpha ,t\right) =\rho \circ \tau _{t,\alpha }^{\Psi ^{\tilde{%
\gimel}\left( \alpha ,\cdot \right) }}|_{\mathcal{U}_{\Lambda }}
\label{sdfkh}
\end{equation}%
for any $\alpha ,t\in \lbrack s-T,s+T]$. Similar to (\ref%
{sdfsfsdfdffffffffffffffffff}), for any $A\in \mathcal{U}$ and $\alpha
_{1},\alpha _{2},t_{1},t_{2}\in \lbrack s-T,s+T]^{2}$, note that 
\begin{multline*}
\left\vert \left( \varpi _{\rho ,\alpha _{1}}\left( t_{1}\right) -\varpi
_{\rho ,\alpha _{2}}\left( t_{2}\right) \right) \left( A\right) \right\vert
\leq \left\Vert \left( \tau _{t_{1},\alpha _{1}}^{\Psi ^{\tilde{\gimel}%
\left( \alpha _{1},\cdot \right) }}-\tau _{t_{2},\alpha _{1}}^{\Psi ^{\tilde{%
\gimel}\left( \alpha _{1},\cdot \right) }}\right) \left( A\right)
\right\Vert _{\mathcal{U}} \\
+\left\Vert \left( \tau _{t_{2},\alpha _{1}}^{\Psi ^{\tilde{\gimel}\left(
\alpha _{1},\cdot \right) }}-\tau _{t_{2},\alpha _{2}}^{\Psi ^{\tilde{\gimel}%
\left( \alpha _{1},\cdot \right) }}\right) \left( A\right) \right\Vert _{%
\mathcal{U}}+\left\Vert \left( \tau _{t_{2},\alpha _{2}}^{\Psi ^{\tilde{%
\gimel}\left( \alpha _{1},\cdot \right) }}-\tau _{t_{2},\alpha _{2}}^{\Psi ^{%
\tilde{\gimel}\left( \alpha _{2},\cdot \right) }}\right) \left( A\right)
\right\Vert _{\mathcal{U}}\ .
\end{multline*}%
Using this elementary inequality together with Proposition \ref{Theorem
Lieb-Robinson copy(3)} (iv), Lemmata \ref{definition BCS-type model
approximated copy(2)}, \ref{estimate useful} and Equations (\ref{sdfkh0})-(%
\ref{sdfkh}), one gets that, for any local element $A\in \mathcal{U}_{0}$,
the mapping 
\begin{equation*}
\left( \alpha ,t\right) \mapsto \varpi _{\rho ,\alpha }\left( t\right)
\left( A\right)
\end{equation*}%
from $[s-T,s+T]^{2}$ to $\mathbb{C}$ is continuous. By density of $\mathcal{U%
}_{0}\varsubsetneq \mathcal{U}$, we deduce that 
\begin{equation*}
(\varpi _{\rho ,\alpha }\left( t\right) )_{\alpha ,t\in \mathbb{R}}\in
C\left( [s-T,s+T]^{2};E\right) \ ,\qquad \rho \in E,\ s\in \mathbb{R}\ ,
\end{equation*}%
the parameter $T\in \mathbb{R}^{+}$ satisfying (\ref{bound0003}). Via
Corollary \ref{Corollary bije+cocyl} and Lemma \ref{lemma well}, we then
deduce that%
\begin{equation}
(\varpi _{\rho ,s}\left( t\right) )_{s,t\in \mathbb{R}}\in C\left( \mathbb{R}%
^{2};E\right) \ ,\qquad \rho \in E\ .  \label{norm continuity s}
\end{equation}

To get the assertion, it only remains to reproduce the compactness argument
performed in the proof of Lemma \ref{lemma well}: Take two sequences $%
(s_{n})_{n\in \mathbb{N}},(t_{n})_{n\in \mathbb{N}}\subseteq \mathbb{R}$
converging to $s,t\in \mathbb{R}$, respectively. Assume that $\varpi
_{s_{n}}\left( t_{n}\right) $ does not converge to $\varpi _{s}\left(
t\right) $, uniformly, which corresponds to have Equation (\ref{machin2bis}%
), the term $\tau _{t_{n},s}^{\Psi ^{\varpi _{\rho _{n},s}|_{\mathcal{U}%
_{\Lambda }}}}$ being replaced with $\tau _{t_{n},s_{n}}^{\Psi ^{\varpi
_{\rho _{n},s_{n}}|_{\mathcal{U}_{\Lambda }}}}$. Thanks to the triangle
inequality and Lemma \ref{estimate useful}, for any $L\in \mathbb{N}$, $A\in 
\mathcal{U}_{\Lambda _{L}}$, sufficiently large $T\in \mathbb{R}^{+}$ and $%
n\in \mathbb{N}$ such that $s,t\in (-T,T)$ and $s_{n},t_{n}\in \lbrack -T,T]$%
, 
\begin{multline*}
\left\Vert \left( \tau _{t_{n},s_{n}}^{\Psi ^{\varpi _{\rho _{n},s_{n}}|_{%
\mathcal{U}_{\Lambda }}}}-\tau _{t_{n},s_{n}}^{\Psi ^{\varpi _{\rho ,s}|_{%
\mathcal{U}_{\Lambda }}}}\right) \left( A\right) \right\Vert _{\mathcal{U}%
}\leq 2\left\vert \Lambda _{L}\right\vert \left\Vert A\right\Vert _{\mathcal{%
U}}\left\Vert \mathbf{F}\right\Vert _{1,\mathfrak{L}}\left\Vert \mathfrak{m}%
\right\Vert _{\infty }\mathrm{e}^{4\mathbf{D}T\left\Vert \mathfrak{m}%
\right\Vert _{\infty }} \\
\times \int_{-T}^{T}\left( \left\Vert \left( \varpi _{\rho _{n},s_{n}}\left(
\alpha \right) -\varpi _{\rho ,s_{n}}\left( \alpha \right) \right) |_{%
\mathcal{U}_{\Lambda }}\right\Vert _{\mathcal{U}_{\Lambda }^{\ast
}}+\left\Vert \left( \varpi _{\rho ,s_{n}}\left( \alpha \right) -\varpi
_{\rho ,s}\left( \alpha \right) \right) |_{\mathcal{U}_{\Lambda
}}\right\Vert _{\mathcal{U}_{\Lambda }^{\ast }}\right) \mathrm{d}\alpha \ .
\end{multline*}%
As a consequence, by using (\ref{gronwal})-(\ref{gronwal2}) and (\ref{norm
continuity s}) together with Lebesgue's dominated convergence theorem, one
arrives from (\ref{machin2bis}) at Equation (\ref{estimate}), $\tau
_{t_{n},s}^{\Psi ^{\varpi _{\rho ,s}|_{\mathcal{U}_{\Lambda }}}}$ being
replaced with $\tau _{t_{n},s_{n}}^{\Psi ^{\varpi _{\rho ,s}|_{\mathcal{U}%
_{\Lambda }}}}$. This is not possible because $(\tau _{t,s}^{\Psi })_{s,t\in 
{\mathbb{R}}}$ is a strongly continuous two-para%
\-%
meter family, by Proposition \ref{Theorem Lieb-Robinson}.
\end{proof}

The proof of Theorem \ref{theorem sdfkjsdklfjsdklfj}, being a consequence of
Lemmata \ref{Solution self} and \ref{lemma well copy(1)}, is finished. In
order to prove Theorem \ref{classical dynamics I}, we give now several
technical assertions. Concerning the first one, recall Definition \ref%
{convex Frechet derivative}: For any $f\in \mathfrak{C}$ and $\rho \in E$, $%
\mathrm{d}f\left( \rho \right) :E\rightarrow \mathbb{C}$ is the (unique)
convex weak$^{\ast }$-continuous G\^{a}teaux derivative of $f$ at $\rho \in
E $ if $\mathrm{d}f\left( \rho \right) \in \mathcal{A}(E;\mathbb{C})$ and%
\begin{equation*}
\lim_{\lambda \rightarrow 0^{+}}\lambda ^{-1}\left( f\left( \left( 1-\lambda
\right) \rho +\lambda \upsilon \right) -f\left( \rho \right) \right) =\left[ 
\mathrm{d}f\left( \rho \right) \right] \left( \upsilon \right) \ ,\qquad
\upsilon \in E\ .
\end{equation*}%
By Equation (\ref{clear2}) extended to the complex case, for any $f\in 
\mathfrak{Y}$, there is a unique $\mathrm{D}f\in C(E;\mathcal{U})$ such that%
\begin{equation*}
\mathrm{d}f\left( \rho \right) \left( \upsilon \right) =\widehat{\mathrm{D}%
f\left( \rho \right) }\left( \upsilon \right) =\upsilon \left( \mathrm{D}%
f\left( \rho \right) \right) \ ,\qquad \rho ,\upsilon \in E\ .
\end{equation*}%
In the next lemma, we compute the convex weak$^{\ast }$-continuous G\^{a}%
teaux derivative of the classical evolution (\ref{classical evolution family}%
) of elementary functions defined by (\ref{fA}) for local elements of $%
\mathcal{U}$:

\begin{lemma}[Differentiability of the solution -- II]
\label{lemma contnuity copy(1)}\mbox{ }\newline
For any $s,t\in \mathbb{R}$ and $A\in \mathcal{U}_{0}$, 
\begin{equation}
(\varpi _{\rho ,s}\left( t\right) (A))_{\rho \in E}\equiv (\varpi _{\rho
,s}\left( t,A\right) )_{\rho \in E}\in C^{1}\left( E;\mathbb{C}\right)
\label{eqsdfkljsdklfj}
\end{equation}%
and, for any $\upsilon \in E$, 
\begin{equation*}
\left[ \mathrm{d}\varpi _{\rho ,s}\left( t,A\right) \right] \left( \upsilon
\right) =\upsilon \left( \mathrm{D}\varpi _{\rho ,s}\left( t,A\right)
\right) =\left( \upsilon -\rho \right) \circ \tau _{t,s}^{\Psi ^{\gimel
}}\left( A\right) +\mathfrak{\maltese }_{A}\left[ \mathrm{d}\varpi _{\rho
,s}\left( \cdot ,\cdot \right) \left( \upsilon \right) \right] \ ,
\end{equation*}%
where $\gimel \equiv \gimel _{\rho ,s}\doteq \varpi _{\rho ,s}|_{\mathcal{U}%
_{\Lambda }}$ and, for any continuous function $\xi :\mathbb{R}\times 
\mathcal{U}_{\Lambda }\rightarrow \mathbb{C}$,%
\begin{align}
\mathfrak{\maltese }_{A}\left[ \xi \right] & \doteq \sum_{n\in \mathbb{N}%
}\sum\limits_{\mathcal{Z}\in \mathcal{P}_{f}}\int_{s}^{t}\mathrm{d}\alpha
\int_{(\mathbb{S}\cap \mathcal{W}_{\Lambda })^{n}}\mathfrak{a}\left( \alpha
\right) _{n}\left( \mathrm{d}\Psi ^{(1)},\ldots ,\mathrm{d}\Psi ^{(n)}\right)
\notag \\
& \sum_{m_{1},m_{2}=1,m_{2}\neq m_{1}}^{n}\xi (\alpha ,\mathfrak{e}_{\Psi
^{(m_{2})},\vec{\ell}})\rho \left( i\left[ \tau _{\alpha ,s}^{\Psi ^{\gimel
}}\left( \Psi _{\mathcal{Z}}^{(m_{1})}\right) ,\tau _{t,s}^{\Psi ^{\gimel
}}\left( A\right) \right] \right)  \notag \\
& \prod\limits_{j\in \left\{ 1,\ldots ,n\right\} \backslash
\{m_{1},m_{2}\}}\varpi _{\rho ,s}\left( \alpha ,\mathfrak{e}_{\Psi ^{(j)},%
\vec{\ell}}\right) \ .  \label{def dysons}
\end{align}%
The above series is absolutely summable. Moreover, for any $s\in \mathbb{R}$%
, $\rho \in E$ and $\tilde{\Lambda}\in \mathcal{P}_{f}$, the mapping $%
(t,A)\mapsto \mathrm{D}\varpi _{\rho ,s}\left( t,A\right) $ from $\mathbb{R}%
\times \mathcal{U}_{\tilde{\Lambda}}$ to $\mathcal{U}$ is continuous.
\end{lemma}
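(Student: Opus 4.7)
The plan is to differentiate the self-consistency equation $\varpi_{\rho,s}(t)=\rho\circ\tau_{t,s}^{\Psi^{\gimel}}$ of Lemma \ref{Solution self} in the direction $\upsilon-\rho$ (with $\rho_\lambda\doteq(1-\lambda)\rho+\lambda\upsilon$), obtain a linear integral (Dyson-type) equation that the G\^ateaux derivative must satisfy, solve it by a Banach fixed-point argument parallel to the one used in Lemma \ref{Solution self}, and finally verify by Gr\"{o}nwall that the incremental ratio converges to this fixed point. The representation $\widehat{\mathrm D\varpi_{\rho,s}(t,A)}=\mathrm d\varpi_{\rho,s}(t,A)$ then comes for free, since the limit produced by the Dyson equation is manifestly affine and weak$^\ast$-continuous in $\upsilon$.

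Concretely, I would expand
\begin{equation*}
\lambda^{-1}\bigl(\varpi_{\rho_\lambda,s}(t,A)-\varpi_{\rho,s}(t,A)\bigr)=(\upsilon-\rho)\circ\tau_{t,s}^{\Psi^{\gimel_\lambda}}(A)+\lambda^{-1}\rho\circ\bigl(\tau_{t,s}^{\Psi^{\gimel_\lambda}}-\tau_{t,s}^{\Psi^{\gimel}}\bigr)(A),
\end{equation*}
with $\gimel_\lambda\doteq\varpi_{\rho_\lambda,s}|_{\mathcal U_\Lambda}$, and apply the Duhamel identity from (\ref{assertion bisbisbisbis0100}) to rewrite the second term as $\int_s^t\rho\circ\tau_{\alpha,s}^{\Psi^{\gimel_\lambda}}\circ\delta^{\Psi^{\gimel_\lambda}(\alpha)-\Psi^{\gimel}(\alpha)}\circ\tau_{t,\alpha}^{\Psi^{\gimel}}(A)\,\mathrm d\alpha/\lambda$. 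By the polynomial form of $\lfloor\cdot;\cdot\rfloor_{\vec\ell}$ in (\ref{def aussi utile}) and Definition \ref{definition BCS-type model approximated}, a product-rule expansion of $\lfloor\gimel_\lambda(\alpha);\cdot\rfloor_{\vec\ell}-\lfloor\gimel(\alpha);\cdot\rfloor_{\vec\ell}$ yields a double sum $\sum_{m_1\neq m_2}\Psi^{(m_1)}\cdot[\gimel_\lambda(\alpha)-\gimel(\alpha)](\mathfrak e_{\Psi^{(m_2)},\vec\ell})\cdot\prod_{j\neq m_1,m_2}\gimel_\lambda(\alpha)(\mathfrak e_{\Psi^{(j)},\vec\ell})$ up to $O(\lambda^2)$-errors, and the use of the reverse cocycle (\ref{reverse cocycle}) to move $\tau_{\alpha,s}^{\Psi^\gimel}$ inside the commutator produces exactly the expression (\ref{def dysons}) in the $\lambda\to 0^+$ limit.

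Accepting this heuristic, let $g_\upsilon\equiv\mathrm d\varpi_{\rho,s}(\cdot,\cdot)(\upsilon)$ denote the candidate derivative and consider, for $T>0$, the linear fixed-point problem
\begin{equation*}
g_\upsilon(t',B)=(\upsilon-\rho)\circ\tau_{t',s}^{\Psi^{\gimel}}(B)+\mathfrak{\maltese}_B[g_\upsilon],\qquad t'\in[s-T,s+T],\ B\in\mathcal U_{\tilde\Lambda}.
\end{equation*}
The operator $\mathfrak{\maltese}_B[\cdot]$ is linear in $g_\upsilon$; its kernel is bounded thanks to Proposition \ref{Theorem Lieb-Robinson copy(3)} (i) (Lieb-Robinson bound on the commutator $[\tau_{\alpha,s}^{\Psi^\gimel}(\Psi^{(m_1)}_{\mathcal Z}),\tau_{t',s}^{\Psi^\gimel}(B)]$), by the bound $|\varpi_{\rho,s}(\alpha,\mathfrak e_{\Psi^{(j)},\vec\ell})|\leq\|\mathbf F\|_{1,\mathfrak L}$ from (\ref{e phi}), and by the absolute summability induced by the $n^2\|\mathbf F\|_{1,\mathfrak L}^{n-1}$-weight in the norm (\ref{definition 0bis}). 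The factor $T$ arising from the time integration shows that $\mathfrak{\maltese}_B$ has operator norm $\leq C_{\mathfrak m,\tilde\Lambda}\,T$, so for small $T$ the Banach fixed-point theorem yields a unique $g_\upsilon\in C([s-T,s+T];C(\mathcal U_{\tilde\Lambda};\mathbb C))$, linear in $B$ and affine and weak$^\ast$-continuous in $\upsilon$. Then one extends globally in $t'$ by iteration exactly as in Step 4 of the proof of Lemma \ref{Solution self}, and globally in $\tilde\Lambda$ via $\mathcal U_0=\bigcup_{L}\mathcal U_{\Lambda_L}$.

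For the convergence of the incremental ratio to $g_\upsilon$, set $h_\lambda(t',B)\doteq\lambda^{-1}(\varpi_{\rho_\lambda,s}(t',B)-\varpi_{\rho,s}(t',B))-g_\upsilon(t',B)$. Subtracting the Dyson equation from the Duhamel expansion and bounding each product-rule remainder with the Taylor inequality yields
\begin{equation*}
\sup_{\|B\|_{\mathcal U}\leq1,\ B\in\mathcal U_{\tilde\Lambda}}|h_\lambda(t',B)|\leq\varepsilon(\lambda)+C\int_{s\wedge t'}^{s\vee t'}\sup_{\|B'\|_{\mathcal U}\leq1,\ B'\in\mathcal U_\Lambda}|h_\lambda(\alpha,B')|\,\mathrm d\alpha,
\end{equation*}
where $\varepsilon(\lambda)\to 0$ follows from Lemma \ref{estimate useful}, Lemma \ref{lemma contnuity} and the uniform estimate (\ref{norm continuity}) (which control $\tau_{t,s}^{\Psi^{\gimel_\lambda}}-\tau_{t,s}^{\Psi^{\gimel}}$ and the Taylor remainders in $\prod_j\gimel_\lambda(\alpha)(\mathfrak e_{\Psi^{(j)},\vec\ell})$). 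Gr\"onwall's inequality then gives $h_\lambda\to 0$ uniformly on compacta, which simultaneously establishes the existence of the G\^ateaux derivative, the identity (\ref{def dysons}), and the assertion $(\varpi_{\rho,s}(t)(A))_{\rho\in E}\in C^1(E;\mathbb C)$. The joint continuity of $(t,B)\mapsto\mathrm D\varpi_{\rho,s}(t,B)$ in norm of $\mathcal U$ is then read off from the joint continuity of $g_\cdot(t,B)$ on $E$ together with the continuity in $(t,B)$ of the inhomogeneous term $(\upsilon-\rho)\circ\tau_{t,s}^{\Psi^{\gimel}}(B)$, using Proposition \ref{Theorem Lieb-Robinson copy(3)} (iv) and a duality/Banach--Alaoglu argument as in the passage from (\ref{norm continuity}) to (\ref{norm continuity s}). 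The main obstacle is the Gr\"onwall step: the nonlinear coupling through $\gimel_\lambda$ mixes values of $h_\lambda$ at different times and at the "internal" elements $\mathfrak e_{\Psi^{(m_2)},\vec\ell}\in\mathcal U_\Lambda$, so the Taylor remainder has to be controlled uniformly in $\alpha$ and in the interactions $\Psi^{(j)}\in\mathbb S\cap\mathcal W_\Lambda$ before the integral equation closes.
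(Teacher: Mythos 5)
Your overall blueprint --- differentiate the self-consistency equation, use a Duhamel identity, expand the polynomial structure of $\lfloor\cdot;\cdot\rfloor_{\vec\ell}$ to obtain a linear Dyson-type integral equation, solve it, and then verify convergence of the incremental ratio --- matches the paper's proof, and your Banach fixed-point/Gr\"{o}nwall variant of the final step would be a legitimate alternative to the paper's direct iteration of the Dyson series combined with Lebesgue dominated convergence. But there is a genuine gap at the very first step, and it is exactly the one the paper flags as the technical heart of the argument.

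Your rewriting of $\lambda^{-1}\rho\circ\bigl(\tau_{t,s}^{\Psi^{\gimel_\lambda}}-\tau_{t,s}^{\Psi^{\gimel}}\bigr)(A)$ as $\int_s^t\rho\circ\tau_{\alpha,s}^{\Psi^{\gimel_\lambda}}\circ\delta^{\Psi^{\gimel_\lambda}(\alpha)-\Psi^{\gimel}(\alpha)}\circ\tau_{t,\alpha}^{\Psi^{\gimel}}(A)\,\mathrm d\alpha/\lambda$ applies the derivation $\delta^{\Psi^{\gimel_\lambda}-\Psi^{\gimel}}$ to $\tau_{t,\alpha}^{\Psi^{\gimel}}(A)$, which is \emph{not} a local element: a priori it is not in the domain of the (densely defined, only closable) derivation. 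Equation (\ref{assertion bisbisbisbis0100}), which you cite, is the finite-volume Duhamel expansion, where the inner dynamics is $\tau_{t,\alpha}^{(L,\Psi)}$ so that everything stays in $\mathcal{U}_0$; passing to the infinite-volume Duhamel identity --- the paper's Equation (\ref{equality rivial}) --- requires controlling $\delta_{L}^{\Psi_1-\Psi_2}\circ\bigl(\tau^{\Psi_2}_{t,\alpha}-\tau^{(L_0,\Psi_2)}_{t,\alpha}\bigr)(A)$ uniformly in $L$, and this quantity is a commutator of a commutator (a double commutator, i.e.\ a multi-commutator of order three). The paper obtains its absolute summability and the interchange of the $\mathcal{Z}$-sum with the $\alpha$-integral from the multi-commutator Lieb-Robinson bounds of \cite[Theorems 4.11, 5.4]{brupedraLR}, and those bounds require sufficient decay of the interactions, which is precisely why the approximating interactions are taken to be \emph{finite-range}, i.e.\ $\Psi^{\gimel}(t)\in\mathcal{W}_\Lambda$, which in turn is why the hypothesis is $\mathfrak{m}\in C_b(\mathbb{R};\mathcal{M}_\Lambda)$ and not general $\mathfrak{m}$. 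The ordinary (order-two) Lieb-Robinson bounds of Proposition \ref{Theorem Lieb-Robinson copy(3)} (i), which is all you invoke, give summability of the $\mathcal{Z}$-sum for each fixed $\alpha$, but they do not by themselves justify the infinite-volume Duhamel formula you start from. Without this step the Dyson equation on which your entire fixed-point/Gr\"{o}nwall machinery rests is not established. The obstacle you flag at the end (controlling Taylor remainders uniformly so the Gr\"{o}nwall loop closes) is real but secondary; it is the one the paper sidesteps by instead writing $\beth(h,\cdot,\cdot;\upsilon)$ itself as a uniformly (in $h$) absolutely summable Dyson series and passing to the $h\to0^+$ limit term by term.
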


\begin{proof}
We start with a preliminary observation: By Proposition \ref{Theorem
Lieb-Robinson} and (\ref{cauchy1}) together with Lieb-Robinson bounds for
multi-commutators (cf. Proposition \ref{Theorem Lieb-Robinson copy(3)} (i)
and \cite[Theorems 4.11, 5.4]{brupedraLR}) and Lebesgue's dominated
convergence theorem, observe that 
\begin{equation}
\left( \tau _{t,s}^{\Psi _{1}}-\tau _{t,s}^{\Psi _{2}}\right) \left(
A\right) =\sum\limits_{\mathcal{Z}\in \mathcal{P}_{f}}\int_{s}^{t}\mathrm{d}%
\alpha \tau _{\alpha ,s}^{\Psi _{1}}\left( i\left[ \left( \Psi _{1}\left(
\alpha \right) -\Psi _{2}\left( \alpha \right) \right) _{\mathcal{Z}},\tau
_{t,\alpha }^{\Psi _{2}}\left( A\right) \right] \right)
\label{equality rivial}
\end{equation}%
for any $s,t\in \mathbb{R}$, $A\in \mathcal{U}_{0}$\ and $\Psi _{1},\Psi
_{2}\in C(\mathbb{R};\mathcal{W}^{\mathbb{R}}\cap \mathcal{W}_{\Lambda })$.
(Note that, in this case, $\Psi _{1},\Psi _{2}$ are finite-range
interactions.) The above series is absolutely summable, because of
Lieb-Robinson bounds, as stated in Proposition \ref{Theorem Lieb-Robinson
copy(3)} (i). In order to arrive at (\ref{equality rivial}), we also use the
equality 
\begin{equation*}
\left( \tau _{t,s}^{\Psi _{1}}-\tau _{t,s}^{\Psi _{2}}\right) |_{\mathcal{U}%
_{\tilde{\Lambda}}}=\lim_{L\rightarrow \infty }\int_{s}^{t}\tau _{\alpha
,s}^{\Psi _{1}}\circ \left( \delta ^{\Psi _{1}\left( \alpha \right) }-\delta
_{L}^{\Psi _{2}\left( \alpha \right) }\right) \circ \tau _{t,\alpha }^{(\Psi
_{2},L)}|_{\mathcal{U}_{\tilde{\Lambda}}}\mathrm{d}\alpha
\end{equation*}%
for any $\tilde{\Lambda}\in \mathcal{P}_{f}$, $s,t\in \mathbb{R}$ and $\Psi
_{1},\Psi _{2}\in C(\mathbb{R};\mathcal{W}^{\mathbb{R}})$, deduced from (\ref%
{cauchy1}) and Proposition \ref{Theorem Lieb-Robinson}. See also Definitions %
\ref{definition fininte vol dynam0}, \ref{dynamic series} and Corollary \ref%
{Lemma cigare1}. In particular, to get (\ref{equality rivial}), in the light
of Equations (\ref{assertion bisbisbisbis0100})-(\ref{assertion
bisbisbisbis01}), we have to estimate multi-commutators of order \emph{three}%
. This is done by the extension to multicommutators of the Lieb-Robinson
bounds, contributed in \cite[Theorems 4.11, 5.4]{brupedraLR}. Lieb-Robinson
bounds for multi-commutators of order three stated in \cite[Theorems 4.11,
5.4]{brupedraLR} require sufficient polynomial decays of interactions. An
obvious sufficient condition for such decays is to take $\Psi _{1}\left(
t\right) ,\Psi _{2}\left( t\right) \in \mathcal{W}_{\Lambda }$ for all $t\in 
\mathbb{R}$, meaning that they are all finite-range. See (\ref%
{eq:enpersitebis}).

Fix now all parameters of the lemma. For any $s,t\in \mathbb{R}$, $\rho
,\upsilon \in E$, $h\in (0,1]$ and $A\in \mathcal{U}_{0}$,%
\begin{eqnarray*}
\beth \left( h,t,A;\upsilon \right) &\doteq &h^{-1}\left( \varpi _{\left(
1-h\right) \rho +h\upsilon ,s}\left( t,A\right) -\varpi _{\rho ,s}\left(
t,A\right) \right) \\
&=&\left( \upsilon -\rho \right) \circ \tau _{t,s}^{\Psi ^{\gimel _{\rho
,s}}}\left( A\right) +h^{-1}\left( \left( 1-h\right) \rho +h\upsilon \right)
\circ \left( \tau _{t,s}^{\Psi ^{\gimel _{\left( 1-h\right) \rho +h\upsilon
,s}}}-\tau _{t,s}^{\Psi ^{\gimel _{\rho ,s}}}\right) \left( A\right) \ .
\end{eqnarray*}%
Using now (\ref{def aussi utile}), (\ref{def utile}) and (\ref{equality
rivial}) we deduce that, for any $\upsilon \in E$ and $A\in \mathcal{U}_{0}$%
, 
\begin{eqnarray}
\beth \left( h,t,A;\upsilon \right) &=&\left( \upsilon -\rho \right) \circ
\tau _{t,s}^{\Psi ^{\gimel _{\rho ,s}}}\left( A\right)
\label{equality nontrivial} \\
&&+\sum_{n\in \mathbb{N}}\sum\limits_{\mathcal{Z}\in \mathcal{P}%
_{f}}\int_{s}^{t}\mathrm{d}\alpha \int_{(\mathbb{S}\cap \mathcal{W}_{\Lambda
})^{n}}\mathfrak{a}\left( \alpha \right) _{n}\left( \mathrm{d}\Psi
^{(1)},\ldots ,\mathrm{d}\Psi ^{(n)}\right)  \notag \\
&&\qquad \sum_{m_{1},m_{2}=1,m_{2}\neq m_{1}}^{n}\beth \left( h,\alpha ,%
\mathfrak{e}_{\Psi ^{(m_{2})},\vec{\ell}};\upsilon \right)  \notag \\
&&\qquad \qquad \times \left( \left( 1-h\right) \rho +h\upsilon \right)
\circ \tau _{\alpha ,s}^{\Psi ^{\gimel _{_{\left( 1-h\right) \rho +h\upsilon
,s}}}}\left( i\left[ \Psi _{\mathcal{Z}}^{(m_{1})},\tau _{t,\alpha }^{\Psi
^{\gimel _{\rho ,s}}}\left( A\right) \right] \right)  \notag \\
&&\qquad \qquad \qquad \times \prod\limits_{j\in \left\{ 1,\ldots
,m_{2}-1\right\} \backslash \{m_{1}\}}\varpi _{\rho ,s}\left( \alpha ,%
\mathfrak{e}_{\Psi ^{(j)},\vec{\ell}}\right)  \notag \\
&&\qquad \qquad \qquad \qquad \times \prod\limits_{j\in \left\{
m_{2}+1,\ldots ,n\right\} \backslash \{m_{1}\}}\varpi _{\left( 1-h\right)
\rho +h\upsilon ,s}\left( \alpha ,\mathfrak{e}_{\Psi ^{(j)},\vec{\ell}%
}\right) \ ,  \notag
\end{eqnarray}%
where the two products over $j$ are, by definition, equal to $1$ when $j$
ranges over the empty set\footnote{%
This happens when $m_{2}\in \{1,n\}$.}. Note that (\ref{equality rivial})
can be used here because $\mathfrak{m}\in C_{b}(\mathbb{R};\mathcal{M}%
_{\Lambda })$, implying that $\Psi ^{\gimel }\in C_{b}(\mathbb{R};\mathcal{W}%
^{\mathbb{R}}\cap \mathcal{W}_{\Lambda })$. See Lemma \ref{definition
BCS-type model approximated copy(2)} and Equation (\ref{equality cigare00}).

From Equation (\ref{equality nontrivial}), one sees that $\beth \left(
\lambda ,t,A;\upsilon \right) $ is given by a Dyson-type series which is
absolutely summable, uniformly with respect to $h\in (0,1]$. To show that,
use $\mathfrak{m}\in C_{b}(\mathbb{R};\mathcal{M}_{\Lambda })$ (see (\ref%
{definition 0bis}) and (\ref{S00bis})-(\ref{S0bis})), Lemma \ref{definition
BCS-type model approximated copy(2)} together with Equation (\ref{equality
cigare00}), the fact that $(\tau _{t,s}^{\Psi ^{\xi }})_{s,t\in \mathbb{R}}$
is a family of $\ast $-automorphisms of $\mathcal{U}$ for any $\xi \in
C\left( \mathbb{R};E\right) $, Inequality (\ref{e phi}) and the (usual)
Lieb-Robinson bounds (Proposition \ref{Theorem Lieb-Robinson copy(3)} (i)).
Note that, for any $\tilde{\Lambda}\in \mathcal{P}_{f}$, the mapping $%
\mathbb{R}\times \mathcal{U}_{\tilde{\Lambda}}$ to $\mathbb{C}$ defined by 
\begin{equation*}
\left( t,A\right) \mapsto \left( \upsilon -\rho \right) \circ \tau
_{t,s}^{\Psi ^{\gimel _{\rho ,s}}}\left( A\right)
\end{equation*}%
is continuous. By Lemma \ref{estimate useful} and Lebesgue's dominated
convergence theorem,%
\begin{equation*}
\lim_{h\rightarrow 0^{+}}\left( \left( 1-h\right) \rho +h\upsilon \right)
\circ \tau _{\alpha ,s}^{\Psi ^{\gimel _{_{\left( 1-h\right) \rho +h\upsilon
,s}}}}\left( i\left[ \Psi _{\mathcal{Z}}^{(m_{1})},\tau _{t,\alpha }^{\Psi
^{\gimel _{\rho ,s}}}\left( A\right) \right] \right) =\rho \circ \tau
_{\alpha ,s}^{\Psi ^{\gimel _{_{\rho ,s}}}}\left( i\left[ \Psi _{\mathcal{Z}%
}^{(m_{1})},\tau _{t,\alpha }^{\Psi ^{\gimel _{\rho ,s}}}\left( A\right) %
\right] \right) \ ,
\end{equation*}%
uniformly for $\alpha $ in a compact set and, by Equations (\ref{sdfsdf})
and (\ref{gronwal})-(\ref{gronwal2}),%
\begin{equation*}
\lim_{h\rightarrow 0^{+}}\max_{\alpha \in \left[ s-T,s+T\right] }\left\Vert
\left( \varpi _{\left( 1-h\right) \rho +h\upsilon ,s,s}\left( \alpha \right)
-\varpi _{\rho ,s}\left( \alpha \right) \right) |_{\mathcal{U}_{\Lambda
}}\right\Vert _{\mathcal{U}_{\Lambda }^{\ast }}=0\ ,\qquad T\in \mathbb{R}%
^{+}\ .
\end{equation*}%
Hence, using again Lebesgue's dominated convergence theorem, we deduce that 
\begin{equation}
\beth \left( 0,t,A;\upsilon \right) \doteq \lim_{h\rightarrow 0^{+}}\beth
\left( h,t,A;\upsilon \right) =\lim_{h\rightarrow 0^{+}}h^{-1}\left( \varpi
_{\left( 1-h\right) \rho +h\upsilon ,s}\left( t,A\right) -\varpi _{\rho
,s}\left( t,A\right) \right)  \label{lebesgue chat}
\end{equation}%
exists for all $s,t\in \mathbb{R}$, $\rho ,\upsilon \in E$ and $A\in 
\mathcal{U}_{0}$, as given by a Dyson-type series. In particular, for any $%
\upsilon \in E$ and $\tilde{\Lambda}\in \mathcal{P}_{f}$, the complex-valued
function $(t,A)\mapsto \beth \left( 0,t,A,\upsilon \right) $ on $\mathbb{R}%
\times \mathcal{U}_{\tilde{\Lambda}}$ is the unique solution in $\xi \in
C\left( \mathbb{R}\times \mathcal{U}_{\tilde{\Lambda}};\mathbb{C}\right) $
to the equation 
\begin{equation}
\xi \left( t,A\right) =\left( \upsilon -\rho \right) \circ \tau _{t,s}^{\Psi
^{\gimel }}\left( A\right) +\mathfrak{\maltese }_{A}\left[ \xi \right]
\label{integral equation 0}
\end{equation}%
with $\mathfrak{\maltese }_{A}$ defined by (\ref{def dysons}). Compare with (%
\ref{equality nontrivial}) taken at $h=0$.

The uniqueness of the solution in $\xi \in C\left( \mathbb{R}\times \mathcal{%
U}_{\tilde{\Lambda}};\mathbb{C}\right) $ to (\ref{integral equation 0}) is a
consequence of the fact that one can iterate Equation (\ref{integral
equation 0}) in order to prove that $\xi $ is the Dyson-type series $\beth
\left( 0,t,A,\upsilon \right) $: First, it suffices to show this fact for $%
\tilde{\Lambda}=\Lambda $ and $A\in \mathcal{U}_{\Lambda }$ with $\left\Vert
A\right\Vert _{\mathcal{U}}\leq \left\Vert \mathbf{F}\right\Vert _{1,%
\mathfrak{L}}$ because this case fixes all $\xi (t,\mathfrak{e}_{\Psi ,\vec{%
\ell}})$ for all $\Psi \in \mathbb{S}\cap \mathcal{W}_{\Lambda }$ and $t\in 
\mathbb{R}$\ (cf. (\ref{e phi})). Secondly, observe that any function $\xi
\in C\left( \mathbb{R}\times \mathcal{U}_{\Lambda };\mathbb{C}\right) $ is
bounded on compact sets of $\mathbb{R}\times \mathcal{U}_{\Lambda }$ and any
norm-closed ball of $\mathcal{U}_{\Lambda }$ is compact, by finite
dimensionality of $\mathcal{U}_{\Lambda }$. Using these observations
together with (\ref{(3.1) NS})-(\ref{(3.2) NS}), (\ref{e phi}), (\ref%
{definition 0bis}), Lieb-Robinson bounds (Proposition \ref{Theorem
Lieb-Robinson copy(3)} (i)), Lemma \ref{definition BCS-type model
approximated copy(2)} and tedious computations, one checks that, for any $%
A\in \mathcal{U}_{\Lambda }$ satisfying $\left\Vert A\right\Vert _{\mathcal{U%
}}\leq \left\Vert \mathbf{F}\right\Vert _{1,\mathfrak{L}}$, arbitrary time $%
T\in \mathbb{R}^{+}$ and $t\in \left[ s-T,s+T\right] $, 
\begin{equation}
\left\vert \mathfrak{\maltese }_{A}\left[ \xi \right] \right\vert \leq
2\left\Vert \mathbf{F}\right\Vert _{1,\mathfrak{L}}\mathrm{e}^{4\mathbf{D}%
T\left\Vert \mathfrak{m}\right\Vert _{\infty }}\left\vert \Lambda
\right\vert \left\Vert \mathfrak{m}\right\Vert _{\infty }\sup_{t\in \left[
s-T,s+T\right] }\sup_{B\in \mathcal{U}_{\Lambda }:\left\Vert B\right\Vert _{%
\mathcal{U}}\leq \left\Vert \mathbf{F}\right\Vert _{1,\mathfrak{L}%
}}\left\vert \xi \left( t,B\right) \right\vert \left( \int_{s}^{t}\mathrm{d}%
\alpha \right) \ .  \label{dyson chiant}
\end{equation}%
More generally, we can reconstruct from (\ref{integral equation 0}) the $k$%
th first terms of the Dyson-type series $\beth \left( 0,t,A,\upsilon \right) 
$ and, in the same way one obtains (\ref{dyson chiant}) (i.e., the $\left(
0+1\right) $th remaining term), the $\left( k+1\right) $th remaining term is
bounded by 
\begin{equation*}
\sup_{t\in \left[ s-T,s+T\right] }\sup_{B\in \mathcal{U}_{\Lambda
}:\left\Vert B\right\Vert _{\mathcal{U}}\leq \left\Vert \mathbf{F}%
\right\Vert _{1,\mathfrak{L}}}\left\vert \xi \left( t,B\right) \right\vert
\left( 2\left\Vert \mathbf{F}\right\Vert _{1,\mathfrak{L}}\mathrm{e}^{4%
\mathbf{D}T\left\Vert \mathfrak{m}\right\Vert _{\infty }}\left\vert \Lambda
\right\vert \left\Vert \mathfrak{m}\right\Vert _{\infty }\right)
^{k+1}\int_{s}^{t}\mathrm{d}\alpha _{1}\cdots \int_{s}^{\alpha _{k-1}}%
\mathrm{d}\alpha _{k}\ ,
\end{equation*}%
and thus vanishes in the limit $k\rightarrow \infty $. So, by Lebesgue's
dominated convergence theorem, any solution $\xi \in C\left( \mathbb{R}%
\times \mathcal{U}_{\tilde{\Lambda}};\mathbb{C}\right) $ to (\ref{integral
equation 0}) is equal to $\beth \left( 0,t,A,\upsilon \right) $.

Similar to (\ref{equality nontrivial}), using in particular the
Lieb-Robinson bounds (Proposition \ref{Theorem Lieb-Robinson copy(3)} (i)),
note that the integral equation 
\begin{eqnarray}
\mathfrak{D}\left( t,A\right) &=&\tau _{t,s}^{\Psi ^{\gimel }}\left(
A\right) -\rho \circ \tau _{t,s}^{\Psi ^{\gimel }}\left( A\right) \mathfrak{1%
}  \label{chat0} \\
&&+\sum_{n\in \mathbb{N}}\sum\limits_{\mathcal{Z}\in \mathcal{P}%
_{f}}\int_{s}^{t}\mathrm{d}\alpha \int_{(\mathbb{S}\cap \mathcal{W}_{\Lambda
})^{n}}\mathfrak{a}\left( \alpha \right) _{n}\left( \mathrm{d}\Psi
^{(1)},\ldots ,\mathrm{d}\Psi ^{(n)}\right)  \notag \\
&&\qquad \sum_{m_{1},m_{2}=1,m_{2}\neq m_{1}}^{n}\mathfrak{D}\left( \alpha ,%
\mathfrak{e}_{\Psi ^{(m_{2})},\vec{\ell}}\right)  \notag \\
&&\qquad \qquad \times \rho \circ \tau _{\alpha ,s}^{\Psi ^{\gimel }}\left(
i \left[ \Psi _{\mathcal{Z}}^{(m_{1})},\tau _{t,\alpha }^{\Psi ^{\gimel
}}\left( A\right) \right] \right)  \notag \\
&&\qquad \qquad \qquad \qquad \qquad \times \prod\limits_{j\in \left\{
1,\ldots ,n\right\} \backslash \{m_{1},m_{2}\}}\varpi _{\rho ,s}\left(
\alpha ,\mathfrak{e}_{\Psi ^{(j)},\vec{\ell}}\right) \ .  \notag
\end{eqnarray}%
uniquely determines, by absolutely summable (in $\mathfrak{U}$) Dyson-type
series, a continuous mapping $(t,A)\mapsto \mathfrak{D}\left( t,A\right) $
from $\mathbb{R}\times \mathcal{U}_{\tilde{\Lambda}}$ to $\mathcal{U}$ for
any $\tilde{\Lambda}\in \mathcal{P}_{f}$, which, via (\ref{integral equation
0}), satisfies 
\begin{equation}
\upsilon \left( \mathfrak{D}\left( t,A\right) \right) =\beth \left(
0,t,A;\upsilon \right) \doteq \lim_{h\rightarrow 0^{+}}h^{-1}\left( \varpi
_{\left( 1-h\right) \rho +h\upsilon ,s}\left( t,A\right) -\varpi _{\rho
,s}\left( t,A\right) \right)  \label{chat1}
\end{equation}%
for all $s,t\in \mathbb{R}$, $\rho ,\upsilon \in E$ and $A\in \mathcal{U}%
_{0} $. Observe that the integrals in the corresponding Dyson-type series
are well-defined as Bochner integrals: For any $\tilde{\Lambda}\in \mathcal{P%
}_{f}$, the mapping from $\mathbb{R}\times \mathcal{U}_{\tilde{\Lambda}}$ to 
$\mathcal{U}$ defined by 
\begin{equation}
\left( t,A\right) \mapsto \tau _{t,s}^{\Psi ^{\gimel }}\left( A\right) -\rho
\circ \tau _{t,s}^{\Psi ^{\gimel }}\left( A\right) \mathfrak{1}
\label{mapping0}
\end{equation}%
is continuous. Since the measures $\mathfrak{a}\left( \alpha \right) _{n}$, $%
n\in \mathbb{N}$, are finite and $\mathcal{U}$ is a separable Banach space,
by \cite[Theorems 1.1 and 1.2]{pettis}, all terms appearing in the arguments
of the integrals are Bochner-integrable. By Definition \ref{convex Frechet
derivative}, the assertion follows.
\end{proof}

For the last assertions, we assume that the decay function equals 
\begin{equation}
\mathbf{F}\left( x,y\right) =\mathrm{e}^{-2\varsigma \left\vert
x-y\right\vert }(1+\left\vert x-y\right\vert )^{-(d+\epsilon )}\ ,\qquad
x,y\in \mathfrak{L}\ ,  \label{fix decay}
\end{equation}%
for some fixed $\varsigma ,\epsilon \in \mathbb{R}^{+}$. See (\ref{examples}%
).

\begin{lemma}[Graph norm continuity of dynamics on local elements]
\label{lemma extra1}\mbox{ }\newline
Assume (\ref{fix decay}). For any $\Psi \in C(\mathbb{R};\mathcal{W}^{%
\mathbb{R}})$, $\Phi \in \mathcal{W}$, $s,t\in \mathbb{R}$ and $A\in 
\mathcal{U}_{0}$, $\tau _{t,s}^{\Psi }\left( A\right) \in \mathrm{dom}%
(\delta ^{\Phi })$ and%
\begin{equation*}
\delta ^{\Phi }\circ \tau _{t,s}^{\Psi }\left( A\right) =\lim_{L\rightarrow
\infty }\delta ^{\Phi }\circ \tau _{t,s}^{(L,\Psi )}\left( A\right)
=\lim_{L\rightarrow \infty }\delta _{L}^{\Phi }\circ \tau _{t,s}^{\Psi
}\left( A\right) \ ,
\end{equation*}%
uniformly for $s,t$ on compacta. Additionally, for any $s\in \mathbb{R}$ and 
$\tilde{\Lambda}\in \mathcal{P}_{f}$, the mapping $(t,A)\mapsto \delta
^{\Phi }\circ \tau _{t,s}^{\Psi }\left( A\right) $ from $\mathbb{R}\times 
\mathcal{U}_{\tilde{\Lambda}}$ to $\mathcal{U}$ is continuous.
\end{lemma}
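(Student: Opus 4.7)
The strategy is to construct a candidate limit as the norm-absolutely convergent series $\mathfrak{D}(t,A) := \sum_{\mathcal{Z} \in \mathcal{P}_{f}} i[\Phi_{\mathcal{Z}}, \tau_{t,s}^{\Psi}(A)] \in \mathcal{U}$, to verify that both sequences in the statement converge to it, and finally to transfer joint continuity from $(t,A) \mapsto \tau_{t,s}^{\Psi}(A)$ via dominated convergence. The well-posedness of the series is the principal obstacle, because $\tau_{t,s}^{\Psi}(A)$ is generally not local, so Corollary \ref{Lemma cigare1} does not apply directly.

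To establish absolute convergence I would split the sum according to whether $\mathcal{Z}$ meets $\tilde{\Lambda}$ (the support of $A$). For $\mathcal{Z} \cap \tilde{\Lambda} \neq \emptyset$, the trivial bound $\Vert[\Phi_{\mathcal{Z}}, \tau_{t,s}^{\Psi}(A)]\Vert_{\mathcal{U}} \leq 2\Vert\Phi_{\mathcal{Z}}\Vert_{\mathcal{U}}\Vert A\Vert_{\mathcal{U}}$ combined with $\sum_{\mathcal{Z} \ni x}\Vert\Phi_{\mathcal{Z}}\Vert_{\mathcal{U}} \leq \Vert\Phi\Vert_{\mathcal{W}}$ (a consequence of $\mathbf{F}(x,x) = 1$ and (\ref{iteration0})) yields a finite contribution bounded by $2|\tilde{\Lambda}|\Vert\Phi\Vert_{\mathcal{W}}\Vert A\Vert_{\mathcal{U}}$. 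For $\mathcal{Z} \cap \tilde{\Lambda} = \emptyset$, I rewrite
$$[\Phi_{\mathcal{Z}}, \tau_{t,s}^{\Psi}(A)] = \tau_{t,s}^{\Psi}\bigl([\tau_{s,t}^{\Psi}(\Phi_{\mathcal{Z}}), A]\bigr),$$
which now has the even element $\Phi_{\mathcal{Z}} \in \mathcal{U}^{+}$ as the evolved argument, so Proposition \ref{Theorem Lieb-Robinson copy(3)}(i) applies irrespective of the parity of $A$. Summing the Lieb-Robinson bound over $\mathcal{Z}$ with the help of $\sum_{\mathcal{Z} \ni x}\Vert\Phi_{\mathcal{Z}}\Vert_{\mathcal{U}} \leq \Vert\Phi\Vert_{\mathcal{W}}$ and $\Vert\mathbf{F}\Vert_{1,\mathfrak{L}} < \infty$ produces a contribution bounded by $C(t,s)|\tilde{\Lambda}|\Vert A\Vert_{\mathcal{U}}\Vert\Phi\Vert_{\mathcal{W}}$ with $C(t,s)$ continuous and uniformly controlled for $s,t$ in compacta; the specific exponential decay (\ref{fix decay}) of $\mathbf{F}$ enters only to make this bound tight enough for subsequent applications in Lemma \ref{Corollary bije+cocylbaby copy(1)}.

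With $\mathfrak{D}(t,A)$ defined, I would identify both limits of the statement with it. Since $\tau_{t,s}^{(L,\Psi)}(A) \in \mathcal{U}_{\Lambda_{L}}$ is local, Corollary \ref{Lemma cigare1} gives $\delta^{\Phi}(\tau_{t,s}^{(L,\Psi)}(A)) = \sum_{\mathcal{Z}} i[\Phi_{\mathcal{Z}}, \tau_{t,s}^{(L,\Psi)}(A)]$. For each fixed $\mathcal{Z}$, Proposition \ref{Theorem Lieb-Robinson copy(3)}(ii) yields norm-convergence $i[\Phi_{\mathcal{Z}}, \tau_{t,s}^{(L,\Psi)}(A)] \to i[\Phi_{\mathcal{Z}}, \tau_{t,s}^{\Psi}(A)]$ as $L \to \infty$; the very same splitting as above applied to $\tau_{t,s}^{(L,\Psi)}$ (which equally satisfies Proposition \ref{Theorem Lieb-Robinson copy(3)}(i)) provides an $L$-uniform summable dominating series, so dominated convergence gives $\delta^{\Phi}(\tau_{t,s}^{(L,\Psi)}(A)) \to \mathfrak{D}(t,A)$. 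For the other limit, $\delta_{L}^{\Phi} \circ \tau_{t,s}^{\Psi}(A) = \sum_{\mathcal{Z} \subseteq \Lambda_{L}} i[\Phi_{\mathcal{Z}}, \tau_{t,s}^{\Psi}(A)]$ is a partial sum of the absolutely convergent series defining $\mathfrak{D}(t,A)$, hence converges to it; uniformity for $s,t$ on compacta is again furnished by the dominating series. This simultaneously demonstrates $\tau_{t,s}^{\Psi}(A) \in \mathrm{dom}(\delta^{\Phi})$ with $\delta^{\Phi} \circ \tau_{t,s}^{\Psi}(A) = \mathfrak{D}(t,A)$, since $(\tau_{t,s}^{(L,\Psi)}(A), \delta^{\Phi}(\tau_{t,s}^{(L,\Psi)}(A)))$ is a sequence in the graph of $\delta^{\Phi}|_{\mathcal{U}_{0}}$ converging in $\mathcal{U} \times \mathcal{U}$.

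Finally, for the joint continuity of $(t,A) \mapsto \delta^{\Phi} \circ \tau_{t,s}^{\Psi}(A)$ on $\mathbb{R} \times \mathcal{U}_{\tilde{\Lambda}}$ at fixed $s$, I would observe that each term $(t,A) \mapsto i[\Phi_{\mathcal{Z}}, \tau_{t,s}^{\Psi}(A)]$ is jointly continuous: continuity in $A$ follows from the isometric property of $\tau_{t,s}^{\Psi}$, and continuity in $t$ from Proposition \ref{Theorem Lieb-Robinson copy(3)}(iv). Interchanging the sum and the limit via the same dominated-convergence estimate — whose dominating series depends continuously on $A$ (linearly, through $\Vert A\Vert_{\mathcal{U}}$) and is uniform for $t$ in compacta — yields joint continuity of $\mathfrak{D}(t,A) = \delta^{\Phi} \circ \tau_{t,s}^{\Psi}(A)$. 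The decisive technical point is the parity trick that moves the evenness requirement of Proposition \ref{Theorem Lieb-Robinson copy(3)}(i) onto $\Phi_{\mathcal{Z}} \in \mathcal{U}^{+}$ via the identity $[\Phi_{\mathcal{Z}}, \tau_{t,s}^{\Psi}(A)] = \tau_{t,s}^{\Psi}([\tau_{s,t}^{\Psi}(\Phi_{\mathcal{Z}}), A])$, which is what allows the whole argument to function for arbitrary (not necessarily even) $A \in \mathcal{U}_{0}$.
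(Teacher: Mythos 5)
Your proof is correct but follows a genuinely different route from the paper's. The paper invokes the order-three multi-commutator Lieb-Robinson estimates of \cite{brupedraLR} (adapting \cite[Eq.~(5.47), $m=1$]{brupedraLR}) to establish $\lim_{L_{0}\to\infty}\sup_{L\in\mathbb{N}}\Vert\delta_L^{\Phi}\circ(\tau_{t,s}^{\Psi}-\tau_{t,s}^{(L_0,\Psi)})(A)\Vert_{\mathcal{U}}=0$, and then gets both limits from closedness of $\delta^{\Phi}$ together with a three-term triangle inequality in $L,L_{0}$; hypothesis (\ref{fix decay}) enters precisely because those order-three bounds require such decay. You instead write the candidate limit explicitly as the absolutely convergent series $\sum_{\mathcal{Z}}i[\Phi_{\mathcal{Z}},\tau_{t,s}^{\Psi}(A)]$ and control each off-diagonal term through the identity $[\Phi_{\mathcal{Z}},\tau_{t,s}^{\Psi}(A)]=\tau_{t,s}^{\Psi}\bigl([\tau_{s,t}^{\Psi}(\Phi_{\mathcal{Z}}),A]\bigr)$, which transfers the evenness requirement of Proposition~\ref{Theorem Lieb-Robinson copy(3)}~(i) onto $\Phi_{\mathcal{Z}}\in\mathcal{U}^{+}$ so that the ordinary order-two Lieb-Robinson bound applies for arbitrary local $A$; the $L$-uniform summable majorant then reduces everything to dominated convergence on a discrete sum, and closedness of $\delta^{\Phi}$ is used exactly once, through graph-convergence of $(\tau_{t,s}^{(L,\Psi)}(A),\delta^{\Phi}\tau_{t,s}^{(L,\Psi)}(A))$, to place $\tau_{t,s}^{\Psi}(A)$ in $\mathrm{dom}(\delta^{\Phi})$. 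Your route is more elementary — it stays entirely within the order-two estimates already stated in the paper — and, as you hint, your summability argument really only uses $\Vert\mathbf{F}\Vert_{1,\mathfrak{L}}<\infty$, i.e., (\ref{(3.1) NS}), so (\ref{fix decay}) is not needed for this lemma along your route. The paper imposes (\ref{fix decay}) here because it is importing the order-three bounds as a black box, and because (\ref{fix decay}) is in any case indispensable a few lines later in Lemma~\ref{Differentiability copy(1)}, where \cite[Theorem~5.5]{brupedraLR} is needed for the backward non-autonomous Cauchy problem; so imposing the hypothesis already here costs nothing globally, but your argument shows it is superfluous locally.
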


\begin{proof}
For any $\Psi \in C(\mathbb{R};\mathcal{W}^{\mathbb{R}})$, $\Phi \in 
\mathcal{W}$, $s,t\in \mathbb{R}$ and $A\in \mathcal{U}_{0}$, a simple
adaptation of \cite[Equation (5.47), $m=1$]{brupedraLR} (using Lieb-Robinson
bounds for multi-commutators of order three \cite[Theorems 4.11, 5.4]%
{brupedraLR}) implies that 
\begin{equation}
\lim_{L_{0}\rightarrow \infty }\sup_{L\in \mathbb{N}}\left\Vert \delta
_{L}^{\Phi }\circ \left( \tau _{t,s}^{\Psi }-\tau _{t,s}^{(L_{0},\Psi
)}\right) \left( A\right) \right\Vert _{\mathcal{U}}=0\ .  \label{hiant1}
\end{equation}%
Using again a similar argument together with Corollary \ref{Lemma cigare1}
and the closedness of $\delta ^{\Phi }$, we also deduce that%
\begin{equation}
\delta ^{\Phi }\circ \tau _{t,s}^{\Psi }\left( A\right) =\lim_{L\rightarrow
\infty }\delta ^{\Phi }\circ \tau _{t,s}^{(L,\Psi )}\left( A\right) \qquad 
\text{and}\qquad \tau _{t,s}^{\Psi }\left( A\right) \in \mathrm{dom}(\delta
^{\Phi })\ .  \label{hiant2}
\end{equation}%
Note that the limits in Equations (\ref{hiant1})-(\ref{hiant2}) are uniform
for $s,t$ on compacta (cf. \cite[Theorem 5.6 (i)]{brupedraLR}). For any $%
\Psi \in C(\mathbb{R};\mathcal{W}^{\mathbb{R}})$, $\Phi \in \mathcal{W}$, $%
L,L_{0}\in \mathbb{N}$, $s,t\in \mathbb{R}$ and $A\in \mathcal{U}_{0}$,
observe meanwhile that%
\begin{eqnarray*}
\left\Vert \delta ^{\Phi }\circ \tau _{t,s}^{\Psi }\left( A\right) -\delta
_{L}^{\Phi }\circ \tau _{t,s}^{\Psi }\left( A\right) \right\Vert _{\mathcal{U%
}} &\leq &\left\Vert \delta ^{\Phi }\circ \left( \tau _{t,s}^{\Psi }-\tau
_{t,s}^{(L_{0},\Psi )}\right) \left( A\right) \right\Vert _{\mathcal{U}} \\
&&+\left\Vert \delta _{L}^{\Phi }\circ \left( \tau _{t,s}^{\Psi }-\tau
_{t,s}^{(L_{0},\Psi )}\right) \left( A\right) \right\Vert _{\mathcal{U}} \\
&&+\left\Vert \left( \delta ^{\Phi }-\delta _{L}^{\Phi }\right) \circ \tau
_{t,s}^{(L_{0},\Psi )}\left( A\right) \right\Vert _{\mathcal{U}}\ .
\end{eqnarray*}%
By (\ref{hiant1})-(\ref{hiant2}) and Corollary \ref{Lemma cigare1}, it
follows that, for any $\Phi \in \mathcal{W}$, $s,t\in \mathbb{R}$ and $A\in 
\mathcal{U}_{0}$, 
\begin{equation*}
\delta ^{\Phi }\circ \tau _{t,s}^{\Psi }\left( A\right) =\lim_{L\rightarrow
\infty }\delta _{L}^{\Phi }\circ \tau _{t,s}^{\Psi }\left( A\right) \ ,
\end{equation*}%
uniformly for $s,t$ on compacta.

For any $L\in \mathbb{R}$, the mapping from $\mathbb{R}\times \mathcal{U}%
_{0} $ to $\mathcal{U}$ defined by 
\begin{equation}
\left( t,A\right) \mapsto \delta ^{\Phi }\circ \tau _{t,s}^{(L,\Psi )}\left(
A\right)  \label{mapping L}
\end{equation}%
is continuous, by Corollary \ref{Lemma cigare1}. By (\ref{hiant2}), as $%
L\rightarrow \infty $, this mapping on any fixed $A\in \mathcal{U}_{0}$
converges uniformly for $s,t$ on compacta to the mapping from $\mathbb{R}%
\times \mathcal{U}_{0}$ to $\mathcal{U}$ defined by 
\begin{equation}
\left( t,A\right) \mapsto \delta ^{\Phi }\circ \tau _{t,s}^{\Psi }\left(
A\right) \ ,  \label{mapping plus}
\end{equation}%
which is thus continuous with respect to $t\in \mathbb{R}$. By finite
dimensionality of $\mathcal{U}_{\tilde{\Lambda}}$ for $\tilde{\Lambda}\in 
\mathcal{P}_{f}$ and linearity of the mapping (\ref{mapping plus}) with
respect to $A\in \mathcal{U}_{0}$, the function (\ref{mapping plus}) is
continuous on $\mathbb{R}\times \mathcal{U}_{\tilde{\Lambda}}$.
\end{proof}

\begin{proposition}[Graph norm continuity of convex derivatives]
\label{lemma extra2}\mbox{ }\newline
Assume (\ref{fix decay}). For any $\Phi \in \mathcal{W}$, $s,t\in \mathbb{R}$%
, $\rho \in E$ and $A\in \mathcal{U}_{0}$, $\mathrm{D}\varpi _{\rho
,s}\left( t,A\right) \in \mathrm{dom}(\delta ^{\Phi })$ and%
\begin{equation*}
\delta ^{\Phi }\left( \mathrm{D}\varpi _{\rho ,s}\left( t,A\right) \right)
=\lim_{L\rightarrow \infty }\delta _{L}^{\Phi }\left( \mathrm{D}\varpi
_{\rho ,s}\left( t,A\right) \right) \ .
\end{equation*}%
Additionally, for any $s\in \mathbb{R}$, $\rho \in E$ and $\tilde{\Lambda}%
\in \mathcal{P}_{f}$, the mapping $(t,A)\mapsto \delta ^{\Phi }(\mathrm{D}%
\varpi _{\rho ,s}\left( t,A\right) )$ from $\mathbb{R}\times \mathcal{U}_{%
\tilde{\Lambda}}$ to $\mathcal{U}$ is continuous.
\end{proposition}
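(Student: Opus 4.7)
The plan is to view $\mathrm{D}\varpi_{\rho,s}(t,A) = \mathfrak{D}(t,A)$ as the absolutely summable Dyson-type series in $\mathcal{U}$ produced by iterating the integral equation (\ref{chat0}) from Lemma \ref{lemma contnuity copy(1)}. After full iteration, each term in this series has the form $c_{k}\bigl(\tau_{t_{k},s}^{\Psi^{\gimel}}(A_{k}) - \rho\bigl(\tau_{t_{k},s}^{\Psi^{\gimel}}(A_{k})\bigr)\mathfrak{1}\bigr)$, where $c_{k}\in\mathbb{C}$ is a scalar coefficient built from time integrals, measures $\mathfrak{a}(\alpha)_{n}$, and the factors $\rho\circ\tau_{\alpha,s}^{\Psi^{\gimel}}(i[\Psi_{\mathcal{Z}}^{(m_{1})},\tau_{t,\alpha}^{\Psi^{\gimel}}(\cdot)])$ and $\varpi_{\rho,s}(\alpha,\mathfrak{e}_{\Psi^{(j)},\vec{\ell}})$, and where $A_{k}\in\{A\}\cup\{\mathfrak{e}_{\Psi,\vec{\ell}}:\Psi\in\mathbb{S}\cap\mathcal{W}_{\Lambda}\}\subseteq\mathcal{U}_{0}$. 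By Lemma \ref{lemma extra1}, each $\tau_{t_{k},s}^{\Psi^{\gimel}}(A_{k})\in\mathrm{dom}(\delta^{\Phi})$ and $\delta^{\Phi}(\mathfrak{1})=0$, so every term of the fully expanded series lies in $\mathrm{dom}(\delta^{\Phi})$. The goal is to apply $\delta^{\Phi}$ termwise and invoke closedness of $\delta^{\Phi}$ (cf.\ Remark \ref{Remark Closure of limit derivations}) to conclude.

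The key step is \emph{uniform absolute summability} of $\sum_{k}|c_{k}|\,\|\delta^{\Phi}(\tau_{t_{k},s}^{\Psi^{\gimel}}(A_{k}))\|_{\mathcal{U}}$ on compacta of $\mathbb{R}\times\mathcal{U}_{\tilde{\Lambda}}$. To obtain this, I would write, following Equation (\ref{equality rivial}) and the argument in the proof of Lemma \ref{lemma contnuity copy(1)}, an explicit representation of $\delta^{\Phi}(\tau_{t_{k},s}^{\Psi^{\gimel}}(A_{k}))$ as an additional commutator against $\Phi_{\mathcal{Y}}$ summed over $\mathcal{Y}\ni$ (some site of the support of $A_{k}$); this effectively upgrades the multi-commutators already appearing in the proof of Lemma \ref{lemma contnuity copy(1)} by one order. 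Bounds on these higher-order multi-commutators are provided by \cite[Theorems 4.11, 5.4]{brupedraLR}, which is exactly why the exponential-times-polynomial decay (\ref{fix decay}) is assumed here: any finite integer power of such an $\mathbf{F}$ still satisfies both (\ref{(3.1) NS}) and (\ref{(3.2) NS}), so the geometric/convolution estimates required for multi-commutators of one order higher than in Lemma \ref{lemma contnuity copy(1)} remain finite. Combining this with the uniform norm control $\|\mathfrak{m}\|_{\infty}<\infty$ and Lemma \ref{definition BCS-type model approximated copy(2)}, one obtains, uniformly for $(t,A)$ in compact subsets of $\mathbb{R}\times\mathcal{U}_{\tilde{\Lambda}}$, a summable majorant of the form constant times the $(k{+}1)$th-order bound already controlling the Dyson series for $\mathfrak{D}(t,A)$.

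Given such uniform absolute summability, closedness of $\delta^{\Phi}$ yields $\mathfrak{D}(t,A)\in\mathrm{dom}(\delta^{\Phi})$ and
\begin{equation*}
\delta^{\Phi}\bigl(\mathrm{D}\varpi_{\rho,s}(t,A)\bigr)=\sum_{k}c_{k}\,\delta^{\Phi}\bigl(\tau_{t_{k},s}^{\Psi^{\gimel}}(A_{k})\bigr).
\end{equation*}
The identity $\delta^{\Phi}(\mathrm{D}\varpi_{\rho,s}(t,A))=\lim_{L\to\infty}\delta_{L}^{\Phi}(\mathrm{D}\varpi_{\rho,s}(t,A))$ then follows by applying the termwise statement $\delta^{\Phi}\circ\tau_{t_{k},s}^{\Psi^{\gimel}}(A_{k})=\lim_{L}\delta_{L}^{\Phi}\circ\tau_{t_{k},s}^{\Psi^{\gimel}}(A_{k})$ of Lemma \ref{lemma extra1}, together with the same uniform majorant to interchange $\lim_{L}$ with the sum over $k$. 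For joint continuity of $(t,A)\mapsto\delta^{\Phi}(\mathrm{D}\varpi_{\rho,s}(t,A))$ on $\mathbb{R}\times\mathcal{U}_{\tilde{\Lambda}}$, I would use uniform convergence of the series on compacta combined with two ingredients: (i) continuity of each scalar coefficient $c_{k}$ in $(t,A)$, which follows from Lemma \ref{Solution self}, Proposition \ref{Theorem Lieb-Robinson copy(3)}(iv), and Lebesgue's dominated convergence theorem applied to the integrals defining $c_{k}$; and (ii) continuity of $(t,A)\mapsto\delta^{\Phi}\circ\tau_{t,s}^{\Psi^{\gimel}}(A)$ from $\mathbb{R}\times\mathcal{U}_{\tilde{\Lambda}}$ to $\mathcal{U}$, which is exactly the last assertion of Lemma \ref{lemma extra1}.

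The main obstacle will be the bookkeeping in the second paragraph: one must verify that applying $\delta^{\Phi}$ to a general Dyson term produces a new multi-commutator whose norm is controlled by the corresponding term in a Dyson series whose summability is still governed by \cite[Theorems 4.11, 5.4]{brupedraLR}, with constants that do \emph{not} grow with the depth of iteration. Concretely, one needs that applying $\delta^{\Phi}$ amounts to replacing the innermost $\tau_{t_{k},s}^{\Psi^{\gimel}}(A_{k})$ by $\sum_{\mathcal{Y}}[\Phi_{\mathcal{Y}},\tau_{t_{k},s}^{\Psi^{\gimel}}(A_{k})]$ at the price of a factor depending only on $\|\Phi\|_{\mathcal{W}}$, a power of $\|\mathbf{F}\|_{1,\mathfrak{L}}$, and the size of the support of $A_{k}$; the exponential decay built into (\ref{fix decay}) is precisely what makes this uniform in $k$.
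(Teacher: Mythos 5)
Your proposal is essentially correct and follows the same strategy as the paper: apply $\delta^{\Phi}$ termwise to the Dyson-type series for $\mathfrak{D}(t,A)$ obtained from (\ref{chat0}), use Lemma~\ref{lemma extra1} to control each resulting term, and invoke closedness of $\delta^{\Phi}$ to identify $\mathfrak{D}(t,A)$ as an element of $\mathrm{dom}(\delta^{\Phi})$; the continuity assertion then follows from uniform convergence on compacta plus the joint continuity guaranteed by Lemma~\ref{lemma extra1}. The paper phrases this slightly more compactly by working in the graph-norm Banach space $\mathcal{G}=(\mathrm{dom}(\delta^{\Phi}),\|\cdot\|_{\mathcal{G}})$: it verifies that (\ref{chat0}) is a $\mathcal{G}$-valued Bochner integral equation, observes that applying $\delta^{\Phi}$ to the leading term yields (\ref{chat00}), and uses uniqueness of its Dyson-type solution to identify $\delta^{\Phi}\circ\mathfrak{D}$. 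These two routes are different bookkeepings of the same argument.

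One point in your second paragraph is imprecise and worth flagging. You suggest that applying $\delta^{\Phi}$ to a Dyson term ``upgrades the multi-commutators already appearing in the proof of Lemma~\ref{lemma contnuity copy(1)} by one order'' and that the uniformity in $k$ comes directly from the exponential decay in (\ref{fix decay}). This is not quite how the estimate closes. Since $\delta^{\Phi}$ annihilates $\mathfrak{1}$ and all the scalar factors ($\varpi_{\rho,s}(\alpha,\mathfrak{e}_{\Psi^{(j)},\vec{\ell}})$ and $\rho\circ\tau_{\alpha,s}^{\Psi^{\gimel}}(\cdot)$ in (\ref{chat0}) are scalars), the derivation acts only on the outermost operator-valued leaf $\tau_{t_{k},s}^{\Psi^{\gimel}}(A_{k})$, and the coefficients $|c_{k}|$ governing summability are \emph{exactly} those of the original series for $\mathfrak{D}(t,A)$. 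No additional increase in multi-commutator order is needed beyond what Lemma~\ref{lemma extra1} already establishes. The uniformity in $k$ of $\|\delta^{\Phi}\circ\tau_{t_{k},s}^{\Psi^{\gimel}}(A_{k})\|_{\mathcal{U}}$ comes from the joint continuity statement in Lemma~\ref{lemma extra1} together with the observation that all $A_{k}$ range in a bounded subset of a \emph{fixed} finite-dimensional space (either $\mathcal{U}_{\tilde{\Lambda}}$ or $\mathcal{U}_{\Lambda^{(\vec{\ell})}}$), and the $t_{k}$ in a compact interval. The decay hypothesis (\ref{fix decay}) is assumed because Lemma~\ref{lemma extra1} itself requires the order-three multi-commutator bounds of \cite[Theorems 4.11, 5.4]{brupedraLR}; it does not enter a second time in Proposition~\ref{lemma extra2}. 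With that correction your argument closes cleanly.
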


\begin{proof}
Fix all parameters of the proposition. Let%
\begin{equation*}
\mathcal{G\doteq }\left( \mathrm{dom}(\delta ^{\Phi }),\left\Vert \cdot
\right\Vert _{\mathcal{G}}\right)
\end{equation*}%
be the Banach space obtained by endowing the domain of $\delta ^{\Phi }$
with its graph norm. By \cite[Theorem 4.8 (ii)]{brupedraLR}, $\mathcal{U}%
_{0} $ is a core of the derivation $\delta ^{\Phi }$ and hence, $\mathcal{G}$
is a separable Banach space.

By using Lemma \ref{lemma extra1}, the closedness of $\delta ^{\Phi }$ and
the fact that all terms appearing in the arguments of the integrals in the
Dyson-type series of $\mathfrak{D}\left( t,A\right) $ are
Bochner-integrable, one checks that the Dyson-type series deduced from (\ref%
{chat0}) yields an element $\mathfrak{D}\left( t,A\right) \in \mathcal{G}$
with $(t,A)\mapsto \delta ^{\Phi }\left( \mathfrak{D}\left( t,A\right)
\right) $ on $\mathbb{R}\times \mathcal{U}_{\tilde{\Lambda}}$ ($\tilde{%
\Lambda}\in \mathcal{P}_{f}$) being the unique solution in $\mathfrak{E}\in
C\left( \mathbb{R}\times \mathcal{U}_{\tilde{\Lambda}};\mathcal{U}\right) $
to the equation%
\begin{eqnarray}
\mathfrak{E}\left( t,A\right) &=&\delta ^{\Phi }\circ \tau _{t,s}^{\Psi
^{\gimel }}\left( A\right)  \label{chat00} \\
&&+\sum_{n\in \mathbb{N}}\sum\limits_{\mathcal{Z}\in \mathcal{P}%
_{f}}\int_{s}^{t}\mathrm{d}\alpha \int_{(\mathbb{S}\cap \mathcal{W}_{\Lambda
})^{n}}\mathfrak{a}\left( \alpha \right) _{n}\left( \mathrm{d}\Psi
^{(1)},\ldots ,\mathrm{d}\Psi ^{(n)}\right)  \notag \\
&&\qquad \sum_{m_{1},m_{2}=1,m_{2}\neq m_{1}}^{n}\mathfrak{E}\left( \alpha ,%
\mathfrak{e}_{\Psi ^{(m_{2})},\vec{\ell}}\right)  \notag \\
&&\qquad \qquad \times \rho \circ \tau _{\alpha ,s}^{\Psi ^{\gimel }}\left(
i \left[ \Psi _{\mathcal{Z}}^{(m_{1})},\tau _{t,\alpha }^{\Psi ^{\gimel
}}\left( A\right) \right] \right)  \notag \\
&&\qquad \qquad \qquad \qquad \qquad \times \prod\limits_{j\in \left\{
1,\ldots ,n\right\} \backslash \{m_{1},m_{2}\}}\varpi _{\rho ,s}\left(
\alpha ,\mathfrak{e}_{\Psi ^{(j)},\vec{\ell}}\right)  \notag
\end{eqnarray}%
for $t\in \mathbb{R}$, $A\in \mathcal{U}_{\tilde{\Lambda}}$ and $\tilde{%
\Lambda}\in \mathcal{P}_{f}$. Compare this equation with (\ref{chat0}). The
integrals in (\ref{chat00}) are again well-defined as Bochner integrals:
Combine Lemma \ref{lemma extra1} and \cite[Theorems 1.1 and 1.2]{pettis}
with the fact that the measure $\mathfrak{a}\left( \alpha \right) _{n}$, $%
n\in \mathbb{N}$, are finite and $\mathcal{U}$ is separable.

Additionally, for any $s\in \mathbb{R}$, $\rho \in E$ and $A\in \mathcal{U}%
_{0}$, the mapping $t\mapsto \delta ^{\Phi }(\mathrm{D}\varpi _{\rho
,s}\left( t,A\right) )$ from $\mathbb{R}$ to $\mathcal{U}$ is continuous,
again by Lemma \ref{lemma extra1}, the last Dyson-type series and the
identity 
\begin{equation}
\mathrm{D}\varpi _{\rho ,s}\left( t,A\right) =\mathfrak{D}\left( t,A\right)
\ .  \label{chat0chat0}
\end{equation}%
By linearity\ with respect to $A\in \mathcal{U}_{0}$ and finite
dimensionality of $\mathcal{U}_{\tilde{\Lambda}}$ for $\tilde{\Lambda}\in 
\mathcal{P}_{f}$, it follows that the mapping $(t,A)\mapsto \delta ^{\Phi }(%
\mathrm{D}\varpi _{\rho ,s}\left( t,A\right) )$ from $\mathbb{R}\times 
\mathcal{U}_{\tilde{\Lambda}}$ to $\mathcal{U}$ is continuous.

By combining Lemma \ref{lemma extra1} with Lebesgue's dominated convergence
theorem and the fact that the unique solution $\mathfrak{E}=\delta ^{\Phi
}\circ \mathfrak{D}$ to (\ref{chat00}) is given by a Dyson-type series, we
arrive at 
\begin{equation}
\delta ^{\Phi }\left( \mathfrak{D}\left( t,A\right) \right)
=\lim_{L\rightarrow \infty }\delta _{L}^{\Phi }\left( \mathfrak{D}\left(
t,A\right) \right)  \label{limit important}
\end{equation}%
in $\mathcal{U}$. Note that one can interchange the local derivation $\delta
_{L}^{\Phi }$ (which is a commutator) in the right-hand side of (\ref{limit
important}) with every integral over $(\mathbb{S}\cap \mathcal{W}_{\Lambda
})^{n}$ for any $n\in \mathbb{N}$, because it is a bounded operator on $%
\mathcal{U}$. By (\ref{chat0chat0}), this concludes the proof of the
proposition.
\end{proof}

\begin{lemma}[Differentiability of the solution -- III]
\label{Differentiability copy(1)}\mbox{ }\newline
Assume (\ref{fix decay}). Then, for any $t\in \mathbb{R}$, $\rho \in E$ and $%
A\in \mathcal{U}_{0}$, 
\begin{equation*}
(\varpi _{\rho ,s}\left( t\right) (A))_{s\in \mathbb{R}}\equiv (\varpi
_{\rho ,s}\left( t,A\right) )_{s\in \mathbb{R}}\in C^{1}\left( \mathbb{R};%
\mathbb{C}\right)
\end{equation*}%
with derivative satisfying, for any $A\in \mathcal{U}_{0}$,%
\begin{equation}
\partial _{s}\varpi _{\rho ,s}\left( t,A\right) =-\rho \circ \delta ^{\Psi
^{\varpi _{\rho ,s}|_{\mathcal{U}_{\Lambda }}}\left( s\right) }\circ \tau
_{t,s}^{\Psi ^{\varpi _{\rho ,s}|_{\mathcal{U}_{\Lambda }}}}\left( A\right) +%
\mathfrak{\maltese }_{A}\left[ \partial _{s}\varpi _{\rho ,s}\right] \ .
\label{equation integral}
\end{equation}%
Here, $\mathfrak{\maltese }_{A}$ is defined by (\ref{def dysons}).
Additionally, for any $\tilde{\Lambda}\in \mathcal{P}_{f}$, $(t,A)\mapsto
\partial _{s}\varpi _{\rho ,s}(t,A)$ is a continuous function on $\mathbb{R}%
\times \mathcal{U}_{\tilde{\Lambda}}$.
\end{lemma}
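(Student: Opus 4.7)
The plan is to imitate the proof of Lemma \ref{lemma contnuity copy(1)}, with the role of the state perturbation $\upsilon-\rho$ replaced by an unbounded-derivation source term reflecting the \emph{explicit} $s$-dependence of the evolution. Concretely, I would split the difference quotient as
\begin{equation*}
h^{-1}\bigl(\varpi_{\rho,s+h}(t,A)-\varpi_{\rho,s}(t,A)\bigr)
= h^{-1}\rho\circ\bigl(\tau_{t,s+h}^{\Psi^{\gimel_{\rho,s+h}}}-\tau_{t,s+h}^{\Psi^{\gimel_{\rho,s}}}\bigr)(A)
+ h^{-1}\rho\circ\bigl(\tau_{t,s+h}^{\Psi^{\gimel_{\rho,s}}}-\tau_{t,s}^{\Psi^{\gimel_{\rho,s}}}\bigr)(A)\ ,
\end{equation*}
where $\gimel_{\rho,s}\doteq\varpi_{\rho,s}|_{\mathcal{U}_{\Lambda}}$. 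The second summand is treated via the finite-volume identity $\tau_{t,s+h}^{(L,\Psi)}-\tau_{t,s}^{(L,\Psi)}=-\int_{s}^{s+h}\delta_{L}^{\Psi(\alpha)}\circ\tau_{t,\alpha}^{(L,\Psi)}\,\mathrm{d}\alpha$ from (\ref{cauchy1}); passing to $L\rightarrow\infty$ by Lemma \ref{lemma extra1} -- which places $\tau_{t,s}^{\Psi^{\gimel_{\rho,s}}}(A)$ in $\mathrm{dom}(\delta^{\Psi^{\gimel_{\rho,s}}(s)})$ and gives the graph-norm limit uniformly on compacta -- and then $h\rightarrow 0$ by continuity of $\alpha\mapsto\delta^{\Psi^{\gimel_{\rho,s}}(\alpha)}\circ\tau_{t,\alpha}^{\Psi^{\gimel_{\rho,s}}}(A)$ produces exactly the source term $-\rho\circ\delta^{\Psi^{\gimel_{\rho,s}}(s)}\circ\tau_{t,s}^{\Psi^{\gimel_{\rho,s}}}(A)$ appearing in (\ref{equation integral}).

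For the first summand, I would invoke Equation (\ref{equality rivial}) from the proof of Lemma \ref{lemma contnuity copy(1)}: since $\mathfrak{m}\in C_{b}(\mathbb{R};\mathcal{M}_{\Lambda})$ forces $\Psi^{\gimel_{\rho,s}},\Psi^{\gimel_{\rho,s+h}}\in C_{b}(\mathbb{R};\mathcal{W}^{\mathbb{R}}\cap\mathcal{W}_{\Lambda})$, these interactions are finite-range, as required by the multi-commutator Lieb-Robinson estimates of \cite[Theorems 4.11, 5.4]{brupedraLR} justifying (\ref{equality rivial}). Expanding $\Psi^{\gimel_{\rho,s+h}}(\alpha)-\Psi^{\gimel_{\rho,s}}(\alpha)$ via (\ref{def utile}) and (\ref{def aussi utile}) produces a telescoping in each factor $\varpi_{\rho,s+h}(\alpha,\mathfrak{e}_{\Psi^{(j)},\vec{\ell}})-\varpi_{\rho,s}(\alpha,\mathfrak{e}_{\Psi^{(j)},\vec{\ell}})$, and after division by $h$ this becomes precisely the operator $\mathfrak{\maltese}_{A}$ of (\ref{def dysons}) applied to the difference quotient of $\varpi_{\rho,\cdot}$. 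Uniform absolute summability in $h\in(0,1]$ is inherited from the Lieb-Robinson bounds of Proposition \ref{Theorem Lieb-Robinson copy(3)}(i), the bound (\ref{e phi}), and the summability (\ref{definition 0bis}) of $\mathfrak{a}(t)$, exactly as in the derivation of (\ref{equality nontrivial}) and (\ref{lebesgue chat}).

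Combining both limits term by term via Lebesgue's dominated convergence theorem, Lemma \ref{estimate useful} and the joint continuity of Lemma \ref{lemma well copy(1)}, one obtains that any accumulation point of the difference quotient satisfies the integral equation (\ref{equation integral}). The equation $\xi=f+\mathfrak{\maltese}_{A}[\xi]$ with source $f(t,A)=-\rho\circ\delta^{\Psi^{\gimel_{\rho,s}}(s)}\circ\tau_{t,s}^{\Psi^{\gimel_{\rho,s}}}(A)$ admits a unique solution in $C(\mathbb{R}\times\mathcal{U}_{\tilde{\Lambda}};\mathbb{C})$, given by an absolutely convergent Dyson-type series; this is proven exactly as in Lemma \ref{lemma contnuity copy(1)}, the iteration estimate (\ref{dyson chiant}) forcing uniqueness on each bounded time interval and then on all of $\mathbb{R}$ by continuation from arbitrarily small time windows. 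The continuity of $(t,A)\mapsto\partial_{s}\varpi_{\rho,s}(t,A)$ on $\mathbb{R}\times\mathcal{U}_{\tilde{\Lambda}}$ then follows from the Dyson series together with the joint continuity of $f$ in $(t,A)$ guaranteed by Lemma \ref{lemma extra1}.

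The main difficulty is the interplay between the \emph{unbounded} derivation $\delta^{\Psi^{\gimel_{\rho,s}}(s)}$ -- coming from $\partial_{s}\tau_{t,s}^{\Psi}$ -- and the implicit, self-consistent dependence of $\gimel_{\rho,s}$ on $s$, which requires continuous control of $\delta^{\Psi^{\gimel_{\rho,s}}(s)}\circ\tau_{t,s}^{\Psi^{\gimel_{\rho,s}}}(A)$ jointly in $(t,A)$, uniformly on compacta. This is precisely what Lemma \ref{lemma extra1} delivers, which is why the exponential-decay hypothesis (\ref{fix decay}) -- through which the multi-commutator Lieb-Robinson bounds of \cite[Theorems 4.11, 5.4]{brupedraLR} become available -- is imposed in the statement.
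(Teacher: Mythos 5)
Your proposal follows the same route as the paper: the same decomposition of the difference quotient, the same Dyson-type expansion via (\ref{equality rivial}) yielding $\mathfrak{\maltese}_A$ applied to the difference quotient, the same uniqueness argument for the integral equation by iteration, and the same appeal to the joint continuity coming from Lemma \ref{lemma extra1}. The only cosmetic difference is in obtaining the source term $-\rho\circ\delta^{\Psi^{\gimel_{\rho,s}}(s)}\circ\tau_{t,s}^{\Psi^{\gimel_{\rho,s}}}(A)$: the paper invokes \cite[Theorem 5.5]{brupedraLR} directly for the backward derivative $\partial_r\tau^{\Psi}_{t,r}=-\delta^{\Psi(r)}\circ\tau^{\Psi}_{t,r}$, whereas you rederive it from the finite-volume identity implied by (\ref{cauchy1}) together with the graph-norm limits of Lemma \ref{lemma extra1}; both routes rest on the same multi-commutator Lieb-Robinson bounds from \cite[Theorems 4.11, 5.4]{brupedraLR}, so this is an equivalent, not genuinely different, argument.
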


\begin{proof}
Fix all parameters of the Lemma. By Lemma \ref{Solution self}, for any $\rho
\in E$, $s,t\in \mathbb{R}$, $A\in \mathcal{U}_{0}$ and $\varepsilon \in 
\mathbb{R}\backslash \{0\}$, 
\begin{eqnarray*}
\tilde{\beth}\left( \varepsilon ,t,A\right) &\doteq &\varepsilon ^{-1}\left(
\varpi _{\rho ,s+\varepsilon }\left( t,A\right) -\varpi _{\rho ,s}\left(
t,A\right) \right) \\
&=&\varepsilon ^{-1}\rho \circ (\tau _{t,s+\varepsilon }^{\Psi ^{\gimel
_{\rho ,s}}}-\tau _{t,s}^{\Psi ^{\gimel _{\rho ,s}}})\left( A\right)
+\varepsilon ^{-1}\rho \circ (\tau _{t,s+\varepsilon }^{\Psi ^{\gimel _{\rho
,s+\varepsilon }}}-\tau _{t,s+\varepsilon }^{\Psi ^{\gimel _{\rho
,s}}})\left( A\right)
\end{eqnarray*}%
with $\gimel \equiv \gimel _{\rho ,s}\doteq \varpi _{\rho ,s}|_{\mathcal{U}%
_{\Lambda }}$. Similar to (\ref{equality nontrivial}), via Equations (\ref%
{def aussi utile}), (\ref{def utile}) and (\ref{equality rivial}) (keeping
in mind that $\mathfrak{m}\in C_{b}(\mathbb{R};\mathcal{M}_{\Lambda })$), we
deduce that%
\begin{eqnarray}
\tilde{\beth}\left( \varepsilon ,t,A\right) &=&\varepsilon ^{-1}\rho \circ
(\tau _{t,s+\varepsilon }^{\Psi ^{\gimel _{\rho ,s}}}-\tau _{t,s}^{\Psi
^{\gimel _{\rho ,s}}})\left( A\right)  \label{sdkljsdlkjf} \\
&&+\sum_{n\in \mathbb{N}}\sum\limits_{\mathcal{Z}\in \mathcal{P}%
_{f}}\int_{s+\varepsilon }^{t}\mathrm{d}\alpha \int_{(\mathbb{S}\cap 
\mathcal{W}_{\Lambda })^{n}}\mathfrak{a}\left( \alpha \right) _{n}\left( 
\mathrm{d}\Psi ^{(1)},\ldots ,\mathrm{d}\Psi ^{(n)}\right)  \notag \\
&&\qquad \sum_{m_{1},m_{2}=1,m_{2}\neq m_{1}}^{n}\tilde{\beth}\left(
\varepsilon ,\alpha ,\mathfrak{e}_{\Psi ^{(m_{2})},\vec{\ell}}\right)  \notag
\\
&&\qquad \qquad \times \rho \circ \tau _{\alpha ,s}^{\Psi ^{\gimel _{_{\rho
,s+\varepsilon }}}}\left( i\left[ \Psi _{\mathcal{Z}}^{(m_{1})},\tau
_{t,\alpha }^{\Psi ^{\gimel _{\rho ,s}}}\left( A\right) \right] \right) 
\notag \\
&&\qquad \qquad \qquad \times \prod\limits_{j\in \left\{ 1,\ldots
,m_{2}-1\right\} \backslash \{m_{1}\}}\varpi _{\rho ,s}\left( \alpha ,%
\mathfrak{e}_{\Psi ^{(j)},\vec{\ell}}\right)  \notag \\
&&\qquad \qquad \qquad \qquad \times \prod\limits_{j\in \left\{
m_{2}+1,\ldots ,n\right\} \backslash \{m_{1}\}}\varpi _{\rho ,s+\varepsilon
}\left( \alpha ,\mathfrak{e}_{\Psi ^{(j)},\vec{\ell}}\right) \ ,  \notag
\end{eqnarray}%
where the two products over $j$ are, by definition, equal to $1$ when $j$
ranges over the empty set\footnote{%
This happens when $m_{2}\in \{1,n\}$.}. Again, using the same arguments as
for $\beth \left( h,t,A;\upsilon \right) $ in (\ref{equality nontrivial}),
one sees from Equation (\ref{sdkljsdlkjf}) that $\tilde{\beth}\left(
\varepsilon ,t,A\right) $ is given by a Dyson-type series which is
absolutely summable, uniformly with respect to $\varepsilon $ in a bounded
set when $\mathfrak{m}\in C_{b}(\mathbb{R};\mathcal{M}_{\Lambda })$.

Now, assuming (\ref{fix decay}), we can apply \cite[Theorem 5.5]{brupedraLR}
to the interaction $\Psi ^{\gimel _{\rho ,s}}$: 
\begin{equation}
\forall s,r,t\in {\mathbb{R}}:\qquad \partial _{r}\tau _{t,r}^{\Psi ^{\gimel
_{\rho ,s}}}=-\delta ^{\Psi ^{\gimel _{\rho ,s}}\left( r\right) }\circ \tau
_{t,r}^{\Psi ^{\gimel _{\rho ,s}}}\ ,  \label{sdgsdgsgsg}
\end{equation}%
in the strong sense on the dense set $\mathcal{U}_{0}$. It is a highly
non-trivial outcome resulting again from Lieb-Robinson bounds for
multi-commutators of order three given in \cite[Theorems 4.11, 5.4]%
{brupedraLR}. Similar to (\ref{lebesgue chat}), by Lemmata \ref{estimate
useful} and \ref{lemma well copy(1)} together with Lebesgue's dominated
convergence theorem, we deduce from the Dyson-type series coming from (\ref%
{sdkljsdlkjf}) that%
\begin{equation*}
\partial _{s}\varpi _{\rho ,s}\left( t,A\right) \doteq \lim_{\varepsilon
\rightarrow 0}\tilde{\beth}\left( \varepsilon ,t,A\right) =\lim_{\varepsilon
\rightarrow 0}\varepsilon ^{-1}\left( \varpi _{\rho ,s+\varepsilon }\left(
t,A\right) -\varpi _{\rho ,s}\left( t,A\right) \right)
\end{equation*}%
exists, for all $s,t\in \mathbb{R}$, $\rho \in E$ and $A\in \mathcal{U}_{0}$%
, and is also given by a Dyson-type series. Note that the Dyson-type series
are well-defined because, for any $\tilde{\Lambda}\in \mathcal{P}_{f}$, the
mapping from $\mathbb{R}\times \mathcal{U}_{\tilde{\Lambda}}$ to $\mathbb{C}$
defined by 
\begin{equation}
\left( t,A\right) \mapsto \rho \circ \delta ^{\Psi ^{\gimel _{\rho
,s}}\left( s\right) }\circ \tau _{t,s}^{\Psi ^{\gimel _{\rho ,s}}}\left(
A\right)  \label{mapping Linfinit}
\end{equation}%
is continuous, by Lemma \ref{lemma extra1}.

By Proposition \ref{Theorem Lieb-Robinson copy(3)} (i), for any $\tilde{%
\Lambda}\in \mathcal{P}_{f}$, the complex-valued function $(t,A)\mapsto
\partial _{s}\varpi _{\rho ,s}\left( t,A\right) $ on $\mathbb{R}\times 
\mathcal{U}_{\tilde{\Lambda}}$ is the unique solution in $\xi \in C\left( 
\mathbb{R}\times \mathcal{U}_{\tilde{\Lambda}};\mathbb{C}\right) $ to the
equation 
\begin{equation}
\xi \left( t,A\right) =-\rho \circ \delta ^{\Psi ^{\gimel _{\rho ,s}}\left(
s\right) }\circ \tau _{t,s}^{\Psi ^{\gimel _{\rho ,s}}}\left( A\right) +%
\mathfrak{\maltese }_{A}\left[ \xi \right]  \label{toto final}
\end{equation}%
with $\mathfrak{\maltese }_{A}$ defined by (\ref{def dysons}). Compare with (%
\ref{sdkljsdlkjf}) taken at $\varepsilon =0$. To prove the uniqueness of a
solution in $\xi \in C\left( \mathbb{R}\times \mathcal{U}_{\tilde{\Lambda}};%
\mathbb{C}\right) $ to (\ref{toto final}), we use the same arguments than
for (\ref{integral equation 0}), keeping in mind Lemma \ref{lemma extra1}.
This is again a consequence of Lieb-Robinson bounds for multi-commutators of
order three given in \cite[Theorems 4.11, 5.4]{brupedraLR}.
\end{proof}

We conclude this section with the derivation of Liouville's equation for
(elementary) continuous and affine functions defined by (\ref{fA}), from
which Theorem \ref{classical dynamics I} is deduced.

\begin{lemma}[Liouville's equation for affine functions]
\label{Corollary bije+cocylbaby copy(1)}\mbox{ }\newline
Assume (\ref{fix decay}). Then, 
\begin{equation*}
\partial _{s}V_{t,s}^{\mathfrak{m}}(\hat{A})\left( \rho \right)
=-\lim_{L\rightarrow \infty }\{\mathrm{h}_{L}^{\mathfrak{m}(s)},V_{t,s}^{%
\mathfrak{m}}(\hat{A})\}\left( \rho \right) \ ,\qquad s,t\in \mathbb{R},\
A\in \mathcal{U}_{0},\ \vec{\ell}\in \mathbb{N}^{d},\ \rho \in E_{\vec{\ell}%
}\ ,
\end{equation*}%
with $\hat{A}\in \mathfrak{C}$ being defined by (\ref{fA}).
\end{lemma}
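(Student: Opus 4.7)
The plan is to reduce the claim, via the identifications $V_{t,s}^{\mathfrak m}(\hat A)(\rho)=\varpi_{\rho,s}(t,A)$ (from (\ref{classical evolution family}) and (\ref{fA})) and $\mathrm{D}V_{t,s}^{\mathfrak m}(\hat A)(\rho)=\mathrm{D}\varpi_{\rho,s}(t,A)=\mathfrak{D}(t,A)$ (from Lemma \ref{lemma contnuity copy(1)}), to the point-wise identity
\begin{equation*}
\partial_s\varpi_{\rho,s}(t,A) \;=\; -\rho\circ\delta^{\Psi^{\gimel}(s)}\bigl(\mathfrak{D}(t,A)\bigr),\qquad \gimel\doteq\varpi_{\rho,s}|_{\mathcal U_\Lambda}.
\end{equation*}
Lemma \ref{Differentiability copy(1)} already supplies the left-hand side as $\partial_sV_{t,s}^{\mathfrak m}(\hat A)(\rho)$, while the right-hand side will emerge as $-\lim_L\{\mathrm{h}_L^{\mathfrak m(s)},V_{t,s}^{\mathfrak m}(\hat A)\}(\rho)$ after explicit computation; the coincidence of the two will then be forced by a uniqueness argument for a common Dyson-type integral equation.

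For the Poisson bracket, Definition \ref{convex Frechet derivative copy(1)} together with formula (\ref{derivative classical}) for $\mathrm{Dh}_L^{\mathfrak m(s)}(\rho)$ shows that the constant terms $\rho(U_L^\Psi)\mathfrak 1$ drop out of the commutator, leaving
\begin{equation*}
\{\mathrm h_L^{\mathfrak m(s)},V_{t,s}^{\mathfrak m}(\hat A)\}(\rho) \;=\; \rho\circ\delta_L^{\Phi(s)}\bigl(\mathfrak D(t,A)\bigr) \;+\; \sum_n\sum_m\!\int\rho\circ\delta_L^{\Psi^{(m)}}\bigl(\mathfrak D(t,A)\bigr)\prod_{j\neq m}\frac{\rho(U_L^{\Psi^{(j)}})}{|\Lambda_L|}\,\mathfrak a(s)_n.
\end{equation*}
Proposition \ref{lemma extra2} supplies the termwise convergence $\delta_L^{\Psi}(\mathfrak D(t,A))\to\delta^{\Psi}(\mathfrak D(t,A))$ in $\mathcal U$, while Proposition \ref{density of periodic states copy(2)}---where the periodicity assumption $\rho\in E_{\vec\ell}$ enters essentially---yields $\rho(U_L^{\Psi^{(j)}})/|\Lambda_L|\to\rho(\mathfrak e_{\Psi^{(j)},\vec\ell})$. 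Uniform bounds justifying Lebesgue dominated convergence across the infinite sum over $n$ and the integrals over $\mathbb S^n$ follow from $|\rho(U_L^\Psi)|\le|\Lambda_L|\|\mathbf F\|_{1,\mathfrak L}$, from the Lieb-Robinson estimate of Proposition \ref{Theorem Lieb-Robinson copy(3)}(i) propagated along the Dyson series for $\mathfrak D(t,A)$, and from the finiteness of $\|\mathfrak a(s)\|_{\mathcal S}$. Noting that $\gimel(s)=\rho|_{\mathcal U_\Lambda}$ because $\tau_{s,s}^{\Psi^\gimel}=\mathbf 1_{\mathcal U}$, and comparing the limit with (\ref{def aussi utile})-(\ref{def utile}) via linearity of $\Phi\mapsto\delta^\Phi$, the above expression tends to $\rho\circ\delta^{\Psi^{\gimel}(s)}(\mathfrak D(t,A))$.

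The concluding and most delicate step is to identify this limit with $-\partial_s\varpi_{\rho,s}(t,A)$. I will apply $\rho\circ\delta^{\Psi^{\gimel}(s)}$ to the integral equation (\ref{chat0}) for $\mathfrak D(t,A)$. The term $\rho\circ\tau_{t,s}^{\Psi^\gimel}(A)\mathfrak 1$ is killed by the derivation; in each integral term the operator $\mathfrak D(\alpha,\mathfrak e_{\Psi^{(m_2)},\vec\ell})$ is multiplied by a complex scalar, so the derivation acts only on the operator-valued factor, and by Proposition \ref{lemma extra2} combined with the Bochner-integrability established in the proof of Lemma \ref{lemma contnuity copy(1)} the derivation can be pulled inside the integrals. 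The outcome,
\begin{equation*}
\rho\circ\delta^{\Psi^\gimel(s)}\bigl(\mathfrak D(t,A)\bigr) \;=\; \rho\circ\delta^{\Psi^\gimel(s)}\bigl(\tau_{t,s}^{\Psi^\gimel}(A)\bigr) \;+\; \mathfrak{\maltese}_A\bigl[\rho\circ\delta^{\Psi^\gimel(s)}\circ\mathfrak D\bigr],
\end{equation*}
is exactly $(-1)$ times the integral equation (\ref{equation integral}) satisfied by $\partial_s\varpi_{\rho,s}$. Since Proposition \ref{lemma extra2} ensures that $(t,A)\mapsto\rho\circ\delta^{\Psi^\gimel(s)}(\mathfrak D(t,A))$ lies in $C(\mathbb R\times\mathcal U_{\tilde\Lambda};\mathbb C)$ for every $\tilde\Lambda\in\mathcal P_f$, the uniqueness of continuous solutions to this Dyson-type equation (established in Lemmas \ref{lemma contnuity copy(1)} and \ref{Differentiability copy(1)} through iteration of the integral representation) forces $\rho\circ\delta^{\Psi^\gimel(s)}(\mathfrak D(t,A))=-\partial_s\varpi_{\rho,s}(t,A)$, and the Liouville identity follows. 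The main obstacle I anticipate is precisely the interchange of the limit $L\to\infty$ with the infinite sum and the sphere integrals in the Poisson bracket, which demands uniform-in-$L$ and uniform-in-$\Psi\in\mathbb S$ norm estimates on $\delta_L^\Psi(\mathfrak D(t,A))$; these are obtained by inserting the Dyson series for $\mathfrak D(t,A)$ into $\delta_L^\Psi$ and invoking the multi-commutator Lieb-Robinson bounds from \cite{brupedraLR} that underpin Proposition \ref{lemma extra2}.
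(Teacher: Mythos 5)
Your proof is correct and follows essentially the same strategy as the paper: identify $V_{t,s}^{\mathfrak{m}}(\hat{A})(\rho)=\varpi_{\rho,s}(t,A)$ and $\mathrm{D}V_{t,s}^{\mathfrak{m}}(\hat{A})(\rho)=\mathfrak{D}(t,A)$, show via Proposition \ref{lemma extra2}, Equation (\ref{chat00}), and Lemma \ref{Differentiability copy(1)} that both $\partial_{s}\varpi_{\rho,s}(t,A)$ and $-\rho\circ\delta^{\Psi^{\gimel}(s)}(\mathfrak{D}(t,A))$ solve the same Dyson-type integral equation (\ref{toto final})$\equiv$(\ref{equation integral}) and so coincide, and then compute the Poisson bracket limit via (\ref{derivative classical}), Proposition \ref{density of periodic states copy(2)} and Lebesgue. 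The only cosmetic difference is order: the paper establishes $\partial_{s}V_{t,s}^{\mathfrak m}(\hat A)(\rho)=-\rho\circ\delta^{\Psi^{\gimel}(s)}(\mathfrak{D}(t,A))$ first and then relegates the Poisson-bracket identification to a "tedious computation," whereas you carry out that computation explicitly up front and defer the uniqueness argument; your write-up in fact spells out several steps the paper leaves implicit (the cancellation of the $\mathfrak{1}$-multiples in the commutator, the observation $\gimel(s)=\rho|_{\mathcal{U}_{\Lambda}}$, and the termwise action of the derivation inside the Bochner integrals).
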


\begin{proof}
Fix $s\in \mathbb{R}$ and $\rho \in E$. By (\ref{fA}) and (\ref{classical
evolution family}), note that 
\begin{equation*}
V_{t,s}^{\mathfrak{m}}(\hat{A})=\varpi _{\rho ,s}\left( t\right) \left(
A\right) \equiv \varpi _{\rho ,s}\left( t,A\right) \ ,\qquad t\in \mathbb{R}%
,\ A\in \mathcal{U}\ .
\end{equation*}%
Therefore, by Lemma \ref{lemma contnuity copy(1)} and (\ref{chat0}), 
\begin{equation*}
\mathrm{D}V_{t,s}^{\mathfrak{m}}(\hat{A})\left( \rho \right) =\mathrm{D}%
\varpi _{\rho ,s}\left( t;A\right) =\mathfrak{D}\left( t,A\right) \ ,\qquad
t\in \mathbb{R},\ A\in \mathcal{U}_{0}\ .
\end{equation*}%
See also Definition \ref{convex Frechet derivative} and Equation (\ref%
{clear2}). By Proposition \ref{lemma extra2} and (\ref{chat00}), the
continuous complex-valued function 
\begin{equation*}
(t,A)\mapsto -\rho \circ \delta ^{\Psi ^{\varpi _{\rho ,s}|_{\mathcal{U}%
_{\Lambda }}}\left( s\right) }\left( \mathrm{D}V_{t,s}^{\mathfrak{m}}(\hat{A}%
)\left( \rho \right) \right)
\end{equation*}%
on $\mathbb{R}\times \mathcal{U}_{\tilde{\Lambda}}$ solves Equation (\ref%
{toto final}), like the well-defined continuous mapping 
\begin{equation*}
(t,A)\mapsto \partial _{s}V_{t,s}^{\mathfrak{m}}(\hat{A})\left( \rho \right)
=\partial _{s}\varpi _{\rho ,s}\left( t\right) \left( A\right) \equiv
\partial _{s}\varpi _{\rho ,s}\left( t,A\right)
\end{equation*}%
from $\mathbb{R}\times \mathcal{U}_{\tilde{\Lambda}}$ to $\mathbb{C}$ (Lemma %
\ref{Differentiability copy(1)}), at any fixed $\tilde{\Lambda}\in \mathcal{P%
}_{f}$. By uniqueness of the solution to (\ref{toto final}),%
\begin{equation*}
\partial _{s}V_{t,s}^{h}(\hat{A})\left( \rho \right) =-\rho \circ \delta
^{\Psi ^{\varpi _{\rho ,s}|_{\mathcal{U}_{\Lambda }}}\left( s\right) }\left( 
\mathrm{D}V_{t,s}^{\mathfrak{m}}(\hat{A})\left( \rho \right) \right) \
,\qquad s,t\in \mathbb{R},\ A\in \mathcal{U}_{0}\ .
\end{equation*}%
By tedious computations using Definitions \ref{dynamic series}, \ref{convex
Frechet derivative copy(1)}, Corollary \ref{Lemma cigare1}, Lemma \ref%
{density of periodic states copy(2)}, Proposition \ref{lemma extra2},
Lebesgue's dominated convergence theorem and Equations (\ref{def aussi utile}%
), (\ref{derivative classical}) and (\ref{def utile}), one meanwhile checks
that, for any $\vec{\ell}\in \mathbb{N}^{d}$ and $\rho \in E_{\vec{\ell}}$,%
\begin{equation*}
\rho \circ \delta ^{\Psi ^{\varpi _{\rho ,s}|_{\mathcal{U}_{\Lambda
}}}\left( s\right) }\left( \mathrm{D}V_{t,s}^{\mathfrak{m}}(\hat{A})\left(
\rho \right) \right) =\lim_{L\rightarrow \infty }\{\mathrm{h}_{L}^{\mathfrak{%
m}(s)},V_{t,s}^{\mathfrak{m}}(\hat{A})\}\left( \rho \right) \ ,\qquad s,t\in 
\mathbb{R},\ A\in \mathcal{U}_{0}\ .
\end{equation*}
\end{proof}

\section{Equivalent Definition of Translation-Invariant Long-Range Models 
\label{Long-range models}}

Recall that $\mathfrak{L}=\mathbb{Z}^{d}$, see Section \ref{Long-rande gef}.
In \cite[Definition 2.1]{BruPedra2}, we give a definition of
translation-invariant long-range models that differs from Equation (\ref%
{translatino invariatn long range models}). It turns out that any model of%
\begin{equation*}
\mathcal{M}_{1}^{(2)}\doteq \left\{ (\Phi ,\mathfrak{a})\equiv (\Phi ,(%
\mathfrak{a}_{n})_{n\in \mathbb{N}})\in \mathcal{W}_{1}^{\mathbb{R}}\times 
\mathcal{S}:\forall n\in \mathbb{N}\backslash \{2\},\mathfrak{a}%
_{n}=0\right\} \ ,
\end{equation*}%
where $\mathcal{W}_{1}^{\mathbb{R}}\doteq \mathcal{W}_{1}\cap \mathcal{W}^{%
\mathbb{R}}$, can be identified with a long-range model in the sense of \cite%
[Definition 2.1]{BruPedra2}, and vice-versa. (The notation $\mathcal{W}_{1}$
in \cite{BruPedra2} corresponds here to $\mathcal{W}_{1}^{\mathbb{R}}$.)
This identification can be done in a such a way that the sequences of local
Hamiltonians associated with each long-range model are the same in both
cases:\bigskip

\noindent \underline{(i):} We start with a preliminary observation which
simplifies the arguments. At $L\in \mathbb{N}$, the local Hamiltonian of any
model $\mathfrak{m}=\left( \Phi ,(0,\mathfrak{a}_{2},0,\ldots )\right) \in 
\mathcal{M}_{1}^{(2)}$ of Definition \ref{definition long range energy} is
equal to%
\begin{equation*}
U_{L}^{\mathfrak{m}}=U_{L}^{\Phi }+\frac{1}{2\left\vert \Lambda
_{L}\right\vert }\int_{\mathbb{S}^{2}}\left( U_{L}^{\Psi ^{(1)}}U_{L}^{\Psi
^{(2)}}+(U_{L}^{\Psi ^{(2)}})^{\ast }(U_{L}^{\Psi ^{(1)}})^{\ast }\right) 
\mathfrak{a}_{2}\left( \mathrm{d}\Psi ^{(1)},\mathrm{d}\Psi ^{(2)}\right) \ ,
\end{equation*}%
because $\mathfrak{a}_{2}$ is, by definition, self-adjoint, meaning that it
equals its pushforward through the homeomorphism (\ref{push forward
self-adjoint}) for $n=2$. Since, for any $A,B\in \mathcal{U}$, 
\begin{equation*}
AB+B^{\ast }A^{\ast }=\frac{1}{2}\left( \left( A^{\ast }+B\right) ^{\ast
}\left( A^{\ast }+B\right) -\left( A^{\ast }-B\right) ^{\ast }\left( A^{\ast
}-B\right) \right) \ ,
\end{equation*}%
observe that%
\begin{equation*}
U_{L}^{\mathfrak{m}}=U_{L}^{\Phi }+\frac{1}{\left\vert \Lambda
_{L}\right\vert }\int_{\mathbb{S}^{2}}\left( |U_{L}^{\Psi ^{(1)\ast }+\Psi
^{(2)}}|^{2}-|U_{L}^{\Psi ^{(1)\ast }-\Psi ^{(2)}}|^{2}\right) \mathfrak{a}%
_{2}\left( \mathrm{d}\Psi ^{(1)},\mathrm{d}\Psi ^{(2)}\right)
\end{equation*}%
with $|C|^{2}\doteq C^{\ast }C$ for $C\in \mathcal{U}$. Let $\mathbb{B}%
\supseteq \mathbb{S}$ be the unit closed ball of the Banach space $\mathcal{W%
}_{1}$\ of translation-invariant (complex) interactions and define the
continuous functions $F^{\pm }:\mathbb{S}^{2}\rightarrow \mathbb{B}$ by%
\begin{equation*}
F^{\pm }\left( \Psi ^{(1)},\Psi ^{(2)}\right) \doteq \frac{1}{2}\left( \Psi
^{(1)\ast }\pm \Psi ^{(2)}\right) \ ,\qquad \Psi ^{(1)},\Psi ^{(2)}\in 
\mathbb{S}\ .
\end{equation*}%
Denoting by $F_{\ast }^{\pm }\left( \mathfrak{a}_{2}\right) $ the two
pushforwards of the measure $\mathfrak{a}_{2}$ through the continuous
functions $F^{\pm }$, we arrive at the equality 
\begin{equation}
U_{L}^{\mathfrak{m}}=U_{L}^{\Phi }+\frac{1}{\left\vert \Lambda
_{L}\right\vert }\int_{\mathbb{B}}|U_{L}^{\Psi }|^{2}\mathfrak{a}\left( 
\mathrm{d}\Psi \right) \ ,\qquad \mathfrak{a}\doteq F_{\ast }^{+}\left( 
\mathfrak{a}_{2}\right) -F_{\ast }^{-}\left( \mathfrak{a}_{2}\right) \ .
\label{Hamiltonian local simple}
\end{equation}

Then, any model of $\mathcal{M}_{1}^{(2)}$ can be identified with a
long-range model in the sense of \cite[Definition 2.1]{BruPedra2}:

\begin{itemize}
\item The measure space of \cite[Definition 2.1]{BruPedra2} is $(\mathbb{B}%
,\Sigma ,|\mathfrak{a}|)$, with $\Sigma \equiv \Sigma _{\mathbb{B}}$ being
the Borel $\sigma $-algebra associated with $\mathbb{B}$. $\Sigma $ is
countably generated, by separability of $\mathcal{W}\supseteq \mathbb{B}$.
Ergo, by \cite[Proposition\ 3.4.5]{Cohn-Measure-Theory}, the space $L^{2}(%
\mathbb{B};\mathbb{C})\doteq L^{2}(\mathbb{B},|\mathfrak{a}|;\mathbb{C})$ of
square-integrable complex-valued functions on $\mathbb{B}$ is a separable
Hilbert space, i.e., $(\mathbb{B},\Sigma ,|\mathfrak{a}|)$ is a separable
measure space.

\item The corresponding $\mathcal{L}^{2}$-functions $(\Phi _{\Psi })_{\Psi
\in \mathbb{B}},(\Phi _{\Psi }^{\prime })_{\Psi \in \mathbb{B}}$ of \cite[%
Definition 2.1]{BruPedra2} are defined by 
\begin{equation*}
\Phi _{\Psi }\doteq \mathrm{Re}\left\{ \Psi \right\} \qquad \text{and}\qquad
\Phi _{\Psi }^{\prime }\doteq \mathrm{Im}\left\{ \Psi \right\} \qquad \text{%
with}\qquad \Psi \in \mathbb{B}\ .
\end{equation*}%
See (\ref{real-im part interaction}) for the definition of real and
imaginary parts of interactions. These two functions are Bochner measurable,
by \cite[Theorems 1.1 and 1.2]{pettis}, for they are continuous functions
and $\mathcal{W}$ is a separable Banach space. Additionally, they are $%
\mathcal{L}^{2}$-functions because there are bounded on a space of finite
measure.

\item The corresponding measurable function $\gamma _{\Psi }\in \{-1,1\}$ of 
\cite[Definition 2.3]{BruPedra2} is obtained from any Hahn decomposition $P_{%
\mathfrak{a}},N_{\mathfrak{a}}\in \Sigma $ of the signed measure $\mathfrak{a%
}$ (\ref{Hamiltonian local simple}) by 
\begin{equation*}
\gamma _{\Psi }=\mathbf{1}[\Psi \in P_{\mathfrak{a}}]-\mathbf{1}[\Psi \in N_{%
\mathfrak{a}}]\ ,\qquad \Psi \in \mathbb{B}\ ,
\end{equation*}%
where $P_{\mathfrak{a}}$ and $N_{\mathfrak{a}}$ are respectively positive
and negative sets for $\mathfrak{a}$.\bigskip
\end{itemize}

\noindent \underline{(ii):} Conversely, let $(\mathcal{A},\mathfrak{A},%
\mathfrak{a}_{0})$ be a separable measure space with $\mathfrak{A}$ and $%
\mathfrak{a}_{0}:\mathfrak{A}\rightarrow \mathbb{R}_{0}^{+}$ being
respectively some $\sigma $-algebra on $\mathcal{A}$\ and some measure on $%
\mathfrak{A}$. Fix a measurable function $a\mapsto \gamma _{a}\in \{-1,1\}$
on $\mathcal{A}$ (see \cite[Definition 2.3]{BruPedra2}) and a model 
\begin{equation}
(\Phi ,(\Phi _{a})_{a\in \mathcal{A}},(\Phi _{a}^{\prime })_{a\in \mathcal{A}%
})\in \mathcal{W}_{1}^{\mathbb{R}}\times \mathcal{L}^{2}\left( \mathcal{A},%
\mathcal{W}_{1}^{\mathbb{R}}\right) \times \mathcal{L}^{2}\left( \mathcal{A},%
\mathcal{W}_{1}^{\mathbb{R}}\right)  \label{totot}
\end{equation}%
in the sense of \cite[Definition 2.1]{BruPedra2}. ($\mathcal{W}_{1}$ in \cite%
{BruPedra2} refers here to $\mathcal{W}_{1}^{\mathbb{R}}$.)

Define the set%
\begin{equation*}
\mathcal{A}_{\Phi ,\Phi ^{\prime }}\doteq \left\{ a\in \mathcal{A}:\Phi
_{a}+i\Phi _{a}^{\prime }\neq 0\right\} \ .
\end{equation*}%
The mapping 
\begin{equation*}
a\mapsto \Phi _{a}+i\Phi _{a}^{\prime }
\end{equation*}%
from $\mathcal{A}_{\Phi ,\Phi ^{\prime }}$ to $\mathbb{R}^{+}$ is
measurable, since the vector space operations in $\mathcal{W}$ are jointly
continuous. Hence, $\mathcal{A}_{\Phi ,\Phi _{a}^{\prime }}$, the preimage
of $\mathcal{W}_{1}\backslash \{0\}$ by this mapping, is an element of the $%
\sigma $-algebra $\mathfrak{A}$. As a consequence, by \cite[Definition 2.3]%
{BruPedra2}, at $L\in \mathbb{N}$, the local Hamiltonian associated with a
model of \cite[Definition 2.1]{BruPedra2} equals 
\begin{equation}
U_{L}=U_{L}^{\Phi }+\frac{1}{|\Lambda _{L}|}\int_{\mathcal{A}_{\Phi ,\Phi
^{\prime }}}\gamma _{a}|U_{\Lambda _{L}}^{\Phi _{a}+i\Phi _{a}^{\prime
}}|^{2}\mathfrak{a}_{0}\left( \mathrm{d}a\right) \ ,  \label{local model}
\end{equation}%
recalling that $|C|^{2}\doteq C^{\ast }C$ for $C\in \mathcal{U}$.

Define the functions $G:\mathcal{A}_{\Phi ,\Phi ^{\prime }}\rightarrow 
\mathbb{S}$ by%
\begin{equation*}
G\left( a\right) \doteq g\left( a\right) \left( \Phi _{a}+i\Phi _{a}^{\prime
}\right) \ ,\qquad a\in \mathcal{A}_{\Phi ,\Phi ^{\prime }}\ ,
\end{equation*}%
where 
\begin{equation*}
g\left( a\right) \doteq \left\Vert \Phi _{a}+i\Phi _{a}^{\prime }\right\Vert
_{\mathcal{W}}^{-1}\ ,\qquad a\in \mathcal{A}_{\Phi ,\Phi ^{\prime }}\ .
\end{equation*}%
The mapping $g$ from $\mathcal{A}_{\Phi ,\Phi ^{\prime }}$ to $\mathbb{R}%
^{+} $ is measurable, as a composition of a measurable function from $%
\mathcal{A}_{\Phi ,\Phi ^{\prime }}$ to $\mathcal{W}_{1}\backslash \{0\}$
with a continuous one from $\mathcal{W}_{1}\backslash \{0\}$ to $\mathbb{R}%
^{+}$. Again by the joint continuity of the vector space operations in $%
\mathcal{W}$, $G$ is a measurable function. As a consequence, at $L\in 
\mathbb{N}$, by (\ref{local model}), the local Hamiltonian associated with a
model of \cite[Definition 2.1]{BruPedra2} equals 
\begin{equation*}
U_{L}=U_{L}^{\Phi }+\frac{1}{|\Lambda _{L}|}\int_{\mathbb{S}}|U_{\Lambda
_{L}}^{\Psi }|^{2}\mathfrak{a}\left( \mathrm{d}\Psi \right) \ ,\qquad 
\mathfrak{a}\doteq G_{\ast }(\gamma g^{-1}\mathfrak{a}_{0})\ ,
\end{equation*}%
where $G_{\ast }\left( \gamma g^{-2}\mathfrak{a}_{0}\right) $ is the
pushforward of the signed measure $\gamma g^{-2}\mathfrak{a}_{0}$ through
the measurable function $G$. Note that $\mathfrak{a}$ is a finite measure on 
$\mathbb{S}$ because 
\begin{equation*}
\int_{\mathbb{S}}\left\vert \mathfrak{a}\right\vert \left( \mathrm{d}\Psi
\right) =\int_{\mathbb{S}}\left\Vert \Phi _{a}+i\Phi _{a}^{\prime
}\right\Vert _{\mathcal{W}}^{2}\mathfrak{a}_{0}\left( \mathrm{d}a\right)
\leq 2\int_{\mathbb{S}}\left\Vert \Phi _{a}\right\Vert _{\mathcal{W}}^{2}%
\mathfrak{a}_{0}\left( \mathrm{d}a\right) +2\int_{\mathbb{S}}\left\Vert \Phi
_{a}^{\prime }\right\Vert _{\mathcal{W}}^{2}\mathfrak{a}_{0}\left( \mathrm{d}%
a\right) <\infty \ .
\end{equation*}%
Now, using the continuous mapping $K$\ from $\mathbb{S}$ to $\mathbb{S\times
S}$ defined by%
\begin{equation*}
K\left( \Psi \right) \doteq \left( \Psi ^{\ast },\Psi \right) \ ,\qquad \Psi
\in \mathbb{S}\ ,
\end{equation*}%
we define 
\begin{equation*}
\mathfrak{a}_{2}\doteq \frac{1}{2}\left( K_{\ast }(\mathfrak{a})+K_{\ast }(%
\mathfrak{a})^{\ast }\right)
\end{equation*}%
to be the real part of the pushforward of the signed measure $\mathfrak{a}$
through the measurable function $K$. Then, by construction, $\mathfrak{a}%
_{2}=\mathfrak{a}_{2}^{\ast }\mathfrak{\ }$is self-adjoint and 
\begin{equation*}
U_{L}=U_{L}^{\Phi }+\frac{1}{2\left\vert \Lambda _{L}\right\vert }\int_{%
\mathbb{S}^{2}}\left( U_{L}^{\Psi ^{(1)}}U_{L}^{\Psi ^{(2)}}+(U_{L}^{\Psi
^{(2)}})^{\ast }(U_{L}^{\Psi ^{(1)}})^{\ast }\right) \mathfrak{a}_{2}\left( 
\mathrm{d}\Psi ^{(1)},\mathrm{d}\Psi ^{(2)}\right)
\end{equation*}%
for any $L\in \mathbb{N}$.

\section{The BCS Theory as a Long-Range Model\label{Section applications}}

The most general form of a translation invariant model for fermions with
two-body interactions\ in a cubic box $\Lambda _{L}\doteq \{\mathbb{Z}\cap %
\left[ -L,L\right] \}^{d}$ (see (\ref{eq:def lambda n})) of volume $|\Lambda
_{L}|$, $L\in \mathbb{N}$, is given in momentum space by%
\begin{equation}
\mathrm{H}_{L}^{Full}=\underset{k\in \Lambda _{L}^{\ast },\ \mathrm{s}\in 
\mathrm{S}}{\sum }\left( \varepsilon _{k}-\mu \right) \tilde{a}_{k}^{\ast }%
\tilde{a}_{k}+\frac{1}{\left\vert \Lambda _{L}\right\vert }\underset{\mathrm{%
s}_{1},\mathrm{s}_{2},\mathrm{s}_{3},\mathrm{s}_{4}\in \mathrm{S}}{\underset{%
k,k^{\prime },q\in \Lambda _{L}^{\ast }}{\sum }}g_{\mathrm{s}_{1},\mathrm{s}%
_{2},\mathrm{s}_{3},\mathrm{s}_{4}}\left( k,k^{\prime },q\right) \tilde{a}%
_{k+q,\mathrm{s}_{1}}^{\ast }\tilde{a}_{k^{\prime }-q,\mathrm{s}_{2}}^{\ast }%
\tilde{a}_{k^{\prime },\mathrm{s}_{3}}\tilde{a}_{k,\mathrm{s}_{4}}\ .
\label{hamil general}
\end{equation}%
See \cite[Eq. (2.1)]{Metzner}. Recall that $\mathrm{S}$ is some finite
(spin) set representing the internal degrees of freedom of quantum
particles. The set 
\begin{equation*}
\Lambda _{l}^{\ast }\doteq \frac{2\pi }{(2l+1)}\Lambda _{l}\subseteq \left[
-\pi ,\pi \right] ^{d}
\end{equation*}%
is the reciprocal lattice of quasi-momenta (periodic boundary conditions)
associated with $\Lambda _{L}$ and the operator 
\begin{equation*}
\tilde{a}_{k,\mathrm{s}}^{\ast }\doteq \frac{1}{\left\vert \Lambda
_{L}\right\vert ^{1/2}}\underset{x\in \Lambda _{L}}{\sum }\mathrm{e}%
^{-ik\cdot x}a_{x,\mathrm{s}}^{\ast }
\end{equation*}%
(respectively $\tilde{a}_{k,\mathrm{s}}$) creates (respectively annihilates)
a fermion with spin $\mathrm{s}\in \mathrm{S}$ and (quasi-) momentum $k\in
\Lambda _{L}^{\ast }$. The function $\varepsilon _{k}$ represents the
kinetic energy of a fermion with (quasi-) momentum $k$. In physics, it is
usually the Fourier transform of the discrete Laplacian. The real number $%
\mu $ is the chemical potential. The last term of (\ref{hamil general})
corresponds to a translation-invariant two-body interaction written in the
momentum space.

One important example of a lattice-fermion system with long-range
interactions is given in the scope of the celebrated BCS theory -- proposed
in the late 1950s (1957) to explain conventional type I superconductors. The
lattice version of this theory is obtained from (\ref{hamil general}) by
taking $\mathrm{S}\doteq \{\uparrow ,\downarrow \}$ and imposing 
\begin{equation*}
g_{\mathrm{s}_{1},\mathrm{s}_{2},\mathrm{s}_{3},\mathrm{s}_{4}}\left(
k,k^{\prime },q\right) =\delta _{k,-k^{\prime }}\delta _{\mathrm{s}%
_{1},\uparrow }\delta _{\mathrm{s}_{2},\downarrow }\delta _{\mathrm{s}%
_{3},\downarrow }\delta _{\mathrm{s}_{4},\uparrow }f\left( k,-k,q\right)
\end{equation*}%
for some function $f$: It corresponds to the so-called (reduced) BCS\
Hamiltonian%
\begin{equation}
\mathrm{H}_{L}^{BCS}\doteq \sum\limits_{k\in \Lambda _{L}^{\ast },\ \mathrm{s%
}\in \mathrm{S}}\left( \varepsilon _{k}-\mu \right) \tilde{a}_{k,\mathrm{s}%
}^{\ast }\tilde{a}_{k,\mathrm{s}}-\frac{1}{\left\vert \Lambda
_{L}\right\vert }\sum_{k,q\in \Lambda _{L}^{\ast }}\gamma _{k,q}\tilde{a}%
_{k,\uparrow }^{\ast }\tilde{a}_{-k,\downarrow }^{\ast }\tilde{a}%
_{-q,\downarrow }\tilde{a}_{q,\uparrow }\ ,  \label{BCS Hamilt}
\end{equation}%
where $\gamma _{k,q}$ is a positive\footnote{%
The positivity of $\gamma _{k,q}$ imposes constraints on the choice of the
function $f$.} function. Because of the term $\delta _{k,-k^{\prime }}$, the
interaction of this model has a long-range character, in position space. In
physics, one usually takes 
\begin{equation*}
\gamma _{k,q}\doteq \left\{ 
\begin{array}{l}
\gamma \geq 0 \\ 
0%
\end{array}%
\begin{array}{l}
\mathrm{for\ }\left\vert k-q\right\vert \leq \mathrm{C} \\ 
\mathrm{for\ }\left\vert k-q\right\vert >\mathrm{C}%
\end{array}%
\right.
\end{equation*}%
with $\mathrm{C}\in \left( 0,\infty \right] $. The simple choice $\mathrm{C}%
=\infty $, i.e., $\gamma _{k,q}=\gamma >0$ in (\ref{BCS Hamilt}), is still
physically very interesting since, even when $\varepsilon _{k}=0$, the BCS\
Hamiltonian qualitatively displays most of basic properties of real
conventional type I superconductors. See, e.g. \cite[Chapter VII, Section 4]%
{Thou}.

The BCS\ Hamiltonian $\mathrm{H}_{L}^{BCS}$ with $\gamma _{k,q}=\gamma >0$
can be rewritten in the $x$--space as%
\begin{equation}
\mathrm{H}_{L}^{BCS}=\sum\limits_{x,y\in \Lambda _{L},\ \mathrm{s}\in 
\mathrm{S}}h\left( x-y\right) a_{x,\mathrm{s}}^{\ast }a_{y,\mathrm{s}}-\mu
\sum\limits_{x\in \Lambda _{L},\ \mathrm{s}\in \mathrm{S}}a_{x,\mathrm{s}%
}^{\ast }a_{x,\mathrm{s}}-\frac{\gamma }{\left\vert \Lambda _{L}\right\vert }%
\sum_{x,y\in \Lambda _{N}}a_{x,\uparrow }^{\ast }a_{x,\downarrow }^{\ast
}a_{y,\downarrow }a_{y,\uparrow }  \label{BCS model}
\end{equation}%
for some appropriate function $h:\mathbb{Z}^{d}\rightarrow \mathbb{C}$
representing the (kernel of) one-particle operator (or the Fourier transform
of the function $k\mapsto \varepsilon _{k}-\mu $). Note that the long-range
character of the BCS interaction in (\ref{BCS model}) is in this case clear
since it can be seen as an hopping at any distance of (Cooper) pairs of
fermions with spins $\downarrow $ and $\uparrow $, respectively. It is also
a mean-field term since it can be seen as a space average of interactions: 
\begin{equation*}
-\frac{1}{\left\vert \Lambda _{L}\right\vert }\sum_{x,y\in \Lambda
_{L}}a_{x,\uparrow }^{\ast }a_{x,\downarrow }^{\ast }a_{y,\downarrow
}a_{y,\uparrow }=-\sum_{y\in \Lambda _{L}}\left( \frac{1}{\left\vert \Lambda
_{L}\right\vert }\sum_{x\in \Lambda _{L}}a_{x,\uparrow }^{\ast
}a_{x,\downarrow }^{\ast }\right) a_{y,\downarrow }a_{y,\uparrow }\ .
\end{equation*}%
Therefore, the BCS\ Hamiltonian $\mathrm{H}_{L}^{BCS}$ with $\gamma
_{k,q}=\gamma >0$ can be explicitly expressed as the local Hamiltonian $%
U_{L}^{\mathfrak{m}_{0}}$ of a long-range model $\mathfrak{m}_{0}\in 
\mathcal{M}$: Let $\Psi ^{\left( h\right) }$ be defined by 
\begin{equation*}
\Psi _{\Lambda }^{\left( h\right) }\doteq h\left( x-y\right) \sum\limits_{%
\mathrm{s}\in \mathrm{S}}a_{x,\mathrm{s}}^{\ast }a_{y,\mathrm{s}}+\left(
1-\delta _{x,y}\right) h\left( x-y\right) \sum\limits_{\mathrm{s}\in \mathrm{%
S}}a_{y,\mathrm{s}}^{\ast }a_{x,\mathrm{s}}
\end{equation*}%
whenever $\Lambda =\left\{ x,y\right\} $ for $x,y\in \mathbb{Z}^{d}$, and $%
\Psi _{\Lambda }^{\left( h\right) }\doteq 0$ otherwise. The interaction $%
\Psi ^{\left( h\right) }$ belongs to $\mathcal{W}^{\mathbb{R}}$ as soon as%
\begin{equation}
\underset{x,y\in \mathbb{Z}^{d}}{\sup }\frac{\left\vert h\left( x-y\right)
\right\vert }{\mathbf{F}\left( x,y\right) }<\infty .
\label{condition facile}
\end{equation}%
See Equation (\ref{iteration0}). Such a condition is trivially satisfied for
all kinetic terms used in condensed matter physics for many-fermions on
lattices, by taking $\mathbf{F}$ as in Equation (\ref{examples}). Let $%
\tilde{\Psi}\in \mathbb{S}$ be defined by $\tilde{\Psi}_{\left\{ x\right\}
}\doteq a_{x,\downarrow }a_{x,\uparrow }$ for $x\in \mathbb{Z}^{d}$ and $%
\tilde{\Psi}_{\Lambda }\doteq 0$ otherwise. Let $\mathfrak{\tilde{a}}$ be
defined, for all Borel subset $\mathfrak{B}\subseteq \mathbb{S}$, by 
\begin{equation*}
\mathfrak{\tilde{a}}\left( \mathfrak{B}\right) =-\gamma \mathbf{1}[\tilde{%
\Psi}\in \mathfrak{B}].
\end{equation*}%
If Inequality (\ref{condition facile}) holds true, then $\mathfrak{m}%
_{0}=(\Psi ^{\left( h\right) },\mathfrak{\tilde{a}})\in \mathcal{M}$ and, by
Definition \ref{definition long range energy}, 
\begin{equation*}
U_{L}^{\mathfrak{m}_{0}}=\mathrm{H}_{L}^{BCS},\qquad L\in \mathbb{N}.
\end{equation*}%
As a consequence, the dynamics at infinite volume of the BCS\ Hamiltonian $%
\mathrm{H}_{L}^{BCS}$ with $\gamma _{k,q}=\gamma >0$ can be explicitly
computed from results of this paper, in which concerns its classical part,
and from \cite{Bru-pedra-MF-III} for its quantum part.

Other examples can also be found in \cite[Section 2.2]{BruPedra2} (in
relation with the forward scattering approximation) as well as in \cite%
{Bru-pedra-proceeding,Bru-pedra-MF-IV}, which explain the dynamics of the
strong-coupling BCS-Hubbard model. The latter is an interesting model
because it predicts the existence of a superconductor-Mott insulator phase
transition, like in cuprates which must be doped to become superconductors.
See \cite{BruPedra1} for more details. \bigskip

\noindent \textit{Acknowledgments:} This work is supported by CNPq
(308337/2017-4), FAPESP (2017/22340-9), as well as by the Basque Government
through the grant IT641-13 and the BERC 2018-2021 program, and by the
Spanish Ministry of Science, Innovation and Universities: BCAM Severo Ochoa
accreditation SEV-2017-0718, MTM2017-82160-C2-2-P. The formulation of
long-range models as pairs $\left( \Phi ,\mathfrak{a}\right) \in \mathcal{W}%
_{1}^{\mathbb{R}}\times \mathcal{S}(\mathbb{S)}$ with associated
Hamiltonians defined by (\ref{Hamiltonian local simple}), instead of \cite[%
Definitions 2.1 and 2.3]{BruPedra2}, has been proposed to us by S. Breteaux.
We thank him very much for this hint.

\end{document}